\algrenewcommand\algorithmicrequire{\textbf{Input:}}
\algrenewcommand\algorithmicensure{\textbf{Output:}}
\journal{Journal}
\def\moverlay{\mathpalette\mov@rlay}
\def\mov@rlay#1#2{\leavevmode\vtop{%
		\baselineskip\z@skip \lineskiplimit-\maxdimen
		\ialign{\hfil$\m@th#1##$\hfil\cr#2\crcr}}}
\newcommand{\charfusion}[3][\mathord]{
	#1{\ifx#1\mathop\vphantom{#2}\fi
		\mathpalette\mov@rlay{#2\cr#3}
	}
	\ifx#1\mathop\expandafter\displaylimits\fi}
\newcommand{\cupdot}{\charfusion[\mathbin]{\cup}{\cdot}}
\newcommand{\mc}{\mathcal}
\newcommand{\lc}[1]{\ensuremath{_{#1}}} %% scriptscriptsize TOO SMALL!!!!!!
\renewcommand{\P}{\mathcal{P}}
\newcommand{\NF}[1]{\mathscr{F}(#1)} 
\newcommand{\ble}{\boldsymbol{<}}
\DeclareMathOperator{\Int}{Int}
\DeclareMathOperator{\Ext}{Ext}
\DeclareMathOperator{\MIN}{MIN}
\DeclareMathOperator{\MAX}{MAX}
\DeclareMathOperator{\PS}{\ensuremath{\partial}}
\DeclareMathOperator{\first}{first}
\DeclareMathOperator{\parent}{\mathit{parent}}
\DeclareMathOperator{\E}{E}
\newcommand{\slope}[2]{\ensuremath{\Delta(#1,#2)}}
\newcommand{\yVal}[2]{\ensuremath{y(#1,#2)}}
\newcommand{\blw}[2]{\ensuremath{\varpi\mathit{(#1,#2)=1}}} %blw=below
\newcommand{\pari}[2]{\ensuremath{\mathit{\varpi(#1,#2)}}} %pari=parity
\newcommand{\SP}[1]{\ensuremath{(\mc{S},#1)}}
\newcommand{\SPx}[2]{\ensuremath{(\mc{S}\lc{|#2},#1)}}
\newcommand{\area}[1]{\ensuremath{\mathrm{area}(#1)}}
\newcommand{\BST}{\ensuremath{\mathsf{BST}}\xspace}
\newcommand{\Hasse}{\mathfrak{H}}
\newcommand{\Ih}[1]{\ensuremath{I[#1)}}
\newcommand{\Ic}[1]{\ensuremath{I[#1]}}
\newcommand{\Io}[1]{\ensuremath{I(#1)}}
\newtheorem{theorem}{Theorem}[section]
\newtheorem{lemma}[theorem]{Lemma}
\newtheorem{proposition}[theorem]{Proposition}
\newtheorem{corollary}[theorem]{Corollary}
\theoremstyle{definition}
\newtheorem{definition}[theorem]{Definition}
\newtheorem{obs}[theorem]{Observation}
\newtheorem{remark}[theorem]{Remark}
\crefname{theorem}{Theorem}{Thm.}
\Crefname{theorem}{Theorem}{Theorems}
\crefname{lemma}{Lemma}{Lemmas}
\Crefname{lemma}{Lemma}{Lemmas}
\crefname{proposition}{Prop.}{Prop.}
\Crefname{proposition}{Proposition}{Propositions}
\crefname{corollary}{Cor.}{Cor.}
\Crefname{corollary}{Corollary}{Corollaries}
\crefname{remark}{Remark}{Remark}
\Crefname{remark}{Remark}{Remark}
\crefname{definition}{Def.}{Def.}
\Crefname{definition}{Definition}{Definitions}
\crefname{figure}{Fig.}{Fig.}
\Crefname{figure}{Figure}{Figures}
\crefname{equation}{Equ.}{Equ.}
\Crefname{equation}{Equation}{Equations}
\crefname{example}{Example}{Exa.}
\Crefname{example}{Example}{Examples}
\crefname{algorithm}{Algorithm}{Alg.}
\Crefname{algorithm}{Algorithm}{Algorithms}
\crefname{obs}{Obs.}{Obs.}
\Crefname{obs}{Observation}{Observations}
\crefname{line}{Line}{Line}
\Crefname{line}{Line}{Lines}
\crefname{section}{Section}{Sec.}
\Crefname{section}{Section}{Sections}
\crefname{enumi}{Stat.}{Stat.}
\Crefname{enumi}{Statement}{Statements}
\renewcommand{\theenumi}{(\arabic{enumi})}
\renewcommand{\theenumii}{(\alph{enumii})}
\renewcommand\p@enumii{\theenumi}
\renewcommand\p@enumiii{\theenumi\theenumii}
\begin{document}

\begin{frontmatter}

\title{Nesting of Touching Polygons}

%\title{An Efficient Nesting Algorithm of overlap-free Polygons --\\ A 
%generalization of the Sweep Line Algorithm}

%\tnotetext[t1]{\TODO{XXXXXXXX}This work was supported in part by the 
%\emph{\TODO{FUNDING}}}

\cortext[cor1]{corresponding author}
\author[LEI2,LEI1]{Carsten R.\ Seemann}
\ead{carsten@bioinf.uni-leipzig.de}
\author[LEI,MIS,VIE,UNAC,SFI]{Peter F.\ Stadler}
\ead{studla@bioinf.uni-leipzig.de}
\author[STO]{Marc Hellmuth\corref{cor1}}
\ead{marc.hellmuth@math.su.se}

\address[LEI2]{Swarm Intelligence and Complex Systems Group, Department of Computer Science, 
	Leipzig University, Augustusplatz	10, D-04109 Leipzig, Germany}

\address[LEI1]{Bioinformatics Group, Department of Computer Science and
  Interdisciplinary Center for Bioinformatics, Leipzig University,
  H{\"a}rtelstra{\ss}e 16-18, D-04107 Leipzig, Germany}
	
%\address[UEA]{School of Computing Sciences, University of East Anglia,
%  Norwich, NR4 7TJ, UK}

\address[LEI]{Bioinformatics Group, Department of Computer Science;
  Interdisciplinary Center for Bioinformatics; German Centre for
  Integrative Biodiversity Research (iDiv) Halle-Jena-Leipzig; Competence
  Center for Scalable Data Services and Solutions Dresden-Leipzig; Leipzig
  Research Center for Civilization Diseases; and Centre for Biotechnology
  and Biomedicine, Leipzig University, H{\"a}rtelstra{\ss}e 16-18,
  D-04107 Leipzig, Germany}
  
\address[MIS]{Max Planck Institute for Mathematics in the Sciences,
  Inselstra{\ss}e 22, D-04103 Leipzig, Germany}

\address[VIE]{Institute for Theoretical Chemistry, University of Vienna,
  W{\"a}hringerstra{\ss}e 17, A-1090 Wien, Austria}

\address[UNAC]{Facultad de Ciencias, Universidad Nacional de Colombia,
  Sede Bogot{\'a}, Colombia}

\address[SFI]{The Santa Fe Institute, 1399 Hyde Park Rd., Santa Fe, NM
  87501, United States}

\address[STO]{Department of Mathematics, Faculty of Science, 
	Stockholm University, SE - 106 91 Stockholm, Sweden}

\begin{abstract}
  Polygons are cycles embedded into the plane; their vertices are
  associated with $x$- and $y$-coordinates and the edges are straight
  lines. Here, we consider a set of polygons with pairwise non-overlapping
  interior that may touch along their boundaries.  Ideas of the \emph{sweep
  line algorithm} by Bajaj and Dey for non-touching polygons are adapted to
  accommodate polygons that share boundary points.  The algorithms
  established here achieves a running time of $\mathcal{O}(n+N\log
  N)$, where $n$ is the total number of vertices and $N<n$ is the total
  number of ``maximal outstretched segments'' of all polygons. It is
  asymptotically optimal if the number of maximal outstretched segments per
  polygon is bounded.  In particular, this is the case for convex polygons.
\end{abstract}

\begin{keyword}
  Polygon \sep Planar Graph \sep Characterization \sep Recognition Algorithm
  
  \MSC{05C10}
\end{keyword}

\end{frontmatter}

%%%%%%%%%%%%%%%%%%%%%%%%%%%%%%%%%
\sloppy

\section{Introduction}

Polygons are cycle graphs that are embedded in the plane as Jordan curves
such that all edges are straight lines. Hence, the polygon is determined
completely by the $x$- and $y$-coordinates of the vertices (i.e., the
corners of the polygon), together with their adjacency.  We are interested
here in the problem of determining the nesting on a given set of polygons
with non-overlapping interior that are, however, allowed to touch each
other on their boundaries. An example are contour lines, which touch each
other in saddle points or along edges that represent vertical cliffs. Since
the polygon set is overlap-free by assumption, their relative locations are
determined completely by a nesting tree, in which the descendants of
polygon $P$ are exactly the polygons located inside of $P$. The problem of
determining the nesting tree of polygons arises naturally e.g.\ in layered
manufacturing \cite{Choi2004}, in the analysis of contour data sets
\cite{Johnstone2017} or in graph theory \cite{SMSH:21}.  Polygons
  nested within a polygon can also be viewed ``holes'', an interpretation
  that has been studied as Disassociative Area Model (DAM)
  \cite{Kirby:89}.

The special case of overlap-free polygons that do even not
touch on their boundaries was considered already by
\citet{Bajaj:90}, with further improvements described by
\citet{Zhu1994}. These authors established a ``sweep line algorithm'' that
computes the nesting in $\mathcal{O}(n+N\log N)$, where $n$ is the total
number of vertices and $N<n$ is the total number of so-called ``maximal
outstretched segments'' of all polygons.  A related problem is the 
nesting of so-called ``connected components'' generated by a set of
polygons. Here, only polygons that are outer boundaries of connected
subgraphs are taken into account. For this problem, an $\mathcal{O}(n\log
n)$-time algorithm is described by \cite[Chapt.\ 4]{Zhu1994}, where $n$ is
the total number of vertices of all polygons which have distinct $x$- or
$y$-coordinates.

In this contribution, we investigate in \cref{sec:osp} ``outstretched''
segments of polygons, such that we can use ``maximal'' outstretched
segments in \cref{sec:sl} to determine whether a point lies inside or
outside of a given polygon.  Then, in \cref{sec:invest}, we investigate a
set of overlap-free polygons and their maximal outstretched segments. The
observations motivates an ordering of the maximal outstretched segments in
\cref{sec:order}, and characterizes how two polygons are nested.
The main results of this contribution is in \cref{sec:nest-forest}.  Here,
we show that the nesting problem for general overlap-free polygons can be
solved in $\mathcal{O}(n+N\log N)$ operations, where $n$ is the total
number of vertices and $N$ is the total number of maximal outstretched
segments of all polygons. This removes the restriction to disjoint polygons
in the algorithms of \citet{Bajaj:90,Zhu1994}. In particular, the
running-time for a set of overlap-free polygons, where each polygon is
convex, has bounded length or bounded number of maximal outstretched
segments, is optimal.

Throughout this contribution we assume that the set of polygons is given
explicitly, i.e., for each each polygon we are given a list of vertices and
edges of a graph together with the embedding coordinates of vertex in the
plane. In particular, the vertices and edges that are shared by
multiple polygons also appear multiple times.

%%%%%%%%%%%%%%%%%%%%%
\section{Preliminaries}\label{sec:prel}

\paragraph{\textbf{Basics}}
For a set $e=\{x,y\}$ of two points $x,y\in\mathbb{R}^2$, we denote by
$s[e]$ the straight line segment between $x$ and $y$, i.e.,
$s[e]\coloneqq\big\{(1-\lambda)\cdot x+\lambda\cdot y\mid \lambda\in
[0,1]\subseteq \mathbb{R}\big\}$.
By definition, $x,y\in s[x,y]$. Moreover, the interior of straight line
segment is denoted by $s(e)\coloneqq s[e]\setminus\{x,y\}$.  For
simplicity, we write $s(x,y)$ and $s[x,y]$ instead of $s[\{x,y\}]$ and
$s(\{x,y\})$, respectively.  Two sets $A$ and $B$ \emph{overlap},
whenever $A\cap B\not\in \{A,B,\emptyset\}$.  A collection $\mc{S}$ of sets
is \emph{overlap-free} if no two elements in $\mc{S}$ overlap.

\paragraph{\textbf{Graphs}}
We consider (undirected) graphs $G=(V,E)$ with finite vertex set $V(G)=V$
and edge set $E(G)=E\subseteq {V\choose 2}$, i.e., without loops and
multiple edges.  A tree $T$ is an acyclic connected graph, and it is rooted
if there is distinguished vertex $\rho_T$, called the root of $T$. A rooted
forest is a graph whose connected components are rooted trees. Note a tree
is a forest consisting of a single connected component. Given a rooted
forest $F$, we can define a partial order $\preceq_F$ on $V(F)$ by putting
$v\preceq_F w$ whenever there is a connected component $T$ of $F$ that
contains $v$ and $w$ and where $v$ lies one the unique path connecting
$\rho_T$ and $w$.  In this case, we say that $v$ is an ancestor of $w$.
Note that $\rho_T$ is always $\preceq_F$-minimal for all connected
components $T$ of $F$. A vertex $v$ is a \emph{child} of $w$ (or,
equivalently, $w$ is a \emph{parent} of $v$) in $F$ if $\{v,w\}\in F$ and
$w\preceq_F v$. Two vertices are \emph{siblings} if they have a common
parent or if they are the roots of two distinct connected components of
$F$.

Given a collection of sets $\mc S$, the \emph{Hasse diagram}
$\Hasse\coloneqq \Hasse{}[\mc{S}]$ \emph{of} $\mc{S}$ \emph{(w.r.t.\ set
inclusion)} is a graph whose vertices are the elements in
$\mc{S}$. Inclusion-maximal elements in $\Hasse$ are called \emph{roots},
and there is an edge $\{A_1,A_2\}\in E(\Hasse)$ if and only if,
$A_i\subsetneq A_j$ with $\{i,j\}=\{1,2\}$ and there is no $C\in\mc{S}$
with $A_i\subsetneq C\subsetneq A_j$.  In this context, it is well-known
that the Hasse diagram of an overlap-free set $\mathcal{S}$ must be a
rooted forest. For instance, this is a direct consequence of
\cite[Thm.~3.5.2]{Semple:03}.

\paragraph{\textbf{Planarity and Polygons}}
In the following, $\varphi\colon V\to \mathbb{R}^2$ denotes an injective
map of the vertices of $G=(V,E)$ into the Euclidean plane. Given such a map
$\varphi$, a \emph{(straight-line) embedding $\PS(G)$} of a graph $G=(V,E)$
is defined as $\PS(G)\coloneqq \big\{\varphi(v)\mid v\in V\big\}\cup
\big\{s[\varphi(u),\varphi(v)] \mid \{u,v\}\in E\big\}$. In simple words,
an embedding $\PS(G)$ associates each vertex with a unique point in the
plane and every edge by a straight-line connecting its endpoints. Given a
subgraph $H$ of $G$ with embedding $\PS(G)$, we denote with
$V(\PS(H))\coloneqq \{\varphi(v)\mid v\in V(H)\}$ the set of vertices and
with $E(\PS(H))\coloneqq \left\{\{\varphi(v),\varphi(u)\} \mid \{u,v\}\in
E(H) \right\}$ the set of edges of $\PS(H)$. The latter, in particular,
allows us to treat $\PS(H)$ as a graph theoretical object to which all the
standard terminology of graphs applies.
In the following, $v_x$ and $v_y$ (or, $u_x$ and $u_y$) denote the $x$- and
$y$-coordinate of the image $\varphi(v) = u \in \mathbb{R}^2$ of a vertex
$v$. 
For $W\subseteq V(G)$, we defined the following real-valued intervals:
\begin{align*}
  \Ih{W}\coloneqq [\min_{v\in W} v_x,\max_{x\in W} v_x),\hspace{.5cm}
  \Io{W}\coloneqq (\min_{v\in W} v_x,\max_{x\in W} v_x),\, and \hspace{.5cm}
  \Ic{W}\coloneqq [\min_{v\in W} v_x,\max_{x\in W} v_x].
\end{align*} 
In particular, since for an edge $e\in E(G)$ we have $e\subseteq V(G)$, the
intervals $\Ih{e}$, $\Io{e}$ and $\Ic{e}$ are well-defined.  In case $W=
V(G)$, we write $\Ih{G}$, $\Io{G}$ and $\Ic{G}$ instead of $\Ih{W}$,
$\Io{W}$ and $\Ic{W}$, respectively.

An edge $e =\{u,v\} \in E$ of $G$ is \emph{vertical} if $u_x=v_x$.  If $e$
is non-vertical, then $\min(e)$ denotes the (unique) vertex $w\in
e=\{u,v\}$ with $w_x = \min\{u_x,v_x\}$, and $\max(e)$ is the (unique)
vertex $w\in e=\{u,v\}$ with $w_x = \max\{u_x,v_x\}$. Note that
$\min(e)_x<\max(e)$ for non-vertical edges and thus, $\Ih{e}$, $\Io{e}$,
$\Ic{e}$ are non-empty in this case.  For vertical edges, we leave
$\min(e)$ and $\max(e)$ undefined, and we have $\Ih{e}=\emptyset$. By the
latter arguments, $\Ih{e}=\emptyset$ if and only if $e$ is a vertical edge.

An embedding $\PS(G)$ of $G$ is \emph{planar} if $s(\varphi(u),\varphi(v))$
for every edge $\{u,v\}\in E$ does not contain any point of the straight
line $s[\varphi(u'),\varphi(v')]$ for every edge $\{u',v'\}\in E\setminus
\{u,v\}$ \cite{fary1948straight}. A graph is \emph{planar} if it admits a
planar embedding. Due to F{\'a}ry's Theorem \cite{fary1948straight}
definition of planarity is equivalent to the ``usual'' definition of planar
graphs, where the planar embedding is defined in terms of so-called Jordan
curves, see e.g.\ \cite{Diestel2010}.

\begin{remark}
  From here on, all embeddings of all graphs are considered to be planar.
\end{remark}

A \emph{polygon} $P$ refers to the embedding of an cycle $C_P$ of $G$,
i.e., $P=\PS(C_P)$, where $C_P$ is a connected subgraph of $G$ such that
all vertices have degree two in $C_P$. This type of polygons is usually
called ``simple'' polygon. A \emph{corner} of a polygon $P$ refers to a
point $p\in V(P)$.  Note that corners may have the same $x$- or
$y$-coordinate as other corners, e.g., we may have corners with coordinates
$(v_x,v_y)$, $(v'_x,v_y)$ and $(v''_x,v_y)$. Thus, corners may lie on the
same straight line.  We refer to the bounded region enclosed by a polygon
$P$ as its \emph{interior} $\Int(P)$ and assume that $P \cap \Int(P) =
\emptyset$. The \emph{exterior} of $P$ is the unbounded region
$\Ext(P)\coloneqq \mathbb{R}^2\setminus(P\cup\Int(P))$.

\paragraph{\textbf{Sets of Polygons and Nesting Forest}}
In the following,  $\P$ denotes a set of polygons and put
$\Int(\P) \coloneqq \{\Int(P)\mid P\in \P\}$. 
We say that $\P$ is \emph{overlap-free} if
$\Int(\P)$ is overlap-free.  The Hasse diagram
$\Hasse[\Int(\P)]$ of $\Int(\P)$ with respect to inclusion thus is a rooted
forest $\NF{\P}$, which we call the \emph{nesting forest} of $\P$. For
simplicity, we identify vertices $\Int(P)$ in $\mc{F}(\P)$ by the
corresponding polygons $P\in \P$ and thus, by slight abuse of notation,
assume that $V(\NF{\P})=\P$.  Moreover, we say that a polygon $P$ is inside
of a polygon $P'$ if $\Int(P)\subseteq\Int(P')$, and the two polygons
\emph{touch} if $\Int(P)\cap\Int(P')=\emptyset$ but $P\cap P'\neq
\emptyset$.

\section{Outstretched Segments and Polygons}
\label{sec:osp} %oustretched-segments-polygon

Given a subgraph $H$ of $G$ with embedding $\PS(G)$, a \emph{segment $S$
(of $\PS(H)$)} refers to the embedding of some subpath $Q_S$ of $H$, i.e.,
$S=\PS(Q_S)$. We say that a vertex $v\in V(S)$ is \emph{terminal (in $S$)}
if $v$ has degree one in $S$. Analogously, an edge $e =\{u,v\}\in E(S)$ is
\emph{terminal (in $S$)} if one of its vertices $u$ or $v$ is terminal in
$S$. For a segment $S$, we define $\MIN(S)$ (resp., $\MAX(S)$) as the set
of vertices $v$ for which $v_x$ is minimal (resp., maximal) among all
vertices in $S$. If $|\MIN(S)|=1$ or $|\MAX(S)|=1$, then we write $\min(S)$
or $\max(S)$ for the unique minimal or maximal element. Note that
$\min_{v\in S} v_x = v'_x$ for all $v'\in\MIN(S)$ and $\max_{v\in S} v_x =
v'_x$ for all $v'\in\MAX(S)$. 

\begin{definition} \label{def:prop-O}
  A segment $S$ satisfies Property~(O) if for all distinct 
  edges $e,f\in E(S)$ it holds that $\Ih{e}\cap \Ih{f}=\emptyset$.
\end{definition}
Note, for all edges $e\in E(S)$, we have $\Ih{e}\subseteq \Ih{S}$.
Moreover, $\Ih{S}=\emptyset$ implies that $\Ih{e}=\emptyset$ for all $e\in
E(S)$.  Since $\Ih{e} = \emptyset$ precisely if $e$ is vertical, the
intervals $\Ih{e}$ of non-vertical edges $e$ of $S$ partition the interval
$\Ih{S}$ whenever $S$ satisfies (O).

\begin{lemma}\label{lem:O}
  For a segment $S$, the following three conditions are equivalent:
  \begin{description}[nolistsep,leftmargin=!,labelwidth=.7cm]
  \item[(O)] $S$ satisfies Property~(O).
  \item[(O')]  
  \begin{enumerate}[nolistsep,noitemsep]
    \item[(a)] $\min_{v\in S}v_x$  and
      $\max_{v\in S}v_x$ correspond to the
      x-coordinates of the terminal vertices of $S$, and
    \item[(b)] $u_x\leq v_x$ for all vertices $u,v \in V(S)$ 
      with $u\preceq_S v$, where  $\preceq_S$ is 
      defined on $V(S)$ by choosing a vertex $v\in\MIN(S)$ 
      as the root of $S$.
  \end{enumerate}
  \item[(O'')]For all $\xi\in \Ih{S}$, there is exactly
    one edge $e \in E(S)$ with $\xi\in \Ih{e}$.
  \end{description}
\end{lemma}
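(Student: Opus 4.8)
The plan is to treat the segment $S=\PS(Q_S)$ as a path $v_0v_1\cdots v_k$ (its two degree-one vertices being the terminal ones) and to reduce all three statements to a single one-dimensional fact about the sequence of $x$-coordinates $x_i\coloneqq (v_i)_x$. The workhorse is the following reformulation of ``$\xi\in\Ih{e}$'': for an edge $e=\{v_i,v_{i+1}\}$ one has $\xi\in\Ih{e}$ if and only if exactly one of $x_i,x_{i+1}$ is $>\xi$ and the other is $\le\xi$; equivalently $\mathbb{1}[x_i>\xi]\neq\mathbb{1}[x_{i+1}>\xi]$. This uses the half-open convention of $\Ih{\cdot}$ in an essential way, and it makes vertical edges ($x_i=x_{i+1}$, hence $\Ih{e}=\emptyset$) invisible, exactly as required. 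Consequently, for every $\xi$ the number $c(\xi)\coloneqq|\{e\in E(S):\xi\in\Ih{e}\}|$ equals the number of sign changes of the binary sequence $(\mathbb{1}[x_i>\xi])_{i=0}^{k}$ along the path.

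With this dictionary the statements become purely combinatorial: (O) says $c(\xi)\le 1$ for all $\xi$ (disjointness of the intervals means no point lies in two of them), while (O'') says $c(\xi)=1$ for all $\xi\in\Ih{S}$; note $c(\xi)=0$ for $\xi\notin\Ih{S}$ automatically, since $\Ih{e}\subseteq\Ih{S}$. I would prove the cycle $\text{(O'')}\Rightarrow\text{(O)}\Rightarrow\text{(O')}\Rightarrow\text{(O'')}$. The first implication is immediate from the dictionary: $c\equiv 1$ on $\Ih{S}$ and $c\equiv 0$ off it gives $c\le 1$ everywhere.

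For $\text{(O)}\Rightarrow\text{(O')}$ I would first show that (O) forces $(x_i)$ to be monotone. A binary sequence has at most one sign change precisely when it is monotone, so $c(\xi)\le 1$ for all $\xi$ says that every threshold $\xi$ splits $(x_i)$ into an initial block on one side and a final block on the other. If $(x_i)$ were not monotone, it would contain a strict peak $x_p<x_q>x_r$ or a strict valley $x_p>x_q<x_r$ with $p<q<r$ (an elementary fact: a non-monotone finite sequence has a strict ascent and a strict descent, and ordering them yields such a triple); choosing $\xi$ strictly between $\max(x_p,x_r)$ and $x_q$ (resp.\ between $x_q$ and $\min(x_p,x_r)$) makes the indicators at $p,q,r$ read $0,1,0$ (resp.\ $1,0,1$), forcing $c(\xi)\ge 2$, a contradiction. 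Once $(x_i)$ is monotone its extreme values are attained at the endpoints $v_0,v_k$, which are exactly the terminal vertices, giving (a); orienting the path from the terminal vertex in $\MIN(S)$ makes $\preceq_S$ the linear order along the path, along which $(x_i)$ is non-decreasing, giving (b).

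Finally, for $\text{(O')}\Rightarrow\text{(O'')}$ I would use (a) to identify the two endpoints as the $x$-minimal and $x$-maximal vertices and, after rooting at the $\MIN(S)$ endpoint, use (b) to conclude that $(x_i)$ is non-decreasing along the path. Then for any $\xi\in\Ih{S}=[x_0,x_k)$ the sequence $(\mathbb{1}[x_i>\xi])_i$ starts at $0$ (as $x_0\le\xi$), ends at $1$ (as $x_k>\xi$), and is monotone, hence has exactly one sign change, i.e.\ $c(\xi)=1$; this is (O''). The main obstacle is the monotonicity step inside $\text{(O)}\Rightarrow\text{(O')}$: one must handle plateaus (runs of equal $x$-values producing vertical edges that neither cross a threshold nor block monotonicity) and be careful that a peak or valley may be ``hidden'' across such a plateau. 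The crossing-number formulation is precisely what sidesteps this bookkeeping, since it only counts sign changes and is insensitive to vertical edges.
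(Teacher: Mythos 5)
Your proposal is correct, but it takes a genuinely different route from the paper. The paper dispatches the equivalence of (O) and (O'') as immediate, proves (O')$\Rightarrow$(O'') by a short contradiction argument along the path order, and reserves its main effort for (O'')$\Rightarrow$(O'), which it establishes by induction on $|E(S)|$, peeling off terminal edges and carefully tracking where the minima of the two truncated segments can lie. You instead set up a single dictionary --- $\xi\in \Ih{e}$ for $e=\{v_i,v_{i+1}\}$ exactly when the indicator of ``$v_x>\xi$'' differs at the two endpoints --- so that all three conditions become statements about the number of sign changes of a binary sequence, and you obtain the hard implication into (O') from (O) via the elementary peak/valley characterization of non-monotone sequences. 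This buys a shorter, more uniform argument: vertical edges and plateaus are invisible to the sign-change count, so no case analysis on them is needed, and the all-vertical/single-vertex degenerate case is absorbed (modulo the same convention the paper itself invokes, namely that a one-vertex segment has its unique vertex playing the role of both terminal vertices). The paper's induction is longer but stays entirely within the graph-theoretic vocabulary ($\preceq_S$, terminal edges, $\MIN$/$\MAX$) that is reused in later lemmas. Two small points worth making explicit in a polished write-up: first, the extraction of a strict peak or valley from a strict ascent and a strict descent requires taking an extremum over the indices between them (your parenthetical glosses over this); second, in (O'b) the root is only specified as \emph{some} vertex of $\MIN(S)$, so one should note that, in the presence of (O'a), the validity of (b) is independent of which vertex of $\MIN(S)$ is chosen --- a one-line check, since all candidates share the minimal $x$-coordinate and (b) then forces equality along the subpath joining them to the minimal terminal vertex.
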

\begin{proof}
First, assume that the segment $S$ consists of a single vertex or of
vertical edges only and thus $\MIN(S)=\MAX(S)$, i.e., $\min_{v\in
  S}v_x=\max_{v\in S}v_x$. Thus, $\Ih{e}=\emptyset$ for all $e\in E(S)$,
and there is no $\xi\in \Ih{S}$.  Now, it is easy to see that (O), (O') and
(O'') are always satisfied and, in particular, trivially equivalent. Thus,
assume that $S$ contains at least one non-vertical edge. In this case,
$\min_{v\in S}v_x\neq\max_{v\in S}v_x$ and thus, there is a $\xi\in
\Ih{S}$.  One easily verifies that (O) and (O'') are equivalent.

Next, suppose that $S$ satisfies (O'). Assume, for contradiction, that (O'') is
violated. Since $S$ is connected, for all $\xi\in \Ih{S}\subseteq\mathbb{R}$
there is at least one edge $e$ with $\xi\in \Ih{e}\subseteq \Ih{S}$. Since (O'')
is violated, there are two edges $e = \{u,v\},f=\{a,b\}\in E(S)$ with
$\xi\in \Ih{e}\cap \Ih{f}$.  Now, assume w.l.o.g.\ that $u\prec_S v \preceq_S
a\prec_S b$. By (O'.b) and the fact that $e$ and $f$ are non-vertical,
$u_x< v_x \leq a_x < b_x$ and thus, $\xi < v_x\leq a_x\le \xi$; a
contradiction.
Hence, if $S$ satisfies (O'), then $S$ satisfies (O'') as well.

Now, suppose that $S$ satisfies (O''). As argued above, we can assume that $S$
contains at least one non-vertical edge and thus, $\min_{v\in S} v_x\neq
\max_{v\in S} v_x$. We continue with showing that (O') is satisfied by
induction on the number of edges of $S$. Let $t$ and $t'$ be the terminal
vertices of $S$. As base case, assume that $S$ contains only one edge $e
=\{t,t'\}$. Since $e$ must be non-vertical, we have $\{\min_{v\in
  S}v_x,\max_{v\in S} v_x\} =\{t_x,t'_x\}$, and it is easy to see that (O')
holds.  Assume that (O') is satisfied for all segments $S$ of $P$ having
$m$ edges. Let $S$ be a segment of $P$ that consists of $m+1$ edges. Now,
remove the terminal edge $e=\{u,t\}$ and vertex $t$ from $S$ to obtain the
segment $S_1$ of $P$ with $m$ edges and with terminal vertices $t'$ and
$u$. Similarly, remove the terminal edge $\{u',t'\}$ and vertex $t'$ from
$S$ to obtain the segment $S_2$ of $P$ with $m$ edges and terminal vertices
$t$ and $u'$. By assumption, both $S_1$ and $S_2$ satisfy (O'). We can
assume w.l.o.g.\ that $t'\in\MIN(S_1)$.  In this case, $\preceq_{S_1}$ can
be defined on $V(S_1)$ by choosing $t'$ as the root of $S_1$. Hence, the
other terminal vertex $u$ of $S_1$ is a $\preceq_{S_1}$-maximal element and
thus, by (O'.b), must satisfy $u\in\MAX(S_1)$.  In addition, we have
$t'_x\leq u'_x \leq u_x$, since $t'\preceq_{S_1} u' \preceq_{S_1} u $.

Then, assume for contradiction that $u'\notin\MIN(S_2)$ and, therefore,
$\min_{v\in S_2} v_x \neq u'_x$. Since $S_1$ satisfies (O'.b) and since
$\{t,u\}$ is a terminal edge of $S$, we have $\min_{w\in S_1}w_x = t'_x\leq
u'_x \leq v_x$ for all vertices $v\in V(S_2)\setminus \{t\}\subseteq
V(S_1)$.  This and $\min_{u\in S_2}u_x \neq u'_x$ implies that $t_x<u'_x$
and, in particular, $\min_{v\in S_2} v_x=t_x$. If all edges of $S_1$ are
vertical, it is easy to see that $S$ satisfies (O').  Assume that $S_1$
contains a non-vertical edge. In this case, there is an edge $f=\{a,b\}$ in
$S_1$ such that $a\preceq_{S_1} b$ and $t'_x\leq a_x<b_x\le u_x$. Moreover,
$t_x<u'_x\leq u_x$ implies that there is a $\xi\in \mathbb{R}$ for which
$\min_{v\in f}v_x = a_x\leq \xi <b_x = \max_{v\in f} v_x$ and $\min_{v\in
  e} v_x = t_x \le a_x\leq \xi < b_x \le u_x =\max_{v\in e}v_x$; a
contradiction since $S$ satisfies (O''). Hence, $u'\in\MIN(S_2)$. In this
case, $\min_{v\in S_1} v_x = t'_x\leq u'_x = \min_{u\in S_2} u_x$ implies
that $t'_x\leq v_x$ for all $v\in V(S)$ and thus, $\min_{v\in S}v_x =t'_x$
corresponds to the $x$-coordinates of one of its terminal
vertices. Moreover, since both $S_1$ and $S_2$ satisfy (O'.b) and $t'_x
=\min_{v\in S}v_x$, one easily verifies that (O'.b) must hold for $S$.
This immediately implies that $\max_{v\in S}v_x = t_x$ corresponds to the
$x$-coordinate of the other terminal vertex. Note, however, that $t'_x=t_x$
could be possible.  In either case, $S$ satisfies (O'.a).  Hence, if $S$
satisfies (O), then $S$ satisfies (O') as well. In summary, therefore,
Conditions~(O), (O') and~(O'') are equivalent.
\end{proof}

It is easy to see that Property~(O) is hereditary, that is:
\begin{obs}
  If $S$ is a segment that satisfies (O), then every segment $S'\subseteq
  S$ satisfies (O).
\end{obs}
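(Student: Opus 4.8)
The plan is to argue directly from \cref{def:prop-O}, since Property~(O) is a purely combinatorial condition quantified over \emph{pairs} of edges, and such conditions are automatically inherited by substructures. First I would pin down what $S'\subseteq S$ means: by definition a segment is the embedding of a subpath, so $S'=\PS(Q_{S'})$ for some path $Q_{S'}$, and the containment $S'\subseteq S$ forces $Q_{S'}$ to be a subpath of $Q_S$. Consequently $E(S')\subseteq E(S)$ (and likewise $V(S')\subseteq V(S)$), which is the only structural fact the argument needs.

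Given this, the core step is essentially a one-line observation: let $e,f\in E(S')$ be any two \emph{distinct} edges. Since $E(S')\subseteq E(S)$, they are also two distinct edges of $S$, so Property~(O) of $S$ yields $\Ih{e}\cap\Ih{f}=\emptyset$. As $e$ and $f$ were arbitrary, the defining condition of Property~(O) holds for $S'$ as well, proving that $S'$ satisfies~(O).

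The only point requiring any care---and the closest thing to an obstacle---is justifying the inclusion $E(S')\subseteq E(S)$ from the set-theoretic containment $S'\subseteq S$ of embedded point sets; here one must invoke that both $S$ and $S'$ are segments of the same embedding $\PS(G)$, so their edges are images of edges of $G$, and a sub-point-set that is itself a segment necessarily uses a sub-collection of these edges. Once this is settled the statement is immediate, and no appeal to the equivalent Conditions~(O') or~(O'') of \cref{lem:O} is necessary.
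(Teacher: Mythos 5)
Your proof is correct and matches the intended argument: the paper states this as an Observation with no proof (``It is easy to see that Property~(O) is hereditary''), and your one-line inheritance argument via $E(S')\subseteq E(S)$ is exactly the reasoning being taken for granted. Your extra care in justifying $E(S')\subseteq E(S)$ from the containment of segments is a reasonable, if not strictly necessary, addition.
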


\begin{definition}
  A segment $S$ is \emph{outstretched} if it satisfies (O) and its terminal
  edges are non-vertical. Moreover, an outstretched segment $S$ of a polygon 
  $P$ is \emph{maximally} outstretched (w.r.t.\ $P$) if there is no outstretched
  segment $S'$ in $P$ with $S\subsetneq S'$.
\end{definition}
Note that outstretched segments $S$ may have empty edge set in which case
$S$ can contain only a single vertex $t=\max(S)=\min(S)$. Moreover,
although terminal edges of $S$ cannot be vertical, $S$ can contain vertical
edges in its interior. In this case, it is even possible that $S$ contains
incident, i.e., consecutive, vertical edges.

\begin{lemma}\label{lem:minmaxS}
  For an outstretched segment $S$ with $E(S)\neq \emptyset$  it always holds
  that $\MIN(S)\cap \MAX(S)=\emptyset$.  In particular, 
  $\MIN(S) = \{t\}$ and $\MAX(S) = \{t'\}$ are singletons consisting
  of the terminal vertices $t$ and $t'$ of $S$ only, i.e., 
  $\min(S)=t$ and $\max(S)=t'$ are well-defined and distinct.
\end{lemma}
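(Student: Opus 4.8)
The plan is to reduce everything to the monotonicity furnished by Condition~(O'.b) of \cref{lem:O}. First I would record that, since $E(S)\neq\emptyset$ and the terminal edges of an outstretched segment are non-vertical, $S$ contains at least one non-vertical edge; its two endpoints have distinct $x$-coordinates, so $\min_{v\in S}v_x<\max_{v\in S}v_x$. This strict inequality already yields $\MIN(S)\cap\MAX(S)=\emptyset$, because a vertex lying in the intersection would have an $x$-coordinate equal to both the minimum and the maximum. Next, because $S$ is outstretched it satisfies~(O), hence~(O') by \cref{lem:O}; by~(O'.a) the two values $\min_{v\in S}v_x$ and $\max_{v\in S}v_x$ are attained at the terminal vertices of $S$. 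As $S$ has exactly two terminal vertices $t,t'$ and the two extreme values are distinct, I may label them so that $t_x=\min_{v\in S}v_x$ and $t'_x=\max_{v\in S}v_x$; in particular $t\neq t'$.

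The core step is to upgrade this to $\MIN(S)=\{t\}$ and $\MAX(S)=\{t'\}$. I would choose the terminal vertex $t\in\MIN(S)$ as the root and invoke the order $\preceq_S$ from~(O'.b). Since $S$ is the embedding of a path and $t$ is one of its two endpoints, $\preceq_S$ is in fact a total order $t=w_0\prec_S w_1\prec_S\dots\prec_S w_k=t'$ along the path, and~(O'.b) asserts that $(w_0)_x\leq (w_1)_x\leq\dots\leq (w_k)_x$.

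To conclude, suppose some $w\in\MIN(S)$ with $w\neq t$, say $w=w_i$ with $i\geq 1$. Then $(w_i)_x=\min_{v\in S}v_x=(w_0)_x$, and the monotone chain forces $(w_0)_x=(w_1)_x=\dots=(w_i)_x$. In particular the terminal edge $\{w_0,w_1\}=\{t,w_1\}$ has both endpoints with equal $x$-coordinate, i.e.\ it is vertical, contradicting that the terminal edges of $S$ are non-vertical; hence $\MIN(S)=\{t\}$. The symmetric argument at the other end, where any $w'\in\MAX(S)$ distinct from $t'$ would force the terminal edge $\{w_{k-1},t'\}$ to be vertical, gives $\MAX(S)=\{t'\}$. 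Since both sets are singletons, $\min(S)=t$ and $\max(S)=t'$ are well-defined, and they are distinct by the first paragraph. I expect the only delicate point to be the justification that rooting a path at an endpoint turns $\preceq_S$ into a single chain along the path, so that the monotone squeeze applies; everything else is bookkeeping with~(O') and the non-verticality of the terminal edges.
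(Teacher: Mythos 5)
Your proof is correct and follows essentially the same route as the paper's: both invoke Condition~(O') of \cref{lem:O}, root the path at a $\MIN$-terminal vertex to obtain the monotone chain of $x$-coordinates, and use the non-verticality of the terminal edges to force strict inequalities at both ends, so that only the terminal vertices can attain the extreme $x$-values. The only difference is presentational — you spell out the "all-equal squeeze" contradiction explicitly, while the paper compresses it into the single chain $t_x<u_x\le v_x\le u'_x<t'_x$.
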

\begin{proof}
  By Lemma \ref{lem:O}, $\min_{v\in S}v_x$ and $\max_{v\in S}v_x$
  correspond to the $x$-coordinates of terminal vertices of $S$. Let $t$
  and $t'$ be the terminal vertices and $e=\{u,t\}$ and $f=\{u',t'\}$ be
  the terminal edges of $S$. Note, $e=f$ is possible. Assume w.l.o.g.\ that
  $t\in\MIN(S)$.  Since terminal edges of $S$ are non-vertical, we have
  $\max_{v\in e}v_x\neq \min_{v\in e}v_x$ which implies that $\MIN(S)\neq
  \MAX(S)$ and, in particular, $t'\in \MAX(S)$.  Moreover, by (O'.b) and
  because $e$ and $f$ are non-vertical, it holds that $t_x< u_x\le v_x \leq
  u'_x < t'_x$ for all vertices $v \in V(S)$ with $u\preceq_S v\preceq_{S}
  u'$, where $\preceq_S$ is defined on $V(S)$ by choosing $t$ as the root.
  Thus, $\MIN(S)=\{t\}$ and $\MAX(S)=\{t'\}$, i.e., $\min(S)=t$ and
  $\max(S)=t'$. Moreover, $\MIN(S)\cap\MAX(S)=\emptyset$.
\end{proof}

For later reference, we provide here the following simple result:
\begin{lemma}\label{lem:snf} 
  Let $S,S'$ be two outstretched segments of a polygon $P$ such that
  $E(S)\cap E(S')\ne \emptyset$.  Then, $S^* = S\cup S'$ is an outstretched
  segment of $P$.  
\end{lemma}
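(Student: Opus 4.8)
The plan is to verify, in turn, the three defining features of an outstretched segment for $S^*=S\cup S'$: that it is a \emph{segment} of $P$ (i.e.\ the embedding of a subpath $Q_{S^*}$ of $C_P$), that it satisfies Property~(O), and that its terminal edges are non-vertical. The unifying tool is the characterisation of (O) by the weak $x$-monotonicity in (O') of \cref{lem:O}: by \cref{lem:minmaxS}, traversing an outstretched segment with non-empty edge set from $\min(S)$ to $\max(S)$ makes $v_x$ non-decreasing, and strictly increasing on the two terminal edges. Since $S$ and $S'$ share an edge, both have non-empty edge sets, so this applies to each.

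First I would fix one cyclic orientation of $C_P$ that traverses both segments with non-decreasing $x$. As $Q_S$ and $Q_{S'}$ are subpaths of $C_P$ sharing an edge, their union is connected, and a connected subgraph of a cycle is either a subpath or all of $C_P$. To orient coherently, consider a maximal common subpath $A$ of $S$ and $S'$ containing a shared edge; at each end of $A$ the further cycle-edge lies in at most one of $S,S'$ (otherwise $A$ would extend), so that end-vertex is terminal in $S$ or in $S'$, whence the boundary edges of $A$ are terminal edges of $S$ or $S'$ and thus non-vertical. Picking such a non-vertical shared edge $e^*=\{p,q\}$ with $p_x<q_x$ and orienting $C_P$ so that $e^*$ is traversed $p\to q$, (O'.b) applied to each of $S$ and $S'$ (both contain $e^*$, and $p_x<q_x$ pins the increasing direction) shows that each of $S,S'$ is non-decreasing in $x$ along this orientation. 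Since every edge of $S^*$ lies in $S$ or in $S'$, it follows that $v_x$ is weakly monotone along all of $S^*$.

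With monotonicity established the remaining steps are short. If $S^*$ were all of $C_P$, traversing the cycle once in this orientation would make $x$ non-decreasing across every edge yet return to its starting value, forcing every vertex to share one $x$-coordinate and contradicting $p_x<q_x$; hence $S^*$ is a subpath of $C_P$, i.e.\ a segment of $P$. Writing $S^*$ as $w_0w_1\cdots w_L$ in the oriented order, weak monotonicity gives $\min_{v\in S^*}v_x=(w_0)_x$ and $\max_{v\in S^*}v_x=(w_L)_x$ at the terminal vertices and makes $v_x$ non-decreasing along the $\preceq_{S^*}$-order rooted at $w_0\in\MIN(S^*)$; these are precisely (O'.a) and (O'.b), so $S^*$ satisfies (O) by \cref{lem:O}. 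Finally, the terminal edge $\{w_0,w_1\}$ of $S^*$ belongs to $S$ or to $S'$ and $w_0$ has degree one there, so it is a terminal edge of that segment and hence non-vertical; the same holds at $w_L$. Thus $S^*$ is outstretched.

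The hard part will be the orientation step, and in particular a \emph{vertical} shared edge: a vertical interior edge on its own does not fix a traversal direction, and without a coherent orientation the monotonicities of $S$ and $S'$ cannot be glued. The observation I expect to need the most care is that any maximal common subpath of $S$ and $S'$ must terminate at a vertex that is terminal in one of the two segments, so its extreme edges are non-vertical and can be used to orient both segments consistently; the very same monotonicity then simultaneously excludes $S^*=C_P$.
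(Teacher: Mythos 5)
Your proof is correct, and it takes a genuinely different route from the one in the paper. Both arguments hinge on the same anchor, namely a non-vertical edge in $E(S)\cap E(S')$ that is terminal in $S$ or in $S'$; your observation that a maximal common subpath of $S$ and $S'$ must end in a vertex that is terminal in one of the two segments is exactly the structural fact needed to produce such an edge, and your justification of it is sound. From that point on the two proofs diverge. The paper first excludes $S^*=P$ by arguing that the terminal vertices of $S$ and $S'$ would then have to coincide, so that $E(S)\cap E(S')$ could not contain any terminal edge, and afterwards verifies Property~(O) for $S^*$ by contradiction: a common point $\xi\in\Ih{f}\cap\Ih{g}$ with $f\in E(S)\setminus E(S')$ and $g\in E(S')\setminus E(S)$ is refuted by playing the orders $\preceq_S$ and $\preceq_{S'}$ against each other and finally exhibiting a vertex of degree three in $S^*$. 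You instead work throughout with the monotone reformulation (O') of \cref{lem:O}: the shared non-vertical edge $e^*=\{p,q\}$ with $p_x<q_x$ fixes a single cyclic orientation of $C_P$ in which both $S$ and $S'$ run from their minimum to their maximum (using \cref{lem:minmaxS} and (O'.b)), so that $x$ is non-decreasing across every edge of $S^*=S\cup S'$. This one fact does all the work: it excludes $S^*=C_P$ by a telescoping argument (a closed walk with non-negative increments and total increment zero forces all $x$-coordinates to be equal, contradicting $p_x<q_x$), it yields (O'.a) and (O'.b) for $S^*$ immediately, and it makes the non-verticality of the terminal edges of $S^*$ a one-line consequence of their being terminal edges of $S$ or of $S'$. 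Your version is more uniform --- the cases $S\subseteq S'$, $S'\subseteq S$ and $S^*=P$ require no separate treatment --- and it replaces the paper's somewhat delicate degree-three contradiction by a direct gluing of monotonicity; the price is only the preliminary orientation step, which you correctly identified as the crux and handled properly.
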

\begin{proof} 
  Since $S$ and $S'$ are paths in $P$, $S^* = S\cup S'$ is a connected
  subgraph of $P$ and thus, either $S^*\subsetneq P$ is a path or $S^* = P$
  is a polygon, i.e., a ``graph-theoretical'' cycle. In either case, every
  vertex of $S^*$ must have degree at most two in $S^*$ and $E(S)\cap
  E(S')$ must contain at least one of the terminal edges of $S$ as well as
  of $S'$.  By Lemma \ref{lem:minmaxS}, the terminal vertices
  $t\coloneqq\min(S)$ and $\max(S)$ of $S$ as well as $t'\coloneqq\min(S')$
  and $\max(S')$ are uniquely defined and distinct.
  
  Assume, for contradiction, that $S^*=P$.  In this case, it holds for the
  terminal vertices that $t,t',\max(S),\max(S')\in V(S)\cap V(S')$ since
  $S$ and $S'$ are are paths of a cycle $P$ whose union coincides with
  $P$. But then $t_x = t'_x$ and $\max(S)_x = \max(S')_x$ and thus, $t=t'$
  and $\max(S) = \max(S')$. This and $\MIN(P)\subseteq \MIN(S)\cup\MIN(S')$
  implies that $\min(P) = t$.  Since $S$ and $S'$ are outstretched segments
  and $t=t'$ refers to the unique vertex with minimum $x$-coordinate in $S$
  and $S'$, for the terminal edge $e = \{t,u\}$ of $S$ and $f=\{t,u'\}$ of
  $S'$ it must hold $u\neq u'$; otherwise $t$ would have degree one in
  $S^*=P$; a contradiction to $P$ being a cycle). But then $E(S)\cap E(S')$
  does not contain the terminal edges $e$ and $f$.  By similar arguments,
  $E(S)\cap E(S')$ does not contain the terminal edges $\{\max(S),w\}$ and
  $\{\max(S'),w'\}$.  Hence, $E(S)\cap E(S')$ does not contain any of the
  terminal edges of $S$ and $S'$; a contradiction.

  Therefore, $S^*$ is a path. If $S'\subseteq S$ or $S\subseteq S'$, then
  $S^*=S$ or $S^*=S'$, and thus $S^*$ is an outstretched segment of
  $P$. Otherwise $S'\cap S$ is a path that contain exactly one terminal
  vertex $S$ and $S'$ and $E(S\cap S')\ne\emptyset$ contain at least one
  terminal, and therefore non-vertical edge.  We use the partial orders
  $\preceq_{S}$ and $\preceq_{S'}$ as defined Lemma \ref{lem:O} by using
  $t$ and $t'$ as the root of $S$ and $S'$, respectively.
		
  Assume, for contradiction, that $S^*$ does not satisfy (O). Hence, there
  are two distinct edges $f$ and $g$ in $S^*$ and a $\xi\in \Ih{S^*}$ with
  $\xi\in \Ih{f}\cap \Ih{g}$. Since $S$ and $S'$ satisfy (O), we can
  w.l.o.g.\ assume that $f\in E(S)\setminus E(S')$ and $g\in E(S')\setminus
  E(S)$. Note, both edges $f$ and $g$ are non-vertical since neither $\Ih{f} =
  \emptyset$ nor $\Ih{g} = \emptyset$. Since $e,f\in E(S)$, we have
  $e\preceq_S f$ or $f\preceq_S e$, and we can assume w.l.o.g.\ that
  $e\preceq_S f$. Moreover, since $e,g\in E(S')$, we have either
  $e\preceq_{S'} g$ or $g\prec_{S'}e$. However, $g\prec_{S'}e$ together
  with $e\preceq_S f$, implies $\xi<\max(g)_x\le \max(e)_x\le \min(f)_x\le
  \xi$; a contradiction. Hence, $e\prec_{S'} g$ must hold. Since $S$ and
  $S'$ are paths and $f\in E(S)\setminus E(S')$ and $g\in E(S')\setminus
  E(S)$, there is a unique vertex $w\in V(S)\cap V(S') \subseteq V(S^*)$
  that satisfies the following property: $w$ is adjacent to $v\in
  V(S)\setminus V(S')$ and satisfies $w\preceq_S \max(f)$ and $w$ is
  adjacent to $v'\in V(S')\setminus V(S)$ and satisfies $w\preceq_S
  \max(g)$.  Since $S\cap S'$ is a path with $e\in E(S\cap S')$ and $w\in
  V(S\cap S')$, there is a vertex $v'''\in V(S\cap S')$ adjacent to $w$.
  The latter two observations together imply $\deg_{S^*}(w)\ge 3$; a
  contradiction.

  Hence, $S^*$ satisfies (O). Since, in addition, the terminal edges of
  $S^*$ are terminal edges of $S$ or $S'$ and are, thus, non-vertical, it
  follows that $S^*$ is an outstretched segment.
\end{proof}
   
Note that a similar result as in \Cref{lem:snf} does not hold if we only
claim $V(S)\cap V(S')\ne \emptyset$ instead of $E(S)\cap E(S')\ne
\emptyset$. To see this, consider the two maximal outstretched segments
$S_1$ and $S_2$ in the polygon $P$ as shown in
\Cref{fig:consecutive}(right). Here $S_1$ and $S_2$ intersect in a single
vertex but not in their edges. It is easy to verify that $S_1\cup S_2$ is
not an outstretched segment since Property (O) is violated.

\begin{definition}
  An outstretched segment $S$ of $P$ is maximal if there is no outstretched
  segment $S'$ of $P$ with $S\subsetneq S'$.
\end{definition}
The following properties of maximal outstretched segments will be useful:
\begin{proposition}\label{lem:bSPp} %bas-SP-prop
  For two distinct maximal outstretched segments $S$ and $S'$ of a polygon
  $P$ the following conditions are satisfied:
  \begin{enumerate}[noitemsep,nolistsep]
  \item $E(S)\cap E(S')=\emptyset$, i.e.\ they are edge-disjoint;
  \item $S\cap S'= \{\min(S),\max(S)\}\cap \{\min(S'),\max(S')\}$
  i.e.\ they can intersect at most in their terminal vertices;
  \item If $S\cap S'\ne \emptyset$, then 
  $\min(S)=\min(S')$ or
    $\max(S)=\max(S')$.
  \item If $|V(S)\cap V(S')|=2$ then $S\cup S'=P$.  
  \end{enumerate}
\end{proposition}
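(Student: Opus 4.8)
The plan is to prove the four statements in the order given, since each relies on its predecessors, and to route essentially everything through \Cref{lem:snf}, \Cref{lem:minmaxS} and the maximality hypothesis.

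For Statement~(1) I would argue by contradiction in one line: if $E(S)\cap E(S')\ne\emptyset$, then \Cref{lem:snf} makes $S\cup S'$ an outstretched segment of $P$ containing $S$; maximality of $S$ forces $S=S\cup S'$, hence $S'\subseteq S$, and then maximality of $S'$ forces $S'=S$, contradicting that $S$ and $S'$ are distinct. Statement~(2) then follows from edge-disjointness in two steps. First, since the embedding is planar and $S,S'$ share no edge, two edges can meet only at a common endpoint and no vertex can lie in the interior of a foreign edge; thus the point-set $S\cap S'$ consists exactly of common vertices. Second, a degree argument pins these down: a common vertex that is non-terminal in $S$ has both of its $P$-edges inside $S$, and since it has degree two in $P$, its membership in $S'$ would drag one of those edges into $S'$ as well, contradicting~(1). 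Hence every common vertex is terminal in both segments, and \Cref{lem:minmaxS} identifies the terminal vertices with $\min$ and $\max$, giving $S\cap S'\subseteq\{\min(S),\max(S)\}\cap\{\min(S'),\max(S')\}$; the reverse inclusion is immediate.

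Statement~(3) is the crux and the step I expect to be the main obstacle. I would assume for contradiction that $S\cap S'\ne\emptyset$ while $\min(S)\ne\min(S')$ and $\max(S)\ne\max(S')$. By~(2) the only surviving type of contact is a \emph{crossed} one, $\min(S)=\max(S')$ or $\max(S)=\min(S')$. The two crossed equalities cannot hold together: comparing $x$-coordinates would give $\min(S)_x=\max(S)_x$, whence $E(S)=\emptyset$ by \Cref{lem:minmaxS}, so $S$ would be a single vertex lying in $S'$ — impossible for distinct maximal segments. So exactly one crossed equality holds, say $p\coloneqq\min(S)=\max(S')$, and $p$ is the sole common point. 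Gluing the two paths at $p$ yields a path, because the two $P$-edges at $p$ (one leaving $S$ toward larger $x$, one leaving $S'$ toward smaller $x$) are distinct and give $p$ degree two; and since $\Ih{S'}=[\min(S')_x,p_x)$ and $\Ih{S}=[p_x,\max(S)_x)$ are disjoint, $S\cup S'$ still satisfies~(O) and has non-vertical terminal edges. Thus $S\cup S'$ is outstretched and strictly larger than $S$, contradicting maximality. The delicate point throughout is precisely that a single crossed contact — unlike a same-type contact — always admits an $x$-monotone concatenation, so it can never coexist with maximality.

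Finally, Statement~(4): if $|V(S)\cap V(S')|=2$, then by~(2) the two-element sets $\{\min(S),\max(S)\}$ and $\{\min(S'),\max(S')\}$ must coincide, and $x$-monotonicity via \Cref{lem:minmaxS} forces $\min(S)=\min(S')\eqqcolon u$ and $\max(S)=\max(S')\eqqcolon w$. Both $S$ and $S'$ are then $u$–$w$ paths inside the cycle $P$; since a cycle contains exactly two such paths, namely its two arcs, the edge-disjoint segments $S,S'$ are precisely these arcs, so $S\cup S'=P$.
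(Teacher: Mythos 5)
Your proofs of Statements~(1), (2) and~(4) follow essentially the same path as the paper's: maximality via \Cref{lem:snf} for edge-disjointness, the degree-two argument in the cycle $P$ for~(2), and the two-arcs-of-a-cycle observation for~(4). The real divergence is in Statement~(3). The paper argues that $S\cup S'$ cannot be a maximal outstretched segment and concludes from this that ``only $S\cup S'=P$ is possible,'' so that $S$ and $S'$ must share both terminal vertices, from which the conclusion follows. You instead perform a case analysis on \emph{which} terminal identifications survive Statement~(2), and eliminate the two ``crossed'' contacts $\min(S)=\max(S')$ and $\max(S)=\min(S')$ by exhibiting an explicit $x$-monotone concatenation: the intervals $\Ih{S}$ and $\Ih{S'}$ abut at $p_x$ and are disjoint, the two $P$-edges at $p$ point in opposite $x$-directions so the union is a path with non-vertical terminal edges, hence an outstretched segment strictly containing $S$ --- contradicting maximality. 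This is the same gluing technique the paper itself deploys later in the proof of \Cref{lem:consec-x}, and it has the advantage of being fully explicit where the paper's step from ``$S\cup S'$ is not a maximal outstretched segment'' to ``$S\cup S'=P$'' is rather terse (a union of two maximal segments meeting in one terminal vertex need not be outstretched at all, so the dichotomy the paper invokes deserves the justification you supply). You also take care of the degenerate double-crossed case via \Cref{lem:minmaxS}. One small point present in both your proof and the paper's: the degree argument in~(2) tacitly assumes $\deg_{S'}(p)\ge 1$, i.e.\ that $S'$ is not a single vertex; that case is dispatched by noting a one-point segment inside $S$ would violate maximality of $S'$.
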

\begin{proof}
  Let $S, S'$ be two distinct maximal outstretched segments of a
  polygon $P$.

  1. If there is an edge $e\in E(S)\cap E(S')$, then \cref{lem:snf} implies
  that $S^*\coloneqq S\cup S'$ is an outstretched segment of $P$ and
  $S,S'\subseteq S^*$. Maximality implies $S=S'=S^*$; a contradiction.

  2. Let $p \in S\cap S'$.  By Statement~(1), $E(S)\cap E(S')=\emptyset$.
  Since $S,S'\subseteq P$ and $P$ is planar we can conclude that $S$ and
  $S'$ can only intersect in a common vertex, i.e., $p\in S\cap S'\subseteq
  V(S)\cap V(S')$.  Now, assume for contradiction that $p \in
  V(S)\setminus\{\min(S),\max(S)\}$ or $p \in
  V(S')\setminus\{\min(S'),\max(S')\}$.  Then, we may assume w.l.o.g.\ $p
  \in V(S)\setminus\{\min(S),\max(S)\}$.  Since $S,S'\subseteq P$ and $P$
  is a graph-theoretical cycle, and since $p\in
  V(S)\setminus\{\min(S),\min(S)\}$, where $\min(S)$ and $\max(S)$ are
  terminal vertices of $S$, we conclude that $\deg_{S}(p)=2=\deg_{P}(p)$.
  This, together with $\deg_{S'}(p)\ge 1$, implies that there is an
  incident edge $\{p,p'\}\in E(S')\subseteq E(P)$ for which we have
  $\{p,p'\}\in E(S)$. In particular, $\{p,p'\}\in E(S)\cap E(S')\ne
  \emptyset$, contradicting Statement~1.

  3. Assume that $S\cap S'\ne \emptyset$
  By Statement~2, $S\cap S'= \{\min(S),\max(S)\}\cap \{\min(S'),\max(S')\}$.
  Moreover, $S,S'\subsetneq S\cup S'$. However, $S\cup S'$ cannot be a
  maximal outstretched segment since, otherwise, we would contradict
  maximality of $S$ and $S'$.  Hence, only $S\cup S' = P$ is
  possible. Since $S$ and $S'$ are edge-disjoint, they must therefore
  intersect in precisely two vertices, i.e.,
  $\{\min(S),\max(S)\} = \{\min(S'),\max(S')\}$. This, together with
  $S\cup S' = P$, implies that $\min(S)=\min(S')$ or $\max(S)=\max(S')$ must
  hold.
  
  4. If $|V(S)\cap V(S')|=2$, then Statement~3 implies that
  $\min(S)=\min(S')$ and $\max(S)=\max(S')$. Since by Statement 1., $S$ and
  $S'$ are edge disjoint and thus internally vertex disjoint in $P$, we
  conclude that $S\cup S'$ is a graph theoretic cycle, and thus coincides
  with $P$.
\end{proof}

\begin{corollary}\label{cor:edge-uniqueS}
  Let $P$ be a polygon. Then, for every non-vertical edge $e\in E(P)$,
  there is a unique maximal outstretched segment $S$ of $P$ such that
  $e\in E(S)$. %Let $e =\{u,v\} \in E(P)$
\end{corollary}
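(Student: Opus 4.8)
The plan is to treat existence and uniqueness separately, with uniqueness falling out immediately from the edge-disjointness of distinct maximal outstretched segments already established in \Cref{lem:bSPp}.

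For existence, I would first check that a single non-vertical edge is itself an outstretched segment. Viewing $e=\{u,v\}$ as the segment whose underlying path is the two-vertex subpath $u$--$v$ of $C_P$, Property~(O) holds vacuously: there are no two \emph{distinct} edges in $E(S)$, so the condition $\Ih{e}\cap\Ih{f}=\emptyset$ for distinct $e,f\in E(S)$ is trivially met. Its only terminal edge is $e$ itself, which is non-vertical by hypothesis, so $e$ is an outstretched segment of $P$. Since $P$ has finitely many edges, the family of outstretched segments of $P$ containing $e$ is finite and non-empty, hence possesses an inclusion-maximal element $S$. I would then note that $S$ is maximally outstretched in the sense of the definition: any outstretched segment $S'$ with $S\subsetneq S'$ would in particular satisfy $e\in E(S')$, placing $S'$ in the same family and contradicting the inclusion-maximality of $S$ within it. This yields a maximal outstretched segment $S$ with $e\in E(S)$.

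For uniqueness, suppose $S$ and $S'$ are maximal outstretched segments of $P$ with $e\in E(S)$ and $e\in E(S')$. Then $e\in E(S)\cap E(S')$, so $E(S)\cap E(S')\neq\emptyset$. By \Cref{lem:bSPp}(1), any two \emph{distinct} maximal outstretched segments of $P$ are edge-disjoint; hence $E(S)\cap E(S')\neq\emptyset$ forces $S=S'$.

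The only mild subtlety — and the closest thing to an obstacle — is making the extension step airtight, namely confirming that being inclusion-maximal within the finite family of outstretched segments containing $e$ coincides with being a maximal outstretched segment of $P$ as defined. This is immediate because every outstretched segment that properly contains $S$ automatically contains $e$ and so belongs to that family; all remaining steps are routine invocations of the definitions and of \Cref{lem:bSPp}.
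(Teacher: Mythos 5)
Your proposal is correct and follows essentially the same route as the paper: existence by noting that the single non-vertical edge is itself an outstretched segment and extending it to an inclusion-maximal one, and uniqueness by the edge-disjointness of distinct maximal outstretched segments from \Cref{lem:bSPp}(1). The paper merely compresses the existence step into one sentence, whereas you spell out the finiteness and the extension argument explicitly.
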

\begin{proof}
  Let $e\in E(P)$ be a non-vertical edge. Since $s[e]\subset P$ is an
  outstretched segment of $P$, there is a maximal outstretched segment $S$
  containing $e$. Since maximal outstretched segments are edge-disjoint by
  \Cref{lem:bSPp}~(1), $S$ is uniquely determined.
\end{proof}

\begin{definition}\label{def:consecutive}
  A path (segment) in $P$ that consists of vertical edges only is called
  \emph{vertical} path (segment).  Two maximal outstretched segments $S$
  and $S'$ are \emph{consecutive (along $P$)} if they are distinct and (i)
  $V(S)\cap V(S')\neq \emptyset$ or (ii) there is a vertical path $W$ that
  does not share edges with $S$ and $S'$ and such that $S\cup S'\cup W$ is
  a segment of $P$.
\end{definition} 

\begin{figure}
  \begin{center}
    \includegraphics[width=0.9\textwidth]{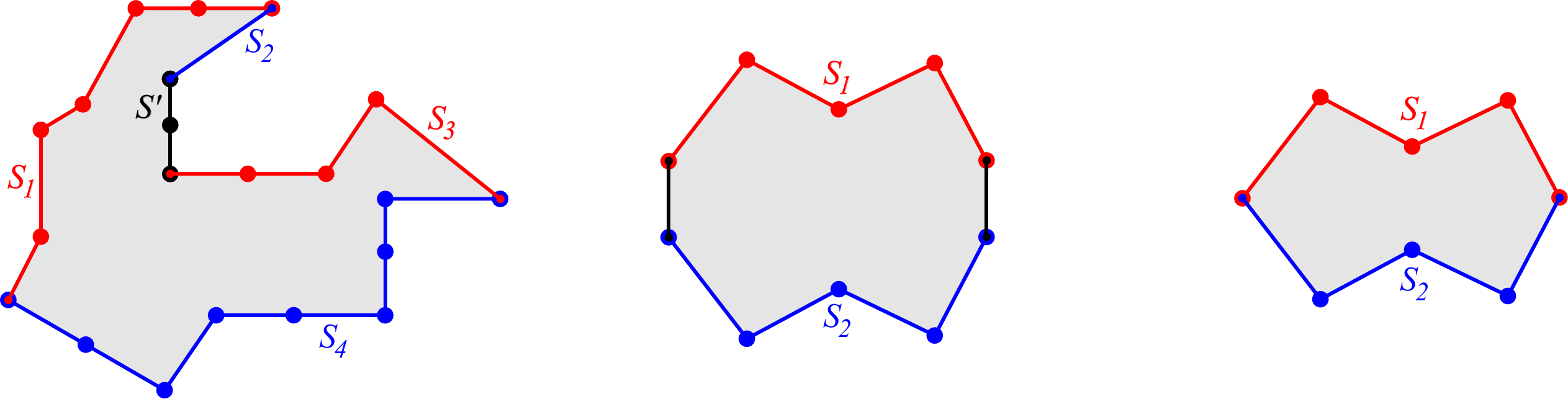}%{./consecutive.png}
  \end{center}
  \caption{Decompositions of polygons into segments. In all examples, $S_i$
    is maximal outstretched and $S_i$ and $S_{i+1}$ are consecutive.
    According to \Cref{lem:bSPp}, all maximal outstretched segments $S_i$
    and $S_j$ are edge-disjoint and intersect at most in their terminal
    vertices. If they intersect, $\min(S_j)=\min(S_i)$ or
    $\max(S_j)=\max(S_i)$ must hold, and $S_i$ and $S_j$ are
    consecutive. Otherwise, two consecutive maximal outstretched segments
    $S_i$ and $S_j$ are separated by a single vertical segment and
    $\min(S_j)_x=\min(S_i)_x$ or $\max(S_j)_x=\max(S_i)_x$ must hold. \newline
    \emph{Left:} The vertical segment $S'$ is not contained in a
    maximal outstretched segment. The maximal outstretched segments
    $S_2$ and $S_3$ are consecutive with distinct terminal vertices.
    However, $\min(S_2)_x = \min(S_3)_x$ and the terminal vertices
    $\min(S_2)$ and $\min(S_3)$ are connected by a path consisting of
    vertical edges, namely the vertical segment $S'$. All other
    consecutive maximal outstretched segments intersect in precisely one
    vertex. \newline
    \emph{Middle and Right:} In the middle example the two
    consecutive segments $S_1$ and $S_2$ are vertex-disjoint. In contrast,
    the example on the right shows two consecutive outstretched segments
    $S_1$ and $S_2$ with $|V(S_1)\cap V(S_2)|=2$ and thus, according to
    \Cref{lem:bSPp}, $S_1\cup S_2 = P$.}
  \label{fig:consecutive}
\end{figure}

\Cref{lem:bSPp} implies $|V(S)\cap V(S')|\leq 2$ for two
distinct maximal outstretched segments $S$ and $S'$.  Moreover, if
$|V(S)\cap V(S')| = 1$ then, by \Cref{lem:bSPp}(3), two maximal
outstretched segments $S$ and $S'$ intersect precisely in one of their
terminal vertices.  In this case, it is easy to verify that $S\cup S'$
forms a segment of $P$. Together with \Cref{lem:bSPp}(4), this implies 
\begin{obs}\label{obs:2consecutive}
  For two consecutive maximal outstretched segments $S$ and $S'$ of $P$
  with $V(S)\cap V(S')\neq \emptyset$ it holds that $S\cup S'=P$ or $S\cup
  S'$ is a segment in $P$.
\end{obs}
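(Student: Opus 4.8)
The plan is to dispose of the statement by a short case analysis on $|V(S)\cap V(S')|$, feeding entirely on the structural results already collected in \Cref{lem:bSPp}. First I would note that the hypotheses of the observation are exactly ``$S,S'$ distinct maximal outstretched segments'' together with $V(S)\cap V(S')\neq\emptyset$; the consecutiveness is automatic here, since a nonempty vertex intersection is condition~(i) of \Cref{def:consecutive}. By \Cref{lem:bSPp}(2) the intersection $S\cap S'$ consists only of terminal vertices of both segments, and by the remark preceding the observation $|V(S)\cap V(S')|\le 2$. Hence $|V(S)\cap V(S')|\in\{1,2\}$, and these are the only two cases to treat.

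If $|V(S)\cap V(S')|=2$, there is nothing to do: \Cref{lem:bSPp}(4) gives $S\cup S'=P$ directly, which is the first alternative of the claim.

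The case $|V(S)\cap V(S')|=1$ is where the (minor) work lies, and I would prove $S\cup S'$ is a segment of $P$ by a degree count. Let $v$ be the single shared vertex; by \Cref{lem:bSPp}(3) it is a terminal vertex of both $S$ and $S'$ (namely $\min(S)=\min(S')=v$ or $\max(S)=\max(S')=v$). Since $S$ and $S'$ are edge-disjoint paths by \Cref{lem:bSPp}(1), every vertex other than $v$ lies in exactly one of the two segments and retains its degree $\le 2$, while $v$ has degree one in each and therefore degree two in $S\cup S'$. Thus $S\cup S'$ is connected with maximum degree two; counting edges, $|E(S\cup S')|=|E(S)|+|E(S')|=(|V(S)|-1)+(|V(S')|-1)=|V(S\cup S')|-1$, so it is a tree, and a connected tree of maximum degree two is a path. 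As $S,S'\subseteq P$, this path is a subpath of $P$, i.e.\ a segment of $P$.

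The only point deserving care---and the main obstacle, such as it is---is confirming in the second case that the union is genuinely a single path rather than a more complicated subgraph of $P$; this reduces entirely to the edge-count/degree bookkeeping above, which hinges on edge-disjointness (\Cref{lem:bSPp}(1)) and on $v$ being terminal in both segments (\Cref{lem:bSPp}(3)). Combining the two cases yields $S\cup S'=P$ or $S\cup S'$ a segment of $P$, as claimed.
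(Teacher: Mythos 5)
Your proof is correct and follows essentially the same route as the paper: the remark preceding the observation performs the identical case split on $|V(S)\cap V(S')|\in\{1,2\}$, invoking \Cref{lem:bSPp}(4) for the two-vertex case and declaring the one-vertex case ``easy to verify.'' Your degree-and-edge-count argument simply makes that last step explicit, which is a welcome (and correct) filling-in of the detail the paper omits.
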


\begin{lemma}
  \label{lem:consec-x}
  If $S$ and $S'$ are consecutive maximal outstretched segments along $P$,
  then $\min(S)_x=\min(S')_x$ or $\max(S)_x=\max(S')_x$. 
\end{lemma}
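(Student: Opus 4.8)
The plan is to split the argument according to the two alternatives in \Cref{def:consecutive}.

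First, suppose $S$ and $S'$ are consecutive because $V(S)\cap V(S')\neq\emptyset$. Then $S\cap S'\neq\emptyset$ as point sets, so since $S$ and $S'$ are distinct maximal outstretched segments, \Cref{lem:bSPp}(3) immediately yields $\min(S)=\min(S')$ or $\max(S)=\max(S')$, whence the claimed equality of $x$-coordinates. This case is essentially free once \Cref{lem:bSPp} is invoked.

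The substantive case is the second alternative: there is a vertical path $W$, edge-disjoint from $S$ and $S'$, with $S\cup S'\cup W$ a segment of $P$. If $V(S)\cap V(S')\neq\emptyset$ we are back in the first case, so I assume $S$ and $S'$ are vertex-disjoint. The first step is to pin down how $W$ attaches. Since $P^*\coloneqq S\cup S'\cup W$ is a simple path whose edge set partitions into $E(S)$, $E(S')$, $E(W)$ (edge-disjointness of $S,S'$ comes from \Cref{lem:bSPp}(1)), and since $S,S'$ are vertex-disjoint connected subpaths, the three pieces must occur along $P^*$ in the order $S$--$W$--$S'$. A degree count then shows that the endpoint $a$ of $W$ meeting $S$ is terminal in $S$ and the endpoint $b$ meeting $S'$ is terminal in $S'$: an internal vertex of $S$ has degree two in $S$ and would acquire a third incident edge from $W$, violating $\deg_{P^*}\le 2$. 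As all vertices of the nonempty vertical path $W$ share a common $x$-coordinate $\xi$, we obtain $a_x=b_x=\xi$ with $a\in\{\min(S),\max(S)\}$ and $b\in\{\min(S'),\max(S')\}$ (using \Cref{lem:minmaxS}). The degenerate cases $E(S)=\emptyset$ or $E(S')=\emptyset$ are handled directly: then $S$ is a single vertex lying on $W$, so $\min(S)_x=\max(S)_x=\xi=b_x$, and whichever terminal of $S'$ the value $b_x$ matches delivers one of the two desired equalities.

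It remains to rule out the ``crossing'' configurations $a=\min(S),\,b=\max(S')$ and $a=\max(S),\,b=\min(S')$; in the two ``matching'' configurations the equality $a_x=b_x$ gives $\min(S)_x=\min(S')_x$ or $\max(S)_x=\max(S')_x$ directly. I expect this exclusion to be the crux. Consider, say, $a=\max(S)$ and $b=\min(S')$, so that $\max(S)_x=\xi=\min(S')_x$ and hence all of $S$ lies in $\{x\le\xi\}$ and all of $S'$ in $\{x\ge\xi\}$. The claim is that $S^*\coloneqq S\cup W\cup S'$ is then outstretched, contradicting the maximality of $S$ since $S\subsetneq S^*$. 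Indeed, the terminal edges of $S^*$ are the terminal edges of $S$ at $\min(S)$ and of $S'$ at $\max(S')$, both non-vertical; and Property~(O) holds because every $e\in E(S)$ satisfies $\Ih{e}\subseteq(-\infty,\xi)$, every $f\in E(S')$ satisfies $\Ih{f}\subseteq[\xi,\infty)$, the edges of $W$ are vertical with empty $\Ih{\cdot}$, and $S,S'$ each satisfy (O) by themselves. The symmetric crossing case is identical with left and right interchanged. Hence only the matching configurations survive, which completes the proof.

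The main obstacle is the bookkeeping in the third paragraph --- establishing the $S$--$W$--$S'$ order and the terminality of the attachment vertices --- together with the clean (O)-verification that yields the maximality contradiction. Everything else reduces to \Cref{lem:bSPp} and the definition of an outstretched segment.
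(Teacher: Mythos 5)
Your proof is correct and follows essentially the same route as the paper: case (i) is dispatched by \Cref{lem:bSPp}~(3), and in case (ii) the vertical bridge $W$ forces one of the four terminal-vertex $x$-coordinate equalities, with the two ``crossing'' ones excluded because $S\cup W\cup S'$ would then be an outstretched segment contradicting maximality. The paper's version is terser; you merely supply the details it leaves implicit (the $S$--$W$--$S'$ ordering, terminality of the attachment vertices, and the explicit verification of Property~(O)).
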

\begin{proof}
  In Case (i) of \cref{def:consecutive}, the statement of the lemma follows
  by \Cref{lem:bSPp}~(3). Now, consider Case (ii) of
  \cref{def:consecutive}: Since $W$ is vertical and $S$ and $S'$ are
  consecutive, (at least) one of the following equations must hold:
  $\min(S)_x=\min(S')_x$, $\max(S)_x=\max(S')_x$, $\min(S)_x=\max(S')_x$,
  or $\max(S)_x=\min(S')_x$. In the latter two cases, $S\cup S'\cup W$ is
  an outstretched segment of $P$, contradicting maximality of $S$ and $S'$;
  the remaining two equalities constitute the statement of Lemma.
\end{proof}

In the following observation we collect basic properties of terminal
vertices of consecutive maximal outstretched segments on a simple
polygon. They are immediate consequences of the definition,
\Cref{lem:consec-x} and the fact that maximal outstretched segments
are edge-disjoint graph-theoretic paths on a given simple polygon.

\begin{obs}
  \label{obs:consecutive}
  Let $P$ be a polygon and $S,S'$ distinct maximal outstretched segments of
  $P$ with sets of terminal vertices $\{t_1,t_2\}\in S$ and
  $\{t_1',t_2'\}\in S$. Then, by \cref{def:consecutive}, if
  $\{t_1,t_2\}\cap\{t_1',t_2'\}\ne\emptyset$, then $S$ and $S'$ are
  consecutive along $P$. A terminal vertex $t$ of $S$ is a terminal vertex
  of at most one other maximal outstretched segment (otherwise, $t$ would
  have degree $3$ in $P$ which is not possible).  Moreover, if $t$ is a
  terminal vertex that appears as a terminal vertex in only one maximal
  outstretched segment $S$, then there is a uniquely defined maximal
  outstretched segment $S''$ such that $S$ and $S''$ are consecutive and
  $S''$ has a terminal vertex $t''$ with $t_x=t''_x$. In this case, $t$ and
  $t''$ are the terminal vertices of a vertical segment $W$ since, by
  definition, all non-vertical edges are contained in maximal outstretched
  segments. Moreover, $t''$ does not appear as a terminal vertex in any
  other maximal outstretched segment (otherwise, $t''$ would be contained
  in $W$, $S''$ and some other maximal outstretched segment and thus, would
  have degree $3$ in $P$).
\end{obs}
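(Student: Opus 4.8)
The plan is to establish the four assertions in turn, relying throughout on three facts: distinct maximal outstretched segments are edge-disjoint (\Cref{lem:bSPp}(1)); every vertex of the cycle $P$ has degree exactly two; and every non-vertical edge lies in a unique maximal outstretched segment (\Cref{cor:edge-uniqueS}), whereas terminal edges of outstretched segments are by definition non-vertical. Note also that, since $t$ is a terminal vertex, $S$ has a nonempty edge set, so by \Cref{lem:minmaxS} its two terminal vertices are distinct and its terminal edges are non-vertical.

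The first assertion is immediate: a common terminal vertex yields $V(S)\cap V(S')\neq\emptyset$, so $S$ and $S'$ are consecutive by case~(i) of \cref{def:consecutive}. For the second assertion I would argue by contradiction. If a terminal vertex $t$ of $S$ were also a terminal vertex of two further maximal outstretched segments, then each of these three segments carries a non-vertical terminal edge incident to $t$; by edge-disjointness these three edges are pairwise distinct, so $\deg_P(t)\geq 3$, contradicting that $P$ is a cycle. The very same degree count will settle the last assertion once $t''$ has been identified.

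The core is the third assertion. Assume $t$ is a terminal vertex of $S$ and of no other maximal outstretched segment. Its $S$-terminal edge $e$ is non-vertical, and since $\deg_P(t)=2$ there is exactly one further edge $f$ at $t$. I would first show $f$ is vertical: were it non-vertical it would lie in a maximal outstretched segment $S^*\neq S$ (as $f\notin E(S)$), and being the only $S^*$-edge at $t$ it would make $t$ terminal in $S^*$ as well, contradicting the hypothesis. Next I would show that the whole maximal run $W$ of vertical edges of $P$ emanating from $t$ through $f$ shares edges with no maximal outstretched segment, via a propagation argument: a vertical edge can never be a terminal edge, so if some edge of $W$ lay in a maximal outstretched segment $S^*$, it would be interior to $S^*$, forcing both of its endpoints to have degree two in $S^*$; walking along $W$ this property propagates to all of $W$, in particular to $f$, whence $e,f\in E(S^*)$ and thus $S^*=S$, contradicting $f\notin E(S)$. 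Since $P$ has a non-vertical edge, $W\neq P$ is a genuine path with a second terminal vertex $t''\neq t$, and as $W$ is vertical we get $t''_x=t_x$. The unique non-vertical edge $g$ at $t''$ determines the unique maximal outstretched segment $S''\ni g$; because $W$ carries no edge of any maximal outstretched segment, $g$ is the only $S''$-edge at $t''$, so $t''$ is terminal in $S''$. As $W$ is moreover edge-disjoint from both $S$ and $S''$ and joins their terminal vertices $t$ and $t''$, the union $S\cup W\cup S''$ is a segment of $P$, so $S$ and $S''$ are consecutive by case~(ii) of \cref{def:consecutive}; uniqueness of $S''$ follows from that of $W$ and of $g$. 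The final assertion then repeats the degree argument of the second: an additional maximal outstretched segment with terminal vertex $t''$ would contribute a third edge at $t''$ besides $g\in E(S'')$ and the vertical edge of $W$, again forcing $\deg_P(t'')\geq 3$.

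I expect the propagation argument --- ruling out that a maximal outstretched segment swallows part of the connecting vertical run $W$, recalling that such segments may legitimately contain \emph{interior} vertical edges --- to be the main obstacle; a minor additional point to acknowledge is the degenerate case in which $S\cup W\cup S''$ closes up into all of $P$ rather than remaining a proper path, which the notion of ``consecutive'' must accommodate in the same spirit as \Cref{obs:2consecutive}.
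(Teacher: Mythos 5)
Your proof is correct and takes essentially the same route as the paper, which justifies this observation only through the parenthetical degree-two counts together with \cref{def:consecutive}, \cref{lem:consec-x} and the edge-disjointness of maximal outstretched segments (\cref{lem:bSPp}(1)). Your explicit propagation argument ruling out that a maximal outstretched segment absorbs an interior vertical edge of the run $W$ merely fills in a detail the paper leaves implicit, and your remark on the degenerate case $S\cup W\cup S''=P$ is handled in the same spirit as \cref{obs:2consecutive}.
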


\section{Sweep-Lines and the Parity of Maximal Outstretched Segments}
\label{sec:sl}

\citet{Bajaj:90} devised a nesting-algorithm based on the ``sweep-line''
approach for overlap-free sets of polygons with the additional condition
that polygons do not touch each other.  Here, we will generalize this
approach to situations where polygons may touch.  Our starting point is the
``ray-casting algorithm'' devised by \citet{Shimrat1962} in 1962 (also
known as ``crossing number algorithm'' or ``even-odd rule algorithm''). It
solves the \emph{point-in-polygon problem} of determining whether a given
point in the plane lies inside, outside, or on the boundary of a polygon,
see \cite{Huang1997} for a survey of the topic.

In the following, let $S$ be an outstretched segment of polygon $P$.  By
\cref{def:prop-O,lem:O}, for every $x\in \Ih{S}$ there is a unique
non-vertical edge $e$ in $S$ such that $x\in \Ih{e}$.  In other words, the
intervals $\Ih{e}$ of the non-vertical edges $e\in E(S)$ partition
$\Ih{S}$.  In upcoming proofs, we need also the identification of edges
with $x \in\Ic{S}=[\min(S)_x,\max(S)_x]$ instead of $\Ih{S}$.  Let $f$
denote the edge in $S$ for which $\max(f)_x = \max(S)_x$.  Since $S$ is
outstretched, $f$ corresponds to one of the two terminal edges of $S$ and
is, in particular, uniquely determined.  Repeating the latter arguments,
for every $x\in \Ic{S}$ there is a unique non-vertical edge $e$ in $S$ such
that $x\in \Ih{e}$ whenever $x\neq \max(S)_x$ and, in case $x = \max(S)_x$,
there is the unique edge $f$ with $x\in \Ih{f}\cup\{\max(S)_x\}$.  In this
way, we obtain a unique identification of elements $x\in \Ic{S}$ and edges
$e$ in $S$. In this case, we say that \emph{the edge $e$ is associated with
$x$}.  Thus, for all $x\in \Ic{S}$, we can define $\yVal{S}{x}$ as the
unique y-coordinate $y^*$ such that $(x,y^*)\in s[u,v]$ for the unique edge
$e = \{u,v\}\in E(S)$ that is associated with $x$.  Note that there might
be additional, vertical edges $e'=\{u',v'\}\in E(S)$ such that $(x,y)\in
s[u',v']$. For these vertical edges, however, $\Ih{e'} = \emptyset$, and
$x\ne t_x$ for a terminal vertex $t$ of a non-vertical edge in $S$.

A direct consequence of the ray-casting algorithm, restated in our
notation, is the following observation:
\begin{proposition} [{\citet[L.\ 2.4]{Bajaj:90}}]
  \label{prop:L24}
  Let $P$ be a polygon and $p=(\xi,y)\in \mathbb{R}^2\setminus P$.  Then,
  $p\in \Int(P)$ if and only if the number of edges $e\in \E(P)$ with
  $\xi\in \Ih{e}$ and $y<\yVal{s[e]}{\xi}$ is odd.
\end{proposition}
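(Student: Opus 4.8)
The plan is to treat this as the classical ray-casting (crossing-number) argument, whose only non-elementary ingredient is the Jordan curve theorem: as $P$ is a simple closed polygon, $\mathbb{R}^2\setminus P$ splits into exactly the bounded component $\Int(P)$ and the unbounded component $\Ext(P)$, and any path from $p$ to infinity that meets $P$ only transversally switches between these components at each crossing. Concretely, I would use the upward vertical ray $R_p\coloneqq\{(\xi,y')\mid y'>y\}$ from $p$; since $R_p$ eventually lies in $\Ext(P)$, the point $p$ lies in $\Int(P)$ if and only if $R_p$ crosses $P$ an odd number of times. The whole task is then to show that the integer
\[
c(\xi,y)\coloneqq\#\{e\in\E(P)\mid \xi\in\Ih{e}\text{ and } y<\yVal{s[e]}{\xi}\}
\]
agrees in parity with this crossing number.

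The device for handling the degenerate configurations---vertices lying on the line $x=\xi$, and vertical edges running along it---is a rightward perturbation. Since $P$ is closed and $p\notin P$, some open disk about $p$ is disjoint from $P$ and therefore lies entirely in $p$'s component of $\mathbb{R}^2\setminus P$. Using that $P$ has only finitely many vertices and edges, I would choose $\epsilon>0$ small enough that (i) no vertex of $P$ has $x$-coordinate in $(\xi,\xi']$, where $\xi'\coloneqq\xi+\epsilon$; (ii) the point $p'\coloneqq(\xi',y)$ lies in that disk; and (iii) for every edge $e$ the sign of $\yVal{s[e]}{x}-y$ is constant for $x\in[\xi,\xi']\cap\Ic{e}$. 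Condition (iii) is feasible because each height $\yVal{s[e]}{\cdot}$ is continuous and $p\notin P$ forces $y\ne\yVal{s[e]}{\xi}$ whenever $e$ is the edge associated with $\xi$. By (i) the line $x=\xi'$ is \emph{generic}: it avoids all vertices, so it meets $P$ only in relative interiors of non-vertical edges and only transversally, and by (ii) the point $p'$ lies in the same component of $\mathbb{R}^2\setminus P$ as $p$.

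The remaining step, and the one I expect to be the real obstacle, is to verify $c(\xi,y)=c(\xi',y)$ through a case analysis in which the asymmetric half-open convention $\Ih{e}=[\min(e)_x,\max(e)_x)$ does exactly the work it was designed for. An edge with $\xi\in\Io{e}$ keeps $\xi'\in\Io{e}$ and keeps its side of $p$ by (iii); an edge with $\xi=\min(e)_x$ still satisfies $\xi'\in\Io{e}$ by (i), so it is counted on both sides, whereas an edge with $\xi=\max(e)_x$ satisfies $\xi'>\max(e)_x$ and is counted on neither; vertical edges have $\Ih{e}=\emptyset$ and are never counted, and the line $x=\xi'$ misses them as well. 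Grouping the two edges incident to a vertex $v$ with $v_x=\xi$, a pass-through vertex contributes one unit on each side, while an $x$-extremal vertex contributes $0$ or $2$ on each side (both incident edges share the height $v_y\ne y$, since $v\in P$ and $p\notin P$); in every case the contributions to $c(\xi,y)$ and $c(\xi',y)$ match. Finally, for the generic line $x=\xi'$ the quantity $c(\xi',y)$ is literally the number of transversal crossings of $R_{p'}$ with $P$, so the Jordan-curve dichotomy yields $p'\in\Int(P)$ iff $c(\xi',y)$ is odd; since $p$ and $p'$ share a component and $c(\xi,y)=c(\xi',y)$, the proposition follows.
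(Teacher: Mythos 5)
Your proof is correct, but note that the paper itself does not prove this proposition at all: it is imported verbatim as Lemma~2.4 of Bajaj and Dey, with the surrounding text merely remarking that it is ``a direct consequence of the ray-casting algorithm.'' What you have supplied is therefore a self-contained proof of the cited black box rather than an alternative to an argument in the paper. Your route is the standard one, and the details check out: the rightward perturbation to a generic abscissa $\xi'=\xi+\epsilon$ with no vertex $x$-coordinate in $(\xi,\xi']$ correctly shows that the half-open convention $\Ih{e}=[\min(e)_x,\max(e)_x)$ makes the count $c(\xi,y)$ invariant under the perturbation (edges entered at $\xi$ stay counted, edges exited at $\xi$ are counted on neither side, vertical edges never contribute and are missed by the line $x=\xi'$), and at the generic position the count literally equals the number of transversal crossings of the upward ray, so the Jordan-curve parity argument applies; the disk argument puts $p$ and $p'$ in the same component. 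The only cosmetic imprecision is the phrase ``a pass-through vertex contributes one unit on each side,'' which should read that it contributes the \emph{same} amount ($0$ or $1$, depending on whether $v_y>y$) on each side --- your parenthetical shows you know this, and the conclusion that the contributions match is what matters. What your proof buys over the paper's citation is that the degenerate configurations the paper is most concerned with elsewhere (corners and vertical edges on the sweep line) are handled explicitly here rather than deferred to the reference.
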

\Cref{prop:L24} and \Cref{cor:edge-uniqueS} 
imply
\begin{lemma}\label{lem:inside<->odd}
  Let $P$ be a polygon and $p=(\xi,y)\in \mathbb{R}^2\setminus P$. Then,
  $p\in \Int(P)$ if and only if the number of maximal outstretched segments
  with $\xi\in \Ih{S}$ and $y<\yVal{S}{\xi}$ is odd.
\end{lemma}
\begin{proof}
  The \emph{only-if} direction is a direct consequence of \Cref{prop:L24}
  and \Cref{cor:edge-uniqueS} together with the fact that
  $e\in E(S)$ with $\xi\in \Ih{e} \subseteq \Ih{S}$ implies
  $y<\yVal{s[e]}{\xi} = \yVal{S}{\xi}$.  For the \emph{if} direction
  suppose that the number of maximal outstretched segments $S$ with $\xi\in
  \Ih{S}$ and $y<\yVal{S}{\xi}$ is odd.  Since $\xi\in \Ih{S}$ there must
  be at least one edge $e \in E(S)$ with $x\in \Ih{e}$ and
  $y<\yVal{s[e]}{\xi}$.  By Property (O), for all distinct edges $f,f'\in
  E(S)$ it must hold that $\Ih{f}\cap \Ih{f'} = \emptyset$.  Hence, the
  edge $e$ in $S$ with $x\in \Ih{e}$ and $y<\yVal{s[e]}{\xi}$ is
  unique. Thus, the number of such edges must be odd and \cref{prop:L24}
  implies that $p\in \Int(P)$.
\end{proof}

The aim of this section is to show that instead of considering whether the
number of maximal outstretched segments with $\xi\in \Ih{S}$ and
$y<\yVal{S}{\xi}$ is odd or even for a given $\xi$, one can assign the
property of being ``odd'' or ``even'' to the maximal outstretched segments
$S$ of $P$ themselves, independent of the choice of $\xi$. We start with the
following technical detail:
\begin{lemma} \label{lem:asa} %above-stays-above
  Two maximal outstretched segments $S,S'$ of a polygon $P$ satisfy
  \begin{enumerate}[nolistsep,noitemsep]
  \item $\yVal{S}{\xi}\le\yVal{S'}{\xi}$ for all
    $\xi\in \Ic{S}\cap\Ic{S'}$; or 
  \item $\yVal{S}{\xi}\ge\yVal{S'}{\xi}$ for all
    $\xi\in \Ic{S}\cap\Ic{S'}$.
  \end{enumerate}
\end{lemma}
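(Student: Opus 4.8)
The plan is to prove directly that (1) or (2) holds by showing that the difference $g(\xi)\coloneqq \yVal{S}{\xi}-\yVal{S'}{\xi}$ keeps a single sign on the common domain $I\coloneqq \Ic{S}\cap\Ic{S'}$; the geometric content is that a sign change of $g$ would force the embedded arcs $S$ and $S'$ to cross, which planarity of $P$ forbids. As the intersection of two closed bounded intervals, $I$ is itself a closed interval, possibly empty or a single point (e.g.\ when $E(S)=\emptyset$ or $E(S')=\emptyset$, where \Cref{lem:minmaxS} no longer forces $\min(S)_x<\max(S)_x$); in those cases (1) and (2) hold trivially, so I may assume $I=[a,b]$ with $a<b$.

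The heart of the argument concerns the open interval $(a,b)$. I would first record that $g(\xi)\neq 0$ for every $\xi\in(a,b)$: otherwise $(\xi,\yVal{S}{\xi})$ would be a common point of $S$ and $S'$, which by \Cref{lem:bSPp}(2) can only be a shared terminal vertex, and a shared terminal vertex necessarily has $x$-coordinate $a$ or $b$ (if $\min(S)=\min(S')$ the point sits at $x=a$, if $\max(S)=\max(S')$ at $x=b$, while the mixed identifications are incompatible with $a<b$), excluding interior $x$. Next I would show $g$ does not change sign on $(a,b)$. Suppose to the contrary that $g(\xi_1)>0>g(\xi_2)$ for some $\xi_1<\xi_2$ in $(a,b)$. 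Since the $x$-coordinate is non-decreasing along an outstretched segment (\Cref{lem:O}), the part of $S$ with $x$-coordinate in $[\xi_1,\xi_2]$ is a connected, $x$-monotone arc $\sigma_S$ running across the vertical strip $[\xi_1,\xi_2]\times\mathbb{R}$ and containing $(\xi_1,\yVal{S}{\xi_1})$ and $(\xi_2,\yVal{S}{\xi_2})$; define $\sigma_{S'}$ analogously. On the left boundary line the point of $\sigma_S$ lies strictly above that of $\sigma_{S'}$, and on the right boundary line the order is reversed. Two $x$-monotone arcs crossing a strip with endpoints interleaved in this way must intersect, producing a point $p^*\in S\cap S'$ with $x$-coordinate in $(a,b)$, contradicting the previous paragraph. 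Hence $g$ has one constant, nonzero sign on $(a,b)$; say $g>0$ there (the case $g<0$ is symmetric and yields (1)).

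Finally I would extend the sign to the endpoints. Because the edge associated with any $\xi\in\Ih{S}$ governs a right-neighbourhood (Property~(O)), both $\yVal{S}{\cdot}$ and $\yVal{S'}{\cdot}$ are right-continuous, so letting $\xi\to a^+$ gives $g(a)\ge 0$. At $b$, since $b=\min(\max(S)_x,\max(S')_x)$ at least one of $S,S'$ attains its maximum $x$-coordinate at $b$ and is therefore left-continuous there through its terminal edge; the only remaining threat, that the other segment carries an interior vertical edge at $x=b$ straddling the terminal point of the first, would again place a common point on $S\cap S'$ at $x=b$ that is not a terminal vertex of both, contradicting \Cref{lem:bSPp}(2). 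Thus $g(b)\ge 0$ as well, so $g\ge 0$ on all of $I$, which is exactly (2).

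The step I expect to be the main obstacle is the crossing argument, precisely because $\yVal{S}{\cdot}$ is not continuous: it jumps wherever $S$ has an interior vertical edge, so the intermediate value theorem cannot be applied to $g$ directly. The remedy, and the technical crux, is to argue with the connected embedded arcs $\sigma_S,\sigma_{S'}$ (which contain those vertical portions) rather than with the scalar function, so that every sign change is realized by a genuine geometric crossing that planarity rules out. The only other delicate point is the behaviour where a vertical edge meets a strip boundary or an endpoint $a,b$, which is either harmless or itself exhibits the forbidden shared point.
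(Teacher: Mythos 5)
Your proof is correct and reaches the contradiction through the same pivotal fact as the paper, namely \Cref{lem:bSPp}~(2): a sign change of $\yVal{S}{\cdot}-\yVal{S'}{\cdot}$ would force a common point of $S$ and $S'$ at an interior $x$-coordinate, which cannot be a shared terminal vertex. The difference lies in how that common point is produced. The paper's own proof asserts that $x\mapsto\yVal{S}{x}$ and $x\mapsto\yVal{S'}{x}$ are continuous and applies the intermediate value theorem to their difference. As you observe, that continuity claim is not literally true: $\yVal{S}{\cdot}$ is only right-continuous and jumps across interior vertical edges, which maximal outstretched segments are explicitly allowed to contain, so a right-continuous difference could in principle change sign without vanishing. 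Your replacement --- working with the connected, $x$-monotone embedded arcs over the strip $[\xi_1,\xi_2]\times\mathbb{R}$ and invoking a crossing argument for arcs with interleaved endpoints --- is the correct repair, since the vertical portions belong to the point sets $S$ and $S'$, so the geometric intersection survives exactly where the scalar functions jump. Your extra bookkeeping (non-vanishing on the open interval because a shared terminal vertex can only sit at an endpoint of $\Ic{S}\cap\Ic{S'}$, plus the one-sided continuity analysis at those endpoints) is sound and makes the conclusion over the closed interval explicit, a point the paper glosses over. In short: same skeleton and same key lemma, but your version closes a genuine gap in the paper's own argument.
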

\begin{proof}
  Assume, for contradiction, that there are two distinct maximal outstretched
  segments $S,S'$ of a polygon $P$ and two $\xi,\xi'\in \Ic{S}\cap \Ic{S'}$
  such that $\yVal{S}{\xi}>\yVal{S'}{\xi}$ and
  $\yVal{S}{\xi'}<\yVal{S'}{\xi'}$.  We assume w.l.o.g.\ that $\xi<\xi'$.
  Note that $f\colon \Ic{S}\to \mathbb{R}$ via $x\mapsto \yVal{S}{x}$ as well
  as $g\colon \Ic{S'}\to \mathbb{R}$ via $x\mapsto \yVal{S'}{x}$ are
  continuous function and we can, therefore, apply the
  intermediate-value-theorem to conclude that there is a $\xi''\in
  (\xi,\xi')$, and thus $\xi''\notin\{\min(S)_x, \max(S)_x\}$, such that 
  $f(\xi'')=g(\xi'')$. But then, we have
  $p\coloneqq(\xi'',f(\xi''))\in S\cap S'$ with 
  $p\notin\{\min(S),\max(S)\}$, 
  which is a contradiction to \cref{lem:bSPp}~(2).
\end{proof}

Let $\mc{S}$ be the set of all maximal outstretched segments of the polygon
$P$.  In the following, we will need to argue about the number of maximal
outstretched segments that are located above a given $S\in\mc{S}$
(including $S$ itself).
\begin{definition}
  Let $\mc{S}$ be the set of all maximal outstretched segments of the
  polygon $P$, $S\in \mc{S}$, and $\xi\in \Io{S}$. Then, we set
  \begin{align*}
    \mc{N}_{\xi,S} &\coloneqq \{S'\in \mc{S} \mid \xi \in 
    \Ih{S'}\text{ and }\yVal{S'}{\xi}\ge \yVal{S}{\xi}\}, \\
    \mc{N}_{\xi,S}^{\min} &\coloneqq \{ S'\in\mc{S} \mid
    \min(S')_x=\xi \text{ and } \min(S')_y\ge \yVal{S}{\xi} \},
    \textnormal{ and} \\
    \mc{N}_{\xi,S}^{\max} & \coloneqq\{ S'\in\mc{S} \mid
    \max(S')_x=\xi \text{ and }\max(S')_y\ge \yVal{S}{\xi}\}.
  \end{align*}
\end{definition}
By definition, we have $\mc{N}_{\xi,S}^{\min}\subseteq \mc{N}_{\xi,S}$ while
$\mc{N}_{\xi,S}^{\max}\not\subseteq \mc{N}_{\xi,S}$ is possible. We have
excluded $\xi=\min(S)_x$ and $\xi=\max(S)_x$ since the outstretched
segments preceding and succeeding $S$ along $P$, respectively, would always
be included irrespective of whether they are all above or below $S$ for
other values of $\xi$.

\begin{lemma}\label{lem:emm} %even-min-max
  Let $S$ a maximal outstretched segment of a polygon $P$ and 
  $\xi\in \Io{S}$.
  Then, both $|\mc{N}_{\xi,S}^{\min}|$
  and $|\mc{N}_{\xi,S}^{\max}|$ are even.  
\end{lemma}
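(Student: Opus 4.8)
The plan is to prove that the left terminal vertices counted by $\mc{N}_{\xi,S}^{\min}$ occur in pairs, each pair lying entirely above or entirely below the reference height; the count is then even, and an analogous argument handles $\mc{N}_{\xi,S}^{\max}$. Write $h\coloneqq\yVal{S}{\xi}$ and $p\coloneqq(\xi,h)$. By the definition of $\yVal{S}{\xi}$, the point $p$ lies on the unique non-vertical edge of $S$ associated with $\xi$; and since $\xi\in\Io{S}$ we have $\xi\ne\min(S)_x,\max(S)_x$, so $p$ is not a terminal vertex of $S$. Hence every $P$-edge containing $p$ is an edge of $S$. I first record, from \cref{lem:bSPp}(2), that no maximal outstretched segment $S'\ne S$ meets $S$ except possibly in a vertex terminal in both; consequently $p\notin S'$, so $\yVal{S'}{\xi}\ne h$ and $\min(S')_y\ne h$ whenever defined. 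Throughout I take the segments in $\mc{S}$ to contain an edge (a single-vertex segment has $\Ih{S'}=\emptyset$ and plays no role in the counts built on \cref{lem:inside<->odd}), so by \cref{lem:minmaxS} each $\min(S')$ is a genuine terminal vertex whose incident terminal edge is non-vertical.

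Next I build the pairing. Fix $S'\in\mc{S}$ with $\min(S')_x=\xi$ and set $t\coloneqq\min(S')$. Its terminal edge is non-vertical and, as $t$ is $x$-minimal in $S'$, points right. Since $\deg_P(t)=2$, there is exactly one further $P$-edge $g$ at $t$. The edge $g$ cannot be non-vertical and point left: otherwise the two-edge path through $t$ would be $x$-monotone and satisfy Property~(O), hence be outstretched, so by \cref{lem:snf} and maximality it would lie in $S'$, contradicting that $t$ is terminal. Thus either $g$ is non-vertical and points right, in which case $g$ lies in a maximal outstretched segment $S''$ (\cref{cor:edge-uniqueS}) with $\min(S'')=t$, distinct from and edge-disjoint with $S'$ by \cref{lem:bSPp}(1); or $g$ is vertical and belongs to a maximal vertical path $W\subseteq\{x=\xi\}$, and the non-vertical edge at the far endpoint $t'$ of $W$ must again point right, for otherwise $S'\cup W$ together with that edge would be $x$-monotone and outstretched, forcing $S'$ to extend through $W$. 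In that case $t'=\min(S'')$ for a further segment $S''$. One checks that no maximal outstretched segment contains an edge of such a $W$, so $W$ is ``free''. Every element of $\mc{N}_{\xi,S}^{\min}$ thus belongs to exactly one pair: either the two segments sharing the vertex $t$, or the two segments terminating at the endpoints of a common free vertical path $W$.

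It remains to show each pair lies on one side of $h$, so that $\mc{N}_{\xi,S}^{\min}$ is a disjoint union of complete pairs. For a pair sharing a vertex $t$, both partners have left terminal $t$ at the common height $t_y\ne h$, so both are counted or both omitted. For a pair realized by a free vertical path $W$, both left terminals are endpoints of the connected set $W\subseteq\{x=\xi\}$; since every $P$-edge at $p$ lies in $S$ while the edges of $W$ lie in no maximal outstretched segment, planarity forces $p\notin W$, whence $W$ lies entirely above or entirely below $p$ and both endpoints sit on the same side of $h$. Therefore $|\mc{N}_{\xi,S}^{\min}|$ is even. An entirely analogous argument, interchanging the roles of $\min$ and $\max$ (equivalently of leftward and rightward terminal edges), shows $|\mc{N}_{\xi,S}^{\max}|$ is even.

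The main obstacle is the bookkeeping around vertical edges: a vertical path at $x=\xi$ is either absorbed into a maximal outstretched segment, when $P$ passes through it $x$-monotonically, or ``free'', when $P$ reverses its $x$-direction there, and only the latter produce terminal vertices at $\xi$. The delicate step is the last one, where planarity is needed to guarantee that a free vertical path does not straddle the height $h$ of the reference point $p$; without this, a single vertical path could contribute just one vertex to $\mc{N}_{\xi,S}^{\min}$ and destroy the parity.
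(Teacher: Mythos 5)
Your proof is correct and follows essentially the same route as the paper's: both arguments pair the relevant terminal vertices at $x=\xi$ either through a shared terminal vertex or through the free vertical path joining them, show that the two members of each pair are both minima (resp.\ both maxima) of their segments, and use simplicity/planarity of $P$ to conclude that each pair lies entirely on one side of the reference height $\yVal{S}{\xi}$. The only cosmetic differences are that you phrase this as an explicit involution on $\{S'\mid \min(S')_x=\xi\}$ while the paper organizes it as a clockwise traversal of the terminal vertices starting at $\min(S)$, and that you explicitly exclude edgeless single-vertex segments, a point the paper leaves implicit.
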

\begin{proof}
  Let $P$ be a polygon, $S$ a maximal outstretched segment of $P$, let
  $\xi\in \Io{S}$, and set $N^{\min}\coloneqq
  |\mc{N}_{\xi,S}^{\min}|$ and $N^{\max} \coloneqq
  |\mc{N}_{\xi,S}^{\max}|$, respectively.  We determine $N^{\min}$ and
  $N^{\max}$ by traversing terminal vertices $t$ of maximal outstretched
  segments in the polygon $P$ in clockwise order starting from $\min(S)$.
  Since $\xi=\min(S)_x$ is excluded by assumption, we still have
  $N^{\min}=N^{\max}=0$ after departing from $\min(S)$. Note that
  $N^{\min}$ and $N^{\max}$ remain unchanged whenever $t_y<\yVal{S}{\xi}$
  or $t_x\ne \xi$.  Hence, suppose in the following that $t$ is a terminal
  vertex of $S$ and satisfies $t_x=\xi$ and $t_y\geq \yVal{S}{\xi}$.  If
  $t$ is shared by two consecutive maximal outstretched segments $S$ and
  $S'$, then by \Cref{lem:consec-x} either $t_x = \min(S)_x = \min(S')_x$
  or $t_x = \max(S)_x = \max(S')_x$, and thus either $N^{\max}$ or
  $N^{\min}$ is increased by $2$.  Assume now that $t$ is a terminal vertex
  that is not shared by consecutive segments. Let $t'$ be the first
  terminal vertex of the maximal outstretched segment $S'$ that is
  encountered after $t$ when traversing the terminal vertices in clockwise
  order.  Then $t$ and $t'$ are connected by a path $W$ consisting of
  vertical edges only, since every non-vertical edge is contained in a
  maximal outstretched segment. According to \cref{obs:consecutive}, $S$
  and $S'$ are consecutive, and $t'$ satisfies $t'_x=t_x=\xi$. Since the
  polygon $P$ is simple by assumption, no other terminal vertex lies on the
  vertical segment $W$. Furthermore, if $t_y> \yVal{S}{\xi}$ then $t'_y>
  \yVal{S}{\xi}$, since otherwise $W$ and $S$ would intersect or touch,
  contradicting that $P$ is simple.  Again, by \Cref{lem:consec-x}, either
  both $t$ and $t'$ are the minima or both are the maxima of their
  respective segments. The consecutive pair $\{t,t'\}$ therefore also adds
  $2$ to either $N^{\max}$ and $N^{\min}$.  Thus, both $N^{\max}$ and
  $N^{\min}$ are even when the traversal of $P$ returns to $\min(S)$.
\end{proof}

\begin{lemma} \label{lem:mc-N-investi}
  Let $S$ be a maximal outstretched segment of a polygon $P$ and $v\in
  V(P)$.  Suppose that there are $\xi,\xi'\in \Io{S}$ such that $x\coloneqq
  v_x\in (\xi,\xi']$ but $w_x\notin (\xi,\xi']$ for all $w \in V(P)$ with
  $w_x\neq v_x$.  Then,
  $|\mc{N}_{\xi,S}|-|\mc{N}_{x,S}^{\max}|=
  |\mc{N}_{\xi',S}|-|\mc{N}_{x,S}^{\min}|$.
\end{lemma}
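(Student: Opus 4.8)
The plan is to read the claimed identity as a \emph{sweep} bookkeeping equation. As the vertical line moves from abscissa $\xi$ to $\xi'$, the quantity $|\mc{N}_{\xi,S}|$ counts the maximal outstretched segments that are defined over the moving line (i.e.\ $\xi\in\Ih{S'}$) and lie on or above $S$ there. The hypothesis guarantees that exactly one event is crossed in between, namely the vertical line through $v$ at abscissa $x=v_x$, since no vertex of $P$ outside the fibre over $x$ has its $x$-coordinate in $(\xi,\xi']$. The segments that can enter or leave the ``on-or-above-$S$'' count during this crossing are precisely those having a terminal vertex at abscissa $x$, and I want to identify the entering ones with $\mc{N}_{x,S}^{\min}$ and the leaving ones with $\mc{N}_{x,S}^{\max}$.

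The enabling fact is \Cref{lem:asa}: for any $S'\in\mc{S}$ the sign of $\yVal{S'}{\cdot}-\yVal{S}{\cdot}$ is constant on the common domain $\Ic{S}\cap\Ic{S'}$, so ``$S'$ lies on or above $S$'' is a property independent of the abscissa at which it is tested, as long as both functions are defined there. First I would classify each $S'\in\mc{S}$ (with $E(S')\neq\emptyset$) by the location of its terminal abscissae $\min(S')_x,\max(S')_x$ relative to $(\xi,\xi']$. Since each such abscissa is a vertex $x$-coordinate, it is either $\le\xi$, equal to $x$, or $>\xi'$, and I would argue: (i) if $\min(S')_x=x$ then $\max(S')_x>\xi'$, so $\xi\notin\Ih{S'}$ but $\xi'\in\Ih{S'}$ and $S'$ can only enter the count; (ii) if $\max(S')_x=x$ then $\min(S')_x\le\xi$, so $\xi\in\Ih{S'}$ but $\xi'\notin\Ih{S'}$ and $S'$ can only leave; (iii) if neither terminal sits at $x$, then $\xi,\xi'$ are both in $\Ih{S'}$ or both outside, and by the constant-sign property $S'$ contributes equally to $\mc{N}_{\xi,S}$ and $\mc{N}_{\xi',S}$ (this includes $S$ itself).

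To finish, I would evaluate the membership condition at the abscissa $x$ using the terminal-value identities $\yVal{S'}{\min(S')_x}=\min(S')_y$ and $\yVal{S'}{\max(S')_x}=\max(S')_y$, which come from the edge--abscissa association on $\Ic{S'}$. Combined with the constant-sign property of \Cref{lem:asa} on the relevant subinterval $[x,\xi']$ in case~(i), respectively $[\xi,x]$ in case~(ii), this shows the entering segments of (i) are exactly $\mc{N}_{x,S}^{\min}$ and the leaving segments of (ii) are exactly $\mc{N}_{x,S}^{\max}$, whence $|\mc{N}_{\xi',S}|=|\mc{N}_{\xi,S}|-|\mc{N}_{x,S}^{\max}|+|\mc{N}_{x,S}^{\min}|$, which is the claim after rearranging. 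The main obstacle I anticipate is the endpoint bookkeeping forced by the half-open intervals $\Ih{\cdot}$ --- in particular justifying the asymmetry between (i) and (ii), and checking that $\xi,\xi'$ (or $x$) really lie in the common domain before invoking \Cref{lem:asa} --- together with the degenerate single-vertex maximal outstretched segments at abscissa $x$: for such an $S'$ one has $\min(S')_y=\max(S')_y$, forcing equal membership in $\mc{N}_{x,S}^{\min}$ and $\mc{N}_{x,S}^{\max}$, so their contributions cancel in the identity and do not disturb it, while $\Ih{S'}=\emptyset$ keeps them out of $\mc{N}_{\xi,S}$ and $\mc{N}_{\xi',S}$ entirely.
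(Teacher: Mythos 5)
Your proof is correct and follows essentially the same route as the paper: your entering/leaving/unchanged classification according to whether a terminal vertex of $S'$ sits at abscissa $x$ is exactly the paper's decomposition into $\mc{N}_{x,S}^{\min}$, $\mc{N}_{x,S}^{\max}$ and the common part $\mc{N}_{\xi,S}\setminus\mc{N}_{x,S}^{\max}=\mc{N}_{\xi',S}\setminus\mc{N}_{x,S}^{\min}$, with \Cref{lem:asa} used in both arguments to transport the ``on or above $S$'' test between abscissae and with the no-vertex-in-$(\xi,\xi']$ hypothesis pinning down the location of the opposite terminal vertex. Your explicit handling of degenerate single-vertex segments at abscissa $x$ is a detail the paper's proof glosses over (its Claim~2, $\mc{N}_{x,S}^{\max}\subseteq\mc{N}_{\xi,S}$, tacitly assumes $E(S')\neq\emptyset$ when it invokes $\min(S')_x<\max(S')_x$), but as you observe their contributions to the two sides cancel, so the identity is unaffected.
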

\begin{proof}
  Let $S$ be a maximal outstretched segment of a polygon $P$ and $v\in 
  V(P)$.   Moreover, suppose that there are $\xi,\xi'\in \Io{S}$ such that
  $x\coloneqq v_x\in (\xi,\xi']$ but $w_x\notin (\xi,\xi']$ for  
  all $w \in V(P)$ with $w_x\neq v_x$. In particular, the latter implies that
  there is no $w\in V(P)$ with $w_x\in (\xi,\xi']\setminus\{x\}$. 
  By assumption, $x,\xi,\xi'\in \Ih{S}$.
  Since $S$ is fixed, we simplify the notation, and write 
  $\mc{N}_{\xi}\coloneqq \mc{N}_{\xi,S}$, 
  $\mc{N}_{\xi'}\coloneqq \mc{N}_{\xi',S}$,
  $\mc{N}_{x}^{\max}\coloneqq\mc{N}_{x,S}^{\max}$ and 
  $\mc{N}_{x}^{\min}\coloneqq\mc{N}_{x,S}^{\min}$.  
  First, we prove \emph{Claim~1:} 
  \emph{(a)} $\mc{N}_{\xi}\setminus 
  \mc{N}_{x}^{\max} \subseteq 
  \mc{N}_{\xi'}\setminus \mc{N}_{x}^{\min}$,\ \
  \emph{(b)} $\mc{N}_{\xi'}\setminus \mc{N}_{x}^{\min} \subseteq 
  \mc{N}_{\xi}\setminus \mc{N}_{x}^{\max}$\ \ and\ \
  \emph{(c)} $|\mc{N}_{\xi} \setminus 
  \mc{N}_{x}^{\max}|=|\mc{N}_{\xi'}\setminus \mc{N}_{x}^{\min}|$.
  
  First, consider Claim~(1a), and let $S'\in \mc{N}_{\xi}\setminus
  \mc{N}_{x}^{\max}$.  Hence, $\min(S')_x\le \xi < \max(S')_x$ and
  $\yVal{S'}{\xi}\ge \yVal{S}{\xi}$.  Thus, if $\max(S')_x=x$, then
  \Cref{lem:asa} implies $\yVal{S'}{x}\ge \yVal{S}{x}$, and thus, $S'\in
  \mc{N}_{x}^{\max}$; a contradiction.  Hence, $\min(S')_x\le \xi <
  \max(S')_x\ne x$.  This, together with $\max(S')\in V(P)$ and the fact
  that there is no $w\in V(P)$ with $w_x\in (\xi,\xi']\setminus\{x\}$,
  implies $\min(S')\le \xi<\xi'<\max(S')_x$, and thus, $\xi'\in
  \Ih{S'}\cap \Ih{S}$.  Hence, $\yVal{S'}{\xi}\ge \yVal{S}{\xi}$,
  together with \Cref{lem:asa}, implies $\yVal{S'}{\xi'}\ge
  \yVal{S}{\xi'}$, and thus, $S'\in \mc{N}_{\xi'}$.  Moreover,
  $\min(S')_x\le \xi<x$ implies $\min(S')_x\ne x$. 
  Consequently,
  $S'\notin\mc{N}_{x}^{\min}$.  The latter two observations together
  imply Claim~(1a).  By similar reasoning, Claim~(1b) is satisfied.
  Furthermore, Claim~(1a) and~(1b) imply 
  $\mc{N}_{\xi} \setminus \mc{N}_{x}^{\max}=
  \mc{N}_{\xi'}\setminus \mc{N}_{x}^{\min}$, 
  and thus, Claim~(1c) holds.

  We continue with showing 
  \textit{{Claim~2:} $\mc{N}_{x}^{\max} \subseteq \mc{N}_{\xi}$.}
  \noindent
  Let $S'\in \mc{N}_{x}^{\max}$. Thus, $\max(S')_x=x$. Since 
  $\min(S')_x<\max(S')_x$, we have $\min(S')_x\ne x$.
  Thus, $x\in (\xi,\xi']$ implies 
  $\min(S')_x\le\xi<x=\max(S')_x$,
  and thus, $\xi\in \Ih{S'}\cap \Ih{S}$.  Hence,
  $\yVal{S'}{x}\ge \yVal{S}{x}$, together with \Cref{lem:asa}, implies
  $\yVal{S'}{\xi}\ge \yVal{S}{\xi}$, and thus, $S'\in \mc{N}_{\xi}$.
  Hence, Claim~(2) holds.
  By similar arguments one can prove 
  \textit{{Claim~3:} $\mc{N}_{x}^{\min} \subseteq \mc{N}_{\xi'}$.}

  Finally, Claim (2) and (3) imply that $|\mc{N}_{\xi} \setminus 
  \mc{N}_{x}^{\max}|=|\mc{N}_{\xi}|-|\mc{N}_{x}^{\max}|$
  and $|\mc{N}_{\xi'}\setminus 
  \mc{N}_{x}^{\min}|=|\mc{N}_{\xi'}|-|\mc{N}_{x}^{\min}|$, respectively.
  This, together with Claim~(1c), implies 
  $|\mc{N}_{\xi}|-|\mc{N}_{x}^{\max}|=|\mc{N}_{\xi'}|-|\mc{N}_{x}^{\min}|$.
\end{proof}

\begin{lemma}\label{lem:eol} %even-odd-lemma
  Let $S$ be a maximal outstretched segment of a polygon $P$.  Then,
  $|\mc{N}_{\xi,S}|$ is either always odd or always even independent of the
  choice of $\xi\in \Io{S}$.
\end{lemma}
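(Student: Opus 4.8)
The plan is to show that the parity of $|\mc{N}_{\xi,S}|$ can only fail to be locally constant at the finitely many values of $\xi$ that coincide with the $x$-coordinate of some vertex of $P$, and that crossing such a value changes $|\mc{N}_{\xi,S}|$ by an even number. Concretely, let $a_1<\dots<a_m$ be the distinct $x$-coordinates of vertices of $P$ that lie in $\Io{S}$; these cut $\Io{S}$ into finitely many open gaps. First I would establish that on each such gap the \emph{set} $\mc{N}_{\xi,S}$ is constant, and then that moving across a single $a_j$ preserves the parity of $|\mc{N}_{\xi,S}|$. Since any two $\xi_1<\xi_2$ in $\Io{S}$ are joined by a finite increasing sequence in which each consecutive pair either stays inside one gap or straddles exactly one $a_j$, chaining these two facts yields $|\mc{N}_{\xi_1,S}|\equiv|\mc{N}_{\xi_2,S}|\pmod 2$, which is the assertion.

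For the single-crossing step, the main engine is \Cref{lem:mc-N-investi} together with \Cref{lem:emm}. Given $\xi,\xi'\in\Io{S}$ with $\xi<\xi'$ such that the only vertex $x$-coordinate in $(\xi,\xi']$ is $x=v_x$ for some $v\in V(P)$, \Cref{lem:mc-N-investi} gives $|\mc{N}_{\xi,S}|-|\mc{N}_{x,S}^{\max}|=|\mc{N}_{\xi',S}|-|\mc{N}_{x,S}^{\min}|$, hence $|\mc{N}_{\xi,S}|-|\mc{N}_{\xi',S}|=|\mc{N}_{x,S}^{\max}|-|\mc{N}_{x,S}^{\min}|$. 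Because $x=a_j\in\Io{S}$, \Cref{lem:emm} tells us that both $|\mc{N}_{x,S}^{\max}|$ and $|\mc{N}_{x,S}^{\min}|$ are even, so the right-hand side is even and the two counts agree modulo $2$.

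For the constancy on a gap, I would fix an open interval $(a_j,a_{j+1})$ (or one of the two boundary gaps) and check, for each $S'\in\mc{S}$, that membership in $\mc{N}_{\xi,S}$ does not change as $\xi$ ranges over the gap. The condition $\xi\in\Ih{S'}$ amounts to $\min(S')_x\le\xi<\max(S')_x$, and since $\min(S')_x,\max(S')_x$ are vertex $x$-coordinates and the gap contains none, its truth value is constant there. The comparison $\yVal{S'}{\xi}\ge\yVal{S}{\xi}$ has, by \Cref{lem:asa}, a single sign throughout $\Ic{S}\cap\Ic{S'}$; equality could only occur at a common point of $S$ and $S'$, which by \Cref{lem:bSPp}(2) must be a terminal vertex of $S$, whose $x$-coordinate is excluded from $\Io{S}$. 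Thus on the gap the comparison is strict and of constant sign, so $\mc{N}_{\xi,S}$ is unchanged (note that $S'=S$ is always a member).

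I expect the main obstacle to be the careful bookkeeping that makes the two reductions fit together: choosing the subdivision of the interval between $\xi_1$ and $\xi_2$ into half-open pieces so that each vertex $x$-coordinate is crossed exactly once and the hypotheses of \Cref{lem:mc-N-investi} (a single vertex coordinate in $(\xi,\xi']$, with $\xi,\xi'\in\Io{S}$) are met, together with verifying the no-crossing case cleanly enough that parity really is invariant across every step. The geometric substance is already packaged in \Cref{lem:emm,lem:mc-N-investi,lem:asa,lem:bSPp}; the remaining work is essentially combinatorial.
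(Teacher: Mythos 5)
Your proposal is correct and follows essentially the same route as the paper: subdivide $\Io{S}$ at the vertex $x$-coordinates, use \Cref{lem:asa} (plus \Cref{lem:bSPp}) to keep $\mc{N}_{\xi,S}$ constant between them, and use \Cref{lem:mc-N-investi} together with the evenness of $|\mc{N}_{x,S}^{\min}|$ and $|\mc{N}_{x,S}^{\max}|$ from \Cref{lem:emm} to preserve parity across each crossing. If anything, your explicit separation of the gap-constancy step is slightly more careful than the paper's write-up, which only spells that argument out in the case $|X'|=2$.
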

\begin{proof}
  Let $S$ be a maximal outstretched segment of a polygon $P$, and 
  let $X'\coloneqq \{ v_x\in \Ic{S}\mid v\in V(P)\}$.
  In particular, we assume $X'=\{x\lc{1},x\lc{2},x\lc{3},\ldots,x_m\}$ 
  to be sorted such that, for all $i,j\in
  \{1,\ldots,m\}$, we have $i\le j$ if and only if $x_i\le x_j$. This and
  \Cref{lem:minmaxS} implies that $x_1=\min(S)_x \neq \max(S)_x=x_m$, and
  thus, $m\ge 2$.  Since $S$ is fixed, we simply the notation and write
  $\mc{N}_\xi\coloneqq \mc{N}_{\xi,S}$, $\mc{N}^{\min}_\xi \coloneqq
  \mc{N}_{\xi,S}^{\min}$ and $\mc{N}^{\max}_\xi \coloneqq
  \mc{N}_{\xi,S}^{\max}$, where $\xi\in \Io{S}$.
  
  First, assume that $|X'|=2$, i.e., $X'=\{\min(S)_x,\max(S)_x\}$.
  Thus, there is no $v\in V(P)$ with $\min(S)_x<v_x<\max(S)_x$, and thus,
  there is no $S'\in \mc{S}$ with
  $\min(S)_x < \min(S')_x, \max(S')_x < \max(S)_x$.  Let
  $\xi,\xi'\in \Io{S}$  and $S'\in \mc{N}_\xi$.  Since $S'\in \mc{N}_\xi$,
  we have $\min(S')_x\le\xi<\max(S')_x$.  Applying the previous
  arguments again, $\min(S')_x\le \min(S)_x<\xi<\max(S)_x\le
  \max(S')_x$.  Thus, $\xi'\in \Io{S}\subseteq \Ih{S'}$.
  Moreover, $S'\in \mc{N}_\xi$ implies $\yVal{S'}{\xi}\ge
  \yVal{S}{\xi}$ and thus, \Cref{lem:asa} yields $\yVal{S'}{\xi'}\ge
  \yVal{S}{\xi'}$, and hence $S'\in \mc{N}_{\xi'}$. Therefore, we
  have $\mc{N}_{\xi}\subseteq \mc{N}_{\xi'}$.  Interchanging the role of
  $\xi$ and $\xi'$ shows $\mc{N}_{\xi'}\subseteq \mc{N}_{\xi}$, and thus
  $\mc{N}_{\xi} = \mc{N}_{\xi'}$. Hence, $|\mc{N}_\xi|$ is either always
  odd or always even, independent of the choice of 
  $\xi\in\Io{S}$.

  Now, assume that $|X'|=m\ge 3$ and fix some $i\in
  \{2,\ldots,m-1\}$. Note that $X'$ is a set consisting of pairwise
  distinct elements. Since $V(P)$ is finite, 
  we can choose $\xi,\xi'\in \mathbb{R}$ such that $x_{i-1}<\xi<x_i\le 
  \xi'<x_{i+1}$ and such that  $w_x\notin (\xi,\xi']$ for all $w\in V(P)$ with 
  $w_x\ne x_i$. By definition of $X'$, there is at least one vertex $v\in V(P)$
  with $v_x=x_i$.
  The latter two arguments together with \Cref{lem:mc-N-investi} imply that
  $|\mc{N}_{\xi}|-|\mc{N}_{x_i}^{\max}|=|\mc{N}_{\xi'}|-|\mc{N}_{x_i}^{\min}|$.
  Since $|\mc{N}_{x_i}^{\min}|$ and $|\mc{N}_{x_i}^{\max}|$ are even by
  \Cref{lem:emm}, it follows that $|\mc{N}_{\xi}|$ is even if and only if
  $|\mc{N}_{\xi'}|$ is even.  Therefore, we conclude that $|\mc{N}_{\xi}|$
  is either even for all $\xi\in \Io{S}$, or $|\mc{N}_{\xi}|$
  is odd for all $\xi$ in this interval.
\end{proof}

\begin{figure}[t]
  \begin{minipage}{0.6\textwidth}
    \caption{Relationships between the parity of maximal outstretched
      segments and location of points.  The Polygon $P$, which bounded by
      the gray area, contains four maximal outstretched segments $S_1$,
      $S_2$, $S_3$, and $S_4$. Three sweep-lines are indicated at positions
      $\xi$, $\xi'$, and $\xi''$. We observe $\xi,\xi',\xi''\in\Ih{S_4}$
      and $\mc{N}_{\xi,S_4} = \mc{N}_{\xi',S_4} = \{S_1,S_4\} \neq
      \{S_1,S_2,S_3,S_4\}=\mc{N}_{\xi'',S_4}$. Nevertheless, the the
      cardinality of these sets is always even, and thus we have
      $\varpi(S_4)=0$.  In particular, the blue segments $S_2$ and $S_4$
      have the parity $\varpi(S_2)=\varpi(S_4)=0$, while the red segments
      $S_1$ and $S_3$ have the parity $\varpi(S_1)=\varpi(S_3)=1$.  Note
      that parities of consecutive segments necessarily alternate.\newline
      For the three points $p\in \Int(P)$ and $p',p''\in\Ext(P)$, we
      observe the following: (1)~$p_x\in \Ih{S_1}$ and
      $p_y<\yVal{S_1}{p_x}$, (2)~there is no maximal outstretched segment
      $S_i$ of $P$ with $p_x\in \Ih{S_i}$ and
      $p_y<\yVal{S_i}{p_x}<\yVal{S_3}{p_x}$, and (3)~$\varpi(S_1)=1$, see
      \cref{prop:ibp}.  No such segment exists for $p'\notin \Int(P)$ and
      $p''\notin \Int(P)$, respectively. }
  \label{fig:pari}
  \end{minipage} \hfil \begin{minipage}{0.3\textwidth}
    \includegraphics[width=1\textwidth]{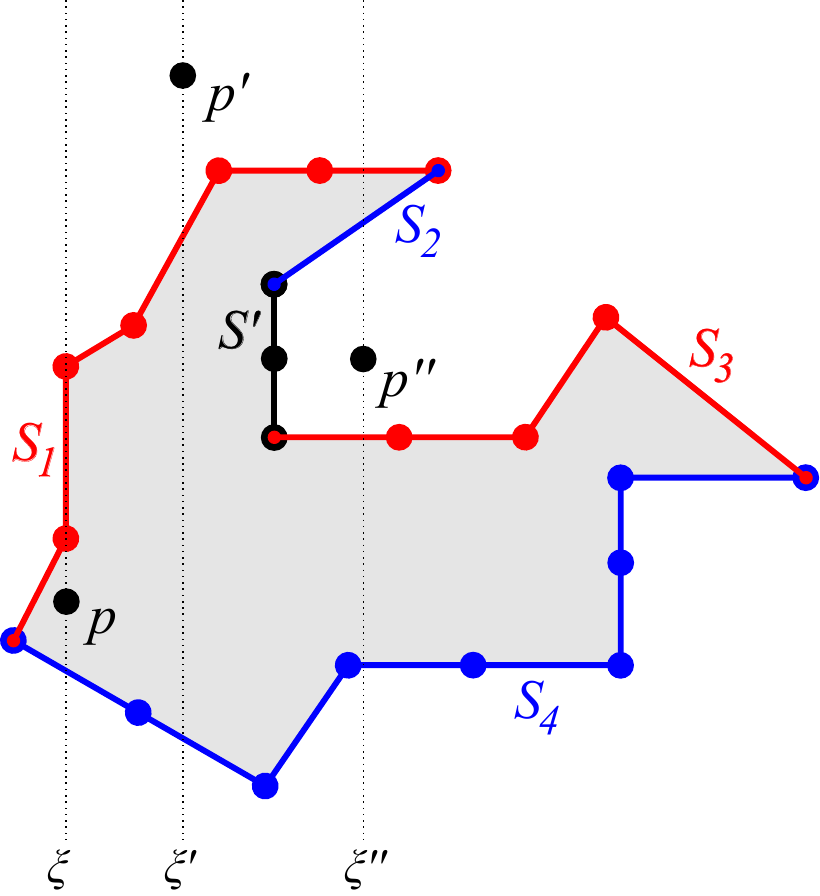}%{./consecutive.png}
  \end{minipage}
\end{figure}

\begin{definition} \label{def:parity}
  Let $S$ be a maximal outstretched segment of a polygon $P$.  Then, the
  \emph{parity $\varpi$ of $S$} is
  \begin{equation*}
    \varpi(S)\coloneqq \begin{cases}
      0, \text{\ if } |\mc{N}_{\xi,S}| \text{ is even for all }	
      \xi\in \Io{S}\\
      1, \text{\ otherwise, i.e., }|\mc{N}_{\xi,S}| \text{ is odd for all }
      \xi\in \Io{S}\\
    \end{cases}
  \end{equation*}
\end{definition}
Note that, by \Cref{lem:eol}, $|\mc{N}_{\xi,S}|$ is either always even or
always odd for any $\xi\in\Io{S}$. Therefore, the parity
$\varpi(S)\in \{0,1\}$ is well-defined for every maximal outstretched
segment $S$ of $P$. Hence, the choice of $\xi\in \Io{S}$ in
the definition is arbitrary. 
Intuitively, the parity $\varpi(S)$ determines whether $\Int(P)$ is above
or below the outstretched segment $S$: if $\varpi(S)=0$, then $\Int(P)$ is
``above'' the segment $S$, and if $\varpi(S)=1$, then $\Int(P)$ is
``below'' the segment $S$. Therefore, for a given point $p$, it suffices to
determine the parity $\varpi(S)$ of the segment ``immediately above'' $p$
to determine whether or not $p\in\Int(P)$, see \cref{fig:pari}.
The next result, which is a
simple consequence of \Cref{lem:inside<->odd} and the definition of the
parity function $\varpi$, states this observation more formally:
\begin{proposition} \label{prop:ibp} %prop:inside<->blw-pair
  Let $P$ be a polygon and $p=(\xi,y)\in \mathbb{R}^2\setminus P$.  Then,
  $p\in \Int(P)$ if and only if there is a maximal outstretched segment $S$
  of $P$ such that
  \begin{enumerate}[noitemsep,nolistsep]
  \item $\xi\in \Ih{S}$ and $y<\yVal{S}{\xi}$,
  \item there is no maximal outstretched segment $S'$ of $P$ with $\xi\in
    \Ih{S'}$ such that that $y< \yVal{S'}{\xi}< \yVal{S}{\xi}$,
  \item $\varpi(S)=1$.
  \end{enumerate} 
\end{proposition}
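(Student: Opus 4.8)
The plan is to reduce the statement to the parity count of \Cref{lem:inside<->odd} and then re-express that count through the sets $\mc{N}_{\xi,S}$ that define the parity function $\varpi$. Write $M\coloneqq\{S\in\mc{S}\mid \xi\in\Ih{S}\text{ and }y<\yVal{S}{\xi}\}$ for the set of maximal outstretched segments lying strictly above $p$ over the abscissa $\xi$. By \Cref{lem:inside<->odd}, $p\in\Int(P)$ holds if and only if $|M|$ is odd. Now Condition~(1) of the proposition says precisely that $S\in M$; hence if $M=\emptyset$ then no segment satisfies~(1), while at the same time $|M|=0$ is even and $p\notin\Int(P)$, so both sides of the equivalence fail. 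Thus I may assume $M\neq\emptyset$.

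Next I would observe that Conditions~(1) and~(2) together are equivalent to $S$ being a $\yVal{\cdot}{\xi}$-minimal element of $M$, i.e.\ the segment ``immediately above'' $p$: indeed,~(2) forbids any $S'\in M$ with $\yVal{S'}{\xi}<\yVal{S}{\xi}$, which is exactly minimality of $\yVal{S}{\xi}$ on $M$. For such a minimal $S$ I then claim $M=\mc{N}_{\xi,S}$. The inclusion $\mc{N}_{\xi,S}\subseteq M$ is immediate, since $\yVal{S'}{\xi}\ge\yVal{S}{\xi}>y$ yields $y<\yVal{S'}{\xi}$; the reverse inclusion $M\subseteq\mc{N}_{\xi,S}$ is precisely the minimality inequality $\yVal{S'}{\xi}\ge\yVal{S}{\xi}$ for all $S'\in M$. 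Consequently $|M|=|\mc{N}_{\xi,S}|$, and it remains to match the parity of this cardinality with Condition~(3).

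Since $\xi\in\Ih{S}=[\min(S)_x,\max(S)_x)$ forces $\xi\neq\max(S)_x$, either $\xi\in\Io{S}$ or $\xi=\min(S)_x$. In the first case \Cref{def:parity} applies verbatim: $|\mc{N}_{\xi,S}|$ is odd exactly when $\varpi(S)=1$, so $p\in\Int(P)$ iff $\varpi(S)=1$, which is Condition~(3), and the equivalence is complete. The delicate case, which I expect to be the main obstacle, is $\xi=\min(S)_x$: the abscissa then sits on the left terminal vertex $q\coloneqq\min(S)$, whereas the parity of \Cref{def:parity} is only declared for interior abscissae $\xi''\in\Io{S}$. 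Here I would fix $\xi''$ immediately to the right of $\xi$ with no vertex of $P$ having $x$-coordinate in $(\xi,\xi'']$, so that $\varpi(S)$ equals the parity of $|\mc{N}_{\xi'',S}|$, and then compare $\mc{N}_{\xi,S}$ with $\mc{N}_{\xi'',S}$ by a local analysis at $q$, in the spirit of \Cref{lem:emm,lem:mc-N-investi}.

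The comparison itself is the heart of the matter. Any segment with $\max(\cdot)_x=\xi$ contributes to neither set, as its half-open interval excludes $\xi$ and lies left of $\xi''$; any segment whose $x$-range strictly straddles $\xi$ stays on one side of $S$ by \Cref{lem:asa} and hence belongs to both sets or to neither. The only possible discrepancy is a segment $S'$ with terminal vertex $\min(S')=q$ that leaves $q$ below $S$, for which $\yVal{S'}{\xi}=q_y=\yVal{S}{\xi}$ forces membership in $\mc{N}_{\xi,S}$ but not in $\mc{N}_{\xi'',S}$. Because $\deg_P(q)=2$, at most one such $S'$ can occur, and it is consecutive to $S$; the decisive step is to select, among the $\yVal{\cdot}{\xi}$-minimal elements of $M$, the \emph{lowest} one leaving $q$ to the right, which removes this discrepancy and yields $|\mc{N}_{\xi,S}|=|\mc{N}_{\xi'',S}|$, hence $|M|\equiv\varpi(S)\pmod 2$. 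With this choice the equivalence $p\in\Int(P)\Leftrightarrow\varpi(S)=1$ persists and a witnessing segment for~(1)--(3) is produced. The crux is therefore the bookkeeping at a shared left terminal vertex, reconciling the half-open $\Ih{S}$ of Condition~(1) with the open $\Io{S}$ in the definition of $\varpi$; everything else is the routine set-equality $M=\mc{N}_{\xi,S}$ followed by \Cref{lem:inside<->odd}.
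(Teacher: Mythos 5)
Your argument is correct in substance but takes a genuinely different route from the paper's. The paper perturbs the point: since $p\notin P$ and both $\Int(P)$ and $\Ext(P)$ are open, $p\in\Int(P)$ iff $p'\coloneqq(\xi+\delta,y)\in\Int(P)$ for small $\delta>0$, and for $p'$ the relevant abscissa lies in the open interval $\Io{S'}$ of the witnessing segment, so \Cref{def:parity} applies verbatim; the bookkeeping is pushed into topology. You instead stay at $\xi$ and identify the count of \Cref{lem:inside<->odd} with $|\mc{N}_{\xi,S}|$ for the lowest segment above $p$; for $\xi\in\Io{S}$ your set equality $M=\mc{N}_{\xi,S}$ makes the argument shorter and cleaner than the paper's. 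Your ``delicate case'' $\xi=\min(S)_x$ is exactly where the paper's proof is thinnest: after perturbing, the paper tacitly identifies the segment witnessing (1)--(2) for $p'$ with the original $S$ and concludes with ``$\varpi(S)$'' without checking that the two coincide---and they need not, precisely when two consecutive segments share the left terminal vertex $q=(\xi,\yVal{S}{\xi})$. Your repair (among the $\yVal{\cdot}{\xi}$-minimal elements of $M$, take the lower of the at most two segments emanating rightward from $q$) is the right one; you should only add that segments $S'$ with $\min(S')_x=\xi$ but $\min(S')\ne q$ also need dispatching, which works exactly like your straddling case via \Cref{lem:asa} and \Cref{lem:bSPp}~(2).

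One caveat concerns the statement rather than your proof. In the configuration $\xi=\min(A)_x=\min(B)_x=q_x$ with $A,B$ consecutive and $y<q_y$, \emph{both} $A$ and $B$ satisfy Conditions~(1) and~(2), and by \Cref{cor:maxpar=1} they have opposite parities, so one of them always satisfies~(3); e.g.\ for a triangle with leftmost corner $q$ and $p$ directly below $q$, the upper chain satisfies (1)--(3) although $p\in\Ext(P)$. So the \emph{if} direction with an arbitrary existential witness fails in this degenerate case; your construction proves the correct refinement (the equivalence for the specific lowest witness), and the paper's later applications only invoke the proposition at abscissae $\xi\ne v_x$ for all corners $v$, where the witness is unique and the issue never arises. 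This is a defect shared by the paper's own proof, not one your approach introduces.
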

\begin{proof}
  Let $P$ be a polygon and $p=(\xi,y)\in \mathbb{R}^2\setminus P$.  By
  \Cref{lem:inside<->odd}, $p\in \Int(P)$ if and only if the number of
  maximal outstretched segments $S$ of $P$ that satisfies Statement~(1) is
  odd.  In particular, if $p\in \Int(P)$, then there is a maximal
  outstretched segment $S$ of $P$ satisfying Statement~(1).  Moreover, we
  may choose the maximal outstretched segment $S$ of $P$ that satisfies
  Statements~(1) and~(2). Thus, is remains to show that, in the presence of
  Statements~(1) and~(2), $p\in\Int(P)$ and $\varpi(S)=1$ are equivalent.
  By assumption, $p$ is not contained in the boundary of $P$, i.e.,
  {$p\notin P$}. Since $\Int(P)$ is an open set, $p\in\Int(P)$ if and only
  if $p'\in\Int(P)$ for all $p'\in\mathbb{R}^2$ that satisfy $\Vert
  p-p'\Vert<\delta$ for sufficiently small $\delta>0$. In particular, we
  have $p=(\xi,y)\in \Int(P)$ if and only if $p'\coloneqq(\xi+\delta,y)\in
  \Int(P)$. In this case, we have $p'\in \mathbb{R}^2\setminus P$.  Hence,
  by assumption, there is a maximal outstretched segment $S'$ of $P$ that
  satisfies Statement~(1) and~(2) for $(\xi+\delta, y)=p'\in
  \mathbb{R}^2\setminus P$.  Since $\delta>0$ was sufficiently small, we
  can assume that $\min(S')_x\le \xi<\xi+\delta<\min(S')_x$, i.e.,
  $\xi+\delta\in (\min(S')_x,\max(S')_x)$.  Then, by definition of
  $\varpi(S)$, the number of segments that satisfy Statement~(1) is odd
  (i.e., $p'\in\Int(P)$) if and only if $\varpi(S)=1$ (i.e., Statement~(3)
  holds), see \Cref{lem:inside<->odd}.  This, together with $p \in \Int(P)$
  iff $p'\in \Int(P)$, implies $p\in\Int(P)$ if and only if Statement~(3)
  holds.
\end{proof}

Inverting the direction of the $y$-axis immediately implies the following
``mirror image'' of \Cref{prop:ibp}:
\begin{corollary} \label{cor:ibp-} 
  Let $P$ be a polygon and $p=(\xi,y)\in \mathbb{R}^2\setminus P$.  Then,
  $p\in \Int(P)$ if and only if there is a maximal outstretched segment $S$
  of $P$ such that
  \begin{enumerate}[noitemsep,nolistsep]
  \item $\xi\in \Ih{S}$ and $y>\yVal{S}{\xi}$,
  \item there is no maximal outstretched segment $S'$ of $P$ with $\xi\in
    \Ih{S'}$ such that $y> \yVal{S'}{\xi}> \yVal{S}{\xi}$,
  \item $\varpi(S)=0$.
  \end{enumerate} 
\end{corollary}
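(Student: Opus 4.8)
The plan is to deduce \Cref{cor:ibp-} from \Cref{prop:ibp} by reflecting the whole configuration across the $x$-axis. Let $\sigma\colon\mathbb{R}^2\to\mathbb{R}^2$, $\sigma(x,y)\coloneqq(x,-y)$, be the reflection that inverts the $y$-axis, and let $P'\coloneqq\sigma(P)$ be the image of $P$ under $\sigma$, i.e.\ the embedding of the same cycle $C_P$ with reflected vertex coordinates. Since $\sigma$ is an isometry that leaves every $x$-coordinate fixed, it preserves planarity, the verticality of edges, and Property~(O) (which constrains only the intervals $\Ih{e}$). Hence $S\mapsto\sigma(S)$ is a bijection between the maximal outstretched segments of $P$ and those of $P'$, we have $\Ih{\sigma(S)}=\Ih{S}$, and $\yVal{\sigma(S)}{\xi}=-\yVal{S}{\xi}$ for every $\xi\in\Ic{S}$. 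Because $\sigma$ maps the bounded interior region onto the bounded interior region, $\Int(P')=\sigma(\Int(P))$, so $p\in\Int(P)$ if and only if $p'\coloneqq\sigma(p)=(\xi,-y)\in\Int(P')$.

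First I would apply \Cref{prop:ibp} to $P'$ and $p'=(\xi,-y)\in\mathbb{R}^2\setminus P'$: it gives $p'\in\Int(P')$ precisely when some maximal outstretched segment $\hat S$ of $P'$ meets the three conditions there. Writing $\hat S=\sigma(S)$ and substituting $\yVal{\sigma(S')}{\xi}=-\yVal{S'}{\xi}$, condition~(1) of \Cref{prop:ibp}, namely $-y<\yVal{\hat S}{\xi}$, turns into $y>\yVal{S}{\xi}$, and condition~(2) turns into the statement that no maximal outstretched segment $S'$ of $P$ with $\xi\in\Ih{S'}$ satisfies $y>\yVal{S'}{\xi}>\yVal{S}{\xi}$, since multiplying the inequality chain by $-1$ reverses it. These are exactly Statements~(1) and~(2) of \Cref{cor:ibp-}, so it remains only to identify the parity condition~(3).

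The main step is the parity flip $\varpi_{P'}(\sigma(S))=1$ iff $\varpi_{P}(S)=0$, where $\varpi_{P}$ and $\varpi_{P'}$ denote the parities computed in $P$ and $P'$. Since $\xi\in\Ih{\hat S}=\Ih{S}$, we have $\min(S)_x<\max(S)_x$ and hence $\Io{S}\neq\emptyset$, so I may fix some $\xi^\ast\in\Io{S}$ avoiding the (finitely many) $x$-coordinates of vertices of $P$. Let $\mc{M}_{\xi^\ast,S}$ be the set of maximal outstretched segments $S'$ of $P$ with $\xi^\ast\in\Ih{S'}$ and $\yVal{S'}{\xi^\ast}\le\yVal{S}{\xi^\ast}$, the ``below'' counterpart of $\mc{N}_{\xi^\ast,S}$. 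By \Cref{lem:bSPp}~(2) no segment other than $S$ passes through $(\xi^\ast,\yVal{S}{\xi^\ast})$, because $\xi^\ast\in\Io{S}$ makes this point interior to $S$; hence $\mc{N}_{\xi^\ast,S}\cap\mc{M}_{\xi^\ast,S}=\{S\}$, while $\mc{N}_{\xi^\ast,S}\cup\mc{M}_{\xi^\ast,S}$ is the set of all maximal outstretched segments crossing the line $x=\xi^\ast$. Taking a point $q=(\xi^\ast,y_0)\in\Ext(P)$ with $y_0$ below all these segments, \Cref{lem:inside<->odd} shows that this total number is even, so $|\mc{N}_{\xi^\ast,S}|+|\mc{M}_{\xi^\ast,S}|$ is odd and exactly one of the two cardinalities is even. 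Finally, reflection interchanges ``above'' and ``below'' at a fixed abscissa, giving $\mc{N}_{\xi^\ast,\sigma(S)}=\sigma(\mc{M}_{\xi^\ast,S})$ and thus $|\mc{N}_{\xi^\ast,\sigma(S)}|=|\mc{M}_{\xi^\ast,S}|$; therefore $\varpi_{P'}(\sigma(S))=1$ iff $|\mc{M}_{\xi^\ast,S}|$ is odd iff $|\mc{N}_{\xi^\ast,S}|$ is even iff $\varpi_{P}(S)=0$, which is Statement~(3).

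Chaining the equivalences yields $p\in\Int(P)\iff p'\in\Int(P')\iff$ Statements~(1)--(3) of \Cref{cor:ibp-}, as desired. Everything outside the third paragraph is a purely formal translation through $\sigma$; the one genuinely non-mechanical ingredient, and the step I expect to demand the most care, is the parity-flip bookkeeping --- in particular checking that the two cardinalities $|\mc{N}_{\xi^\ast,S}|$ and $|\mc{M}_{\xi^\ast,S}|$ sum to an odd number, which hinges on the total crossing count being even.
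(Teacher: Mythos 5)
Your proposal is correct and follows the same route as the paper, which disposes of this corollary with the single remark that it is the ``mirror image'' of \cref{prop:ibp} obtained by inverting the $y$-axis. You have merely made explicit the one non-mechanical ingredient the paper leaves implicit, namely the parity flip $\varpi_{P'}(\sigma(S))=1-\varpi_P(S)$, and your bookkeeping for it (using \cref{lem:bSPp}~(2) to get $\mc{N}_{\xi^\ast,S}\cap\mc{M}_{\xi^\ast,S}=\{S\}$ and \cref{lem:inside<->odd} at an exterior point to get an even total crossing count) is sound.
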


The following simple consequence of \Cref{prop:L24} and the definition of
consecutive maximal outstretched segments will be useful later on.
\begin{corollary} \label{cor:maxpar=1}
  Let $P$ be a polygon, let $\mc{S}$ be the set of all maximal outstretched
  segments of $P$, and let $S\in \mc{S}$.  If there is a $\xi\in \Io{S}$
  such that $\yVal{S}{\xi}\ge \yVal{S'}{\xi}$ for each $S'\in \mc{S}$ with
  $\xi\in \Ih{S'}$, then $\varpi(S)=1$.  Moreover, if $S$ and $S'\in
  \mc{S}$ are consecutive, then $\varpi(S)\ne\varpi(S')$.
\end{corollary}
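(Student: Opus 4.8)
The plan is to treat the two assertions separately, using throughout that the parity $\varpi$ is well defined by \Cref{lem:eol}; this means it suffices to evaluate $|\mc{N}_{\xi,S}|$ at a single convenient $\xi\in\Io{S}$.

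For the first assertion I would argue that the hypothesis makes $S$ the topmost segment over the given $\xi$. Indeed, any $S'\in\mc{N}_{\xi,S}$ with $S'\neq S$ satisfies $\xi\in\Ih{S'}$ and $\yVal{S'}{\xi}\ge\yVal{S}{\xi}$, which together with the hypothesis $\yVal{S}{\xi}\ge\yVal{S'}{\xi}$ forces $\yVal{S'}{\xi}=\yVal{S}{\xi}$. Then $(\xi,\yVal{S}{\xi})$ lies on both $S$ and $S'$; but since $\xi\in\Io{S}$ this point is not a terminal vertex of $S$, contradicting \Cref{lem:bSPp}(2). Hence $\mc{N}_{\xi,S}=\{S\}$, whose cardinality is odd, so $\varpi(S)=1$.

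For the second assertion, \Cref{lem:consec-x} tells me that consecutive segments share an $x$-extreme, and after replacing $x$ by $-x$ (a reflection that preserves outstretched segments, their parities, and consecutiveness) I may assume $a\coloneqq\min(S)_x=\min(S')_x$. I would choose $\xi\in\Io{S}\cap\Io{S'}$ just to the right of $a$, closer to $a$ than any other vertex $x$-coordinate, and assume w.l.o.g.\ $\yVal{S}{\xi}>\yVal{S'}{\xi}$ (equality is again impossible by \Cref{lem:bSPp}(2)). Since every segment lying above $S$ at $\xi$ also lies above $S'$, we get $\mc{N}_{\xi,S}\subseteq\mc{N}_{\xi,S'}$ with $S'\in\mc{N}_{\xi,S'}\setminus\mc{N}_{\xi,S}$; the whole statement then reduces to the claim that no third segment $T$ satisfies $\yVal{S'}{\xi}\le\yVal{T}{\xi}<\yVal{S}{\xi}$, for in that case $|\mc{N}_{\xi,S'}|=|\mc{N}_{\xi,S}|+1$ and the parities differ.

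The crux, and the step I expect to be the main obstacle, is this ``nothing in between'' claim. I would prove it by analysing $\yVal{T}{a}$. The choice of $\xi$ guarantees that the edges of $S$ and $S'$ over $\xi$ are their terminal edges incident to $a$, so $\yVal{S}{a}=\min(S)_y$ and $\yVal{S'}{a}=\min(S')_y$, and each of $\yVal{S}{\,\cdot\,}$, $\yVal{S'}{\,\cdot\,}$, $\yVal{T}{\,\cdot\,}$ is continuous on $[a,\xi]$. If $\yVal{T}{a}$ lies strictly above $\yVal{S}{a}$ or strictly below $\yVal{S'}{a}$, the intermediate-value theorem yields an interior intersection of $T$ with $S$ or with $S'$, contradicting \Cref{lem:bSPp}(2); if $\yVal{T}{a}$ lies in the closed interval between the two, then $(a,\yVal{T}{a})$ lies on the shared terminal vertex (Case (i) of \cref{def:consecutive}) or on the connecting vertical path $W$ (Case (ii)). In either situation the vertices and edges of $P$ at $x=a$ in this range belong only to $S$, $S'$, and $W$, because the shared vertex has degree two in $P$ and, by \cref{def:consecutive}, $W$ is edge-disjoint from $S$ and $S'$ while $S\cup W\cup S'$ is a single path of the simple polygon $P$; this forces $T\in\{S,S'\}$, a contradiction. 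Finally the reflection disposes of the remaining case $\max(S)_x=\max(S')_x$, completing the argument.
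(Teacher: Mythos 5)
Your treatment of the first assertion is correct and is essentially the paper's one\-/line argument, with the useful extra detail that a second segment at the same height over $\xi\in\Io{S}$ would violate \Cref{lem:bSPp}(2); so $\mc{N}_{\xi,S}=\{S\}$ and $\varpi(S)=1$.

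The second assertion contains a genuine gap, and it sits exactly at the step you identify as the crux. From \Cref{lem:consec-x} you pass to $a\coloneqq\min(S)_x=\min(S')_x$ and then claim that a third segment $T$ squeezed between $S'$ and $S$ at $\xi=a+\delta$ must, at $x=a$, land on the shared terminal vertex or on the vertical path $W$. This tacitly assumes that the consecutiveness of $S$ and $S'$ is \emph{realized at the minimum side}, i.e.\ that $\min(S)=\min(S')$ or that $\min(S)$ and $\min(S')$ are the endpoints of $W$. But \Cref{lem:consec-x} only gives equality of $x$-coordinates: the shared vertex (or $W$) may sit at the common maximum, with $\min(S)_x=\min(S')_x$ holding coincidentally while $\min(S)\neq\min(S')$ are not joined by any vertical path. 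In that situation the ``nothing in between'' claim is false. Concretely, take the simple hexagon with corners, in cyclic order, $(0,10)$, $(10,20)$, $(0,0)$, $(-5,5)$, $(0.5,5)$, $(-5,7)$. Its maximal outstretched segments are $S_1=(-5,7)\!\to\!(0,10)\!\to\!(10,20)$, $S_2=(10,20)\!\to\!(0,0)\!\to\!(-5,5)$, $S_3=(-5,5)\!\to\!(0.5,5)$ and $S_4=(0.5,5)\!\to\!(-5,7)$. Then $S_1$ and $S_2$ are consecutive (they share $(10,20)=\max(S_1)=\max(S_2)$) and $\min(S_1)_x=\min(S_2)_x=-5$, yet at $\xi=-5+\delta$ both $S_3$ and $S_4$ lie strictly between $\yVal{S_2}{\xi}$ and $\yVal{S_1}{\xi}$, so $|\mc{N}_{\xi,S_2}|-|\mc{N}_{\xi,S_1}|=3$, not $1$; your local analysis at $x=a$ breaks down because $(-5,5)$ and $(-5,7)$ are vertices of $S_3$ and $S_4$ lying in the relevant height range. (The corollary itself survives here since the difference is still odd, but your argument does not yield that.) Note the gap only arises when \emph{both} $\min(S)_x=\min(S')_x$ and $\max(S)_x=\max(S')_x$ hold, since otherwise the single coinciding extreme is forced to carry the shared vertex or $W$. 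The repair, which is what the paper's proof implicitly does with its ``first case''/``second case'' split, is to run your degree-two/vertical-path argument at the extreme where the adjacency is actually realized; with that choice your exclusion of a third segment goes through.
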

\begin{proof}
  Let $P$ be a polygon, let $\mc{S}$ be the set of all maximal outstretched
  segments of $P$, and let $S\in \mc{S}$.  If there is a $\xi\in \Io{S}$
  such that $\yVal{S}{\xi}\ge \yVal{S'}{\xi}$ for each $S'\in \mc{S}$ with
  $\xi\in \Ih{S'}$, then $\mc{N}_{\xi,S}=\{S\}$ and, therefore,
  $\varpi(S)=1$.  Moreover, if $S$ and $S'$ are consecutive, then, by
  \Cref{lem:consec-x}, $\min(S)_x=\min(S)_x$ or $\max(S)_x=\max(S)_x$. In
  the first case, we assume w.l.o.g.\ that $\yVal{S}{\xi}<\yVal{S'}{\xi}$
  for $\xi\coloneqq\min(S)_x+\delta$ with $\delta>0$ sufficiently
  small. Since $P$ is simple, no other segment $S''\in \mc{S}$ contains a
  point $(\xi,\yVal{S''}{\xi})$ with
  $\yVal{S}{\xi}\le\yVal{S''}{\xi}\le\yVal{S'}{\xi}$.  Hence, since
  $\xi\notin \{\max(S)_x,\max(S')_x\}$, we conclude
  $|\mc{N}_{\xi,S'}|-|\mc{N}_{\xi,S}|=1$. Thus, we can assume
  w.l.o.g.\ that $|\mc{N}_{\xi,S'}|$ is odd and $|\mc{N}_{\xi,S}|$ is even.
  Hence, we can apply \Cref{lem:eol} and conclude that $|\mc{N}_{\xi',S'}|$
  is odd for all $\xi'\in \Io{S}$, and $|\mc{N}_{\xi',S}|$ is even for all
  $\xi'\in \Io{S}$.  By definition, $\varpi(S)\ne\varpi(S')$.  The same
  conclusion follows by an analogous argument for $\max(S)_x=\max(S')_x$.
\end{proof}
 
%%%%%%%%%%%%%%%%%%%%%%%%%%%%%%%%%%%%
\section{Investigating the Set of Maximally Outstretched Segments}
\label{sec:invest}

So far, we have considered a single polygon.  From here on, we focus on a
set $\P$ of polygons and the corresponding set $\Int(\P) = \{\Int(P)\mid
P\in \P\}$. Since $\Ih{P}$ is the
projection of $P$ onto the $x$-axis, we observe the following:
\begin{obs} \label{obs:I-P-sub}
  Let $\P$ be a set of polygons, and $P,P' \in \P$.  Then,
  $\Int(P)\subseteq \Int(P')$ implies $\Ih{P}\subseteq \Ih{P'}$.
  Conversely, $\Ih{P}\cap \Ih{P'}=\emptyset$ implies
  $\Int(P)\cap\Int(P')=\emptyset$.
\end{obs}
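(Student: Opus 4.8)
\emph{Overall plan.} I would reduce both implications to elementary facts about the $x$-projection of a polygon and its interior, so that the only topology invoked is the Jordan curve theorem. Write $m_P\coloneqq\min_{v\in V(P)}v_x$ and $M_P\coloneqq\max_{v\in V(P)}v_x$, so that $\Ih{P}=[m_P,M_P)$, $\Io{P}=(m_P,M_P)$, and $\Ic{P}=[m_P,M_P]$. The key point is that I will \emph{not} need the (harder) statement that every $\xi\in\Io{P}$ is realized by some interior point; the weaker containments below suffice.

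\emph{Preliminary facts.} First I would record three observations. \textbf{(a)} Since each point of $P$ lies on a straight-line edge joining two corners, $P\subseteq [m_P,M_P]\times\mathbb{R}$, and both bounding vertical lines meet $P$ in a corner. \textbf{(b)} The open half-planes $\{x<m_P\}$ and $\{x>M_P\}$ are connected, unbounded, and disjoint from $P$ by (a); hence each is contained in the unbounded complementary component $\Ext(P)$. Therefore $\Int(P)$ meets neither, and since $\Int(P)$ is open it cannot meet the bounding lines $\{x=m_P\}$ or $\{x=M_P\}$ either; consequently $\Int(P)\subseteq \Io{P}\times\mathbb{R}$, i.e.\ every interior point has $x$-coordinate in $\Io{P}\subseteq\Ih{P}$. \textbf{(c)} By the Jordan curve theorem $P=\partial\Int(P)\subseteq\overline{\Int(P)}$, and combining with (a),(b) gives $\overline{\Int(P)}=\Int(P)\cup P\subseteq [m_P,M_P]\times\mathbb{R}$.

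\emph{First implication.} Assuming $\Int(P)\subseteq\Int(P')$, monotonicity of closure and (c) give $\overline{\Int(P)}\subseteq\overline{\Int(P')}\subseteq [m_{P'},M_{P'}]\times\mathbb{R}$. The corner $u\in V(P)$ realizing $u_x=m_P$ lies on $P\subseteq\overline{\Int(P)}$ by (c), so $m_{P'}\le u_x=m_P$; applying the same argument to the corner realizing $M_P$ yields $M_P\le M_{P'}$. These two inequalities give $[m_P,M_P)\subseteq[m_{P'},M_{P'})$, that is $\Ih{P}\subseteq\Ih{P'}$.

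\emph{Second implication and main obstacle.} For the converse I would argue by contraposition: if $\Int(P)\cap\Int(P')\neq\emptyset$, pick $q$ in the intersection, and by (b) applied to both $P$ and $P'$ its coordinate satisfies $q_x\in\Io{P}\cap\Io{P'}\subseteq\Ih{P}\cap\Ih{P'}$, so $\Ih{P}\cap\Ih{P'}\neq\emptyset$; hence $\Ih{P}\cap\Ih{P'}=\emptyset$ forces $\Int(P)\cap\Int(P')=\emptyset$. The entire genuine content sits in the preliminary facts (a)--(c), after which each implication is a one-line projection argument. I expect the only delicate step to be (b): one must check that an interior point can lie neither outside nor \emph{on} the two extreme vertical lines, since this is precisely what licenses using the half-open $\Ih{\cdot}$ (rather than the closed $\Ic{\cdot}$) and rules out spurious boundary coincidences at $M_P$; the cleanest justification is the ``unbounded component'' observation rather than any explicit sweep.
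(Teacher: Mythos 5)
Your proof is correct and follows the same route the paper takes, namely reducing both implications to the observation that $\Ih{P}$ records the $x$-projection of $P$ (the paper states this as an unproved observation justified by exactly that remark). Your write-up simply supplies the details the paper omits — the Jordan-curve facts (b) and (c) that pin $\Int(P)$ strictly between the extreme vertical lines and place $P$ in $\overline{\Int(P)}$ — and these details are all sound.
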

Note that, in general, neither $\Ih{P}\subseteq \Ih{P'}$ implies
$\Int(P)\subseteq \Int(P')$ nor $\Int(P)\cap\Int(P')=\emptyset$ implies
$\Ih{P}\cap \Ih{P'}=\emptyset$; see \cref{fig:non-sweep} for an
example.
Moreover, we will make use of the following three simple result
regarding the boundaries of overlap-free polygons, see
\cref{fig:overlapping} for an illustrative example:
\begin{lemma} \label{lem:s-o-r} 
  Let $P$ and $P'$ be two overlap-free polygons.  Then,
  $P\cap\Int(P')\ne\emptyset$ implies $\Int(P)\subsetneq \Int(P')$ and
  $P\subseteq \Int(P')\cup P'$.  Moreover, $\Ext(P')\cap P\ne\emptyset$
  implies $\Int(P)\nsubseteq \Int(P')$.  In particular, there are no
  $p,p'\in P$ such that $p\in \Int(P')$ and $p'\in \Ext(P')$.
\end{lemma}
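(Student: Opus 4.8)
The plan is to reduce the whole statement to two topological facts about a polygon, viewed as a Jordan curve, together with the combinatorial trichotomy coming from overlap-freeness. The facts are: (i) $\mathbb{R}^2 = \Int(P)\cupdot P\cupdot\Ext(P)$ with $\Int(P)$ and $\Ext(P)$ open and $P$ their common boundary, so that \emph{every} point of $P$ has points of both $\Int(P)$ and $\Ext(P)$ in each of its neighborhoods; and (ii) $\overline{\Int(P)}=\Int(P)\cup P$. Both are instances of the Jordan curve theorem, which applies here because all embeddings are planar (see the discussion of F\'ary's theorem and \cite{Diestel2010}). Since $\{\Int(P),\Int(P')\}$ is overlap-free, we moreover have the trichotomy $\Int(P)\subseteq\Int(P')$, or $\Int(P')\subseteq\Int(P)$, or $\Int(P)\cap\Int(P')=\emptyset$; this will be the engine driving the containment claims.

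For the first implication I would fix a witness $p\in P\cap\Int(P')$. As $\Int(P')$ is open, some ball $B(p,\varepsilon)\subseteq\Int(P')$; as $p\in P$, fact (i) guarantees that $B(p,\varepsilon)$ meets both $\Int(P)$ and $\Ext(P)$, producing points $q\in\Int(P)\cap\Int(P')$ and $q'\in\Ext(P)\cap\Int(P')$. The point $q$ rules out the disjoint case of the trichotomy. The point $q'$ lies in $\Int(P')\setminus\Int(P)$ (it is in $\Ext(P)$, hence not in $\Int(P)$), so if $\Int(P')\subseteq\Int(P)$ held we would get $q'\in\Int(P)\cap\Ext(P)=\emptyset$, a contradiction; therefore $\Int(P)\subseteq\Int(P')$, and the presence of $q'$ makes the inclusion strict, i.e.\ $\Int(P)\subsetneq\Int(P')$. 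To pass from this to $P\subseteq\Int(P')\cup P'$ I would simply take closures: monotonicity gives $\overline{\Int(P)}\subseteq\overline{\Int(P')}$, which by fact (ii) reads $\Int(P)\cup P\subseteq\Int(P')\cup P'$, and restricting to $P$ yields the claim.

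The second implication is a direct local argument that needs neither overlap-freeness nor the trichotomy: from a witness $r\in\Ext(P')\cap P$, openness of $\Ext(P')$ gives a ball $B(r,\varepsilon)\subseteq\Ext(P')$, and since $r\in P$ fact (i) yields some $q\in B(r,\varepsilon)\cap\Int(P)$. Then $q\in\Int(P)\cap\Ext(P')$ shows $q\in\Int(P)\setminus\Int(P')$, hence $\Int(P)\nsubseteq\Int(P')$. Finally, the ``in particular'' clause is immediate from the first implication: if some $p\in P$ lay in $\Int(P')$ then $P\cap\Int(P')\neq\emptyset$, forcing $P\subseteq\Int(P')\cup P'$, i.e.\ $P\cap\Ext(P')=\emptyset$, so no $p'\in P$ can lie in $\Ext(P')$.

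The only genuinely non-formal step is the local fact (i)---that every point of $P$ has both interior and exterior points of $P$ in every neighborhood---which is precisely the assertion that $P$ is the common topological boundary of $\Int(P)$ and $\Ext(P)$; once this is granted via the Jordan curve theorem, everything else is set algebra. I expect the main obstacle (more a matter of not over-complicating than of difficulty) to be resisting a crossing/connectedness argument for $P\subseteq\Int(P')\cup P'$ and instead using the clean closure identity $\overline{\Int(Q)}=\Int(Q)\cup Q$, which should be cited rather than reproved.
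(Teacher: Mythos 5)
Your proposal is correct and follows essentially the same route as the paper: both arguments perturb a point of $P$ lying in $\Int(P')$ (resp.\ $\Ext(P')$) into a nearby point of $\Int(P)\cap\Int(P')$ (resp.\ $\Int(P)\cap\Ext(P')$) using the fact that $P$ is the common boundary of $\Int(P)$ and $\Ext(P)$, and then invoke the overlap-free trichotomy. The only cosmetic difference is that you obtain $P\subseteq\Int(P')\cup P'$ directly via the closure identity $\overline{\Int(Q)}=\Int(Q)\cup Q$, whereas the paper leaves that containment implicit in the observation that $P\cap\Int(P')\ne\emptyset$ and $P\cap\Ext(P')\ne\emptyset$ cannot co-occur; both are valid.
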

\begin{proof}
  Let $P$ and $P'$ be two overlap-free polygons. First, assume that $p\in
  P\cap\Int(P')$. Hence, $p\in \Int(P')\setminus \Int(P)\ne
  \emptyset$. Since $p\in P$ is on the ``boundary'' of $P$ and since $p\in
  \Int(P')$ is \emph{not} on the ``boundary'' of $P'$, for every
  $\varepsilon>0$, there is a $p'\in \mathbb{R}^2$ with $||p-p'|| \le
  \varepsilon$ such that $p'\in \Int(P)\cap \Int(P')$. Hence, $\Int(P)\cap
  \Int(P')\ne\emptyset$.  This, together with $\Int(P')\setminus \Int(P)\ne
  \emptyset$ and the assumption that $P$ and $P'$ are overlap-free implies
  $\Int(P)\subsetneq \Int(P')$.  Now, assume that $p\in P\cap\Ext(P')$. By
  similar reasoning, there is a $p' \in \mathbb{R}^2$ such that $p' \in
  \Int(P)\cap\Ext(P')$.  This together with the assumption that $P$ and
  $P'$ are overlap-free implies $\Int(P)\nsubseteq\Int(P')$.  In summary,
  $P\cap\Int(P')\ne\emptyset$ and $P\cap\Ext(P')\ne\emptyset$ can never
  occur.
\end{proof}

\begin{lemma} \label{lem:area}
  Let $P$ and $P'$ be two overlap-free polygons
  such that $\Int(P)\cap\Int(P')\ne\emptyset$.  Then,
  $\Int(P)\subsetneq\Int(P')$ if and only if $\area{P}<\area{P'}$. In
  particular, $P=P'$ if and only if $\area{P}=\area{P'}$.
\end{lemma}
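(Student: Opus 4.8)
The plan is to reduce the statement to a clean nesting dichotomy via the overlap-free hypothesis, and then to use monotonicity of area under set inclusion, upgraded to a strict inequality by a positive-measure argument. Throughout I read $\area{P}$ as the Lebesgue area of the enclosed region, i.e.\ the measure of $\Int(P)$.

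First I would extract the dichotomy. By definition, two sets overlap-free polygons satisfy $\Int(P)\cap\Int(P')\in\{\Int(P),\Int(P'),\emptyset\}$. Since the hypothesis rules out the empty case, we obtain $\Int(P)\cap\Int(P')\in\{\Int(P),\Int(P')\}$, that is, either $\Int(P)\subseteq\Int(P')$ or $\Int(P')\subseteq\Int(P)$. This observation is the backbone of both directions. For the forward direction of the main equivalence, assume $\Int(P)\subsetneq\Int(P')$. Monotonicity of area gives $\area{P}\le\area{P'}$, so only equality must be excluded. As the inclusion is proper there is a point $q\in\Int(P')\setminus\Int(P)$; then $q\in P\cup\Ext(P)$, and because $P$ is a Jordan curve every neighbourhood of $q$ meets the open set $\Ext(P)$. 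Since $\Int(P')$ is open and contains $q$, the set $\Int(P')\cap\Ext(P)$ is nonempty and open, hence of strictly positive area, and it is disjoint from $\Int(P)$. With $\Int(P)\subseteq\Int(P')$ this yields $\area{P'}=\area{P}+\area{\Int(P')\setminus\Int(P)}>\area{P}$. For the converse, assume $\area{P}<\area{P'}$; the dichotomy gives one of the two inclusions, and $\Int(P')\subseteq\Int(P)$ would force $\area{P'}\le\area{P}$, a contradiction, so $\Int(P)\subseteq\Int(P')$; as the areas differ the interiors cannot coincide, whence $\Int(P)\subsetneq\Int(P')$.

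For the \emph{in particular} claim, $P=P'$ trivially gives $\area{P}=\area{P'}$. Conversely, assume $\area{P}=\area{P'}$. The dichotomy again provides a nesting, and by the main equivalence just proved neither inclusion can be proper (else the areas would differ), so $\Int(P)=\Int(P')$. Finally, since for a simple polygon the topological boundary of its open interior is exactly the polygon itself, $\Int(P)=\Int(P')$ forces $P=\partial\Int(P)=\partial\Int(P')=P'$.

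The main obstacle I anticipate is precisely the positive-area step in the forward direction: passing from set-theoretic proper inclusion of the two open interiors to a genuine strict inequality of areas. The delicate point is that the boundary $P$ has measure zero, so proper inclusion alone does not obviously enlarge the area; ruling out a negligible difference requires producing a nonempty \emph{open} exterior pocket of $P$ sitting inside $\Int(P')$, which is where the Jordan-curve property of $P$ is essential. The remaining steps are routine consequences of monotonicity of area and the dichotomy.
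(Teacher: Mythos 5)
Your proof is correct and follows the same overall skeleton as the paper's: the overlap-free dichotomy reduces everything to a nesting, monotonicity of area gives the weak inequality, and the only real work is upgrading proper inclusion of interiors to a strict area inequality. Where you differ is precisely at that crux. The paper argues that $P\ne P'$ forces the vertex sets to differ and then asserts the existence of a vertex $v\in V(P)$ lying in $\Int(P')$, from which strictness follows; taken literally this step is fragile, since a polygon inscribed in $P'$ with all of its corners on $P'$ itself has $\Int(P)\subsetneq\Int(P')$ but no vertex in $\Int(P')$. Your argument instead picks any $q\in\Int(P')\setminus\Int(P)$, observes $q\in P\cup\Ext(P)$, and uses the Jordan-curve property that every neighbourhood of a boundary point of $P$ meets $\Ext(P)$ to produce the nonempty open set $\Int(P')\cap\Ext(P)$, which is disjoint from $\Int(P)$ and has positive area. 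This is more robust and covers the inscribed case the paper's phrasing glosses over, at the modest cost of invoking the topological fact that $P=\partial\Int(P)$ both here and in the final step $\Int(P)=\Int(P')\Rightarrow P=P'$ (which the paper handles implicitly via vertex sets). Both routes are legitimate; yours closes a small gap in the published argument.
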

\begin{proof}
  Since polygons are overlap-free, $\Int(P)\cap\Int(P')\ne\emptyset$
  implies $P=P'$, $\Int(P)\subsetneq\Int(P')$ or
  $\Int(P')\subsetneq\Int(P)$. Clearly, $\Int(P)\subseteq \Int(P')$ implies
  $\area{P}\le\area{P'}$. Since two polygons are different only if their
  vertex sets differ, $P\ne P'$ and $\Int(P)\subseteq\Int(P')$ implies that
  there is vertex $v\in V(P)$ such that $v\in\Int(P')$, and thus
  $\area{P}<\area{P'}$. Thus, $\Int(P)\subseteq\Int(P')$ and
  $\area{P}=\area{P'}$ implies $P=P'$.
\end{proof}

\begin{figure}
  \begin{minipage}{0.55\textwidth}
    \caption{A ``crossing'' pair of maximal outstretched segment $S_1$ and
      $S_2$ implies that the corresponding polygons $P_1$ and $P_2$
      overlap. In particular, there are two positions $\xi,\xi' \in
      \Ih{S_1}\cap\Ih{S_2}$ such that $\yVal{S_1}{\xi}>\yVal{S_2}{\xi}$ and
      $\yVal{S_1}{\xi'}<\yVal{S_2}{\xi'}$.  Thus, $S_1$ and $S_2$ intersect
      (illustrated by the dashed line), and \cref{lem:mos-le-1} implies
      that $P_1$ and $P_2$ are overlapping.}
    \label{fig:overlapping}
  \end{minipage} \hfil \begin{minipage}{0.35\textwidth}
    \includegraphics[width=1\textwidth]{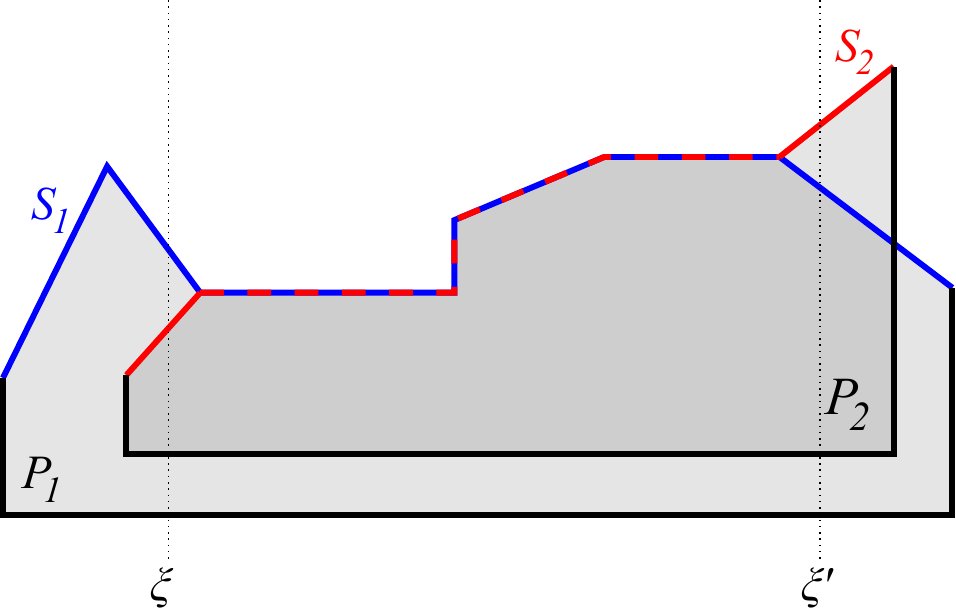}
  \end{minipage}
\end{figure}

\begin{lemma}\label{lem:mos-le-1}
  Let $\P$ be a set of overlap-free polygons and $P,P'\in \P$. 
  Moreover, let $S$ and $S'$ be maximal
  outstretched segments of $P$ and $P'$, respectively.  Then, 
  we have $\yVal{S}{\xi}\le\yVal{S'}{\xi}$ for all 
  $\xi\in \Ic{S}\cap \Ic{S'}$, or
  $\yVal{S'}{\xi}\le\yVal{S}{\xi}$ for all 
  $\xi\in \Ic{S}\cap \Ic{S'}$.
\end{lemma}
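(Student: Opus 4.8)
The plan is a proof by contradiction that parallels \Cref{lem:asa}, but with the overlap-freeness of $\P$ (through \Cref{lem:s-o-r}) taking over the role that \Cref{lem:bSPp} played for two segments of a single polygon. If neither alternative holds, there are $\xi,\xi'\in\Ic{S}\cap\Ic{S'}$ with $\yVal{S}{\xi}>\yVal{S'}{\xi}$ and $\yVal{S}{\xi'}<\yVal{S'}{\xi'}$. The domain $\Ic{S}\cap\Ic{S'}$ is the intersection of two closed intervals, hence a nondegenerate interval containing $\xi$ and $\xi'$, and on it the maps $x\mapsto\yVal{S}{x}$ and $x\mapsto\yVal{S'}{x}$ are continuous (as already noted in the proof of \Cref{lem:asa}). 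The intermediate-value theorem applied to $h(x)\coloneqq\yVal{S}{x}-\yVal{S'}{x}$ then yields a $\xi''$ strictly between $\xi$ and $\xi'$ with $h(\xi'')=0$, so that $p\coloneqq(\xi'',\yVal{S}{\xi''})\in S\cap S'$. Because $\min(S)_x\le\xi,\xi'\le\max(S)_x$ and $\xi''$ lies strictly between $\xi$ and $\xi'$, we obtain $\xi''\in\Io{S}$, and likewise $\xi''\in\Io{S'}$; in particular $p$ is a non-terminal interior point of both $S$ and $S'$.

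The decisive step is to read off two points of $P$ lying on opposite sides of the Jordan curve $P'$. From the sign change of $h$ I select a point $q_+\in S$ lying strictly above $S'$ and a point $q_-\in S$ lying strictly below $S'$, both so close to $S'$ that no other maximal outstretched segment of $P'$ separates them from $S'$; this is possible because $P'$ is simple, so near $p$ the only part of $P'$ present is the arc $S'$. For such $q_+$ the segment of $P'$ immediately below it is $S'$, so \Cref{cor:ibp-} gives $q_+\in\Int(P')$ if and only if $\varpi(S')=0$; symmetrically, \Cref{prop:ibp} gives $q_-\in\Int(P')$ if and only if $\varpi(S')=1$. Taking $q_+,q_-\notin P'$ (which excludes only finitely many heights), exactly one of $q_+,q_-$ lies in $\Int(P')$ and the other in $\Ext(P')$. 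As $q_+,q_-\in S\subseteq P$, this contradicts the final assertion of \Cref{lem:s-o-r}, namely that for overlap-free $P,P'$ no two points of $P$ can lie one in $\Int(P')$ and one in $\Ext(P')$. Hence one of the two claimed inequalities must hold on all of $\Ic{S}\cap\Ic{S'}$.

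The step I expect to be the main obstacle is guaranteeing that both witnesses $q_+$ and $q_-$ actually exist arbitrarily close to $S'$, i.e.\ that the crossing is genuinely two-sided. Since $\yVal{S}{\cdot}$ and $\yVal{S'}{\cdot}$ are piecewise linear by Property~(O), the function $h$ is piecewise linear with finitely many breakpoints; as $h(\xi)>0$ and $h(\xi')<0$ it must change sign, and generically it does so transversally at $\xi''$, furnishing $q_+$ and $q_-$ in any neighbourhood of $p$. The degenerate case is that $S$ and $S'$ coincide on a subinterval (so $h\equiv 0$ there and $P,P'$ share edges); then I would take $q_+$ just before and $q_-$ just after this shared arc, where $h$ is strictly positive resp.\ strictly negative. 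The argument of the previous paragraph applies unchanged, because it uses only the \emph{globally} fixed parity $\varpi(S')$, so $q_+$ and $q_-$ need not cluster at one common point. A final routine verification is that $q_+$ and $q_-$ can be chosen immediately adjacent to $S'$ with no intervening segment of $P'$, which again follows from the simplicity of $P'$ once the points are taken close enough to $S'$.
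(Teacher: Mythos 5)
Your proof is correct and follows essentially the same route as the paper's: both argue by contradiction, use the intermediate-value theorem on $\yVal{S}{\cdot}-\yVal{S'}{\cdot}$ to locate a (possibly degenerate, arc-shaped) crossing, pick two nearby witness points on opposite sides of it within a neighbourhood free of other segments, and combine \Cref{prop:ibp}/\Cref{cor:ibp-} with \Cref{lem:s-o-r} to contradict overlap-freeness. The only cosmetic difference is that you place the witnesses on $S\subseteq P$ and test them against $\Int(P')$ via $\varpi(S',P')$, whereas the paper places them on $S'\subseteq P'$ and tests against $\Int(P)$ via $\varpi(S,P)$.
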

\begin{proof}
  Let $(S,P),(S',P') \in \SP{\P}$.  If $P=P'$, the statement follows from
  \cref{lem:asa}.  Hence, suppose that $P\neq P'$.  Now, assume for
  contradiction that there are two $\xi_1,\xi_2\in \Ic{S}\cap \Ic{S'}$ such
  that $\yVal{S'}{\xi_1}>\yVal{S}{\xi_1}$ and
  $\yVal{S'}{\xi_2}<\yVal{S}{\xi_2}$.
  We may assume w.l.o.g.\ that $\xi_1<\xi_2$.  By similar arguments as in
  the proof of \Cref{lem:asa}, the intermediate-value-theorem implies that
  the intersection $\widehat{S^*}\coloneqq \widehat{S}\cap\widehat{S'}$
  with $\widehat{S}\coloneqq \{ p\in S\mid \xi_1\le p_x\le \xi_2 \}$ and
  $\widehat{S'}\coloneqq \{ p\in S'\mid \xi_1\le p_x\le \xi_2 \}$ is
  non-empty.  In particular, it is easy to verify that $\xi_1$ and $\xi_2$
  can be chosen such that the intersection $\widehat{S^*}$ is connected.
  Note that $\widehat{S^*}$ is not necessarily an isolated point and may
  involve vertical edges.  Moreover, $\widehat{S'}\setminus \widehat{S^*}$
  decomposes into exactly two segments $\widehat{S'_1}$ and
  $\widehat{S'_2}$, where w.l.o.g.\ $p_1\coloneqq
  (\xi_1,\yVal{S}{\xi_1})\in \widehat{S'_1}$ and $p_2\coloneqq
  (\xi_2,\yVal{S}{\xi_2})\in \widehat{S'_2}$.  Moreover,
  \cref{lem:bSPp}~(2) implies that any sufficiently small open neighborhood
  $U$ of $\widehat{S^*}$ does not intersect with other maximal outstretched
  segment of $P$ and $P'$, i.e., for a sufficiently small $\delta>0$, we
  have $U\coloneqq\{p\in \mathbb{R}^2\mid \Vert p-p'\Vert < \delta
  \textnormal{ for some }p'\in \widehat{S^*}\}$.  Since
  $\widehat{S'}=\widehat{S'_1}\cupdot \widehat{S^*} \cupdot
  \widehat{S'_2}$, it is straightforward to verify that there is a $p'_1\in
  \widehat{S'_1}\cap U$ and $p'_2\in \widehat{S'_2}\cap U$.  In particular,
  $p'_1,p'_2\notin S$, and thus, $(p'_1)_y\ne \yVal{S}{(p'_1)_x}$ and
  $(p'_2)_y\ne \yVal{S}{(p'_2)_x}$.  Moreover, $p'_1,p'_2\notin S$,
  together with $p'_1,p'_2 \in U$, implies $p'_1,p'_2\notin P$.
  
  Now, assume for contradiction that $(p'_1)_y<\yVal{S}{(p'_1)_x}$, and
  thus, $(p'_1)_y<\yVal{S}{(p'_1)_x}=\yVal{\widehat{S}}{(p'_1)_x}$.  This,
  together with $p'_1\in \widehat{S'_1}$ and
  $\yVal{\widehat{S'_1}}{\xi_1}=\yVal{S'}{\xi_1}>\yVal{S}{\xi_1}=
  \yVal{\widehat{S}}{\xi_1}$, implies $\widehat{S'_1}\cap
  \widehat{S}\ne\emptyset$.  In particular, there is a $p \in
  \widehat{S}\cap\widehat{S'_1} \subseteq \widehat{S}\cap \widehat{S'}$
  with $p\notin \widehat{S^*}$; a contradiction to
  $\widehat{S^*}=\widehat{S}\cap\widehat{S'}$.  Hence,
  $(p'_1)_y>\yVal{S}{(p'_1)_x}$, i.e., Statement~(1) of \cref{cor:ibp-} is
  satisfied; and since $p'_1\in U$, Statement~(2) of \cref{cor:ibp-} is
  satisfied.  This, together with \cref{cor:ibp-}, implies $p'_1\in
  \Int(P)$ if and only if $\pari{S}{P}=0$. By analogous arguments, we
  conclude that $(p'_2)_y<\yVal{S}{(p'_2)_x}$, i.e., Statement~(1) of
  \cref{prop:ibp} is satisfied; and since $p'_2\in U$, Statement~(2) of
  \cref{prop:ibp} is satisfied.  Hence, \cref{prop:ibp} implies $p'_2\in
  \Int(P)$ if and only if $\pari{S}{P}=1$.  Since $p'_1,p'_2\notin P$, we
  have $p'_1,p'_2\in \Int(P)\cup \Ext(P)$.  Thus, $p'_2\in\Int(P)$ and
  $p'_1\in\Ext(P)$ if $\pari{S}{P}=1$, and $p'_1\in\Int(P)$ and
  $p'_2\in\Ext(P)$, otherwise (i.e., $\pari{S}{P}=0$).
  
  Independent of $\pari{S}{P}$, therefore there are $p,p'\in
  \widehat{S'}\cap U\subseteq S'\subseteq P'$ with $p\in \Int(P)$ and
  $p'\in\Ext(P)$.  This, \cref{lem:s-o-r} and the fact that $P$ and $P'$
  are overlap-free implies a contradiction.  Hence, we have
  $\yVal{S}{\xi}\le\yVal{S'}{\xi}$ for all $\xi\in \Ih{S}\cap \Ih{S'}$, or
  $\yVal{S'}{\xi}\le\yVal{S}{\xi}$ for all $\xi\in \Ih{S}\cap \Ih{S'}$.
\end{proof}

\begin{remark}
\emph{From here on, all sets $\P$ of polygons are considered to be
overlap-free.}
\end{remark}  

Since a segment $S$ may be part of two or more polygons, we now use pairs
$(S,P)$ in order identify the polygon $P$ of which $S$ is considered to be
a maximal outstretched segment. We write $\SP{\P}$ for the set of all pairs
$(S,P)$ of maximal outstretched segments $S\in \mc{S}$ and their polygons
$P\in \mathcal{P}$.  Moreover, we set 
\[\SPx{\P}{\xi}\coloneqq \big\{(S,P)\in \SP{\P} \mid \xi\in \Ih{S}\big\}.\] 
A segment $S$ that is contained in two polygons
$P$ and $P'$ might be maximal outstretched in $P$ but not in $P'$. Hence,
$(S,P)\in \SP{\P}$ does not necessarily imply $(S,P')\in \SP{\P}$ even if
$S$ is a segment of $P'$ as well. Thus, in general, $\SP{\P} \neq
\mathcal{S}\times \mathcal{P}$. Furthermore, note that the parity of
$S$ depends on $P$, hence we write $\pari{S}{P}$ form here on.

\begin{figure}[t]
  \begin{minipage}{0.55\textwidth}
    \caption{Sweep-adjacency in a set $\P=\{P_1,P_2,P_3,P_4,P_5\}$ of
      polygons.  We have $\Int(P_5)\subsetneq \Int(P_4)$, and
      $\Int(P_i)\cap\Int(P_j)=\emptyset$ for all $i,j \in \{1,\ldots,5\}$
      with $\{i,j\}\ne\{4,5\}$. The parities of their maximal outstretched
      segments $S_i$ are colored red if there is a polygon $P_j$ with
      $\varpi(S_i,P_j)=1$, and are colored blue if there is a polygon $P_j$
      with $\varpi(S_i,P_j)=0$. Note that the parity of segment depends on
      the segments to which it belongs: The $S_2$ appears in both $P_1$ and
      $P_2$ with distinct parities $\varpi(S_2,P_1)=1\neq
      0=\varpi(S_2,P_2)$. Such (parts of) segments are shown as
      blue-red-dashed lines.  Note, furthermore, that $S_4\subseteq S_3$
      with $(S_3,P_2),(S_4,P_3)\in \SP{\P}$.\newline Sweep-adjacency of
      maximal outstretched segments is determined for pairs of polygons:
      First, $(S_2,P_2)$ is only sweep-adjacent to $(S_2,P_1)$ (witnessed
      by $\xi_1$) but not to any other $(S_i,P_j)\in\SP{\P}$.  In contrast,
      $(S_2,P_1)$ is sweep-adjacent to $(S_2,P_2)$ and $(S_4,P_3)$
      (witnessed by $\xi_1$).  Moreover, $(S_5,P_3)$ and $(S_6,P_4)$
      (resp.,~$(S_9,P_3)$ and $(S_8,P_4)$) are sweep-adjacent (witnessed by
      $\xi_2$).  Hence, $(*,P_i)$ and $(*,P_j)$ with $i\ne j$, which are
      sweep-adjacent, are not necessarily unique.  Finally, note that
      $(S_8,P_5)$ and $(S_{11},P_3)$ are sweep-adjacent witnessed by
      $\xi_3$.  However, their sweep-adjacency is neither witness by
      $\xi_2$ (since \cref{def:adjacent}~(2) is violated) nor by $\xi_4$
      (since \cref{def:adjacent}~(1) is violated).}
    \label{fig:sweep-xample}
  \end{minipage} \hfil \begin{minipage}{0.35\textwidth}
    \includegraphics[width=1\textwidth]{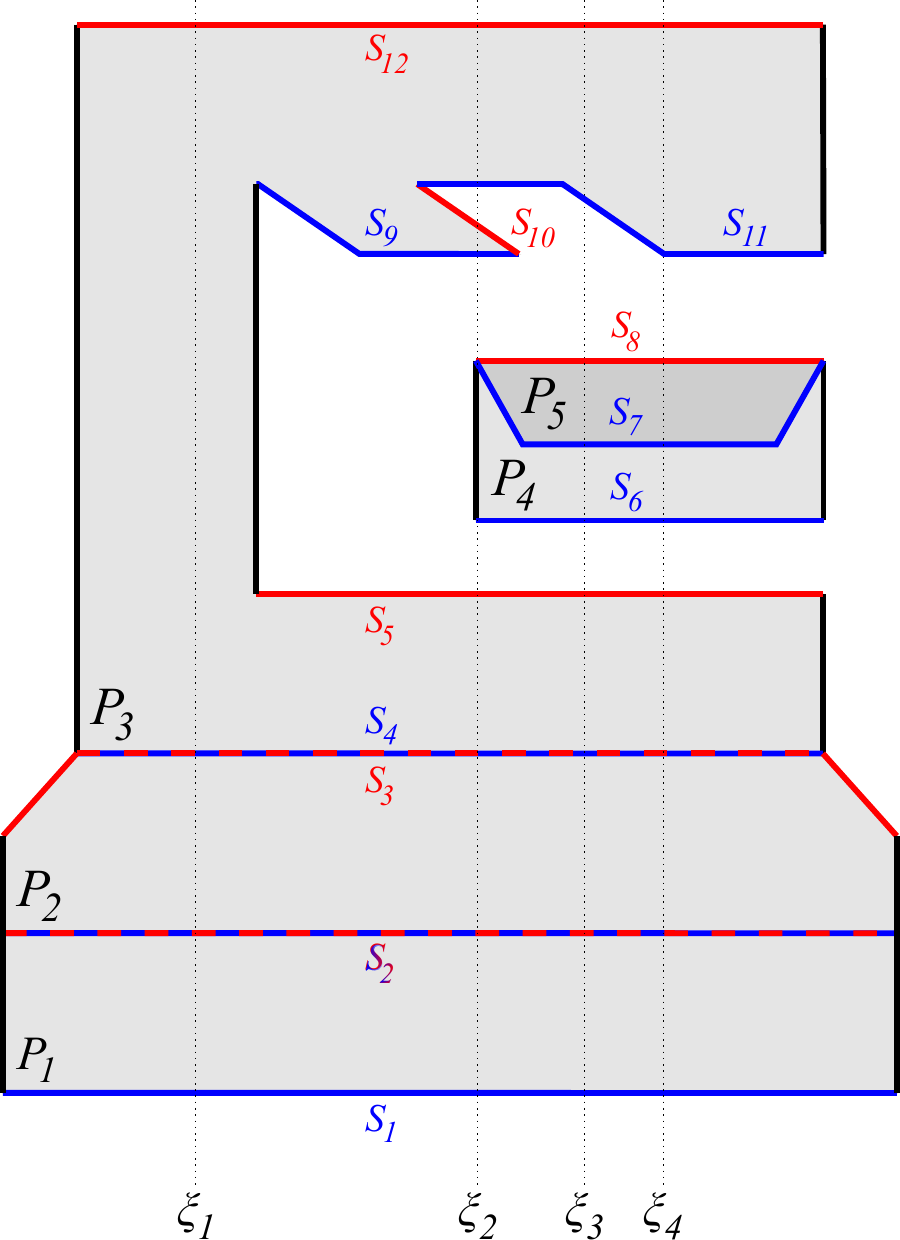}
  \end{minipage}
\end{figure}

In this section, we are interested of the ``relative position'' of two
polygons $P$ and $P'$, i.e. $\Int(P)\subseteq \Int(P')$, $\Int(P)\subseteq
\Int(P')$ or $\Int(P)\cap\Int(P')=\emptyset$.  To this end, we introduce
the concept of ``sweep-adjacency'', which is illustrated in
\cref{fig:sweep-xample}.
\begin{definition}\label{def:adjacent}
  Let $\P$ be a set of polygons and $(S,P),(S',P')\in\SP{\P}$ with $P\ne
  P'$.  Then, we call $(S,P)$ and $(S',P')$ \emph{sweep-adjacent} if there
  is a $\xi\in \Io{S}\cap \Io{S'}$ such that there is no $(S'',P'')\in
  \SPx{\P}{\xi}$ with $P''\in \{P,P'\}$, and
  \begin{equation*}
  \yVal{S}{\xi}<\yVal{S''}{\xi}<\yVal{S'}{\xi}\textnormal{ or }
  \yVal{S}{\xi}>\yVal{S''}{\xi}>\yVal{S'}{\xi}.
  \end{equation*}
  In this case, we say that $\xi$ is a \emph{witness} for the
  sweep-adjacency of $(S,P)$ and $(S',P')$.
\end{definition}
Note that $\min(S)_x$ and $\min(S')_x$ need to be excluded from
$\Io{S}\cap\Io{S'}$, since there is another segment $(S'',P)\in\SP{\P}$
with $\min(S'')=\min(S)$ (resp., $(S'',P')\in\SP{\P}$ with
$\min(S'')=\min(S')$), which would lead to ambiguities.
Moreover, in order to get some intuition, we observe the following:

\begin{obs} \label{obs:alwaysone}
  Let $\P$ be a set of polygons, $P,P'\in\P$ with $P\ne P'$, and
  $(S,P),(S',P')\in \SP{\P}$.  Then, the existence of a $\xi\in \Io{S}\cap
  \Io{S'}$ with $\yVal{S}{\xi}=\yVal{S'}{\xi}$ implies (a) that
  $p\coloneqq(\xi,\yVal{S}{\xi})=(\xi,\yVal{S'}{\xi})\in S\cap S'$, and (b)
  that $(S,P)$ and $(S',P')$ are sweep-adjacent.  However, $p\in S\cap S'$
  with $p_x\in\Io{S}\cap\Io{S'}$ does \emph{not} imply that $(S,P)$ and
  $(S',P')$ are sweep-adjacent; see \cref{fig:non-sweep} for an
  illustrative example.
\end{obs}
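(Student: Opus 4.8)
The plan is to derive both claims directly from the definitions, since neither requires the structural machinery developed earlier. I would fix the $\xi\in\Io{S}\cap\Io{S'}$ supplied by the hypothesis, for which $\yVal{S}{\xi}=\yVal{S'}{\xi}$, and first record that $\Io{S}\subseteq\Ic{S}$ and $\Io{S'}\subseteq\Ic{S'}$, so that both $\yVal{S}{\xi}$ and $\yVal{S'}{\xi}$ are well-defined.

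For~(a), I would unpack the definition of $\yVal{S}{\xi}$: it is the unique $y^*$ with $(\xi,y^*)\in s[u,v]$ for the edge $\{u,v\}\in E(S)$ associated with $\xi$; since the straight-line segment $s[u,v]$ of that edge is contained in $S$, this places $(\xi,\yVal{S}{\xi})$ on $S$, and symmetrically $(\xi,\yVal{S'}{\xi})$ on $S'$. The hypothesis $\yVal{S}{\xi}=\yVal{S'}{\xi}$ then identifies these two points, so $p\in S\cap S'$, which is~(a).

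For~(b), I would take the same $\xi$ as a candidate witness in \cref{def:adjacent}. Since $\xi\in\Io{S}\cap\Io{S'}$ already holds, it remains to verify the non-existence clause: no $(S'',P'')\in\SPx{\P}{\xi}$ with $P''\in\{P,P'\}$ can satisfy $\yVal{S}{\xi}<\yVal{S''}{\xi}<\yVal{S'}{\xi}$ or $\yVal{S}{\xi}>\yVal{S''}{\xi}>\yVal{S'}{\xi}$. Here I would simply observe that $\yVal{S}{\xi}=\yVal{S'}{\xi}$ collapses the open interval between the two bounds to the empty set, so both strict betweenness conditions are unsatisfiable regardless of $S''$; hence the required $(S'',P'')$ does not exist and $\xi$ witnesses the sweep-adjacency of $(S,P)$ and $(S',P')$.

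The closing negative assertion requires no argument of its own: it only states that the converse of~(b) fails, and this is demonstrated by the configuration of \cref{fig:non-sweep}, in which $S$ and $S'$ share an interior point $p$ yet a segment of $P$ or $P'$ always separates them at nearby abscissae, so no witness exists. In short, I expect no real obstacle; the only points demanding care are the bookkeeping that $\yVal{\cdot}{\cdot}$ literally returns a point of the embedding (needed for~(a)) and the logical observation that equal $y$-values void the strict betweenness clause (needed for~(b)).
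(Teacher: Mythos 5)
Your proof is correct and matches the paper's (implicit) reasoning: the paper states this as an unproved observation precisely because, as you show, (a) follows from the fact that $\yVal{S}{\xi}$ is by definition the $y$-coordinate of a point on an edge of $S$, and (b) follows because $\yVal{S}{\xi}=\yVal{S'}{\xi}$ makes the strict-betweenness clause in \cref{def:adjacent} vacuously unsatisfiable, so $\xi$ itself is a witness. The only care point you correctly flag — that the definition of sweep-adjacency already requires $P\ne P'$, which the hypothesis supplies — is handled, and deferring the negative claim to \cref{fig:non-sweep} is exactly what the paper does.
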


\begin{lemma}\label{lem:sweep-adja-pair}
  Let $\P$ be a set of polygons and $P,P'\in\P$. Then, there is a
  sweep-adjacent pair $(S,P),(S',P')\in \SP{\P}$ if and only if $\Ih{P}\cap
  \Ih{P'}\ne\emptyset$ and $P\neq P'$.
\end{lemma}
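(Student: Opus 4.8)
The plan is to prove the two directions separately, noting first that the clause $P\ne P'$ is built into \Cref{def:adjacent}, so on the forward direction it requires no argument and on the backward direction it is simply a hypothesis. For the forward direction, suppose $(S,P)$ and $(S',P')$ are sweep-adjacent with witness $\xi$. By definition $\xi\in\Io{S}\cap\Io{S'}$, and since $S$ is a segment of $P$ and $S'$ of $P'$ we have $\Io{S}\subseteq\Ih{S}\subseteq\Ih{P}$ and likewise $\Io{S'}\subseteq\Ih{S'}\subseteq\Ih{P'}$ (because $V(S)\subseteq V(P)$ forces $\min(S)_x\ge\min(P)_x$ and $\max(S)_x\le\max(P)_x$). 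Hence $\xi\in\Ih{P}\cap\Ih{P'}$, so this direction is immediate.

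For the backward direction I would first observe that $\Ih{P}\cap\Ih{P'}\ne\emptyset$ is equivalent to $\Io{P}\cap\Io{P'}\ne\emptyset$, since for half-open and open intervals sharing the same endpoints the intersection is nonempty under the identical condition $\max(\min(P)_x,\min(P')_x)<\min(\max(P)_x,\max(P')_x)$. Then I would choose a sweep position $\xi_0$ inside the nonempty open interval $\Io{P}\cap\Io{P'}$ that avoids the finitely many $x$-coordinates of vertices of $P$ and $P'$. For such $\xi_0$ the vertical line $x=\xi_0$ meets no vertex, hence no vertical edge, so each of its intersections with $P$ (resp.\ $P'$) lies in the interior of a non-vertical edge, which by \Cref{cor:edge-uniqueS} belongs to a unique maximal outstretched segment $S$ with $\xi_0\in\Ih{S}$; moreover $\xi_0\in\Io{S}$ because $\xi_0$ is not a vertex $x$-coordinate. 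Next I would argue that the line genuinely crosses both polygons: traversing the cycle $P$, whose vertices straddle $\xi_0$ (there are vertices on both sides, as $\min(P)_x<\xi_0<\max(P)_x$ and no vertex sits at $\xi_0$), some edge $e$ satisfies $\min(e)_x<\xi_0<\max(e)_x$, producing a pair $(S,P)\in\SPx{\P}{\xi_0}$; the same holds for $P'$.

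Now set $A=\{(S,P)\in\SP{\P}\mid \xi_0\in\Ih{S}\}$ and $B=\{(S',P')\in\SP{\P}\mid \xi_0\in\Ih{S'}\}$, both nonempty by the previous step, and consider the $y$-values $\yVal{S}{\xi_0}$ over $A\cup B$. If some $(S,P)\in A$ and $(S',P')\in B$ satisfy $\yVal{S}{\xi_0}=\yVal{S'}{\xi_0}$, then since $\xi_0\in\Io{S}\cap\Io{S'}$, \Cref{obs:alwaysone} directly yields that $(S,P)$ and $(S',P')$ are sweep-adjacent, and we are done. Otherwise no $P$-value equals a $P'$-value; within a single polygon the values are already pairwise distinct by \Cref{lem:bSPp}~(2), since a coincidence at $\xi_0\in\Io{S}\cap\Io{S'}$ would produce a non-terminal common point of two distinct maximal outstretched segments of the same polygon. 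Thus all values over $A\cup B$ are pairwise distinct. Sorting them by $y$-coordinate gives a linearly ordered, two-coloured list (colours $P$ and $P'$) in which both colours occur, so two consecutive entries have different colours, say $(S,P)\in A$ and $(S',P')\in B$. Consecutiveness means precisely that no $(S'',P'')\in\SPx{\P}{\xi_0}$ with $P''\in\{P,P'\}$ has $\yVal{S''}{\xi_0}$ strictly between $\yVal{S}{\xi_0}$ and $\yVal{S'}{\xi_0}$, which is exactly the requirement of \Cref{def:adjacent} with witness $\xi_0\in\Io{S}\cap\Io{S'}$.

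The main obstacle is the backward direction, and within it the two geometric facts that need care: that the chosen sweep line genuinely crosses \emph{both} polygons (so that $A$ and $B$ are nonempty), and that the only possible coincidence of two crossing values is across the two polygons — which is precisely what lets \Cref{obs:alwaysone} absorb the degenerate case, while \Cref{lem:bSPp}~(2) guarantees that the surviving values are distinct and that the consecutive-pair argument goes through.
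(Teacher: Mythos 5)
Your proof is correct and follows essentially the same route as the paper's: both directions rest on the observation that a witness $\xi$ must lie in $\Ih{P}\cap\Ih{P'}$, and for the converse both proofs pick a generic $\xi$ in that intersection avoiding vertex $x$-coordinates, note that each polygon contributes at least one segment crossing the sweep line, and select a closest cross-polygon pair of $y$-values. The paper compresses your sorting/two-colouring step (and the tie case you route through \Cref{obs:alwaysone}) into a single choice of $S,S'$ minimizing $|\yVal{S}{\xi}-\yVal{S'}{\xi}|$, which is the same argument.
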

\begin{proof}
  Let $\P$ be a set of polygons, $P,P'\in\P$ with $P\ne P'$.  If
  $\Ih{P}\cap \Ih{P'}=\emptyset$, then there is no $\xi$ such that
  $(S,P),(S'P')\in \SPx{\P}{\xi}$, and thus, in particular, no
  sweep-adjacent pair of segments exists. Conversely, we assume $\Ih{P}\cap
  \Ih{P'}\ne\emptyset$. Since $\Ih{P}$ and $\Ih{P'}$ are semi-open to the
  right, so is their intersection, and thus there is an open interval
  $J\subseteq \Ih{P}\cap \Ih{P'}\ne\emptyset$. Moreover, $J'\coloneqq
  J\setminus\{v_x\mid v\in V(P)\cup V(P')\}\ne\emptyset$, and for every
  $\xi\in J'$, we have $\mc{S}\coloneqq\{S\mid (S,P)\in \SPx{\P}{\xi}\}$
  and $\mc{S'}\coloneqq\{S'\mid (S',P')\in \SPx{\P}{\xi}\}$ are non-empty.
  Now, choose $S\in \mc{S}$ and $S'\in \mc{S'}$ such that
  $|\yVal{S}{\xi}-\yVal{S'}{\xi}|$ is minimal.  It follows immediately that
  $(S,P)$ and $(S',P')$ are sweep-adjacent.
\end{proof}
\begin{lemma}\label{lem:sw+delta}
  Let $\P$ be a set of polygons, and let $(S,P),(S',P')\in \SP{\P}$ be
  sweep-adjacent witnessed by $\xi$.  Then, there exists a sufficiently
  small $\delta>0$ such that sweep-adjacency is also witnessed by 
  $\xi+\delta$ and $\xi-\delta$.
  In particular, there is always a witness $\xi'\in
  \Io{S}\cap\Io{S'}$ with $\xi \ne v_x$ for all $v \in V(P)\cup V(P')$.
\end{lemma}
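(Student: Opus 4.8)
The plan is to treat the witness condition of \cref{def:adjacent} as an \emph{open} condition in $\xi$: all the relevant height functions $\eta\mapsto\yVal{S''}{\eta}$ are continuous, there are only finitely many maximal outstretched segments of $P$ and of $P'$, and by \cref{lem:mos-le-1} no two such segments ever cross. Consequently the only mechanism by which a small perturbation of $\xi$ could create a new segment strictly between $S$ and $S'$ is a segment that \emph{touches} $S$ or $S'$ exactly at $x=\xi$; ruling this out is where the real work lies. Throughout I fix the global vertical order: by \cref{lem:mos-le-1} I may assume w.l.o.g.\ that $\yVal{S}{\eta}\le\yVal{S'}{\eta}$ for all $\eta\in\Ic{S}\cap\Ic{S'}$.

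First I would choose $\delta$. Let $X\coloneqq\{v_x\mid v\in V(P)\cup V(P')\}$, which is finite. Since $\Io{S}\cap\Io{S'}$ is an open interval containing $\xi$, I can pick $\delta>0$ so small that $[\xi-\delta,\xi+\delta]\subseteq\Io{S}\cap\Io{S'}$, that $\xi$ is the only possible element of $X$ in $(\xi-\delta,\xi+\delta)$, and that $\xi\pm\delta\notin X$. Because no vertex $x$-coordinate lies in $(\xi,\xi+\delta)$, for every maximal outstretched segment $S''$ of $P$ or $P'$ the interval $(\xi,\xi+\delta)$ is either contained in $\Io{S''}$ or disjoint from $\Ih{S''}$, and likewise on $(\xi-\delta,\xi)$; so the family of candidate blockers present is constant on each side and each height function is continuous there. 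Then $\xi+\delta\in\Io{S}\cap\Io{S'}$ is immediate, so it remains to verify the no-blocker part.

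Suppose, for contradiction, that some $(S'',P'')\in\SPx{\P}{\xi+\delta}$ with $P''\in\{P,P'\}$ satisfies $\yVal{S}{\xi+\delta}<\yVal{S''}{\xi+\delta}<\yVal{S'}{\xi+\delta}$. From $\xi+\delta<\max(S'')_x$ and the absence of elements of $X$ in $(\xi,\xi+\delta)$ I get $\min(S'')_x\le\xi$, hence $\xi\in\Ih{S''}$ and $(S'',P'')\in\SPx{\P}{\xi}$. Applying \cref{lem:mos-le-1} to $S''$ against $S$ and against $S'$ forbids crossings, so the strict inequalities at $\xi+\delta$ propagate to the weak inequalities $\yVal{S}{\xi}\le\yVal{S''}{\xi}\le\yVal{S'}{\xi}$. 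Since $\xi$ is a witness, $S''$ is \emph{not} strictly between $S$ and $S'$ at $\xi$, so equality holds on one side; I treat $\yVal{S''}{\xi}=\yVal{S}{\xi}$, the case $\yVal{S''}{\xi}=\yVal{S'}{\xi}$ being symmetric. Then $q\coloneqq(\xi,\yVal{S}{\xi})\in S\cap S''$ with $q$ interior to $S$ (as $\xi\in\Io{S}$). If $P''=P$, then $S$ and $S''$ are distinct maximal outstretched segments of the same polygon meeting at an interior point of $S$, contradicting \cref{lem:bSPp}(2); hence $P''=P'$.

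This last case is what I expect to be the main obstacle, since $q$ is now a contact point of the two boundaries $P$ and $P'$. When $\xi\notin X$ it is benign: $q$ is interior to an edge of $S$ and to an edge of $S''$, and by planarity of the ambient embedding the interior of an edge of $P$ contains no point of a distinct edge of $P'$, so these two edges must coincide; then $\yVal{S''}{\cdot}=\yVal{S}{\cdot}$ on a whole interval about $\xi$, contradicting the strict inequality at $\xi+\delta$. The genuinely delicate situation is $\xi\in X$, i.e.\ $q$ is a \emph{shared vertex} of $P$ and $P'$; there I would analyse the local cyclic arrangement of the (at most four) edges of $P$ and $P'$ incident to $q$ and combine \cref{prop:ibp}, \cref{cor:ibp-} and \cref{lem:s-o-r} to show that a segment of $P'$ peeling upward from $S$ into the open strip between $S$ and $S'$ would produce points of $P'$ in both $\Int(P)$ and $\Ext(P)$, contradicting overlap-freeness. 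Once every candidate blocker is excluded, $\xi+\delta$ and, by the mirror-image argument, $\xi-\delta$ are witnesses; and since $X$ is finite, shrinking $\delta$ so that $\xi\pm\delta\notin X$ immediately yields the concluding ``in particular'' statement that a witness avoiding all vertex $x$-coordinates exists.
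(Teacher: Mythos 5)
Your overall strategy --- treat the witness condition as open in $\xi$, use \cref{lem:mos-le-1} to propagate the vertical order, and reduce everything to the tangency case in which a third segment meets $S$ or $S'$ exactly at $x=\xi$ --- is more explicit than the paper's own argument, which simply fixes a neighbourhood $U$ of $\xi$ with $U\subseteq\Io{S}\cap\Io{S'}$ and $\SPx{\P}{\xi}=\SPx{\P}{\xi'}$ for all $\xi'\in U$, and then asserts that the absence of a strict blocker at $\xi$ together with \cref{lem:mos-le-1} yields its absence at $\xi\pm\delta$. Your treatment of $P''=P$ (via \cref{lem:bSPp}~(2)) and of $P''=P'$ with $\xi\notin X$ (coinciding edges by planarity) is fine. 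The gap is the final case, which you explicitly defer (``I would analyse the local cyclic arrangement \dots''): that is a plan, not a proof, and --- more seriously --- the contradiction you intend to derive there is not available.

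Concretely, let $P$ be the square with corners $(0,0)$, $(10,0)$, $(10,-10)$, $(0,-10)$ and $P'$ the triangle with corners $(4,2)$, $(6,2)$, $(5,0)$; these polygons are overlap-free and touch only at $(5,0)$. Let $S$ be the top edge of $P$, let $S'$ be the top edge of $P'$, and let $S''$ be the other maximal outstretched segment of $P'$, namely the path through $(4,2)$, $(5,0)$, $(6,2)$. At $\xi=5$ we have $\yVal{S}{5}=\yVal{S''}{5}=0<2=\yVal{S'}{5}$, so $S''$ is not \emph{strictly} between $S$ and $S'$ and $\xi=5$ witnesses the sweep-adjacency of $(S,P)$ and $(S',P')$; yet for every small $\delta>0$ one has $0<\yVal{S''}{5\pm\delta}<2$, so $S''$ blocks at $5\pm\delta$. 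Here a segment of $P'$ peels upward from the shared vertex $(5,0)$ into the strip between $S$ and $S'$ while $P'$ remains entirely in $\Ext(P)\cup P$, so no pair of points of $P'$ lying in $\Int(P)$ and in $\Ext(P)$ ever appears, and \cref{lem:s-o-r} yields no contradiction. Your last case therefore cannot be closed by the argument you sketch. Note that the paper's own proof passes over exactly this tangency configuration when it invokes \cref{lem:mos-le-1}, whose inequalities are only weak; the difficulty you isolated is genuine and would have to be addressed at the level of \cref{def:adjacent} (e.g.\ by breaking ties as in the order $\lessdot_{\xi}$) rather than inside this proof.
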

\begin{proof}
  Let $\P$ be a set of polygons, and let $(S,P),(S',P')\in \SP{\P}$ be a
  sweep-adjacent pair witnessed by $\xi$ and we assume
  w.l.o.g.\ $\yVal{S}{\xi}\le\yVal{S'}{\xi}$.  Hence, \cref{def:adjacent}
  implies $\xi\in \Io{S}\cap\Io{S'}$. This, together with
  $\Io{S}\cap\Io{S'}$ being an open interval, implies that there is a
  sufficiently small open neighborhood $U$ of $\xi$ such that $U\subseteq
  \Io{S}\cap\Io{S'}$ and $\SPx{\P}{\xi}=\SPx{\P}{\xi'}$ for all $\xi'\in
  U$.  In particular, there is a $\delta>0$ such that (a) $\xi\pm\delta\in
  U$ and b $\xi\pm\delta\ne v_x$ for all $v\in V(P)\cup V(P')$.  Moreover,
  by \cref{def:adjacent}, there is no $(S'',P'')\in\SP{\P}$ with $P''\in
  \{P,P'\}$ and $\yVal{S}{\xi}<\yVal{S''}{\xi}<\yVal{S'}{\xi}$.  This,
  together with $\SPx{\P}{\xi}=\SPx{\P}{\xi\pm\delta}$ and
  \cref{lem:mos-le-1}, implies that there is no $(S'',P'')\in\SP{\P}$ with
  $P''\in \{P,P'\}$ and $\yVal{S}{\xi'}<\yVal{S''}{\xi'}<\yVal{S'}{\xi'}$.
  Hence, $\xi\pm\delta$ is also a witness for the sweep-adjacency of
  $(S,P)$ and $(S',P')$.
\end{proof}

\begin{figure}[t]
  \begin{minipage}{0.65\textwidth}
    \caption{Intersection of maximal outstretched segments does not imply
      sweep-adjacency as shown by the maximal outstretched segments $S_1$
      of $P_1$ and $S_2$ of $P_2$. We observe $\Int(P_1)\cap
      \Int(P_2)=\emptyset$ and $\Ih{P_1}=\Ih{P_2}$, and $V(S_1)\cap
      V(S_2)\ne \emptyset$ and $E(S_1)\cap E(S_2)\ne\emptyset$.
      \newline It is easy to verify that for every $\xi \in
      \Ih{S_1}\cap\Ih{S_2}$, we have either
      $\yVal{S_1}{\xi}>\yVal{S'_1}{\xi}>\yVal{S_2}{\xi}$ (if $p_x\le \xi$),
      or $\yVal{S_1}{\xi}>\yVal{S'_2}{\xi}>\yVal{S_2}{\xi}$ (otherwise).
      Hence, for all $\xi\in \Ih{S_1}\cap\Ih{S_2}$, the condition of
      \cref{def:adjacent} is not satisfied; and thus, $(S_1,P_1)$ and
      $(S_2,P_2)$ cannot be sweep-adjacent.  Hence, neither $V(S_1)\cap
      V(S_2)\ne \emptyset$ nor $E(S_1)\cap E(S_2)\ne\emptyset$ is not
      sufficient to imply sweep-adjacency.}
    \label{fig:non-sweep}
  \end{minipage} \hfil \begin{minipage}{0.25\textwidth}
    \includegraphics[width=1\textwidth]{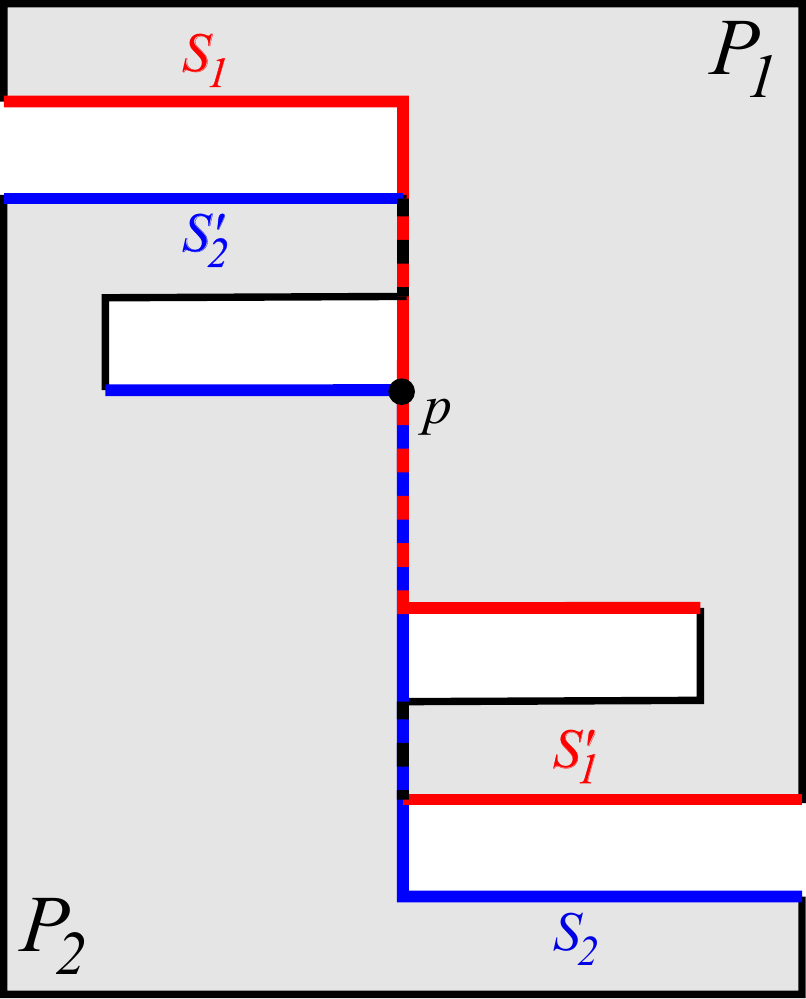}
  \end{minipage}
\end{figure}

Now, we proceed by showing how sweep-adjacent maximally outstretched segments
can be used to determine the relative location of their polygons. First, we
characterize when two polygons have a (non-)empty intersection of their 
interiors.
\begin{lemma} \label{lem:adj-overlap}
  Let $\P$ be a set of polygons, and $(S,P),(S',P')\in\SP{\P}$ 
  be sweep-adjacent. Then, $\Int(P)\cap\Int(P')\ne\emptyset$ 
  if and only if $\varpi(S,P)=\varpi(S',P')$.
\end{lemma}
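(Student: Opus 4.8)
The plan is to fix a convenient witness $\xi$ of the sweep-adjacency, to read off from the parities $\varpi(S,P)$ and $\varpi(S',P')$ on which side of $S$ the interior of $P$ lies (and likewise for $S'$ and $P'$), and then to convert this local information into the global statement $\Int(P)\cap\Int(P')\neq\emptyset$ using overlap-freeness. By \Cref{lem:sw+delta} I may assume $\xi\neq v_x$ for every $v\in V(P)\cup V(P')$, and by \Cref{lem:mos-le-1} I may assume w.l.o.g.\ that $\yVal{S}{\xi}\le\yVal{S'}{\xi}$, i.e.\ $S$ lies (weakly) below $S'$. Since $\P$ is overlap-free and $P\neq P'$, exactly one of the following holds: $\Int(P)\cap\Int(P')=\emptyset$, or $\Int(P)\subsetneq\Int(P')$, or $\Int(P')\subsetneq\Int(P)$; I will decide which one occurs from the two parities.

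First consider the generic situation $\yVal{S}{\xi}<\yVal{S'}{\xi}$. Put $p_S\coloneqq(\xi,\yVal{S}{\xi})\in S\subseteq P$ and $p_{S'}\coloneqq(\xi,\yVal{S'}{\xi})\in S'\subseteq P'$. By \cref{def:adjacent} no maximal outstretched segment of $P$ lies strictly between $S$ and $S'$ at $\xi$, so $S$ is the segment of $P$ immediately below $p_{S'}$; hence \Cref{cor:ibp-} gives $p_{S'}\in\Int(P)$ iff $\varpi(S,P)=0$, and $p_{S'}\in\Ext(P)$ otherwise. Symmetrically, $S'$ is the segment of $P'$ immediately above $p_S$, so \Cref{prop:ibp} gives $p_S\in\Int(P')$ iff $\varpi(S',P')=1$, and $p_S\in\Ext(P')$ otherwise. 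Feeding these memberships into \Cref{lem:s-o-r}: a point of $P'$ in $\Int(P)$ forces $\Int(P')\subsetneq\Int(P)$; a point of $P$ in $\Int(P')$ forces $\Int(P)\subsetneq\Int(P')$; and a point of one polygon in the exterior of the other forbids the corresponding inclusion. Running through the four parity combinations then yields: if $\varpi(S,P)=\varpi(S',P')=0$ then $\Int(P')\subsetneq\Int(P)$; if $\varpi(S,P)=\varpi(S',P')=1$ then $\Int(P)\subsetneq\Int(P')$; if $\varpi(S,P)=1,\varpi(S',P')=0$ then $p_S\in\Ext(P')$ and $p_{S'}\in\Ext(P)$, so neither inclusion holds and the interiors are disjoint; and the remaining combination $\varpi(S,P)=0,\varpi(S',P')=1$ would force both strict inclusions at once and is thus impossible. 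In every admissible case the interiors meet exactly when $\varpi(S,P)=\varpi(S',P')$.

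For the boundary situation $\yVal{S}{\xi}=\yVal{S'}{\xi}$ (write $\bar y$ for this common value, where $S$ and $S'$ touch at $\xi$) I argue directly with the trichotomy. For small $\epsilon>0$ the points $u\coloneqq(\xi,\bar y-\epsilon)$ and $v\coloneqq(\xi,\bar y+\epsilon)$ avoid $P\cup P'$; since $\xi$ misses all vertex abscissae and, by \Cref{lem:bSPp}(2), distinct maximal outstretched segments of a fixed polygon meet only in terminal vertices, $S$ (resp.\ $S'$) is the unique segment of $P$ (resp.\ $P'$) through $(\xi,\bar y)$. Thus \Cref{prop:ibp} and \Cref{cor:ibp-} give $u\in\Int(P)$ iff $\varpi(S,P)=1$ and $v\in\Int(P)$ iff $\varpi(S,P)=0$, and analogously for $P'$ with $\varpi(S',P')$. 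If the parities agree, then $u$ (when both equal $1$) or $v$ (when both equal $0$) lies in $\Int(P)\cap\Int(P')$, so the interiors meet; if they differ, $u$ and $v$ each lie in the interior of one polygon and the exterior of the other, so neither inclusion of the trichotomy can hold and $\Int(P)\cap\Int(P')=\emptyset$. Hence again the interiors meet iff $\varpi(S,P)=\varpi(S',P')$.

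The step I expect to be delicate is ensuring, in the generic situation, that $p_S\notin P'$ and $p_{S'}\notin P$, so that the boundary points genuinely lie in the interior or exterior of the other polygon rather than on its boundary. This fails precisely when a maximal outstretched segment of $P'$ meets the level of $S$ at $\xi$ (or symmetrically), which by \Cref{lem:mos-le-1} means the two segments are tangent or coincide near $\xi$. Isolated tangencies are eliminated by perturbing the witness via \Cref{lem:sw+delta}, since only finitely many segments are involved; the genuinely shared-boundary case, in which a segment of $P'$ coincides with $S$ on an interval, is reduced to the touching analysis above by testing points immediately above and below the shared level and again invoking the trichotomy.
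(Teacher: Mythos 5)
Your overall strategy is the same as the paper's: fix a witness $\xi$, perturbed via \Cref{lem:sw+delta} so that $\xi\ne v_x$ for all relevant vertices, read off interior membership from the parities via \Cref{prop:ibp} and \Cref{cor:ibp-}, and conclude with overlap-freeness. The difference lies in the choice of test points, and it matters. The paper uses $p=(\xi,\yVal{S}{\xi}-\delta)$ and $p'=(\xi,\yVal{S'}{\xi}+\delta)$, which for small $\delta$ lie off $P\cup P'$, and tests each of them against \emph{both} polygons; this treats $\yVal{S}{\xi}<\yVal{S'}{\xi}$ and $\yVal{S}{\xi}=\yVal{S'}{\xi}$ uniformly and concludes directly (if the parities agree, one of $p,p'$ lies in $\Int(P)\cap\Int(P')$; if they disagree, $p$ and $p'$ witness points in each set difference, so overlap-freeness forces disjoint interiors). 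You instead take the boundary points $p_S\in P$ and $p_{S'}\in P'$ and route the conclusion through \Cref{lem:s-o-r} and the trichotomy, which is fine in principle but creates exactly the complication you flag at the end: \Cref{prop:ibp} and \Cref{cor:ibp-} require the test point to avoid the other polygon, and sweep-adjacency only forbids segments of $P'$ \emph{strictly} between the two levels, so a third segment $S''$ of $P'$ may pass through $p_S$. Your repair for the interval-coincidence sub-case ("reduce to the touching analysis above") does not go through as written: the segment of $P'$ coinciding with $S$ near $\xi$ need not be $S'$, so the above/below test at that level produces the parity $\varpi(S'',P')$ of a third segment rather than $\varpi(S',P')$, and relating the two would need a further argument about the stacking of $P'$'s segments at $\xi$. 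The paper's displaced test points make your case split and this repair unnecessary (its own verification of Conditions (1) and (2) for $p$ against $P'$ is admittedly terse at the very same tangency configuration, but it at least never places a test point on either boundary). If you keep your version, the cleanest fix is simply to move both test points off the levels of $S$ and $S'$ as the paper does.
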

\begin{proof}
  Let $\P$ be a set of polygons, and let $(S,P),(S',P')\in \SP{\P}$ be
  sweep-adjacent.  Then, by \cref{lem:sw+delta}, the sweep-adjacency can be
  witnessed by a $\xi\in \Io{S}\cap\Io{S'}$ with $\xi\ne v_x$ for all $v\in
  V(S)\cup V(S')$. Moreover, we may assume w.l.o.g.\ that $\yVal{S}{\xi}\le
  \yVal{S'}{\xi}$.  In the following, we consider the points $p\coloneqq
  (\xi,\yVal{S}{\xi}-\delta)$ and $p'\coloneqq (\xi,\yVal{S'}{\xi}+\delta)$
  for some sufficiently small $\delta>0$.  By the choice of $\xi$, we have
  $\xi\ne v_x$ for all $v\in V(S)\cup V(S')$. This implies that $p$ and
  $p'$ cannot be on a vertical edge of $P$ or $P'$. Hence, since $\delta>0$
  is sufficiently small, we have $p,p'\notin P\cup P'$.  Since $(S,P)$ and
  $(S',P')$ are sweep-adjacent, it is easy to verify that $p$ and $p'$
  always satisfies Conditions~(1) and~(2) of \Cref{prop:ibp} and
  \Cref{cor:ibp-}, respectively.  Hence, $p\in\Int(P)$ iff $\varpi(S,P)=1$
  and $p\in\Int(P')$ iff $\varpi(S',P')=1$, resp., $p'\in\Int(P)$ iff
  $\varpi(S,P)=0$ and $p'\in\Int(P')$ iff $\varpi(S',P')=0$.  Thus, in
  either case, $\varpi(S,P)=\varpi(S',P')$ implies
  $\Int(P)\cap\Int(P')\ne\emptyset$.

  Now, suppose $\varpi(S,P)\ne\varpi(S',P')$.  Then, we distinguish two
  cases: (i) $\varpi(S,P)=0$ and $\varpi(S',P')=1$, and (ii)
  $\varpi(S,P)=1$ and $\varpi(S',P')=0$.  In Case (i), we have
  $p'\in\Int(P)$, $p\notin\Int(P)$, $p\in\Int(P')$, and $p\notin\Int(P')$,
  and thus $p\in \Int(P')\setminus\Int(P)$ and $p'\in
  \Int(P)\setminus\Int(P')$.  In Case (ii), we have $p\in\Int(P)$ and
  $p'\notin\Int(P)$, $p'\in\Int(P')$ and $p\notin\Int(P')$, i.e.,
  $p\in\Int(P)\setminus\Int(P')$ and $p'\in\Int(P')\setminus\Int(P)$. In
  both cases, the fact that $\Int(P)$ and $\Int(P')$ are overlap-free
  implies that $\Int(P)\cap\Int(P')=\emptyset$.  Therefore,
  $\varpi(S,P)\ne\varpi(S',P')$ implies $\Int(P)\cap\Int(P')=\emptyset$.
\end{proof}
The combination of \cref{obs:alwaysone} and \Cref{lem:adj-overlap} shows
that the intersection of two polygons is determined by the parity of a pair
sweep-adjacent maximally outstretched segments or their non-existence.
\Cref{lem:adj-overlap,lem:area} together imply
\begin{corollary} \label{cor:area}
  Let $\P$ be a set of polygons, and suppose that $(S,P),(S',P')\in\SP{\P}$
  are sweep-adjacent.  Then, $\Int(P)\subsetneq\Int(P')$ if and only if
  $\varpi(S,P)=\varpi(S',P')$ and $\area{P}<\area{P'}$.
\end{corollary}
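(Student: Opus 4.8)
The plan is to derive the statement as an immediate consequence of \Cref{lem:adj-overlap} and \Cref{lem:area}, treating the two directions of the biconditional separately. Since $(S,P)$ and $(S',P')$ are sweep-adjacent, \cref{def:adjacent} guarantees $P\ne P'$, and by the standing assumption all polygons in $\P$ are overlap-free; these two facts are precisely the hypotheses needed to invoke both lemmas.

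For the forward implication, I would assume $\Int(P)\subsetneq\Int(P')$. Because the interior of a polygon is a nonempty bounded region, this strict inclusion forces $\Int(P)\cap\Int(P')=\Int(P)\ne\emptyset$. Applying \Cref{lem:adj-overlap} to the sweep-adjacent pair then yields $\varpi(S,P)=\varpi(S',P')$. Independently, \Cref{lem:area}---whose hypothesis $\Int(P)\cap\Int(P')\ne\emptyset$ we have just verified---gives $\area{P}<\area{P'}$ from the inclusion $\Int(P)\subsetneq\Int(P')$. Together these establish the right-hand side.

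For the converse, I would assume $\varpi(S,P)=\varpi(S',P')$ and $\area{P}<\area{P'}$. The parity equality, via \Cref{lem:adj-overlap}, gives $\Int(P)\cap\Int(P')\ne\emptyset$, so the hypothesis of \Cref{lem:area} is met. That lemma then asserts the equivalence of $\Int(P)\subsetneq\Int(P')$ with $\area{P}<\area{P'}$, and since the area inequality holds by assumption, we conclude $\Int(P)\subsetneq\Int(P')$.

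The argument is essentially a bookkeeping combination of the two prior lemmas, so I do not anticipate a substantial obstacle. The only point requiring minor care is ensuring that the nonemptiness hypothesis of \Cref{lem:area} is available in each direction---supplied by the strict inclusion in the forward direction and by the parity equality through \Cref{lem:adj-overlap} in the converse. One could additionally observe that overlap-freeness together with $\area{P}<\area{P'}$ silently rules out the remaining alternatives $P=P'$ and $\Int(P')\subsetneq\Int(P)$ of the overlap-free trichotomy, but this is already encapsulated in the statement of \Cref{lem:area} and so need not be spelled out.
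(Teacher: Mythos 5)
Your proof is correct and matches the paper exactly: the paper offers no written proof, stating only that \Cref{lem:adj-overlap} and \Cref{lem:area} together imply the corollary, and your argument is precisely the intended combination of those two lemmas (with the nonemptiness hypothesis of \Cref{lem:area} supplied by the strict inclusion in one direction and by \Cref{lem:adj-overlap} in the other).
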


We formalize the idea that segments are ``below'' of other segments in the
following manner:
\begin{definition}
  Let $\P$ be a set of polygons, $(S,P), (S',P')\in\SP{\P}$ and $\Ih{S}\cap
  \Ih{S'}\ne\emptyset$. Note that $P=P'$ is possible.  Then, $(S,P)$ is
  \emph{below} $(S',P')$ if there is $\xi\in \Ih{S}\cap \Ih{S'}$ such that
  $\yVal{S}{\xi}<\yVal{S'}{\xi}$.
\end{definition}

\begin{obs}\label{obs:below}
Note that if $\yVal{S}{\xi}\ne\yVal{S'}{\xi}$ for some $\xi\in \Ih{S}\cap
\Ih{S'}$, then we have by \cref{lem:mos-le-1} \emph{either} $(S,P)$ is
below $(S',P')$, or \emph{vice versa}. In this case, we have $S\neq S'$.
\end{obs}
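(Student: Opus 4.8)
The plan is to read off the dichotomy directly from the definition of ``below'' and then to invoke the non-crossing property of \Cref{lem:mos-le-1} to guarantee that the two alternatives are mutually exclusive; the claim $S\neq S'$ will fall out for free. The statement thus splits into three pieces: at least one of the two ``below'' relations holds, at most one holds, and $S\neq S'$.

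First I would observe that a single witness already yields one of the two alternatives. Fix $\xi\in \Ih{S}\cap \Ih{S'}$ with $\yVal{S}{\xi}\neq\yVal{S'}{\xi}$. Up to exchanging the roles of $(S,P)$ and $(S',P')$, I may assume $\yVal{S}{\xi}<\yVal{S'}{\xi}$. By the definition of ``below'', this $\xi$ immediately witnesses that $(S,P)$ is below $(S',P')$. Hence at least one of the two alternatives holds, which is the ``either \dots\ or'' part of the statement.

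The substantive step is to exclude that both alternatives hold simultaneously. Since the definition of ``below'' is existential, a priori one could have $\yVal{S}{\xi}<\yVal{S'}{\xi}$ at one witness while $\yVal{S'}{\xi'}<\yVal{S}{\xi'}$ at a different $\xi'$, i.e.\ the two $y$-profiles could cross. This is precisely what \Cref{lem:mos-le-1} forbids for maximal outstretched segments of overlap-free polygons. Concretely, since $\Ih{S}\cap \Ih{S'}\subseteq \Ic{S}\cap \Ic{S'}$, the lemma applies on the common domain and asserts that either $\yVal{S}{\xi''}\leq\yVal{S'}{\xi''}$ for all $\xi''\in \Ic{S}\cap \Ic{S'}$, or the reverse inequality holds throughout. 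The strict inequality $\yVal{S}{\xi}<\yVal{S'}{\xi}$ at our chosen witness rules out the second case, so the first must hold; in particular there is no $\xi'$ with $\yVal{S'}{\xi'}<\yVal{S}{\xi'}$, and therefore $(S',P')$ is not below $(S,P)$. Thus exactly one of the two alternatives holds.

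Finally, the assertion $S\neq S'$ is immediate: were $S=S'$, then $x\mapsto\yVal{S}{x}$ and $x\mapsto\yVal{S'}{x}$ would be the same function on $\Ic{S}=\Ic{S'}$, contradicting the hypothesis that they differ at $\xi$. I expect the only point requiring care to be the application of \Cref{lem:mos-le-1}: one must check that the witness lies in the domain $\Ic{S}\cap \Ic{S'}$ on which the lemma guarantees non-crossing, and that the strict inequality singles out the correct one of its two cases so as to conclude exclusivity rather than merely existence.
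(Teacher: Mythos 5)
Your proof is correct and follows exactly the route the paper intends: the observation is stated without proof precisely because it reduces to reading off the definition of ``below'' at the witness $\xi$ and then invoking the non-crossing dichotomy of \Cref{lem:mos-le-1} (noting $\Ih{S}\cap\Ih{S'}\subseteq\Ic{S}\cap\Ic{S'}$) to exclude the reverse relation, with $S\neq S'$ following since equal segments have identical $y$-profiles. Nothing further is needed.
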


\begin{proposition} \label{lem:below-parity}
  Let $\P$ be a set of polygons, let $(S,P), (S',P')\in\SP{\P}$ be
  sweep-adjacent as witnessed by some $\xi$ with $\xi\ne v_x$ for all $v\in
  V(S)\cup V(S')$, and $\yVal{S}{\xi}\ne\yVal{S'}{\xi}$.  Then, we have
  $\yVal{S}{\xi}<\yVal{S'}{\xi}$ if and only if one of the following
  statements hold:
  \begin{enumerate}[nolistsep,noitemsep]
  \item $\varpi(S,P)=1$ and $\varpi(S',P')=0$;
  \item $\varpi(S,P)=\varpi(S',P')=1$ and $\area{P}<\area{P'}$, or
  \item $\varpi(S,P)=\varpi(S',P')=0$ and $\area{P}>\area{P'}$.
\end{enumerate}
\end{proposition}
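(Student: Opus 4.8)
The plan is to reduce the statement to a single ``local'' computation at the witness $\xi$, comparing the location of a test point $q$ in the narrow horizontal gap between $S$ and $S'$ to the parities $\varpi(S,P)$ and $\varpi(S',P')$. First I would tidy up the witness: by \cref{lem:sw+delta} I may replace $\xi$ by a nearby witness satisfying $\xi\neq v_x$ for all $v\in V(P)\cup V(P')$, and by \cref{lem:mos-le-1} the sign of $\yVal{S}{\cdot}-\yVal{S'}{\cdot}$ is constant on $\Ic{S}\cap\Ic{S'}$, so the strict inequality $\yVal{S}{\xi}<\yVal{S'}{\xi}$ is unaffected by this shift. With such a $\xi$ the vertical line $x=\xi$ meets $P$ and $P'$ only in relative interiors of non-vertical edges. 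Assuming for the moment $\yVal{S}{\xi}<\yVal{S'}{\xi}$, I would then fix a point $q=(\xi,y_q)$ with $\yVal{S}{\xi}<y_q<\yVal{S'}{\xi}$; sweep-adjacency (\cref{def:adjacent}) guarantees that no maximal outstretched segment of $P$ or $P'$ attains a $y$-value in this open gap, so $q\notin P\cup P'$.

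The key step is the following membership claim: under $\yVal{S}{\xi}<\yVal{S'}{\xi}$ one has $q\in\Int(P)$ iff $\varpi(S,P)=0$, and $q\in\Int(P')$ iff $\varpi(S',P')=1$. To see this I would observe that $S$ is the unique maximal outstretched segment of $P$ lying immediately below $q$: indeed $\xi\in\Ih{S}$ and $\yVal{S}{\xi}<y_q$, while any $P$-segment strictly between $S$ and $q$ would lie strictly between $S$ and $S'$ at $\xi$, contradicting the witness condition of \cref{def:adjacent}. Hence Condition~(2) of \cref{cor:ibp-} holds for $S$, and \cref{cor:ibp-} yields $q\in\Int(P)$ iff $\varpi(S,P)=0$. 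Symmetrically, $S'$ is the unique $P'$-segment immediately above $q$, so Condition~(2) of \cref{prop:ibp} holds for $S'$ and \cref{prop:ibp} gives $q\in\Int(P')$ iff $\varpi(S',P')=1$. Inverting the $y$-axis gives the mirror claim for the opposite ordering $\yVal{S}{\xi}>\yVal{S'}{\xi}$, namely $q\in\Int(P)$ iff $\varpi(S,P)=1$ and $q\in\Int(P')$ iff $\varpi(S',P')=0$, with $q$ now chosen in the gap $\yVal{S'}{\xi}<y_q<\yVal{S}{\xi}$.

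With the membership claim in hand, I would finish by a short case distinction. Since $P\neq P'$, \cref{lem:adj-overlap} tells us the interiors are disjoint exactly when $\varpi(S,P)\neq\varpi(S',P')$, and when the parities agree \cref{cor:area} (equivalently \cref{lem:area}) tells us which interior contains the other according to the comparison of $\area{P}$ and $\area{P'}$; as $P\neq P'$ the two areas are distinct. This partitions the possibilities into six scenarios (the four parity patterns, with the two equal-parity patterns further split by the area comparison), and in each scenario exactly one of the two orderings $\yVal{S}{\xi}<\yVal{S'}{\xi}$ or $\yVal{S}{\xi}>\yVal{S'}{\xi}$ is compatible with the membership claim, the other forcing $q$ to lie in, or be excluded from, $\Int(P)\cap\Int(P')$ in contradiction with the nesting dictated by the parities and areas. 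Reading off the compatible ordering in each scenario shows that $\yVal{S}{\xi}<\yVal{S'}{\xi}$ holds precisely in the scenarios $\varpi(S,P)=1,\varpi(S',P')=0$, or $\varpi(S,P)=\varpi(S',P')=1$ with $\area{P}<\area{P'}$, or $\varpi(S,P)=\varpi(S',P')=0$ with $\area{P}>\area{P'}$, which are exactly Statements~(1)--(3). Alternatively, the converse direction follows from the forward one by the symmetry that swaps $(S,P)\leftrightarrow(S',P')$, which simultaneously interchanges the two orderings, the two parities, and the two areas.

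I expect the main obstacle to be the membership claim, and specifically the verification that $S$ and $S'$ are the immediately adjacent segments so that Condition~(2) of \cref{prop:ibp} and \cref{cor:ibp-} applies --- this is precisely the point where sweep-adjacency is essential --- together with the technical care needed to ensure $q\notin P\cup P'$, handling possible vertical edges on the line $x=\xi$ via the genericity of the witness furnished by \cref{lem:sw+delta}. The concluding case analysis is then purely mechanical bookkeeping.
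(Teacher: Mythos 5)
Your proposal is correct and follows essentially the same route as the paper's proof: place a test point in the vertical gap between $S$ and $S'$ at the witness $\xi$, decide its membership in $\Int(P)$ and $\Int(P')$ via \cref{prop:ibp} and \cref{cor:ibp-} (with sweep-adjacency supplying their Condition~(2)), and finish with a parity/area case analysis resting on overlap-freeness. The only cosmetic difference is that you work with a single midpoint together with the already-established \cref{lem:adj-overlap} and \cref{lem:area}, whereas the paper additionally introduces the points just below $S$ and just above $S'$ and derives the resulting overlaps directly.
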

\begin{proof}
  Let $\P$ be a set of polygons, let $(S,P), (S',P')\in\SP{\P}$ be
  sweep-adjacent as witnessed by some $\xi$, and
  $\yVal{S}{\xi}<\yVal{S'}{\xi}$.  Then, consider the points
  $p\coloneqq(\xi,\yVal{S}{\xi}-\delta)$,
  $p'\coloneqq(\xi,\yVal{S'}{\xi}+\delta)$ for sufficiently small
  $\delta>0$, and
  $p''\coloneqq(\xi,\frac{1}{2}(\yVal{S}{\xi}+\yVal{S'}{\xi}))$. By
  assumption, we have $\xi\ne v_x$ for all $v\in V(S)\cup V(S')$.  This
  implies that $p$ and $p'$ cannot be on a vertical edge of $P$ or $P'$.
  Hence, since $\delta>0$ is sufficiently small, we have $p,p'\notin P\cup
  P'$.  Since $(S,P)$ and $(S',P')$ are sweep-adjacent witnessed by $\xi$,
  we can immediately apply \Cref{prop:ibp} and \Cref{cor:ibp-} to determine
  whether or not $p$, $p'$, and $p''$ are located within $\Int(P)$ and
  $\Int(P')$.  Then, assume for contradiction that $\varpi(S,P)=0$ and
  $\varpi(S',P')=1$.  In this case, it holds that $p'\in\Int(P)$,
  $p\notin\Int(P)$ and $p''\in\Int(P)$, as well as $p\in\Int(P')$,
  $p'\notin\Int(P')$ and $p''\in\Int(P')$.  Thus, $p''\in
  \Int(P)\cap\Int(P')$, $p\in \Int(P')\setminus\Int(P)$ and $p'\in
  \Int(P)\setminus\Int(P')$ implies a contradiction, since $P$ and $P'$ are
  overlap-free.  Hence, if $\pari{S}{P}\ne \pari{S'}{P'}$, then
  $\yVal{S}{\xi}<\yVal{S'}{\xi}$ iff $\varpi(S,P)=1$ and $\varpi(S',P')=0$.
  Moreover, if $\varpi(S,P)=\varpi(S',P')=1$, then $p\in\Int(P)$,
  $p,p''\in\Int(P')$ and $p''\notin\Int(P)$, which is true iff
  $\Int(P)\subsetneq\Int(P')$, and thus iff
  $\Int(P)\cap\Int(P')\ne\emptyset$ and $\area{P}<\area{P'}$.  Hence, if
  $\varpi(S,P)=\varpi(S',P')=1$, then $\yVal{S}{\xi}<\yVal{S'}{\xi}$ iff
  $\area{P}<\area{P'}$.
  Finally, if $\varpi(S,P)=\varpi(S',P')=0$, then $p'\in\Int(P')$,
  $p',p''\in\Int(P)$, $p''\notin\Int(P')$, which is true iff
  $\Int(P')\subsetneq\Int(P)$, and thus iff
  $\Int(P)\cap\Int(P')\ne\emptyset$ and $\area{P'}<\area{P}$.  Hence, if
  $\varpi(S,P)=\varpi(S',P')=0$, then $\yVal{S}{\xi}<\yVal{S'}{\xi}$ iff
  $\area{P}>\area{P'}$.
\end{proof}

\begin{corollary} \label{cor:below-parity2}
  Let $\P$ be a set of polygons, and let $(S,P),(S',P')\in \SP{\P}$ be
  sweep-adjacent.  If $(S,P)$ is below $(S',P')$, then $(S,P)$ and
  $(S',P')$ satisfy one of the Statements~(1) to~(3) in
  \Cref{lem:below-parity}.
\end{corollary}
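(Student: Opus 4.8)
The plan is to reduce the claim to \Cref{lem:below-parity}: if I can exhibit a single witness $\xi$ of the sweep-adjacency of $(S,P)$ and $(S',P')$ at which $\xi\neq v_x$ for all $v\in V(S)\cup V(S')$ and $\yVal{S}{\xi}<\yVal{S'}{\xi}$, then \Cref{lem:below-parity} immediately yields one of the Statements~(1)--(3). I would first unpack the hypothesis that $(S,P)$ is below $(S',P')$: there is a $\xi_0\in\Ih{S}\cap\Ih{S'}$ with $\yVal{S}{\xi_0}<\yVal{S'}{\xi_0}$, and after replacing $\xi_0$ by a nearby point if necessary I may assume $\xi_0\in\Io{S}\cap\Io{S'}$. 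Because this inequality is strict, the second alternative of \Cref{lem:mos-le-1} is impossible, so its first alternative gives the global bound $\yVal{S}{\xi}\le\yVal{S'}{\xi}$ on all of $\Ic{S}\cap\Ic{S'}$. Hence the continuous function $h(\xi)\coloneqq\yVal{S'}{\xi}-\yVal{S}{\xi}$ is non-negative on the (single) open interval $\Io{S}\cap\Io{S'}$ and satisfies $h(\xi_0)>0$.

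It then remains only to find a witness $\xi$ with $h(\xi)>0$, since such a witness automatically has $\yVal{S}{\xi}<\yVal{S'}{\xi}$, and \Cref{lem:sw+delta} lets me arrange $\xi\neq v_x$ at the same time. I expect this to be the main obstacle, because a witness need \emph{not} satisfy $h(\xi)>0$: wherever the two segments coincide, i.e., $h(\xi)=0$, the open $y$-interval strictly between them is empty, so the ``no intervening segment'' condition of \Cref{def:adjacent} holds vacuously and such touching points inside $\Io{S}\cap\Io{S'}$ are themselves witnesses (this is precisely the phenomenon recorded in \Cref{obs:alwaysone}). So a naive choice of witness might land on a point where $S$ and $S'$ merely touch, where \Cref{lem:below-parity} cannot be applied. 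I would circumvent this by a connectedness argument on the interval $\Io{S}\cap\Io{S'}$.

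Concretely, if $h>0$ on all of $\Io{S}\cap\Io{S'}$, then every witness (one exists by sweep-adjacency) already has $h>0$, and a perturbation via \Cref{lem:sw+delta} finishes the argument. Otherwise $h$ vanishes somewhere while $h(\xi_0)>0$, so the connected component $(\alpha,\beta)$ of $\{h>0\}$ containing $\xi_0$ has at least one endpoint, say $\xi_b$, lying in the interior of $\Io{S}\cap\Io{S'}$ with $h(\xi_b)=0$. Being a touching point in $\Io{S}\cap\Io{S'}$, $\xi_b$ is a witness, so by \Cref{lem:sw+delta} a point $\xi_b+\delta$ slightly inside $(\alpha,\beta)$ is again a witness, now with $h(\xi_b+\delta)>0$ and $\xi_b+\delta\neq v_x$ for all $v\in V(S)\cup V(S')$. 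Applying \Cref{lem:below-parity} to this witness shows that $\yVal{S}{\xi_b+\delta}<\yVal{S'}{\xi_b+\delta}$ holds if and only if one of Statements~(1)--(3) holds; since the strict inequality is true, one of (1)--(3) must hold, which is exactly the assertion of the corollary.
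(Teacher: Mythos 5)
Your proof is correct and follows essentially the same strategy as the paper's: both reduce the claim to \Cref{lem:below-parity} by producing a witness $\xi$ of the sweep-adjacency with $\xi\ne v_x$ for all $v\in V(S)\cup V(S')$ and $\yVal{S}{\xi}\ne\yVal{S'}{\xi}$, handling the degenerate case where the natural witness sits at a touching point by sliding it (via \Cref{lem:sw+delta} and the fact that touching points in $\Io{S}\cap\Io{S'}$ are themselves witnesses) to a nearby position where the segments separate. The paper locates this position as the first point of separation after a given witness, while you locate it at the boundary of the positivity component of $\yVal{S'}{\cdot}-\yVal{S}{\cdot}$ containing the separation point guaranteed by the ``below'' hypothesis, additionally invoking \Cref{lem:mos-le-1} for the global sign; this is a cosmetic rather than substantive difference.
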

\begin{proof}
  Let $\P$ be a set of polygons, and let $(S,P),(S',P')\in \SP{\P}$ be
  sweep-adjacent.  Then, by \cref{lem:sw+delta}, the sweep-adjacency can be
  witnessed by a $\xi\in \Io{S}\cap\Io{S'}$ with $\xi\ne v_x$ for all $v\in
  V(S)\cup V(S')$. Moreover, by definition of sweep-adjacency, $P\ne P'$.
  We assume that $(S,P)$ is below $(S',P')$.  If $\yVal{S}{\xi}\ne
  \yVal{S'}{\xi}$, then we can use \Cref{lem:below-parity}, and we are
  done.  Now, assume that $\yVal{S}{\xi}= \yVal{S'}{\xi}$.  Since $(S,P)$
  is below $(S',P')$, there is a $\xi'\in \Ih{S}\cap \Ih{S'}$ with
  $\yVal{S}{\xi'}\ne \yVal{S'}{\xi'}$.  We may assume w.l.o.g.\ that
  $\xi<\xi'$.  Then, consider $\xi''\in [\xi,\xi']$ such that $|\xi''-\xi|$
  is maximal and $\yVal{S}{\xi'''}=\yVal{S'}{\xi'''}$ for all
  $\xi'''\in[\xi,\xi'']$.  In particular, for a sufficiently small
  $\delta>0$, the sweep-adjacency of $(S,P)$ and $(S',P')$ can be witnessed
  by $\xi''+\delta$.  By the choice of $\xi''+\delta$, we conclude that
  $\yVal{S}{\xi''+\delta}\ne \yVal{S'}{\xi''+\delta}$, and $\xi''+\delta\ne
  v_x$ for all $v\in V(S)\cup V(S')$.  Hence, \Cref{lem:below-parity} can
  be applied.
\end{proof}

Recall that the intervals $\Ih{e}$ of the non-vertical edges $e$ of an 
outstretched segment $S$ of $P$ form a partition of $\Ih{S}$. 
Hence, for every $\xi\in \Ih{S}$,
there is a unique non-vertical edge $e\in E(S)$ such that $\xi\in \Ih{e}$.
Therefore, we define the \emph{slope} of $S$ for each $\xi\in \Ih{S}$ as the
slope of the unique non-vertical edge $e=\{u,v\}$ with $\xi\in \Ih{e}$ as
\begin{equation*}
  \slope{S}{\xi} \coloneqq \frac{u_y-v_y}{u_x-v_x}.
\end{equation*}
The slope $\slope{S}{\xi}$ can be used to determine, for two outstretched
segments $(S,P)$ and $(S',P')$ 
that share a common point $(x,y)$ with $x\in \Ih{S}\cap\Ih{S'}$
whether $S$ lies above or below $S'$. To be more precise:
\begin{lemma} \label{lem:slope}
  Let $\P$ be a set of polygons, and let $(S,P),(S',P')\in \SP{\P}$ such that 
  $\yVal{S}{\xi}=\yVal{S'}{\xi}$ and $\slope{S}{\xi}\ne\slope{S'}{\xi}$ for
  some $\xi\in \Ih{S}\cap\Ih{S'}$.
  Then, $\slope{S}{\xi}<\slope{S'}{\xi}$ if and only if $(S,P)$ is below
  $(S',P')$.
\end{lemma}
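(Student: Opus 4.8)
The plan is to reduce the claim to a single local computation at $\xi$ and then use the monotone-ordering dichotomy of \Cref{lem:mos-le-1} to pass from this local comparison to the global notion of "below". First I would fix the two edges witnessing $\xi$: let $e\in E(S)$ be the unique non-vertical edge with $\xi\in \Ih{e}$ and $e'\in E(S')$ the unique non-vertical edge with $\xi\in \Ih{e'}$. These exist because $\xi\in \Ih{S}\cap \Ih{S'}$ and, by \Cref{def:prop-O} and \Cref{lem:O}, the intervals $\Ih{\cdot}$ of the non-vertical edges partition $\Ih{S}$ (resp. $\Ih{S'}$); by definition $\slope{S}{\xi}$ and $\slope{S'}{\xi}$ are the slopes of $s[e]$ and $s[e']$. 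Since $\Ih{e}=[\min(e)_x,\max(e)_x)$ is closed on the left and open on the right, $\xi\in\Ih{e}$ gives $\xi<\max(e)_x$ and likewise $\xi<\max(e')_x$. Hence for all sufficiently small $\epsilon>0$ we have $\xi+\epsilon\in \Ih{e}\cap\Ih{e'}\subseteq \Ih{S}\cap\Ih{S'}$, so that $\yVal{S}{\xi+\epsilon}$ and $\yVal{S'}{\xi+\epsilon}$ are still governed by $e$ and $e'$. (I would use the right side rather than the left precisely because $\Ih{e}$ is half-open to the right, which guarantees $\xi+\epsilon$ stays associated with $e$ and $e'$, whereas $\xi-\epsilon$ might slip onto a preceding edge.) Using that $\xi$ and $\xi+\epsilon$ lie on the same straight segments $s[e]$ and $s[e']$, together with $\yVal{S}{\xi}=\yVal{S'}{\xi}$, I would derive the key identity
\[
  \yVal{S'}{\xi+\epsilon}-\yVal{S}{\xi+\epsilon}=\epsilon\,\big(\slope{S'}{\xi}-\slope{S}{\xi}\big).
\]

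From this identity the forward direction is immediate: if $\slope{S}{\xi}<\slope{S'}{\xi}$, then the right-hand side is strictly positive, so $\yVal{S}{\xi+\epsilon}<\yVal{S'}{\xi+\epsilon}$ with $\xi+\epsilon\in\Ih{S}\cap\Ih{S'}$, which is exactly the definition of $(S,P)$ being below $(S',P')$. For the converse, the identity alone only controls the two curves just to the right of $\xi$, so I cannot conclude "below" from the slope at $\xi$ in isolation; this is where \Cref{lem:mos-le-1} enters. Assuming $(S,P)$ is below $(S',P')$, there is some $\xi^{\star}\in\Ih{S}\cap\Ih{S'}$ with $\yVal{S}{\xi^{\star}}<\yVal{S'}{\xi^{\star}}$, and the dichotomy of \Cref{lem:mos-le-1} then forces $\yVal{S}{x}\le\yVal{S'}{x}$ for all $x\in\Ic{S}\cap\Ic{S'}$, in particular at $x=\xi+\epsilon$. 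Substituting this inequality into the key identity yields $\slope{S'}{\xi}-\slope{S}{\xi}\ge 0$, and since $\slope{S}{\xi}\ne\slope{S'}{\xi}$ by hypothesis, $\slope{S}{\xi}<\slope{S'}{\xi}$.

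I expect the only genuine subtlety to be this converse step: locally the slope at $\xi$ pins down the order of the two curves just to the right of $\xi$, but "below" is a global condition along the whole common domain, so a purely local argument is insufficient. The resolution is that \Cref{lem:mos-le-1} guarantees the two maximal outstretched segments are totally ordered on $\Ic{S}\cap\Ic{S'}$, so their global order is already determined by their order at any single interior point; once this is in hand the remaining computation is routine. Alternatively, one could obtain the converse by contraposition: applying the forward direction with the roles of $S$ and $S'$ exchanged shows that $\slope{S}{\xi}>\slope{S'}{\xi}$ makes $(S',P')$ below $(S,P)$, which by \Cref{obs:below} is incompatible with $(S,P)$ being below $(S',P')$.
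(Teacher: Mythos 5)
Your proposal is correct and follows essentially the same route as the paper: the forward direction is the identical local computation $\yVal{S}{\xi+\delta}=y+\slope{S}{\xi}\cdot\delta$ at a point just to the right of $\xi$, and your alternative converse (contraposition via the forward direction with roles swapped plus \Cref{obs:below}) is exactly the paper's argument. Your primary converse, invoking \Cref{lem:mos-le-1} directly, is only a repackaging of the same dichotomy, since \Cref{obs:below} rests on that lemma.
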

\begin{proof}
  Let $(S,P),(S',P')\in \SP{\P}$ as specified in the lemma.
  First, assume that $\slope{S}{\xi}<\slope{S'}{\xi}$. 
  In particular,  for sufficiently 
  small $\delta>0$, we have $\xi'\coloneqq \xi+\delta\in \Ih{S}\cap \Ih{S'}$.
  Therefore, we have
  $\yVal{S}{\xi'}=y+\slope{S}{\xi}\cdot\delta <
  y+\slope{S'}{\xi}\cdot\delta = \yVal{S'}{\xi'}$, i.e., $(S,P)$ is below
  $(S',P')$. 
  For the converse, assume by contraposition that 
  $\slope{S}{\xi}>\slope{S'}{\xi}$.
  By similar arguments, $(S',P')$ must be below $(S,P)$. 
  By \Cref{obs:below}, $(S,P)$ cannot be below $(S',P')$.
\end{proof}

%%%%%%%%%%%%%%%%%%%%%%%%%%%%%%%%%%
\section{Ordering the Set of Maximally Outstretched Segments}
\label{sec:order}

The discussion in \cref{sec:invest} motivates the definition of the
following order of maximally outstretches segments along a sweep-line at
$\xi$:
\begin{definition}\label{def:ONFP}
  Let $\P$ be a set of polygons and $\xi\in\mathbb{R}$. Then, we define the
  relation $\lessdot_\xi$ on $\SPx{\P}{\xi}$ by setting $(S',P')
  \lessdot_{\xi} (S,P)$ whenever 
    \begin{itemize}[noitemsep,nolistsep]
  \item[\textbf{(1a)}] $\yVal{S'}{\xi}>\yVal{S}{\xi}$, or
  \item[\textbf{(1b)}] $\yVal{S'}{\xi}=\yVal{S}{\xi}$ and 
    \begin{itemize}[noitemsep,nolistsep]
    \item[\textbf{(2a)}] $\slope{S'}{\xi}>\slope{S}{\xi}$, or
    \item[\textbf{(2b)}] $\slope{S'}{\xi}=\slope{S}{\xi}$  and 
      \begin{itemize}[noitemsep,nolistsep]
      \item[\textbf{(3a)}] $\pari{S'}{P'}=0$ and $\pari{S}{P}=1$, or
      \item[\textbf{(3b)}] $\pari{S'}{P'}=\pari{S}{P}=1$ and
        $\area{P}<\area{P'}$, or  
      \item[\textbf{(3c)}] $\pari{S'}{P'}=\pari{S}{P}=0$ and
        $\area{P}>\area{P'}$.
      \end{itemize}
    \end{itemize}
\end{itemize}
\end{definition}
For an illustrative example of \cref{def:ONFP}, see
\cref{fig:sweep-xample}. There, we have
$(S_{12},P_3)\lessdot_{\xi_1}(S_{1},P_1)$ due to \cref{def:ONFP}~(1),
$(S_{8},P_5)\lessdot_{\xi_2}(S_{7},P_5)$ due to \cref{def:ONFP}~(2a),
$(S_4,P_3)\lessdot_{\xi_2}(S_3,P_2)$ due to \cref{def:ONFP}~(3a), and
$(S_8,P_4)\lessdot_{\xi_3}(S_8,P_5)$ due to \cref{def:ONFP}~(3b).  Next, we
proceed by showing that $\lessdot_\xi$ is a total order on the set of
maximal outstretched segments intersecting a sweep-line at $\xi$. To this
end, we first express equality in terms of the quantities appearing in
\cref{def:ONFP}.
\begin{lemma}\label{obs:lessdot-equal}
  Let $\P$ be a set of polygons and
  $(S,P),(S',P')\in \SPx{\P}{\xi}$.  Then, $(S,P)=(S',P')$ if and only if
  $\yVal{S}{\xi}=\yVal{S'}{\xi}$, $\slope{S}{\xi}=\slope{S'}{\xi}$,
  $\pari{S}{P}=\pari{S'}{P'}$ and $\area{P}=\area{P'}$.
\end{lemma}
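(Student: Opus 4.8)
The \emph{only-if} direction is immediate: $(S,P)=(S',P')$ forces $S=S'$ and $P=P'$, so the four quantities agree trivially. For the \emph{if} direction I assume $\yVal{S}{\xi}=\yVal{S'}{\xi}$, $\slope{S}{\xi}=\slope{S'}{\xi}$, $\pari{S}{P}=\pari{S'}{P'}$ and $\area{P}=\area{P'}$; the plan is to establish $P=P'$ first and then $S=S'$.

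To obtain $P=P'$, suppose $P\neq P'$ for contradiction, and let $e\in E(S)$, $e'\in E(S')$ be the edges associated with $\xi$. Since $\xi\in\Ih{e}$ and $\xi\in\Ih{e'}$ we have $\xi<\max(e)_x$ and $\xi<\max(e')_x$, so any $\delta$ with $0<\delta<\min\{\max(e)_x,\max(e')_x\}-\xi$ keeps $e,e'$ associated with $\xi'\coloneqq\xi+\delta$; together with $\xi'>\xi\ge\min(S)_x$ and $\xi'<\max(e)_x\le\max(S)_x$ (and the analogous bounds for $S'$) this gives $\xi'\in\Io{S}\cap\Io{S'}$. The equal-slope hypothesis propagates the coincidence of the $y$-values, $\yVal{S}{\xi'}=\yVal{S}{\xi}+\slope{S}{\xi}\cdot\delta=\yVal{S'}{\xi}+\slope{S'}{\xi}\cdot\delta=\yVal{S'}{\xi'}$, so \Cref{obs:alwaysone} shows that $(S,P)$ and $(S',P')$ are sweep-adjacent. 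Then \Cref{lem:adj-overlap} turns $\pari{S}{P}=\pari{S'}{P'}$ into $\Int(P)\cap\Int(P')\neq\emptyset$, and \Cref{lem:area} combines this with $\area{P}=\area{P'}$ to yield $P=P'$ --- a contradiction. Hence $P=P'$.

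It remains to derive $S=S'$ from $P=P'$. Suppose $S\neq S'$. By \Cref{lem:bSPp}~(1) these distinct maximal outstretched segments of $P$ are edge-disjoint, so the edges $e\in E(S)$ and $e'\in E(S')$ associated with $\xi$ are distinct. They share the point $p\coloneqq(\xi,\yVal{S}{\xi})=(\xi,\yVal{S'}{\xi})$ and have the common slope $\slope{S}{\xi}=\slope{S'}{\xi}$, hence lie on one line through $p$. If $\xi>\min(e)_x$ then $p$ lies in the interior $s(e)$ while also lying on $s[e']$, contradicting planarity of the embedding; thus $\xi=\min(e)_x$ and, symmetrically, $\xi=\min(e')_x$, so $p$ is the common left endpoint of $e$ and $e'$. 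Consequently both edges run rightward from $p$ along one non-vertical ray; if their right endpoints differed, the nearer one would lie in the interior of the farther edge while being an endpoint of the nearer edge, again contradicting planarity, whereas if they coincided we would get $e=e'$. Either case is impossible, so $S=S'$ and therefore $(S,P)=(S',P')$.

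The step I expect to be the main obstacle is the passage to sweep-adjacency above: the hypothesis only supplies a common abscissa $\xi\in\Ih{S}\cap\Ih{S'}$, which may equal $\min(S)_x$ or $\min(S')_x$ and therefore need not lie in the open intervals required by \Cref{obs:alwaysone}. The equal-slope hypothesis is exactly what lets the sweep line be slid rightward to a witness $\xi'\in\Io{S}\cap\Io{S'}$ without destroying the equality of the $y$-values, and the one point demanding care is that the same edges, and hence the same slopes, remain associated with the perturbed abscissa $\xi'$.
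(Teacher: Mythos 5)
Your proof is correct and follows essentially the same route as the paper: both perturb the sweep line to $\xi'=\xi+\delta$ (the equal slopes keeping $\yVal{S}{\xi'}=\yVal{S'}{\xi'}$), then chain \Cref{obs:alwaysone}, \Cref{lem:adj-overlap} and \Cref{lem:area} to force $P=P'$, and finally conclude $S=S'$. The only divergence is that last step, where the paper appeals to \Cref{lem:bSPp}~(2) via the shared non-terminal point at $\xi'$, whereas you argue directly from edge-disjointness (\Cref{lem:bSPp}~(1)) and planarity of the embedding; both are valid.
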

\begin{proof}
  Clearly, the condition is necessary.  To establish the sufficiency, we
  consider the shared point $p\coloneqq
  (\xi,\yVal{S}{\xi})=(\xi,\yVal{S'}{\xi})$ and the same slope
  $\slope{S}{\xi}=\slope{S'}{\xi}$.  For sufficiently small $\delta>0$, we
  have $p'\coloneqq(\xi',\yVal{S}{\xi'})= (\xi',\yVal{S'}{\xi'})$ with
  $\xi'\coloneqq\xi+\delta$.  In particular, 
  since $(S,P),(S',P')\in \SPx{\P}{\xi}$ and $\delta$ is sufficiently
  small, we have
  $\xi'\in \Ih{S}\cap \Ih{S'}$ and
  $\min(S)_x\le \xi<\xi+\delta=\xi'<\max(S)_x$ and $\min(S')_x\le
  \xi<\xi+\delta=\xi'<\max(S')_x$.  Hence, \cref{obs:alwaysone} implies
  that $(S,P)$ and $(S',P')$ are sweep-adjacent. This, together with
  $\pari{S}{P}=\pari{S'}{P'}$ and \Cref{lem:adj-overlap}, implies
  $\Int(P)\cap\Int(P')\ne\emptyset$.  Now, $\area{P}=\area{P'}$ and
  \Cref{lem:area} imply $P=P'$.  Finally, if $S$ and $S'$ are two maximal
  outstretched segments of the same polygon $P=P'$ with the common vertex
  $p'\in S\cap S'$, then \Cref{lem:bSPp} implies $S=S'$. Hence,
  $(S,P)=(S',P')$.
\end{proof}

\begin{lemma}
  Let $\P$ be a set of polygon. Then, the relation
  $\lessdot_{\xi}$ specified in \cref{def:ONFP} is a strict total order on
  $\SPx{\P}{\xi}$ for all  $\xi\in\mathbb{R}$.
\end{lemma}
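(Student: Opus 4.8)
The plan is to realize $\lessdot_\xi$ as the pullback of a single lexicographic order under an injective ``key'' map, so that the nested clauses of \cref{def:ONFP} collapse into one comparison in a totally ordered set. Note first that for every $(S,P)\in\SPx{\P}{\xi}$ all four quantities $\yVal{S}{\xi}$, $\slope{S}{\xi}$, $\pari{S}{P}$, $\area{P}$ are well-defined: by construction $(S,P)\in\SPx{\P}{\xi}$ means $\xi\in\Ih{S}$, so the unique non-vertical edge of $S$ over $\xi$ determines both $\yVal{S}{\xi}$ and $\slope{S}{\xi}$, while the parity is well-defined by \cref{lem:eol}. I would then send each pair to
\[
  \kappa(S,P)\coloneqq\bigl(-\yVal{S}{\xi},\,-\slope{S}{\xi},\,\chi(S,P)\bigr),\qquad
  \chi(S,P)\coloneqq\begin{cases}(0,\area{P})&\text{if }\pari{S}{P}=0,\\(1,-\area{P})&\text{if }\pari{S}{P}=1,\end{cases}
\]
and equip the codomain $\mathbb{R}\times\mathbb{R}\times(\{0,1\}\times\mathbb{R})$, a product of totally ordered sets, with the lexicographic order $<$ (where $0<1$ on the first factor of $\chi$).

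Second, I would verify that $(S',P')\lessdot_\xi(S,P)$ holds if and only if $\kappa(S',P')<\kappa(S,P)$, by matching the clauses. Clause~(1a) is exactly the comparison of the first coordinate, using the \emph{reverse} order on $y$ (larger $\yVal{\cdot}{\xi}$ gives smaller key); clause~(1b)+(2a) ties the first coordinate and compares the second, again in reverse (larger slope gives smaller key). When the first two coordinates agree, the encoding $\chi$ reproduces (3a)--(3c): parity-$0$ keys precede parity-$1$ keys via the leading $0<1$, which is (3a); within parity $1$ the entry $-\area{\cdot}$ makes the larger polygon the $\lessdot_\xi$-smaller one, matching (3b) since $\area{P}<\area{P'}\Leftrightarrow -\area{P'}<-\area{P}$; and within parity $0$ the entry $\area{\cdot}$ makes the smaller polygon the $\lessdot_\xi$-smaller one, matching (3c). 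This is a routine case check, and it establishes the equivalence $(S',P')\lessdot_\xi(S,P)\Leftrightarrow\kappa(S',P')<\kappa(S,P)$.

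Finally, I would conclude abstractly. The lexicographic order on a finite product of strict total orders is again a strict total order, hence irreflexive, asymmetric and transitive; all three of these properties transfer verbatim to $\lessdot_\xi$ through the equivalence just proved. The only remaining point is \emph{totality}: two distinct pairs must be comparable, which fails precisely when $\kappa$ is not injective. But $\kappa(S,P)=\kappa(S',P')$ forces, by reading off coordinates (and noting that equality of $\chi$ forces equal parity and, in turn, equal area), that $\yVal{S}{\xi}=\yVal{S'}{\xi}$, $\slope{S}{\xi}=\slope{S'}{\xi}$, $\pari{S}{P}=\pari{S'}{P'}$ and $\area{P}=\area{P'}$; by \cref{obs:lessdot-equal} this is equivalent to $(S,P)=(S',P')$. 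Thus $\kappa$ is injective on $\SPx{\P}{\xi}$, and $\lessdot_\xi$ is a strict total order.

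I expect the injectivity/totality step to be the substantive obstacle, and it is the only place where genuine geometry enters: the purely combinatorial tiebreaks (3a)--(3c) do not by themselves prevent two distinct segment–polygon pairs from sharing all four keys and thereby being $\lessdot_\xi$-incomparable. Ruling this out relies entirely on \cref{obs:lessdot-equal}, whose proof in turn feeds on \cref{lem:adj-overlap}, \cref{lem:area} and \cref{lem:bSPp} to convert ``equal $y$, slope, parity and area'' into true equality of pairs. Everything else---irreflexivity, asymmetry, transitivity, and the clause matching---is bookkeeping once the key map is in place.
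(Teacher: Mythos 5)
Your proof is correct, and it takes a genuinely different route from the paper's. The paper verifies the three axioms of a strict total order by hand: antisymmetry is read off \cref{def:ONFP} directly, trichotomy is reduced to \cref{obs:lessdot-equal}, and transitivity is established by an explicit case analysis over the clauses (1a)--(3c), including a careful treatment of the mixed-parity cases. You instead package $\lessdot_{\xi}$ as the pullback of a lexicographic order under a key map $\kappa$, so that irreflexivity, asymmetry and transitivity come for free from the standard fact that a lexicographic product of total orders is a total order; your clause-matching is accurate (the sign flips on $\yVal{S}{\xi}$ and $\slope{S}{\xi}$, and the parity-dependent sign on $\area{P}$, reproduce (1a)--(3c) exactly, and the clauses of \cref{def:ONFP} are mutually exclusive by construction, so the equivalence with the lexicographic comparison is a genuine biconditional). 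The one place where both proofs must do real work is the same: showing that distinct pairs are comparable, which both arguments delegate to \cref{obs:lessdot-equal} (equality of all four keys forces $(S,P)=(S',P')$), and you correctly identify this as the only step where the geometry enters. What your approach buys is the elimination of the transitivity case analysis, which is the longest and most error-prone part of the paper's proof; what it costs is the up-front verification that the key map faithfully encodes the definition, which is routine but must be stated. Either proof is acceptable.
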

\begin{proof}
  Let $\P$ be a set of polygon and $\xi\in\mathbb{R}$, and let
  $(S,P),(S',P'),(S'',P'')\in \SPx{\P}{\xi}$.  First, it is obvious from
  the definition that $\lessdot_{\xi}$ is antisymmetric, that is,
  $(S,P)\lessdot_{\xi}(S',P')$ implies that $(S',P')\lessdot_{\xi}(S,P)$
  must not hold.  Moreover, we have neither $(S,P)\lessdot_{\xi}(S',P')$
  nor $(S',P')\lessdot_{\xi}(S,P)$ if and only
  $\yVal{S}{\xi}=\yVal{S'}{\xi}$, $\slope{S}{\xi}=\slope{S'}{\xi}$,
  $\pari{S}{P}=\pari{S'}{P'}$ and $\area{P}=\area{P'}$. By
  \Cref{obs:lessdot-equal}, this is exactly the case if $(S,P)=(S',P')$.
  Thus, $\lessdot_{\xi}$ is trichotomous, that is, for all
  $(S,P),(S',P')\in \SPx{\P}{\xi}$ we have \emph{either} \textit{(i)}
  $(S',P')\lessdot_{\xi}(S,P)$, \textit{(ii)} $(S,P)\lessdot_{\xi}(S',P')$
  or \textit{(iii)} $(S',P')=(S,P)$.
 
  It remains to show that $\lessdot_{\xi}$ is transitive. To this end, let
  $(S,P)\lessdot_{\xi} (S',P')\lessdot_{\xi} (S'',P'')$. If the first or
  the second inequality is realized by Case~(1a) or~(1b) and~(2a), then
  $(S,P)\lessdot_{\xi}(S'',P'')$ follows from transitivity of $<$ on
  $\mathbb{R}$. It remains to consider
  $\yVal{S}{\xi}=\yVal{S'}{\xi}=\yVal{S''}{\xi}$ and
  $\slope{S}{\xi}=\slope{S'}{\xi}=\slope{S''}{\xi}$.  First, assume
  $\varpi(S,P)=0$ and $\varpi(S',P')=1$. Then,
  $(S',P')\lessdot_{\xi}(S'',P'')$ leaves only case (3b), i.e.,
  $\varpi(S',P')=(S'',P'')=1$ and thus, $(S,P)\lessdot_{\xi}(S'',P'')$
  according to (3a). If $\varpi(S',P')=0$ and $\varpi(S'',P'')=1$, then
  $(S,P)\lessdot_{\xi}(S',P')$ implies, via (3c)
  $\varpi(S,P)=\varpi(S',P')=0$ and $(S,P)\lessdot_{\xi}(S'',P'')$ by
  (3a). Note that, by definition of $\lessdot_{\xi}$, the case
  $\varpi(S,P)=1$ and $\varpi(S,P)\neq \varpi(S',P')$ or $\varpi(S,P)\neq
  \varpi(S'',P'')$ cannot occur.  Hence, it remains to consider the case
  that $\varpi(S,P)=\varpi(S',P')=\varpi(S'',P'')$. For even parity
  $\varpi(S,P)=0$, we have $\area{P}<\area{P'}<\area{P''}$ and thus
  $(S,P)\lessdot_{\xi}(S'',P'')$ by (3c).  For odd parity $\varpi(S,P)=1$,
  we have $\area{P}>\area{P'}>\area{P''}$ and thus
  $(S,P)\lessdot_{\xi}(S'',P'')$ by (3b).
\end{proof}

Next, we show that $\lessdot_{\xi}$ is consistent with the notion of a
maximally outstretched segment being below another.
\begin{lemma} \label{lem:bel-iff-lessdot}
  Let $\P$ be a set of polygons and $(S,P),(S',P')\in \SP{\P}$.  If there
  is some $\xi\in \Ih{S}\cap \Ih{S'}$ with
  $\yVal{S}{\xi}\ne\yVal{S'}{\xi}$, then $(S,P)$ is below $(S',P')$ if and
  only if $(S',P')\lessdot_\xi(S,P)$ for all $\xi\in \Ih{S}\cap \Ih{S'}$.
  In particular, for every $\xi,\xi'\in \Ih{S}\cap \Ih{S'}$, we have
  $(S',P')\lessdot_\xi(S,P)$ if and only if $(S',P')\lessdot_{\xi'}(S,P)$.
\end{lemma}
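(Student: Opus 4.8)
The plan is to prove the two implications of the ``iff'' separately and then read off the ``in particular'' clause. Before splitting, I would record two global facts. Since the hypothesis supplies some $\xi_0\in\Ih{S}\cap\Ih{S'}$ with $\yVal{S}{\xi_0}\ne\yVal{S'}{\xi_0}$, \Cref{obs:below} guarantees that exactly one of ``$(S,P)$ is below $(S',P')$'' and ``$(S',P')$ is below $(S,P)$'' holds, and in particular $(S,P)\ne(S',P')$. Moreover, since $\Ih{S}\cap\Ih{S'}\subseteq\Ic{S}\cap\Ic{S'}$, \Cref{lem:mos-le-1} forces one of its two uniform alternatives; any witness of a below-relation selects it, so ``$(S,P)$ below $(S',P')$'' is in fact equivalent to $\yVal{S}{\xi}\le\yVal{S'}{\xi}$ for all $\xi\in\Ih{S}\cap\Ih{S'}$. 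Note also that $\xi\in\Ih{S}\cap\Ih{S'}$ means $(S,P),(S',P')\in\SPx{\P}{\xi}$, so both pairs are comparable under the strict total order $\lessdot_\xi$ established above.

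For the forward implication I assume $(S,P)$ is below $(S',P')$, fix an arbitrary $\xi\in\Ih{S}\cap\Ih{S'}$, and run through the cases of \Cref{def:ONFP}. If $\yVal{S}{\xi}<\yVal{S'}{\xi}$, then \textbf{(1a)} gives $(S',P')\lessdot_\xi(S,P)$. If $\yVal{S}{\xi}=\yVal{S'}{\xi}$, I compare slopes: when $\slope{S'}{\xi}>\slope{S}{\xi}$ condition \textbf{(2a)} applies, while the strict reverse inequality is impossible, because \Cref{lem:slope} would then force $(S',P')$ below $(S,P)$, contradicting the standing assumption. The remaining situation, $\yVal{S}{\xi}=\yVal{S'}{\xi}$ and $\slope{S}{\xi}=\slope{S'}{\xi}$, must be decided by the parity/area block \textbf{(3a)--(3c)}.

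I expect this degenerate case to be the main obstacle. Here the edges of $S$ and $S'$ associated with $\xi$ are collinear and share a point, hence coincide on a whole subinterval, along which $\yVal{S}{\cdot}=\yVal{S'}{\cdot}$. I would first use \Cref{lem:bSPp} to rule out $P=P'$: two distinct maximal outstretched segments of a single polygon are edge-disjoint and meet only at terminal vertices, which is incompatible with sharing a non-vertical subsegment. Choosing an interior witness $\xi^*\in\Io{S}\cap\Io{S'}$ inside that subinterval (which exists, as $\xi$ lies strictly left of the right endpoint of $\Ih{S}\cap\Ih{S'}$) with $\yVal{S}{\xi^*}=\yVal{S'}{\xi^*}$, \Cref{obs:alwaysone} yields that $(S,P)$ and $(S',P')$ are sweep-adjacent. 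Then \Cref{cor:below-parity2}, applied to this sweep-adjacent, below-related pair, produces exactly one of the parity/area configurations of \Cref{lem:below-parity}, and these coincide verbatim with \textbf{(3a)}, \textbf{(3b)}, \textbf{(3c)}. Since parity and area are independent of $\xi$, the resulting relation $(S',P')\lessdot_\xi(S,P)$ holds at the original $\xi$ too, even if $\xi$ is the left endpoint of $\Ih{S}\cap\Ih{S'}$.

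For the reverse implication I argue by contraposition. If $(S,P)$ is not below $(S',P')$, then by \Cref{obs:below} the pair $(S',P')$ is below $(S,P)$; applying the forward implication with the roles of the two pairs exchanged (the hypothesis is symmetric) gives $(S,P)\lessdot_\xi(S',P')$ for every $\xi$, contradicting the assumed $(S',P')\lessdot_\xi(S,P)$ by trichotomy of $\lessdot_\xi$. The same dichotomy yields the ``in particular'' clause: whichever below-relation holds fixes, through the forward implication, the identical comparison simultaneously for all $\xi\in\Ih{S}\cap\Ih{S'}$, so the outcome of $\lessdot_\xi$ between $(S,P)$ and $(S',P')$ does not depend on the choice of $\xi$.
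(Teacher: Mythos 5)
Your proof is correct and follows essentially the same route as the paper's: the forward direction is the same case analysis along \cref{def:ONFP} (Case (1a) via \cref{lem:mos-le-1}, Case (2a) via \cref{lem:slope}, and the degenerate equal-$y$/equal-slope case via \cref{lem:bSPp}, \cref{obs:alwaysone} and \cref{cor:below-parity2}), and your contrapositive treatment of the converse and of the ``in particular'' clause is an equivalent reformulation of the paper's direct argument. The only cosmetic difference is that the paper additionally spells out the $\xi$-independence of $\lessdot_\xi$ in the case where $\yVal{S}{\cdot}$ and $\yVal{S'}{\cdot}$ agree everywhere on $\Ih{S}\cap\Ih{S'}$, which lies outside the stated hypothesis and which you reasonably omit.
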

\begin{proof}
  Let $(S,P),(S',P')\in \SP{\P}$ be specified as in the lemma.

  First, we assume that $(S,P)$ is below $(S',P')$ and thus,
  $(S,P)\neq(S',P')$.  Let $\xi\in \Ih{S}\cap \Ih{S'}$ be chosen
  arbitrarily.  If $\yVal{S}{\xi}\ne\yVal{S'}{\xi}$, then
  $\yVal{S}{\xi}<\yVal{S'}{\xi}$, and thus, $(S',P')\lessdot_\xi(S,P)$.
  Moreover, if $\yVal{S'}{\xi}=\yVal{S}{\xi}$ and
  $\slope{S'}{\xi}\ne\slope{S}{\xi}$, then \Cref{lem:slope} implies
  $\slope{S}{\xi}<\slope{S'}{\xi}$, and thus, $(S',P')\lessdot_\xi(S,P)$.
  Now, assume that $\yVal{S'}{\xi}=\yVal{S}{\xi}$ and
  $\slope{S'}{\xi}=\slope{S}{\xi}$.  Hence, there is a $p\in S\cap S'$ with
  $p_x\ne v_x$ for all $v\in V(S)\cup V(S')$.  Thus, if $P=P'$, then we
  would conclude by \cref{lem:bSPp} that $S=S'$; a contradiction to
  $(S',P')\ne (S,P)$.  Hence, $P\ne P'$.  The latter two observations allow
  us to use \cref{obs:alwaysone}, and we conclude that $(S,P)$ and
  $(S',P')$ are sweep-adjacent.  Hence, by virtue of
  \Cref{cor:below-parity2}, $(S,P)$ below $(S',P')$ implies
  $\pari{S'}{P'}=0$ and $\pari{S}{P}=1$, or $\pari{S'}{P'}=\pari{S}{P}=1$
  and $\area{P}<\area{P'}$, or $\pari{S'}{P'}=\pari{S}{P}=0$ and
  $\area{P}>\area{P'}$, and thus $(S',P')\lessdot_\xi(S,P)$.  In either
  case, we have $(S',P')\lessdot_\xi(S,P)$.  Conversely, assume that
  $(S',P')\lessdot_\xi(S,P)$ for all $\xi\in \Ih{S}\cap \Ih{S'}$.  In
  particular, we have $(S',P')\lessdot_\xi(S,P)$ for $\xi\in \Ih{S}\cap
  \Ih{S'}$ with $\yVal{S}{\xi}\ne\yVal{S'}{\xi}$.  Hence,
  $\yVal{S}{\xi}<\yVal{S'}{\xi}$, and thus, $(S,P)$ is below $(S',P')$.

  Now, let $\xi,\xi'\in \Ih{S}\cap \Ih{S'}$. If there is a $\xi''\in
  \Ih{S}\cap \Ih{S'}$ with $\yVal{S}{\xi''}\ne\yVal{S'}{\xi''}$, then
  either \textit{(i)} $\yVal{S'}{\xi''}<\yVal{S}{\xi''}$ or \textit{(ii)}
  $\yVal{S'}{\xi''}>\yVal{S}{\xi''}$.  In Case~(i) $(S,P)$ is below
  $(S',P')$, and in Case~(ii) $(S',P')$ is below $(S,P)$. Hence,
  application of the afore-established results show that, in Case (i), we
  have $(S',P')\lessdot_\xi(S,P)$ and $(S',P')\lessdot_{\xi'}(S,P)$ and, in
  Case (ii), $(S,P)\lessdot_\xi(S',P')$ and $(S,P)\lessdot_{\xi'}(S',P')$.
  Now, assume that $\yVal{S}{\xi''}=\yVal{S'}{\xi''}$ for every $\xi''\in
  \Ih{S}\cap \Ih{S'}$.  Then, it is easy to see that
  $\yVal{S}{\xi}=\yVal{S'}{\xi}$, $\slope{S}{\xi}=\slope{S'}{\xi}$,
  $\yVal{S}{\xi'}=\yVal{S'}{\xi'}$ and $\slope{S}{\xi'}=\slope{S'}{\xi'}$.
  This, together with the definition of $\lessdot_\xi$ and the fact that
  the parity $\varpi(S,P)$ and $\varpi(S',P')$ is independent from $\xi$
  and $\xi'$, implies $(S',P')\lessdot_\xi(S,P)$ if and only if
  $(S',P')\lessdot_{\xi'}(S,P)$.  In summary, for all $\xi,\xi'\in
  \Ih{S}\cap\Ih{S'}$, we have $(S',P')\lessdot_\xi(S,P)$ if and only if
  $(S',P')\lessdot_{\xi'}(S,P)$.
\end{proof}

Note that \Cref{lem:bel-iff-lessdot} requires that there is a $\xi\in
\Ih{S}\cap \Ih{S'}$ with $\yVal{S}{\xi}\ne\yVal{S'}{\xi}$. However,
segments may be $\lessdot_{\xi}$ comparable without one being below the
other. This is (precisely) the case if $S$ and $S'$ coincide on $\Ih{S}\cap
\Ih{S'}$.  In particular, the total order $\lessdot_{\xi}$ on
$\SPx{\P}{\xi}$ extends to a partial order $\lessdot$ on $\SP{\P}$ by
setting $(S,P)\lessdot(S',P')$ whenever $(S,P)\lessdot_{\xi}(S',P)$ for
some $\xi\in \Ih{S}\cap \Ih{S'}$.

Moreover, we say $(S,P)\in \SPx{\P}{\xi}$ is \emph{$\lessdot_{\xi}$-minimal
w.r.t.\ $P$ if $(S,P)\lessdot_\xi(S',P)$ for all $(S',P)\in \SPx{\P}{\xi}$
with $S\ne S'$.}  An immediate consequence of \Cref{cor:maxpar=1} will be
useful in the following:
\begin{corollary} \label{cor:first-pi=1}
  Let $\P$ be a set of polygons and let $(S,P)\in \SPx{\P}{\xi}$ be
  $\lessdot_{\xi}$-minimal w.r.t.\ $P$.  Then, $\pari{S}{P}=1$.
\end{corollary}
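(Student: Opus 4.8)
The plan is to reduce the claim to \Cref{cor:maxpar=1}, which guarantees $\pari{S}{P}=1$ as soon as $S$ is the topmost maximal outstretched segment of $P$ at some position interior to $\Io{S}$. First I would unfold the definition of $\lessdot_{\xi}$-minimality: since $(S,P)$ is $\lessdot_{\xi}$-minimal w.r.t.\ $P$, we have $(S,P)\lessdot_\xi(S',P)$ for every $(S',P)\in\SPx{\P}{\xi}$ with $S'\neq S$, and inspecting \cref{def:ONFP} shows that this forces $\yVal{S}{\xi}\ge\yVal{S'}{\xi}$ in every one of its cases. Thus $S$ lies (weakly) above all other maximal outstretched segments of $P$ crossing the sweep-line at $\xi$.

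The obstacle is that \Cref{cor:maxpar=1} requires a witness in the \emph{open} interval $\Io{S}$, whereas minimality is only known at $\xi\in\Ih{S}$, which may coincide with $\min(S)_x$. I would therefore shift the sweep-line slightly to the right: choose $\delta>0$ small enough that $\xi^*\coloneqq\xi+\delta\in\Io{S}$ and that no vertex of $P$ has $x$-coordinate in $(\xi,\xi^*]$. The second condition ensures that every maximal outstretched segment $S'$ of $P$ with $\xi^*\in\Ih{S'}$ also satisfies $\min(S')_x\le\xi$, hence $\xi\in\Ih{S'}$ and $(S',P)\in\SPx{\P}{\xi}$.

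The crux is to transport the comparison from $\xi$ to $\xi^*$. For each such $S'\neq S$ we have $\xi,\xi^*\in\Ih{S}\cap\Ih{S'}$, so \Cref{lem:bel-iff-lessdot} (applied with $P'=P$) yields $(S,P)\lessdot_\xi(S',P)$ if and only if $(S,P)\lessdot_{\xi^*}(S',P)$, and the former holds by minimality. This is the delicate step, because it carries the slope- and parity/area-based tie-breaking of \cref{def:ONFP} past the endpoint $\xi=\min(S)_x$, where the plain $y$-values of $S$ and $S'$ may coincide. Since $\xi^*\in\Io{S}$, \Cref{lem:bSPp}~(2) rules out distinct maximal outstretched segments of $P$ meeting at a point $(\xi^*,\cdot)$, so equality of $y$-values there is impossible; consequently $(S,P)\lessdot_{\xi^*}(S',P)$ can only be realized by \cref{def:ONFP}~(1a), giving $\yVal{S}{\xi^*}>\yVal{S'}{\xi^*}$.

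Finally, collecting these strict inequalities together with the trivial equality for $S'=S$ yields $\yVal{S}{\xi^*}\ge\yVal{S'}{\xi^*}$ for every maximal outstretched segment $S'$ of $P$ with $\xi^*\in\Ih{S'}$, at the interior witness $\xi^*\in\Io{S}$. Applying \Cref{cor:maxpar=1} then gives $\pari{S}{P}=1$, as claimed.
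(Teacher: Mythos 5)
Your proof is correct and follows essentially the same route the paper intends: the paper presents \Cref{cor:first-pi=1} as an immediate consequence of \Cref{cor:maxpar=1}, and your argument is precisely a careful reduction to that corollary. The only added content is your explicit handling of the case $\xi=\min(S)_x$ by perturbing to $\xi^*=\xi+\delta\in\Io{S}$ and transporting the comparison via \Cref{lem:bel-iff-lessdot} and \Cref{lem:bSPp}~(2) --- a detail the paper leaves implicit, and which you resolve correctly.
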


The key observation of this section is that the parities $\pari{S}{P}$ and
$\pari{S'}{P'}$ of two maximal outstretched segments $(S,P)$ and $(S',P')$
of distinct polygon $P$ and $P'$ that are encountered consecutively along a
sweep-line at $\xi$, i.e., consecutively w.r.t.\ the $\lessdot_{\xi}$
order, determine the relative position of $P$ and $P'$ and their
arrangement in the nesting forest $\NF{\P}$.  To this end, we need the
following definition:

\begin{definition}
  Let $\P$ be a set of polygons and
  $(S,P),\,(S',P')\in\SPx{\P}{\xi}$ with $P\ne P'$.  Then, $(S,P)$ and
  $(S',P')$ are \emph{$\lessdot_{\xi}$-adjacent} if there is no
  $(S'',P'')\in \SPx{\P}{\xi}$ with $P''\in \{P,P'\}$, and 
  \[(S,P)\lessdot_{\xi}(S'',P'')\lessdot_{\xi}(S',P') \textnormal{ or }
  	(S',P')\lessdot_{\xi}(S'',P'')\lessdot_{\xi}(S,P).\]
%  \begin{enumerate}[nolistsep,noitemsep]
%	\item[\textbf{(a)}] $P''\in \{P,P'\}$, and 
%	\item[\textbf{(b)}] $(S,P)\lessdot_{\xi}(S'',P'')\lessdot_{\xi}(S',P')$ or
%	$(S',P')\lessdot_{\xi}(S'',P'')\lessdot_{\xi}(S,P)$.
%  \end{enumerate} 
\end{definition}

The following simple
observation allows us to replace sweep-adjacency by
$\lessdot_{\xi}$-adjacency. This has the advantage that
$\lessdot_{\xi}$-adjacency can be easily evaluated for all $\xi\in
\Ih{S}\cap \Ih{S'}$, and thus in particular for the vertices of the
polygons.

\begin{lemma} \label{lem:lessdot-adj}
  Let $\P$ be a set of polygons, and $(S,P),\,(S',P')\in\SPx{\P}{\xi}$ be
  $\lessdot_{\xi}$-adjacent with $P\ne P'$.  Then, $(S,P)$ and $(S',P')$
  are sweep-adjacent.  In particular, for a sufficiently small $\delta>0$,
  the $(S,P)$ and $(S',P')$ are also $\lessdot_{\xi'}$-adjacent with
  $\xi'\coloneqq \xi+\delta$.
\end{lemma}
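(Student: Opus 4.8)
The plan is to exhibit a single witness $\xi'\coloneqq\xi+\delta$ that simultaneously establishes both conclusions: sweep-adjacency and $\lessdot_{\xi'}$-adjacency. First I would fix $\delta>0$ small enough that the half-open interval $(\xi,\xi+\delta]$ contains no $x$-coordinate of any vertex of any polygon in $\P$; this is possible because $\P$, and hence the set of all vertex $x$-coordinates, is finite. Writing $\xi'\coloneqq\xi+\delta$, this choice guarantees three things: (i) no maximal outstretched segment has its $\min$- or $\max$-vertex $x$-coordinate in $(\xi,\xi']$, so no segment enters or leaves the sweep-line between $\xi$ and $\xi'$ and therefore $\SPx{\P}{\xi}=\SPx{\P}{\xi'}$; (ii) $\xi'\ne v_x$ for every $v\in V(P)\cup V(P')$; and (iii) shrinking $\delta$ further if necessary, $\xi'<\max(S)_x$ and $\xi'<\max(S')_x$, so that together with $\xi'>\xi\ge\min(S)_x$ and $\xi'>\xi\ge\min(S')_x$ we obtain $\xi'\in\Io{S}\cap\Io{S'}$.

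The key step is to transport the order $\lessdot_\xi$ unchanged to $\xi'$. Since $(S,P),(S',P')\in\SPx{\P}{\xi}$ we have $\xi,\xi'\in\Ih{S}\cap\Ih{S'}$, and for any two pairs in $\SPx{\P}{\xi}=\SPx{\P}{\xi'}$ their $x$-intervals both contain $\xi$ and $\xi'$. Hence the ``in particular'' clause of \Cref{lem:bel-iff-lessdot} applies to every such pair and shows that the restrictions of $\lessdot_\xi$ and $\lessdot_{\xi'}$ to $\SPx{\P}{\xi}=\SPx{\P}{\xi'}$ are the very same total order. Consequently, no $(S'',P'')$ with $P''\in\{P,P'\}$ lies $\lessdot_{\xi'}$-between $(S,P)$ and $(S',P')$ (since none does for $\lessdot_\xi$ by hypothesis), which is exactly the definition of $\lessdot_{\xi'}$-adjacency; this already settles the ``in particular'' part of the lemma.

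It remains to read off sweep-adjacency from $\lessdot_{\xi'}$-adjacency, using $\xi'$ as witness. By trichotomy of $\lessdot_{\xi'}$ and $P\ne P'$, we may assume w.l.o.g.\ that $(S',P')\lessdot_{\xi'}(S,P)$. If $\yVal{S}{\xi'}=\yVal{S'}{\xi'}$, then, as in \Cref{obs:alwaysone}, no $(S'',P'')$ can satisfy the strict betweenness $\yVal{S}{\xi'}<\yVal{S''}{\xi'}<\yVal{S'}{\xi'}$, so the sweep-adjacency condition holds vacuously. Otherwise \cref{def:ONFP}(1a) forces $\yVal{S'}{\xi'}>\yVal{S}{\xi'}$; and if some $(S'',P'')\in\SPx{\P}{\xi'}$ with $P''\in\{P,P'\}$ had $\yVal{S}{\xi'}<\yVal{S''}{\xi'}<\yVal{S'}{\xi'}$, then \cref{def:ONFP}(1a) applied to the pairs $(S'',P''),(S,P)$ and $(S',P'),(S'',P'')$ would yield $(S',P')\lessdot_{\xi'}(S'',P'')\lessdot_{\xi'}(S,P)$, contradicting $\lessdot_{\xi'}$-adjacency. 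Thus no such $S''$ exists, and since $\xi'\in\Io{S}\cap\Io{S'}$ and $P\ne P'$, \cref{def:adjacent} is met with witness $\xi'$, so $(S,P)$ and $(S',P')$ are sweep-adjacent. The main obstacle is entirely the bookkeeping of the first paragraph: securing $\SPx{\P}{\xi}=\SPx{\P}{\xi'}$ together with $\xi'\in\Io{S}\cap\Io{S'}$, so that \Cref{lem:bel-iff-lessdot} may legitimately carry the order across; once this ``genericity'' of $\xi'$ is in hand, both conclusions drop out immediately.
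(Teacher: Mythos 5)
Your proof is correct and takes essentially the same route as the paper's: choose a generic $\xi'=\xi+\delta$ so that $\SPx{\P}{\xi}=\SPx{\P}{\xi'}$ and $\xi'\in\Io{S}\cap\Io{S'}$, transport the order via \Cref{lem:bel-iff-lessdot} to obtain $\lessdot_{\xi'}$-adjacency, and then read off sweep-adjacency with witness $\xi'$. You merely make explicit the steps the paper dismisses as ``easy to verify.''
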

\begin{proof}
  Let $\P$ be a set of polygons and $(S,P),\,(S',P')\in\SPx{\P}{\xi}$ be
  $\lessdot_{\xi}$-adjacent with $P\ne P'$.  We assume
  w.l.o.g.\ $(S',P')\lessdot_{\xi}(S,P)$.  Now, consider a sufficiently
  small $\delta>0$.  Then, it is easy to verify that $\xi'\coloneqq
  \xi+\delta\in \Io{S}\cap\Io{S'}$ and $\SPx{\P}{\xi}=\SPx{\P}{\xi'}$.
  Moreover, by definition of $\lessdot_{\xi}$-adjacency, there is no
  $(S'',P'')\in \SPx{\P}{\xi}$ with $P''\in \{P,P'\}$ and
  $(S,P)\lessdot_{\xi}(S'',P'')\lessdot_{\xi}(S',P')$.  The latter two
  observations, together with \cref{lem:bel-iff-lessdot}, imply that
  $(S,P)$ and $(S',P')$ are also $\lessdot_{\xi'}$-adjacent.  In
  particular, there is no $(S'',P'')\in \SPx{\P}{\xi'}$ with $P''\in
  \{P,P'\}$ and $\yVal{S}{\xi'}<\yVal{S''}{\xi'}<\yVal{S'}{\xi'}$.  This,
  together with $\xi'\in \Io{S}\cap \Io{S'}$, implies $(S,P)$ and $(S',P')$
  are sweep-adjacent witnessed by $\xi'$.
\end{proof}
Now, we are in the position to show that the $\lessdot_{\xi}$ order
and the parity of two $\lessdot_{\xi}$-adjacent maximally outstretched
segments determines the inclusion order their respective polygons.

\begin{proposition} \label{prop:parity-inclusion}
  Let $\P$ be a set of polygons and $(S',P'),
  (S,P)\in\SPx{\P}{\xi}$ be $\lessdot_{\xi}$-adjacent with $P\ne P'$.  If
  $(S',P')\lessdot_{\xi}(S,P)$, then the following three statements hold:
  \begin{enumerate}[nolistsep]%,noitemsep]
  \item $\Int(P)\cap\Int(P')=\emptyset$ if and only if
    $\varpi(S',P')=0$ and $\varpi(S,P)=1$.
  \item $\Int(P)\subsetneq\Int(P')$ if and only if
    $\varpi(S,P)=\varpi(S',P')=1$.
  \item $\Int(P')\subsetneq\Int(P)$ if and only if
    $\varpi(S,P)=\varpi(S',P')=0$.
  \end{enumerate}
  In particular, the case $\varpi(S',P')=1$ and $\varpi(S,P)=0$ 
  can never occur.
\end{proposition}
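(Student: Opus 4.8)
The plan is to first reduce to the setting of sweep-adjacency, where the machinery of \Cref{sec:invest} applies, and then to read off the parity/area configuration of the pair directly from the clause of \Cref{def:ONFP} that realizes $(S',P')\lessdot_{\xi}(S,P)$. Since $(S',P')$ and $(S,P)$ are $\lessdot_{\xi}$-adjacent with $P\ne P'$, \Cref{lem:lessdot-adj} guarantees that they are sweep-adjacent, so \Cref{lem:adj-overlap} (which says $\Int(P)\cap\Int(P')\ne\emptyset$ iff $\varpi(S,P)=\varpi(S',P')$) and \Cref{cor:area} (which refines this to the inclusion order via the areas) are available throughout. The whole proposition then follows once I show that $(S',P')\lessdot_{\xi}(S,P)$ forces \emph{exactly one} of the three configurations (i)~$\varpi(S',P')=0$ and $\varpi(S,P)=1$; (ii)~$\varpi(S,P)=\varpi(S',P')=1$ with $\area{P}<\area{P'}$; (iii)~$\varpi(S,P)=\varpi(S',P')=0$ with $\area{P}>\area{P'}$, and in particular that the pattern $\varpi(S',P')=1$, $\varpi(S,P)=0$ is impossible.

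To establish this I would do a case analysis on which clause of \Cref{def:ONFP} realizes $(S',P')\lessdot_{\xi}(S,P)$. If it is realized by (1a), then $\yVal{S}{\xi}<\yVal{S'}{\xi}$, so $(S,P)$ is below $(S',P')$; if it is realized by (1b) together with (2a), then $\yVal{S}{\xi}=\yVal{S'}{\xi}$ with $\slope{S}{\xi}<\slope{S'}{\xi}$, and \Cref{lem:slope} again yields that $(S,P)$ is below $(S',P')$. In both of these subcases \Cref{cor:below-parity2} applies and returns precisely one of the configurations (i)--(iii), and its statement never lists $\varpi(S',P')=1$, $\varpi(S,P)=0$, so that pattern is excluded. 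In the remaining subcase the order is realized by (1b), (2b) together with one of (3a), (3b), (3c); here the required parity/area configuration is literally the content of the chosen sub-clause, and no sub-clause produces $\varpi(S',P')=1$, $\varpi(S,P)=0$. This last subcase, which corresponds geometrically to $S$ and $S'$ coinciding on $\Ih{S}\cap\Ih{S'}$ (so that neither is ``below'' the other), is the main obstacle: \Cref{cor:below-parity2} is unavailable, and one must instead extract the configuration straight from \Cref{def:ONFP} and check separately that the forbidden pattern cannot arise there.

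Finally, I would translate each configuration into its geometric conclusion. Configuration (i) has distinct parities, so \Cref{lem:adj-overlap} gives $\Int(P)\cap\Int(P')=\emptyset$; configuration (ii) has equal parities with $\area{P}<\area{P'}$, so \Cref{cor:area} gives $\Int(P)\subsetneq\Int(P')$; configuration (iii) has equal parities with $\area{P}>\area{P'}$, so \Cref{cor:area} with the roles of $P$ and $P'$ exchanged gives $\Int(P')\subsetneq\Int(P)$. Since $P\ne P'$ and $\P$ is overlap-free, the three geometric outcomes are themselves mutually exclusive and exhaustive (using \Cref{lem:area} to exclude $\Int(P)=\Int(P')$), and the three configurations (i)--(iii) are likewise mutually exclusive and exhaustive once the forbidden pattern is ruled out. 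The resulting bijection between configurations and outcomes upgrades each implication to the stated ``if and only if'' in Statements~(1)--(3), and the impossibility of $\varpi(S',P')=1$, $\varpi(S,P)=0$ has already been obtained along the way.
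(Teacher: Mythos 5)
Your proposal is correct and follows essentially the same route as the paper: reduce to sweep-adjacency via \Cref{lem:lessdot-adj}, show that $(S',P')\lessdot_{\xi}(S,P)$ forces exactly one of the three parity/area configurations (using \Cref{cor:below-parity2} when one segment is below the other and reading \cref{def:ONFP}(3a--3c) directly otherwise), translate these via \Cref{lem:adj-overlap} and \Cref{cor:area}, and obtain the ``if and only if'' from mutual exclusivity and exhaustiveness. The only cosmetic difference is that you organize the case split by the clause of \cref{def:ONFP} realized at $\xi$ (invoking \Cref{lem:slope} in the (2a) subcase), whereas the paper splits on whether $\yVal{S}{\cdot}$ and $\yVal{S'}{\cdot}$ ever differ on $\Ih{S}\cap\Ih{S'}$ and transfers the order with \Cref{lem:bel-iff-lessdot}; your parenthetical claim that subcase (2b) forces $S$ and $S'$ to coincide on all of $\Ih{S}\cap\Ih{S'}$ is not justified, but it is also not needed for the argument.
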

\begin{proof}
  Let $\P$ be a set of polygons and $(S',P'),
  (S,P)\in\SPx{\P}{\xi}$ be $\lessdot_{\xi}$-adjacent with $P\ne P'$, and
  let $(S',P')\lessdot_{\xi}(S,P)$. First, assume that there is a $\xi'\in
  \Ih{S}\cap \Ih{S'}$ such that $\yVal{S}{\xi'}\ne \yVal{S'}{\xi'}$.  Then,
  \Cref{lem:bel-iff-lessdot}, together with $(S',P')\lessdot_{\xi}(S,P)$,
  implies $(S',P')\lessdot_{\xi'}(S,P)$.  Hence,
  $\yVal{S}{\xi'}<\yVal{S'}{\xi'}$, and thus, $(S,P)$ is below $(S',P')$.
  By \Cref{lem:lessdot-adj}, $(S,P)$ and $(S',P')$ are sweep-adjacent, and
  \cref{cor:below-parity2} implies that \textit{(i)} $\pari{S'}{P'}=0$ and
  $\pari{S}{P}=1$, or \textit{(ii)} $\pari{S'}{P'}=\pari{S}{P}=1$ and
  $\area{P}<\area{P'}$, or \textit{(iii)} $\pari{S'}{P'}=\pari{S}{P}=0$ and
  $\area{P}>\area{P'}$.  Next, assume that we have $\yVal{S}{\xi'}=
  \yVal{S'}{\xi'}$ for all $\xi'\in \Ih{S}\cap \Ih{S'}$.  Then, in
  particular, we have $\yVal{S}{\xi}=\yVal{S'}{\xi}$ and
  $\slope{S}{\xi}=\slope{S'}{\xi}$.  Hence, by definition of
  $\lessdot_{\xi}$, the same three cases \textit{(i)}, \textit{(ii)}, and
  \textit{(iii)} must hold according to \cref{def:ONFP}(3a,3b,3c). In
  summary, if $(S',P'), (S,P)\in\SPx{\P}{\xi}$ are
  $\lessdot_{\xi}$-adjacent and $P\ne P'$, then one of the three
  alternatives \textit{(i)}, \textit{(ii)}, and \textit{(iii)} holds.
  
  Case \textit{(i)}, together with \Cref{lem:adj-overlap}, implies
  $\Int(P)\cap \Int(P')=\emptyset$.  Moreover, Case~\textit{(ii)} and
  Case~\textit{(iii)}, together with \cref{cor:area}, implies
  $\Int(P)\subsetneq \Int(P')$ and $\Int(P')\subsetneq \Int(P)$,
  respectively.  
  Conversely, we can use contraposition and the fact that one of the cases
  \textit{(i)}, \textit{(ii)}, and \textit{(iii)} needs to be satisfied.
  Assume that Case~\textit{(i)} is not satisfied.  Hence,
  Case~\textit{(ii)} or Case~\textit{(iii)} must hold and, as shown above,
  $\Int(P)\subsetneq \Int(P')$ or $\Int(P')\subsetneq \Int(P)$ and thus,
  $\Int(P')\cap\Int(P)\neq \emptyset$.  Thus, Statement (1) is satisfied.
  By similar arguments one shows that Statements (2) and (3) are satisfied.
\end{proof}

\begin{corollary}\label{cor:parity-incl-1}
  Let $\P$ be a set of polygons and assume that $(S',P'),
  (S,P)\in\SPx{\P}{\xi}$ are $\lessdot_{\xi}$-adjacent where $P\ne P'$.  If
  $(S',P')\lessdot_{\xi}(S,P)$, then the following three statements
  are satisfied.
  \begin{enumerate}[nolistsep,noitemsep]
  \item $\Int(P)\subsetneq\Int(P')$ if and only if $\varpi(S',P')=1$.   
  \item $\Int(P')\subsetneq\Int(P)$ if and only if $\varpi(S,P)=0$.
  \item $\Int(P')\nsubseteq\Int(P)$ if and only if $\varpi(S,P)=1$.
  \end{enumerate} 
\end{corollary}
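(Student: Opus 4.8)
Corollary \ref{cor:parity-incl-1} is an immediate reformulation of **Proposition \ref{prop:parity-inclusion}**, so my plan is to derive each of the three biconditionals directly from the three cases established there. Recall that Proposition \ref{prop:parity-inclusion} tells us that, under the hypothesis $(S',P')\lessdot_{\xi}(S,P)$ with $P\ne P'$, exactly one of three mutually exclusive situations holds: (i) $\varpi(S',P')=0,\varpi(S,P)=1$ with $\Int(P)\cap\Int(P')=\emptyset$; (ii) $\varpi(S,P)=\varpi(S',P')=1$ with $\Int(P)\subsetneq\Int(P')$; or (iii) $\varpi(S,P)=\varpi(S',P')=0$ with $\Int(P')\subsetneq\Int(P)$. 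The fourth combination $\varpi(S',P')=1,\varpi(S,P)=0$ is explicitly ruled out.

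Let me think about how the parities are constrained. The forbidden combination means only three of the four possible parity-pairs can arise, and they are exactly the three cases above. The plan is to exploit this by observing that a single parity value sometimes already pins down which case we are in. For Statement (1), note that $\Int(P)\subsetneq\Int(P')$ occurs precisely in Case (ii), where $\varpi(S',P')=1$. Conversely, if $\varpi(S',P')=1$, then since the pair $\varpi(S',P')=1,\varpi(S,P)=0$ cannot occur, we must have $\varpi(S,P)=1$ as well, placing us in Case (ii), hence $\Int(P)\subsetneq\Int(P')$. This gives the first biconditional. For Statement (2), $\Int(P')\subsetneq\Int(P)$ is exactly Case (iii), characterized by $\varpi(S,P)=0$; and conversely $\varpi(S,P)=0$ forces (again by the exclusion of the forbidden pair) $\varpi(S',P')=0$, i.e.\ Case (iii). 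Statement (3) is then the logical negation of Statement (2): since $\varpi(S,P)\in\{0,1\}$, we have $\varpi(S,P)=1$ iff not $\varpi(S,P)=0$, which by Statement (2) is iff $\Int(P')\not\subseteq\Int(P)$.

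Concretely, I would write: \emph{By \Cref{prop:parity-inclusion}, the case $\varpi(S',P')=1$ and $\varpi(S,P)=0$ cannot occur. Hence $\varpi(S',P')=1$ implies $\varpi(S,P)=1$, and $\varpi(S,P)=0$ implies $\varpi(S',P')=0$.} Then for (1): $\Int(P)\subsetneq\Int(P')$ holds iff $\varpi(S,P)=\varpi(S',P')=1$ by \Cref{prop:parity-inclusion}(2); by the preceding implication this is equivalent to $\varpi(S',P')=1$ alone. For (2): $\Int(P')\subsetneq\Int(P)$ holds iff $\varpi(S,P)=\varpi(S',P')=0$ by \Cref{prop:parity-inclusion}(3); by the preceding implication this is equivalent to $\varpi(S,P)=0$ alone. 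Statement (3) follows since $\Int(P')\nsubseteq\Int(P)$ is the negation of $\Int(P')\subsetneq\Int(P)$, which by (2) is equivalent to $\varpi(S,P)\ne 0$, i.e.\ $\varpi(S,P)=1$.

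I do not anticipate a genuine obstacle here, since all the substantive geometric and combinatorial work has already been done in \Cref{prop:parity-inclusion}. The one point demanding care is the equivalence in Statement (3): one must make sure that ``$\Int(P')\nsubseteq\Int(P)$'' is literally the complement of ``$\Int(P')\subsetneq\Int(P)$'' within the cases under consideration. This needs the observation that $P'=P$ is excluded by hypothesis and that $\Int(P')\subseteq\Int(P)$ with $P'\ne P$ forces strict inclusion (by \Cref{lem:area}), so $\Int(P')\subseteq\Int(P)$ and $\Int(P')\subsetneq\Int(P)$ coincide here; thus the negation is unambiguous. With that remark in place, the three biconditionals follow purely by propositional logic from the three cases, and the corollary is proved.
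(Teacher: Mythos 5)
Your proposal is correct and follows essentially the same route as the paper's own proof: both arguments exploit that the parity combination $\varpi(S',P')=1$, $\varpi(S,P)=0$ is excluded by \Cref{prop:parity-inclusion}, so that a single parity value pins down the relevant case, and both obtain Statement~(3) as the negation of Statement~(2) using $P\ne P'$. Your additional remark that $\Int(P')\subseteq\Int(P)$ with $P'\ne P$ forces strict inclusion is a useful explicit justification of a step the paper leaves implicit.
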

\begin{proof}
  Note that $\varpi(S',P')=1$ precisely in Case (2) of
  \cref{prop:parity-inclusion}, and $\varpi(S,P)=0$ precisely in Case (3)
  of \cref{prop:parity-inclusion} which establishes Statements (1) and (2)
  in this corollary, respectively.  Moreover, negation of Statement (2),
  together with $P\neq P'$, implies Statement (3).
\end{proof}

\section{From Ordered Maximal Outstretched Segments to the Nesting Forest} 
\label{sec:nest-forest}

We start with some simple properties of the nesting forest $\NF{\P}$.  For
fixed $\xi$, set $\P_{|\xi} \coloneqq\{P\in\P\mid \xi\in \Ih{P}\}$ and
denote by $\NF{\P_{|\xi}}$ the nesting forest of this subset of polygons.

\begin{lemma}\label{lem:find-root}
  Let $\P$ be a set of polygons, $P,P'\in \P$, and $(S,P)\in
  \SPx{\P}{\xi}$.  If $\Int(P)\subsetneq \Int(P')$, then there is a
  $(S',P')\in \SPx{\P}{\xi}$ with $(S',P')\lessdot_{\xi}(S,P)$.
\end{lemma}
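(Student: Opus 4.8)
The plan is to argue by contradiction, using the total order $\lessdot_\xi$ on $\SPx{\P}{\xi}$ together with the parity-versus-inclusion dictionary for $\lessdot_\xi$-adjacent cross-polygon pairs. First I would record that $P'$ actually has a maximal outstretched segment meeting the line $x=\xi$: since $\xi\in\Ih{S}\subseteq\Ih{P}$ and $\Int(P)\subseteq\Int(P')$, \Cref{obs:I-P-sub} gives $\Ih{P}\subseteq\Ih{P'}$, so $\xi\in\Ih{P'}$ and hence $\SPx{\P}{\xi}$ contains at least one pair $(S'',P')$. Let $(S'_1,P')$ be the $\lessdot_\xi$-minimal such pair, i.e.\ the topmost segment of $P'$ at $\xi$ (it exists because $\SPx{\P}{\xi}$ is finite and $\lessdot_\xi$ is total). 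By \Cref{cor:first-pi=1}, $\varpi(S'_1,P')=1$.

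Now suppose, for contradiction, that no $(S',P')\in\SPx{\P}{\xi}$ satisfies $(S',P')\lessdot_\xi(S,P)$. Because $\lessdot_\xi$ is total and $P\ne P'$, this forces $(S,P)\lessdot_\xi(S'_1,P')$, i.e.\ $S$ lies $\lessdot_\xi$-above the topmost segment of $P'$. The set of $P$-segments in $\SPx{\P}{\xi}$ that lie $\lessdot_\xi$-above $(S'_1,P')$ is therefore nonempty (it contains $(S,P)$) and finite, so it has a $\lessdot_\xi$-maximal element $(S^\dagger,P)$. I claim $(S^\dagger,P)$ and $(S'_1,P')$ are $\lessdot_\xi$-adjacent: no segment of $P'$ lies strictly between them, since $S'_1$ is already the $\lessdot_\xi$-topmost $P'$-segment; and no segment of $P$ lies strictly between them, by maximality of $(S^\dagger,P)$ among the $P$-segments above $(S'_1,P')$. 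As $P\ne P'$ and $(S^\dagger,P)\lessdot_\xi(S'_1,P')$, the pair is governed by \Cref{prop:parity-inclusion}, equivalently \Cref{cor:parity-incl-1}.

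Applying \Cref{cor:parity-incl-1} to this $\lessdot_\xi$-adjacent pair, with $(S^\dagger,P)$ above and $(S'_1,P')$ below, its Statement~(2) reads that $\Int(P)\subsetneq\Int(P')$ holds if and only if $\varpi(S'_1,P')=0$. But our hypothesis gives $\Int(P)\subsetneq\Int(P')$ while $\varpi(S'_1,P')=1$, a contradiction. (If one prefers \Cref{prop:parity-inclusion} directly: since $\varpi(S'_1,P')=1$, its Statement~(1) rules out $\varpi(S^\dagger,P)=0$ because $\Int(P)\cap\Int(P')=\Int(P)\ne\emptyset$; hence both parities equal $1$, and Statement~(2) then forces $\Int(P')\subsetneq\Int(P)$, which is incompatible with $\Int(P)\subsetneq\Int(P')$.) Either reading collapses the assumption, so some $(S',P')\in\SPx{\P}{\xi}$ satisfies $(S',P')\lessdot_\xi(S,P)$, as claimed.

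I expect the only genuine obstacle to be the adjacency bookkeeping. The parity machinery speaks only about $\lessdot_\xi$-adjacent pairs of \emph{distinct} polygons, whereas the given $(S,P)$ need not be $\lessdot_\xi$-adjacent to any segment of $P'$: several maximal outstretched segments of $P$ (recall $P$ may contribute many of them) can interleave between $S$ and the top of $P'$. Passing to the topmost $P'$-segment $(S'_1,P')$ and to the lowest $P$-segment $(S^\dagger,P)$ still above it is precisely the device that manufactures a cross-polygon adjacent pair to which the inclusion/parity correspondence applies, and checking that this manufactured pair really is adjacent (no intervening segment of \emph{either} of the two relevant polygons) is the step that needs care. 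By contrast, the existence of a $P'$-segment at $\xi$ and the value $\varpi(S'_1,P')=1$ are routine consequences of \Cref{obs:I-P-sub} and \Cref{cor:first-pi=1}.
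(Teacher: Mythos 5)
Your proposal is correct and follows essentially the same route as the paper's own proof: both establish that $P'$ contributes a pair at $\xi$ via \Cref{obs:I-P-sub}, take the $\lessdot_{\xi}$-minimal $P'$-pair (parity $1$ by \Cref{cor:first-pi=1}) and the $\lessdot_{\xi}$-maximal $P$-pair below it to manufacture a $\lessdot_{\xi}$-adjacent cross-polygon pair, and then derive the contradiction from \Cref{cor:parity-incl-1} with the roles of $P$ and $P'$ switched. The only cosmetic difference is that you settle the existence of a $P'$-pair up front rather than inside the contradiction argument.
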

\begin{proof}
  Let $\P$ be a set of polygons, and let $P,P'\in \P$.  Moreover, let
  $(S,P)\in \SP{\P}$ and $\xi\in \Ih{S}$.  Suppose that $\Int(P)\subsetneq
  \Int(P')$ and, therefore, $P\neq P'$.  Assume, for contradiction, that
  $(S,P)\lessdot_{\xi}(S',P')$ for all $(S',P')\in\SPx{\P}{\xi}$.  First,
  if there is no such $(S',P')\in \SPx{\P}{\xi}$, then $\xi\in
  \Ih{P}\setminus \Ih{P'}$.  Hence, $\Ih{P}\nsubseteq \Ih{P'}$ implies by
  \cref{obs:I-P-sub} that $\Int(P)\nsubseteq\Int(P')$; a contradiction.
  Now, assume that there is a $(S',P')\in \SPx{\P}{\xi}$, and suppose that
  $(S',P')$ is $\lessdot_{\xi}$-minimal w.r.t.\ $P'$.  Then, by
  \cref{cor:first-pi=1}, $\pari{S'}{P'}=1$, and by assumption
  $(S,P)\lessdot_{\xi}(S',P')$.  The latter implies that there is a
  $\lessdot_{\xi}$-maximal $(S,P)\in\SPx{\P}{\xi}$ w.r.t.\ $P$ such that
  $(S,P)\lessdot_{\xi}(S',P')$.  Hence, by the choice of $(S,P)$ and
  $(S',P')$, we conclude that they are $\lessdot_{\xi}$-adjacent.  Since
  $(S,P)\lessdot_{\xi}(S',P')$, the roles of $P$ and $P'$ in
  \cref{cor:parity-incl-1} are switched, and we can use
  \cref{cor:parity-incl-1}~(3) and $\pari{S'}{P'}=1$ to infer
  $\Int(P)\nsubseteq\Int(P')$; a contradiction.  Hence, there must a
  $(S',P')\in \SPx{\P}{\xi}$ with $(S',P')\lessdot_{\xi}(S,P)$.
\end{proof}

\begin{lemma} \label{lem:minmaxP}
  Let $\P$ be a set of polygons and let $(S,P)\in\SPx{\P}{\xi}$. If $(S,P)$
  is the $\lessdot_{\xi}$-minimal element, then $P$ is a root vertex in the
  nesting forest $\NF{\P_{|\xi}}$.
\end{lemma}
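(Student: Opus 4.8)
The plan is to argue by contradiction, with \Cref{lem:find-root} serving as the single key tool. First I would record that $P$ itself belongs to $\P_{|\xi}$: since $(S,P)\in\SPx{\P}{\xi}$ we have $\xi\in\Ih{S}\subseteq\Ih{P}$, so $\xi\in\Ih{P}$ and hence $P\in\P_{|\xi}$ by definition of $\P_{|\xi}$. Next I would recall that $\NF{\P_{|\xi}}=\Hasse[\Int(\P_{|\xi})]$ is the Hasse diagram of $\Int(\P_{|\xi})$ with respect to inclusion, so its roots are exactly the inclusion-maximal elements. Consequently $P$ fails to be a root of $\NF{\P_{|\xi}}$ precisely when there is some $P'\in\P_{|\xi}$ with $\Int(P)\subsetneq\Int(P')$; this is the statement I would try to refute.

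Assume, for contradiction, that $P$ is not a root vertex of $\NF{\P_{|\xi}}$. Then there is a polygon $P'\in\P_{|\xi}\subseteq\P$ with $\Int(P)\subsetneq\Int(P')$, and in particular $P\ne P'$. At this point I would invoke \Cref{lem:find-root} for the pair $P,P'$ together with the given $(S,P)\in\SPx{\P}{\xi}$: since $\Int(P)\subsetneq\Int(P')$, the lemma guarantees the existence of some $(S',P')\in\SPx{\P}{\xi}$ with $(S',P')\lessdot_{\xi}(S,P)$.

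Because $P\ne P'$, the pair $(S',P')$ is distinct from $(S,P)$, so $(S',P')\lessdot_{\xi}(S,P)$ exhibits an element strictly below $(S,P)$ in the total order $\lessdot_{\xi}$ on $\SPx{\P}{\xi}$. This contradicts the hypothesis that $(S,P)$ is the $\lessdot_{\xi}$-minimal element. Hence no such $P'$ exists, and $P$ is a root vertex of $\NF{\P_{|\xi}}$, as claimed.

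Since the whole argument collapses to one application of \Cref{lem:find-root}, I do not expect a genuine obstacle. The only points that require care are purely bookkeeping: verifying that the polygon $P'$ produced by the assumption lies in $\P$ so that \Cref{lem:find-root} is applicable, and the translation between ``$P$ is a root of $\NF{\P_{|\xi}}$'' and ``$P$ is inclusion-maximal in $\Int(\P_{|\xi})$'', which is immediate from the definition of the nesting forest as a Hasse diagram.
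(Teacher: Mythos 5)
Your proposal is correct and follows exactly the paper's argument: contraposition (equivalently, contradiction) on the existence of $P'$ with $\Int(P)\subsetneq\Int(P')$, followed by a single application of \Cref{lem:find-root} to produce an element $\lessdot_{\xi}$-below $(S,P)$. The extra bookkeeping you note (that $P'\in\P$ and that roots of the Hasse diagram are the inclusion-maximal elements) is implicit in the paper's one-line proof but does not change the route.
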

\begin{proof}
  Let $\P$ be a set of polygons and let $(S,P)\in\SPx{\P}{\xi}$ be the
  $\lessdot_{\xi}$-minimal element.  Using contraposition, we assume that
  $P$ is \emph{not} a root vertex in the nesting forest $\NF{\P_{|\xi}}$;
  i.e., there is a $P'\in \P$ with 
  $\Int(P)\subsetneq \Int(P')$.
  Then, \Cref{lem:find-root} implies that $(S,P)$ cannot be the
  $\lessdot_{\xi}$-minimal element of $\SPx{\P}{\xi}$.
\end{proof}

\begin{remark}\label{rem:unique-pred}
  Let $\P$ be a set of polygon and $(S,P),(S',P')\in \SPx{\P}{\xi}$.  Then,
  $(S',P')$ is \emph{the $\lessdot_{\xi}$-predecessor} of $(S,P)$ if
  $(S',P')\lessdot_{\xi}(S,P)$ and there is no $(S'',P'')\in \SPx{\P}{\xi}$
  with $(S',P')\lessdot_{\xi}(S'',P'')\lessdot_{\xi}(S,P)$.
Since $\lessdot_{\xi}$ is a strict total order on $\SPx{\P}{\xi}$, the
$\lessdot_{\xi}$-predecessor is always well-defined.
\end{remark}
For instance, in \cref{fig:sweep-xample}, we have the following:
$(S_9,P_3)$ is the $\lessdot_{\xi_2}$-predecessor of $(S_8,P_4)$,
$(S_{11},P_3)$ is the $\lessdot_{\xi_3}$-predecessor of $(S_8,P_4)$,
$(S_4,P_3)$ is the $\lessdot_{\xi_1}$-predecessor of $(S_3,P_2)$
$(S_8,P_5)$ is the $\lessdot_{\xi_2}$-predecessor of $(S_7,P_5)$.
Furthermore, there is no $\lessdot_{\xi_i}$-predecessor of $(S_{12},P_3)$
for $i \in \{1,\ldots,4\}$.  The first two cases demonstrate that the
$\lessdot_{\xi}$-predecessor of a given segments is not independent by the
choice of $\xi$. Below, we will make use of the following observation
regarding $\lessdot_{\xi}$-predecessor:
\begin{lemma} \label{lem:psi} %pred-sub-iff
  Let $\P$ be a set of polygons and $(S,P),(S',P')\in \SPx{\P}{\xi}$, and
  let $(S',P')$ be the $\lessdot_{\xi}$-predecessor of $(S,P)$, and $P\ne
  P'$.  Then, for every $P''\in \P\setminus\{P,P'\}$, we have
  $\Int(P)\subsetneq \Int(P')$ if and only if $\Int(P')\subsetneq
  \Int(P'')$.
\end{lemma}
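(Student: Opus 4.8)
The plan is to prove the equivalence by showing, for an arbitrary $P''\in\P\setminus\{P,P'\}$, that $P''$ properly contains $P$ exactly when it properly contains $P'$; the decisive feature is that the predecessor relation places $(S,P)$ and $(S',P')$ in \emph{immediately consecutive} positions along the sweep-line, so that no third polygon can slip between them. First I would fix a convenient witness: since $(S',P')\lessdot_\xi(S,P)$ and $(S',P')$ is the immediate $\lessdot_\xi$-predecessor, the pair $(S,P)$ and $(S',P')$ is in particular $\lessdot_\xi$-adjacent, hence sweep-adjacent by \Cref{lem:lessdot-adj}, and \Cref{prop:parity-inclusion} tells us which of the three cases $\Int(P)\cap\Int(P')=\emptyset$, $\Int(P)\subsetneq\Int(P')$, $\Int(P')\subsetneq\Int(P)$ we are in. Using the stability of $\SPx{\P}{\xi}$ and of the order $\lessdot_\xi$ under small perturbations (the mechanism behind \Cref{lem:sw+delta,lem:lessdot-adj} together with \Cref{lem:bel-iff-lessdot}), I may replace $\xi$ by a nearby value so that $\xi\neq v_x$ for every vertex $v$ of every polygon of $\P$ while $(S',P')$ remains the predecessor of $(S,P)$; in the main case I may also assume $\yVal{S}{\xi}<\yVal{S'}{\xi}$.

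The key geometric step is that, because $(S',P')$ is the immediate predecessor among \emph{all} of $\SPx{\P}{\xi}$, no segment $(S'',P'')\in\SPx{\P}{\xi}$ can satisfy $\yVal{S}{\xi}<\yVal{S''}{\xi}<\yVal{S'}{\xi}$, as it would otherwise fall strictly between $(S',P')$ and $(S,P)$ by clause~(1a) of \Cref{def:ONFP}. Since $\xi$ avoids all vertex coordinates there is no vertical edge on the line $x=\xi$, so every boundary point of every polygon on this line lies on such a non-vertical segment. Consequently the open vertical segment $L\coloneqq\{\xi\}\times\big(\yVal{S}{\xi},\yVal{S'}{\xi}\big)$ is disjoint from $\bigcup_{Q\in\P}Q$; being connected and avoiding $\partial Q$, it satisfies $L\subseteq\Int(Q)$ or $L\subseteq\Ext(Q)$ for every $Q\in\P$.

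With this dichotomy in hand I would establish $\Int(P)\subsetneq\Int(P'')\Leftrightarrow L\subseteq\Int(P'')\Leftrightarrow\Int(P')\subsetneq\Int(P'')$, the outer two biconditionals being mirror images of one another under exchanging the roles of $P$ and $P'$. For the first, if $\Int(P)\subsetneq\Int(P'')$ then \Cref{lem:s-o-r} gives $P\subseteq\Int(P'')\cup P''$, so the endpoint $(\xi,\yVal{S}{\xi})\in P$ lies in $\overline{\Int(P'')}$; since $L$ accumulates at this endpoint, the dichotomy forces $L\subseteq\Int(P'')$. For the second, $L\subseteq\Int(P'')$ places a point of $P'$ (near the endpoint $(\xi,\yVal{S'}{\xi})$) in $\Int(P'')$, whence $\Int(P')\subsetneq\Int(P'')$ by \Cref{lem:s-o-r}, with \Cref{lem:area} excluding $P'=P''$. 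Running the same argument with $P$ and $P'$ interchanged completes the equivalence.

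The hard part will be the touching and degenerate configurations, which is exactly where the parity function earns its keep. Two issues arise: (i) an endpoint of $L$ may lie on $\partial P''$ rather than in $\Int(P'')$ because $P$ or $P'$ merely \emph{touches} $P''$; and (ii) the chosen $\xi$ may give $\yVal{S}{\xi}=\yVal{S'}{\xi}$, collapsing $L$. For (i) I would avoid the ambiguous endpoints altogether and instead test the two points immediately above $S$ and immediately below $S'$, both interior to $L$, using \Cref{prop:ibp} and \Cref{cor:ibp-}; combined with \Cref{prop:parity-inclusion} and overlap-freeness via \Cref{lem:s-o-r}, this certifies genuine interior intersection rather than mere boundary contact. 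For (ii), if $S$ and $S'$ do not agree throughout $\Ih{S}\cap\Ih{S'}$ I would use \Cref{lem:slope} and \Cref{obs:below} to move $\xi$ to a separating value and reuse the main argument; if they coincide on the whole overlap, the relation between $P$ and $P'$ — and thereby their common containment behaviour relative to $P''$ — is read off directly from \Cref{prop:parity-inclusion}. I expect the bulk of the care to go into (i), since faithfully handling boundary contact is precisely the feature that distinguishes the touching setting from the classical disjoint case.
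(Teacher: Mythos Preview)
Your plan is correct and takes a genuinely different route from the paper. You argue geometrically: perturb $\xi$, observe that the predecessor relation in all of $\SPx{\P}{\xi}$ forbids any segment (of any polygon) from having a $y$-value strictly between $\yVal{S}{\xi}$ and $\yVal{S'}{\xi}$, hence the open vertical segment $L$ between these heights avoids every boundary and is either entirely inside or entirely outside each $P''$; containment of $P$ (resp.\ $P'$) in $P''$ is then read off from the endpoints of $L$. The paper, by contrast, never leaves the $\lessdot_\xi$/parity framework: assuming $\Int(P)\subsetneq\Int(P'')$, it invokes \Cref{lem:find-root} to produce a $\lessdot_\xi$-maximal $(S'',P'')\lessdot_\xi(S,P)$, notes that necessarily $(S'',P'')\lessdot_\xi(S',P')$ by the predecessor hypothesis, builds $\lessdot_\xi$-adjacent pairs $(S'',P'')$--$(\widetilde S,P)$ and $(S'',P'')$--$(\widetilde{S'},P')$, and transfers containment via \Cref{cor:parity-incl-1} (the parity $\varpi(S'',P'')=1$ is forced by one adjacency and then yields $\Int(P')\subsetneq\Int(P'')$ from the other).

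What each buys: your argument is conceptually transparent in the non-touching case and makes the role of the immediate-predecessor hypothesis vivid. However, exactly the configurations you flag in (i) and (ii)---an endpoint of $L$ lying on $P''$, or $\yVal{S}{\xi}=\yVal{S'}{\xi}$---are where your sketch is thinnest; resolving them rigorously (deciding, when $(\xi,\yVal{S}{\xi})\in P''$, on which side of $P''$ the segment $L$ falls) ultimately requires the same parity/area tiebreaking that the paper's order $\lessdot_\xi$ already encodes. The paper's argument handles all cases uniformly with no separate touching analysis, at the price of being less geometrically intuitive. Your plan would go through, but expect the ``hard part'' to reconstruct much of \Cref{cor:parity-incl-1} in situ.
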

\begin{proof}
  Let $\P$ be a set of polygons and $(S,P),(S',P')\in \SPx{\P}{\xi}$, and
  let $(S',P')$ be the $\lessdot_{\xi}$-predecessor of $(S,P)$, and $P\ne P'$.
  Moreover, let $P''\in \P\setminus\{P,P'\}$.
  
  First, assume that $\Int(P)\subsetneq \Int(P'')$. Hence,
  \cref{lem:find-root} implies that there is a $(S'',P'')\in \SPx{\P}{\xi}$
  with $(S'',P'')\lessdot_{\xi}(S,P)$.  In particular, we can assume
  w.l.o.g.\ that $(S'',P'')$ is $\lessdot_{\xi}$-maximal such that
  $(S'',P'')\lessdot_{\xi}(S,P)$.  By \cref{rem:unique-pred} and since
  $(S',P')$ is the $\lessdot_{\xi}$-predecessor of $(S,P)$, we have
  $(S'',P'')\lessdot_{\xi} (S',P') \lessdot_{\xi}(S,P)$.  The latter
  arguments also imply that there is no pair $(\ast, P'')$ with
  $(S'',P'')\lessdot_\xi(\ast,P'')\lessdot_{\xi} (S',P')$.  Let $M$ be
  the set of all pairs $(\ast,P')$ with $(\ast,P')=(S',P')$ or
  $(S'',P'')\lessdot_{\xi} (\ast,P') \lessdot_{\xi} (S',P')$.  Note $M\neq
  \emptyset$ and, by \cref{rem:unique-pred}, there is a
  $\lessdot_{\xi}$-minimal element $(\widetilde{S'},P')$ in $M$.  Taken the
  latter arguments together, $(S'',P'')$ and $(\widetilde{S'},P')$ are
  $\lessdot_{\xi}$-adjacent.  Recall that $(S'',P'')\lessdot_{\xi} (S',P')
  \lessdot_{\xi}(S,P)$.  Hence, by similar arguments, we can choose
  $(\widetilde{S},P)$ as the $\lessdot_{\xi}$-minimal pair among all pairs
  $(\ast, P)\in\SPx{\P}{\xi}$ for which $(\ast, P)=(S,P)$ or
  $(S'',P'')\lessdot_{\xi}(\ast, P)\lessdot_{\xi} (S,P)$.  Again,
  $(S'',P'')$ and $(\widetilde{S},P)$ are $\lessdot_{\xi}$-adjacent with
  $(S'',P'')\lessdot_{\xi}(\widetilde{S},P)$.  This, together with
  \cref{cor:parity-incl-1}~(1) and $\Int(P)\subsetneq \Int(P'')$, implies
  $\varpi(S'',P'')=1$.  Since $(S'',P'')$ and $(\widetilde{S'},P')$ are
  $\lessdot_{\xi}$-adjacent with
  $(S'',P'')\lessdot_{\xi}(\widetilde{S'},P')$,
  \cref{cor:parity-incl-1}~(1) and $\varpi(S'',P'')=1$ imply that
  $\Int(P')\subsetneq \Int(P'')$.

  Conversely, assume that $\Int(P')\subsetneq \Int(P'')$.  By similar
  arguments as used in the previous case, there is a
  $\lessdot_{\xi}$-maximal (w.r.t.\ $P''$) element
  $(S'',P'')\in \SPx{\P}{\xi}$ such that
  $(S'',P'')\lessdot_{\xi}(S',P')\lessdot_{\xi}(S,P)$.  Now, we choose
  $(\widetilde{S},P)$ (resp., $(\widetilde{S'},P')$) as the
  $\lessdot_{\xi}$-minimal pair among all pairs $(\ast, P)\in\SPx{\P}{\xi}$
  (resp., $(\ast, P')\in\SPx{\P}{\xi}$) that satisfy
  $(S'',P'')\lessdot_{\xi}(\widetilde{S},P)$ (resp.,
  $(S'',P'')\lessdot_{\xi}(\widetilde{S'},P')$).  Analogously, and by
  interchanging the roles of $P$ and $P'$, one shows that
  $\Int(P)\subsetneq \Int(P'')$.
\end{proof}

Now, we translate \cref{prop:parity-inclusion} to the relative location of
the polygons in the nesting forest $\NF{\P_{|\xi}}$.
\begin{lemma} \label{lem:Tchildsib}
  Let $\P$ be a set of polygons and $(S,P),(S',P')\in \SPx{\P}{\xi}$. 
  If $(S',P')$ is the $\lessdot_{\xi}$-predecessor of
  $(S,P)$ and $P\ne P'$, then we have the following:
  \begin{enumerate}[nolistsep,noitemsep]
  \item $P$ and $P'$ are siblings in $\NF{\P_{|\xi}}$ if and only if
    $\varpi(S',P')=0$ and $\varpi(S,P)=1$.
  \item $P$ is a child of $P'$ in $\NF{\P_{|\xi}}$ if and only if 
    $\varpi(S,P)=\varpi(S',P')=1$
  \item $P'$ is a child of $P$ in $\NF{\P_{|\xi}}$ if and only if
    $\varpi(S,P)=\varpi(S',P')=0$.
  \end{enumerate}
\end{lemma}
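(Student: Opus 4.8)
The plan is to reduce the three equivalences to \cref{prop:parity-inclusion} and then to convert the resulting inclusion/disjointness statements into the combinatorial structure of $\NF{\P_{|\xi}}$ using \cref{lem:psi}. First I observe that, since $(S',P')$ is the $\lessdot_\xi$-predecessor of $(S,P)$, no element of $\SPx{\P}{\xi}$ lies strictly between them; in particular there is no such element whose polygon is $P$ or $P'$, so $(S',P')$ and $(S,P)$ are $\lessdot_\xi$-adjacent with $(S',P')\lessdot_\xi(S,P)$ and $P\ne P'$. Hence \cref{prop:parity-inclusion} applies directly and partitions the situation into exactly three cases: $\varpi(S',P')=0,\ \varpi(S,P)=1$ iff $\Int(P)\cap\Int(P')=\emptyset$; $\varpi(S,P)=\varpi(S',P')=1$ iff $\Int(P)\subsetneq\Int(P')$; and $\varpi(S,P)=\varpi(S',P')=0$ iff $\Int(P')\subsetneq\Int(P)$ (with the fourth parity pattern impossible). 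It therefore remains only to show that these three inclusion regimes correspond, respectively, to ``$P,P'$ siblings'', ``$P$ child of $P'$'', and ``$P'$ child of $P$'' in $\NF{\P_{|\xi}}$.

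For Statements~2 and~3 the issue is to promote a strict inclusion to a covering relation (an edge) in the nesting forest, i.e.\ to exclude intermediate polygons. I would treat Statement~2 and let $\varpi(S,P)=\varpi(S',P')=1$, so $\Int(P)\subsetneq\Int(P')$. If some $P''\in\P_{|\xi}$ satisfied $\Int(P)\subsetneq\Int(P'')\subsetneq\Int(P')$, then $P''\notin\{P,P'\}$, and \cref{lem:psi} would turn $\Int(P)\subsetneq\Int(P'')$ into $\Int(P')\subsetneq\Int(P'')$, contradicting $\Int(P'')\subsetneq\Int(P')$. Thus no intermediate polygon exists and $P$ is a child of $P'$; the converse direction is immediate from \cref{prop:parity-inclusion}. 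Statement~3 follows by the symmetric argument with the roles of $P$ and $P'$ exchanged.

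For the sibling case (Statement~1) I would first record, via \cref{obs:I-P-sub}, that every $Q\in\P$ with $\Int(P)\subsetneq\Int(Q)$ has $\xi\in\Ih{Q}$ and hence lies in $\P_{|\xi}$; consequently the ancestors of $P$ (and of $P'$) are the same whether taken in $\P$ or in $\P_{|\xi}$. Assuming $\varpi(S',P')=0$ and $\varpi(S,P)=1$, \cref{prop:parity-inclusion} gives $\Int(P)\cap\Int(P')=\emptyset$, so neither polygon is an ancestor of the other. By \cref{lem:psi}, for every $P''\notin\{P,P'\}$ the two containments $\Int(P)\subsetneq\Int(P'')$ and $\Int(P')\subsetneq\Int(P'')$ are equivalent, so $P$ and $P'$ possess exactly the same set of proper superpolygons. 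Since $\P$ is overlap-free, these superpolygons form a chain under inclusion, whence $P$ and $P'$ share the same smallest proper superpolygon --- their common parent --- or have none at all and are both roots; in either case they are siblings in $\NF{\P_{|\xi}}$. For the converse, siblings are incomparable, and overlap-freeness upgrades incomparability to $\Int(P)\cap\Int(P')=\emptyset$, so \cref{prop:parity-inclusion} returns the required parities.

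The main obstacle is exactly this translation from set inclusion/disjointness to the parent/child/sibling structure, because an inclusion $\Int(P)\subsetneq\Int(P')$ need not be a covering relation and disjointness alone does not force a common parent. The crucial leverage is \cref{lem:psi}: its content --- that $P$ and $P'$ lie inside precisely the same polygons of $\P$ --- is simultaneously what excludes intermediate polygons in the child cases and what forces a shared parent (or shared rootness) in the sibling case, after which overlap-freeness reduces everything to an elementary chain argument.
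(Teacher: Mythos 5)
Your proof is correct and follows essentially the same route as the paper's: reduce all three equivalences to \cref{prop:parity-inclusion} via $\lessdot_{\xi}$-adjacency, then use \cref{lem:psi} to exclude intermediate polygons in the child cases and to force a common parent (or common rootness) in the sibling case, with the converses following from the mutual exclusivity of the three parity patterns. The only cosmetic difference is in Statement~1, where you argue directly that $P$ and $P'$ have identical sets of proper superpolygons and hence the same parent, whereas the paper derives a contradiction from an assumed asymmetry; both rest on exactly the same invocation of \cref{lem:psi}.
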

\begin{proof}
  Let $\P$ be a set of polygons and $(S,P),(S',P')\in \SPx{\P}{\xi}$.
  Moreover, suppose the $\lessdot_{\xi}$-predecessor of $(S,P)$ is $(S',P')$ 
  and $P\ne P'$.  In particular, $(S',P')$ and $(S,P)$ are
  $\lessdot_{\xi}$-adjacent.  By \cref{prop:parity-inclusion}, precisely
  one the cases holds: \textit{(i)}~$\pari{S'}{P'}=0$ and $\pari{S}{P}=1$,
  or \textit{(ii)}~$\pari{S}{P}=\pari{S'}{P'}=1$, or
  \textit{(iii)}~$\pari{S}{P}=\pari{S'}{P'}=0$.  Hence, we consider first
  the \emph{if} directions for all Statements~(1) to (3).
  
  In Case~\textit{(i)}, \cref{prop:parity-inclusion}~(1) implies that
  $\Int(P)\cap \Int(P')=\emptyset$.  
  Thus, neither $P\preceq_{ \NF{\P_{|\xi}} }P'$ nor
  $P'\preceq_{ \NF{\P_{|\xi}} }P$ can hold.
  Now, assume for contradiction that $P$
  and $P'$ are \emph{not} siblings in $\NF{\P_{|\xi}}$; i.e., $P$ and $P'$
  are neither both roots nor have a common parent in $\NF{\P_{|\xi}}$.
  Hence, we may assume w.l.o.g.\ that $P$ is not a root in
  $\NF{\P_{|\xi}}$.
  Thus, there is a $P''\in \P\setminus\{P,P'\}$ that is
  the parent of $P$ in $\NF{\P_{|\xi}}$.  In particular, $\Int(P)\subsetneq
  \Int(P'')$.  This, together with \Cref{lem:psi}, implies
  $\Int(P')\subsetneq \Int(P'')$.  Since by assumption, $P''$ cannot be the
  parent of $P'$ in $\NF{\P_{|\xi}}$, we have a $P'''\in
  \P\setminus\{P,P'\}$ such that $\Int(P')\subsetneq \Int(P''')\subsetneq
  \Int(P'')$.  Hence, \Cref{lem:psi} implies $\Int(P)\subsetneq
  \Int(P''')\subsetneq \Int(P'')$; a contradiction to the assumption that
  $P''$ is the parent of $P$ in $\NF{\P_{|\xi}}$.  Therefore, $P$ and $P'$
  are siblings in $\NF{\P_{|\xi}}$.
 
  In Case~\textit{(ii)}, \cref{prop:parity-inclusion}~(2) implies that
  $\Int(P)\subsetneq \Int(P')$.  Now, assume for contradiction that $P$ is
  \emph{not} a child of $P'$ in $\NF{\P_{|\xi}}$.  Hence, there must be a
  $P''\in \P\setminus\{P,P'\}$ such that $\Int(P)\subsetneq
  \Int(P'')\subsetneq\Int(P')$.  Hence, $\Int(P)\subsetneq \Int(P'')$ and
  $\Int(P')\nsubseteq \Int(P'')$, together with \cref{lem:psi}, yields a
  contradiction.  Hence, $P$ is a child of $P'$ in $\NF{\P_{|\xi}}$.

  In Case~\textit{(iii)}, \cref{prop:parity-inclusion}~(3) implies that
  $\Int(P')\subsetneq \Int(P)$.  One can show analogously to
  Case~\textit{(ii)} that $P'$ is a child of $P$ in $\NF{\P_{|\xi}}$.

  Conversely, we show the \emph{only-if} directions for Statements~(1)
  to~(3) by contraposition, and apply the fact that one of the cases
  \textit{(i)}, \textit{(ii)} and \textit{(iii)} needs to be satisfied.
  Assume that Case~\textit{(i)} is not satisfied.  Hence,
  Case~\textit{(ii)} or Case~\textit{(iii)} must hold and, as shown above,
  $P$ is a child of $P'$ in $\NF{\P_{|\xi}}$ or $P'$ is a child of $P$ in
  $\NF{\P_{|\xi}}$; and thus, $P$ and $P'$ cannot be siblings in
  $\NF{\P_{|\xi}}$.  Thus, Statement~(1) is satisfied. By similar arguments
  one shows that Statement~(2) and~(3) are satisfied.
\end{proof}

\begin{corollary} \label{cor:doit}
  Let $\P$ be a set of polygons and let $(S,P)\in\SPx{\P}{\xi}$ be
  $\lessdot_{\xi}$-minimal w.r.t.\ the polygon $P$. Then, exactly one the
  following three statements is true:
  \begin{enumerate}[nolistsep,noitemsep]
    \item $(S,P)$ is $\lessdot_{\xi}$-minimal in $\SPx{\P}{\xi}$ and
      $P$ is a root in $\NF{\P_{|\xi}}$;
    \item The $\lessdot_{\xi}$-predecessor $(S',P')$ of $(S,P)$ satisfies
      $\pari{S'}{P'}=1$ and $P$ is a child of $P'$;
    \item The $\lessdot_{\xi}$-predecessor $(S',P')$ of $(S,P)$ satisfies
      $\pari{S'}{P'}=0$ and $P$ is a sibling of $P'$.
  \end{enumerate}
\end{corollary}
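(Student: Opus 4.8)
The plan is to combine \Cref{cor:first-pi=1}, \Cref{lem:minmaxP}, and \Cref{lem:Tchildsib}, splitting into two cases according to whether $(S,P)$ admits a $\lessdot_{\xi}$-predecessor in $\SPx{\P}{\xi}$. The single fact that drives everything is this: since $(S,P)$ is $\lessdot_{\xi}$-minimal w.r.t.\ $P$, \Cref{cor:first-pi=1} gives $\pari{S}{P}=1$. This forbids Case~(3) of \Cref{lem:Tchildsib} (which would require $\pari{S}{P}=0$), and it is consistent with \Cref{prop:parity-inclusion}, which rules out $\pari{S'}{P'}=1$ together with $\pari{S}{P}=0$ for any $\lessdot_{\xi}$-adjacent pair.

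\textbf{No predecessor.} First I would treat the case in which there is no $(S',P')\in\SPx{\P}{\xi}$ with $(S',P')\lessdot_{\xi}(S,P)$. Then $(S,P)$ is $\lessdot_{\xi}$-minimal in all of $\SPx{\P}{\xi}$, and \Cref{lem:minmaxP} yields directly that $P$ is a root of $\NF{\P_{|\xi}}$; this is exactly Statement~(1).

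\textbf{A predecessor exists.} Let $(S',P')$ be the $\lessdot_{\xi}$-predecessor of $(S,P)$, which is well-defined by \Cref{rem:unique-pred}. The crucial step here is to observe that $P'\ne P$: because $(S,P)$ is $\lessdot_{\xi}$-minimal w.r.t.\ $P$, no pair $(S'',P)$ with $S''\ne S$ lies $\lessdot_{\xi}$-below $(S,P)$, so the predecessor must belong to a different polygon. With $P\ne P'$ in hand, I can invoke \Cref{lem:Tchildsib}. Since $\pari{S}{P}=1$, only its Statements~(1) and~(2) are admissible: if $\pari{S'}{P'}=1$, then $P$ is a child of $P'$, which is Statement~(2) of the corollary; and if $\pari{S'}{P'}=0$, then $P$ and $P'$ are siblings, which is Statement~(3) of the corollary.

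Finally I would establish exactness. The trichotomy ``no predecessor'' / ``predecessor with $\pari{S'}{P'}=1$'' / ``predecessor with $\pari{S'}{P'}=0$'' is manifestly mutually exclusive and exhaustive, so precisely one of the three statements holds. The main obstacle is the short but essential observation that the predecessor lies in a different polygon $P'\ne P$, since this is exactly what licenses the application of \Cref{lem:Tchildsib}; everything else is routine case-checking over the two admissible parities of $(S',P')$.
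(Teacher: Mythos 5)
Your proof is correct and follows essentially the same route as the paper's: both split on whether a $\lessdot_{\xi}$-predecessor exists, use \Cref{lem:minmaxP} for the root case, note $P'\ne P$ from $\lessdot_{\xi}$-minimality w.r.t.\ $P$, derive $\pari{S}{P}=1$ from \Cref{cor:first-pi=1}, and then read off the child/sibling dichotomy from \Cref{lem:Tchildsib}. No issues.
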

\begin{proof}
    In the first case, $(S,P)$ has no $\lessdot_{\xi}$-predecessor in
    $\SPx{\P}{\xi}$, and \Cref{lem:minmaxP} implies that $P$ is a root in
    $\NF{P_{\xi}}$. Otherwise, $(S,P)$ has a unique
    $\lessdot_{\xi}$-predecessor $(S',P')$, which satisfies $P'\ne P$ since
    $(S,P)$ is $\lessdot_{\xi}$-minimal w.r.t.\ $P$ by assumption. In
    particular, \Cref{cor:first-pi=1} implies $\pari{S}{P}=1$. Therefore,
    we conclude by \Cref{lem:Tchildsib} that $P$ is as child of $P'$ if
    $\pari{S'}{P'}=1$ and $P$ is a sibling if $\pari{S'}{P'}=0$. Clearly,
    the three alternatives are mutually exclusive and cover all possible
    cases.
\end{proof}

\Cref{cor:doit} can be used to construct the nesting forest
$\NF{\P_{|\xi}}$ by means of a single traversal of $\SPx{\P}{\xi}$ in
$\lessdot_{\xi}$-order. A forest is uniquely determined by the map
$\parent\colon V\to V\cup\{\emptyset\}$ that marks each root $r\in V$ by
$\parent(r)=\emptyset$ and assigns to every other vertex $v$ its parent. 

\begin{lemma} \label{lem:doit-parent}
  Let $\P$ be a set of polygons.  The following procedure correctly
  determines the parent-function for all vertices $P$ in $\NF{\P_{|\xi}}$:
  \begin{itemize}[noitemsep,leftmargin=*]
  \item[] \textbf{For all} $(S,P)\in\SPx{\P}{\xi}$ from small to
    large w.r.t.\ $\lessdot_{\xi}$ \textbf{do}
    \begin{enumerate}[noitemsep,leftmargin=*]
    \item[] \textbf{If} $(S,P)$ is 	$\lessdot_{\xi}$-minimal  w.r.t.\ $P$ 
      \textbf{then}
      \begin{enumerate}[noitemsep]%[nolistsep]%,noitemsep]
      \item \textbf{If} $(S,P)$ is $\lessdot_{\xi}$-minimal in 
	$\SPx{\P}{\xi}$ \textbf{then} $\parent(P)\gets\emptyset$ 
      \item \textbf{If} the $\lessdot_{\xi}$-predecessor $(S',P')$ of $(S,P)$
	in $\SPx{\P}{\xi}$ satisfies
        $\pari{S'}{P'}=1$, \textbf{then} $\parent(P)\gets P'$.
      \item \textbf{If} the $\lessdot_{\xi}$-predecessor $(S',P')$ of $(S,P)$ 
        in $\SPx{\P}{\xi}$ satisfies $\pari{S'}{P'}=0$, 
        \textbf{then} $\parent(P)\gets \parent(P')$. 
      \end{enumerate}	
    \end{enumerate}
  \end{itemize}
\end{lemma}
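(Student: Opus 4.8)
The plan is to prove correctness by induction along the $\lessdot_\xi$-order in which the procedure processes the pairs of $\SPx{\P}{\xi}$, using \Cref{cor:doit} as the main structural tool. Two preliminary observations set the stage. First, every polygon $P\in\P_{|\xi}$ occurs in at least one pair of $\SPx{\P}{\xi}$: since $\xi\in\Ih{P}$, some non-vertical edge $e\in E(P)$ satisfies $\xi\in\Ih{e}$, and by \Cref{cor:edge-uniqueS} this $e$ lies in a unique maximal outstretched segment $S$ with $\xi\in\Ih{e}\subseteq\Ih{S}$, so $(S,P)\in\SPx{\P}{\xi}$. Because $\lessdot_\xi$ is a strict total order, $P$ therefore possesses a unique $\lessdot_\xi$-minimal pair w.r.t.\ $P$, and the procedure assigns $\parent(P)$ exactly once, namely when this minimal pair is processed. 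Second, recall that in a rooted forest two siblings either share a common parent or are both roots; under the convention $\parent(\cdot)=\emptyset$ for roots, the equality $\parent(P)=\parent(P')$ holds for any pair of siblings $P,P'$.

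For a pair $(S,P)$ that is $\lessdot_\xi$-minimal w.r.t.\ $P$, \Cref{cor:doit} guarantees that exactly one of its three alternatives holds, and these line up with the three branches of the procedure. Branch~1 is immediate: if $(S,P)$ is $\lessdot_\xi$-minimal in all of $\SPx{\P}{\xi}$, then \Cref{cor:doit}(1) says $P$ is a root of $\NF{\P_{|\xi}}$, so the assignment $\parent(P)\gets\emptyset$ is correct. Branch~2 is equally direct: if the $\lessdot_\xi$-predecessor $(S',P')$ satisfies $\pari{S'}{P'}=1$, then \Cref{cor:doit}(2) says $P$ is a child of $P'$, so $\parent(P)\gets P'$ is correct.

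The only delicate point, which I expect to be the main obstacle, is branch~3. There the predecessor $(S',P')$ satisfies $\pari{S'}{P'}=0$, and \Cref{cor:doit}(3) tells us only that $P$ is a sibling of $P'$; the procedure responds by copying $\parent(P)\gets\parent(P')$. Correctness of this copy rests on two things. On the one hand, being siblings forces $\parent(P)=\parent(P')$ by the preliminary observation, so the intended value is indeed $\parent(P')$. On the other hand, I must ensure this value has already been computed when $(S,P)$ is reached, which is exactly where the induction enters: the $\lessdot_\xi$-minimal pair $(\widehat{S'},P')$ w.r.t.\ $P'$ satisfies $(\widehat{S'},P')\lessdot_\xi(S',P')$ or $(\widehat{S'},P')=(S',P')$, and since $(S',P')\lessdot_\xi(S,P)$, the pair $(\widehat{S'},P')$ is processed strictly before $(S,P)$. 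By the induction hypothesis $\parent(P')$ was then set to the correct parent of $P'$ in $\NF{\P_{|\xi}}$, whence $\parent(P)\gets\parent(P')$ assigns the correct value. Since the three branches are exhaustive and mutually exclusive by \Cref{cor:doit}, induction over the $\lessdot_\xi$-order completes the argument.
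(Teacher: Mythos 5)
Your proposal is correct and follows essentially the same route as the paper's proof: both rest on \Cref{cor:doit} for the three branches and on the observation that in branch~3 the $\lessdot_{\xi}$-minimal pair w.r.t.\ $P'$ has already been processed, so $\parent(P')$ is available; your explicit induction along $\lessdot_{\xi}$ merely packages the paper's remark about chains of sibling assignments more formally. The only addition is your (welcome but minor) check via \Cref{cor:edge-uniqueS} that every $P\in\P_{|\xi}$ actually appears in $\SPx{\P}{\xi}$.
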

\begin{proof}
  The unique $\lessdot_{\xi}$-minimal segment $(S_0,P_0)\in\SPx{\P}{\xi}$
  identifies a root $\rho_{0}$ of $\NF{\P_{|\xi}}$ by \cref{cor:doit}~(1),
  and thus $\parent(\rho_{0})$ is correctly set to $\emptyset$.
  Then, every other element of $\SPx{\P}{\xi}$ has a unique predecessor;
  say $(S',P')$.  If $\pari{S'}{P'}=1$, then by \cref{cor:doit}~(2) we
  have $\parent(P)=P'$, i.e., it is correctly assigned in Case (b).

  Otherwise, $\pari{S'}{P'}=0$ implies by \cref{cor:doit}~(3) that $P$ is a
  sibling of $P'$, and thus $\parent(P)=\parent(P')$.  Since we
  traverse $\SPx{\P}{\xi}$ in $\lessdot_{\xi}$-order, the
  $\lessdot_{\xi}$-minimal pair $(\widetilde{S'},P')$ w.r.t.\ to $P'$ has
  been processed before $(S,P)$. Since every sequence of sibling
  relationships starts either with the root $P_0$ or with a polygon $P_1$
  for which $\parent(P_1)$ is determined directly according to Case~(b) it
  follows that $\parent(P')$ is already known explicitly when $(S,P)$ is
  processed. Therefore, $\parent(P)\gets\parent(P')$ correctly assigns
  the information on the parent of $P$.  A single traversal of
  $\SPx{\P}{\xi}$ in $\lessdot_{\xi}$-order is therefore sufficient
  to determine $\parent(P)$ for all $P\in\P_{|\xi}$.
\end{proof}

It remains to show that $\NF{\P_{|\xi}}$ can be extended to the complete
forest $\NF{\P}$. The following result shows that this can be achieved by
a set of ``sweep-lines'' $\xi$ such that every polygon is met at least
once by a sweep-line.
\begin{proposition} \label{thm:main1}
  The nesting forest $\NF{\P}$ is computed correctly by applying the
  procedure outlined in \Cref{lem:doit-parent} for a finite set of values
  $X\subset\mathbb{R}$ such that for every $P\in\P$ there is $\xi\in
  \Ih{P}\cap X$. In particular, it suffices to choose 
  $X\coloneqq \{\min(\Ih{P}) \mid P\in\P\}$. 
\end{proposition}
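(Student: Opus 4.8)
The plan is to reduce the global correctness to the per–sweep-line correctness already established in \Cref{lem:doit-parent}, by showing that each partial forest $\NF{\P_{|\xi}}$ agrees with the global forest $\NF{\P}$ on the parent relation for all polygons it contains. The decisive structural fact is that $\P_{|\xi}$ is closed under taking ancestors in $\NF{\P}$: if $P\in\P_{|\xi}$ and $\Int(P)\subsetneq\Int(P')$ for some $P'\in\P$, then \Cref{obs:I-P-sub} gives $\Ih{P}\subseteq\Ih{P'}$, so $\xi\in\Ih{P}\subseteq\Ih{P'}$ and hence $P'\in\P_{|\xi}$. Thus every ancestor of a polygon met by the sweep-line at $\xi$ is itself met by that sweep-line.

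First I would prove the key claim: for every $P\in\P_{|\xi}$ one has $\parent_{\NF{\P_{|\xi}}}(P)=\parent_{\NF{\P}}(P)$. If $P$ is a root of $\NF{\P}$, there is no $P''\in\P$ with $\Int(P)\subsetneq\Int(P'')$; a fortiori there is none in $\P_{|\xi}\subseteq\P$, so $P$ is also a root of $\NF{\P_{|\xi}}$. Otherwise let $P'=\parent_{\NF{\P}}(P)$, so $\Int(P)\subsetneq\Int(P')$ with no $P''\in\P$ satisfying $\Int(P)\subsetneq\Int(P'')\subsetneq\Int(P')$. By the ancestor-closure above, $P'\in\P_{|\xi}$, and since $\P_{|\xi}\subseteq\P$ there is likewise no intermediate element inside $\P_{|\xi}$. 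Hence $P'$ is the inclusion-minimal proper superset of $\Int(P)$ within $\P_{|\xi}$, that is, $\parent_{\NF{\P_{|\xi}}}(P)=P'$. This establishes the claim.

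Combining the claim with \Cref{lem:doit-parent}, a single run of the procedure for a fixed $\xi$ assigns to every $P\in\P_{|\xi}$ the value $\parent_{\NF{\P_{|\xi}}}(P)=\parent_{\NF{\P}}(P)$, which is independent of the particular $\xi$ chosen as long as $\xi\in\Ih{P}$. Consequently, if $X$ satisfies the covering hypothesis that for every $P\in\P$ there is some $\xi\in X\cap\Ih{P}$, then $P\in\P_{|\xi}$ for that $\xi$ and the corresponding run computes $\parent(P)$ correctly; moreover, any two runs that both assign $\parent(P)$ agree by the claim, so the assignments are mutually consistent and together furnish $\parent_{\NF{\P}}$ on all of $\P$. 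Since a forest is determined uniquely by its parent map, the resulting object is exactly $\NF{\P}$.

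Finally I would verify that $X\coloneqq\{\min(\Ih{P})\mid P\in\P\}$ meets the covering hypothesis. As a polygon is a simple closed curve enclosing a non-empty bounded region, its corners cannot all lie on one vertical line, so $\min_{v\in V(P)}v_x<\max_{v\in V(P)}v_x$ and therefore $\Ih{P}=[\min_{v}v_x,\max_{v}v_x)\neq\emptyset$ with left endpoint $\min(\Ih{P})=\min_{v}v_x\in\Ih{P}$. Hence $\min(\Ih{P})\in X\cap\Ih{P}$ for every $P\in\P$, the hypothesis holds, and $X$ is finite because $\P$ is. The main obstacle is the key claim, namely recognising that the entire ancestor chain of any polygon intercepted by a sweep-line is intercepted by that same line; once the ancestor-closure of $\P_{|\xi}$ is in hand, the gluing of the local parent assignments into the global forest is routine.
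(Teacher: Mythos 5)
Your proposal is correct and follows essentially the same route as the paper: both arguments rest on the ancestor-closure of $\P_{|\xi}$ derived from \Cref{obs:I-P-sub}, which yields $\parent_{\NF{\P_{|\xi}}}(P)=\parent_{\NF{\P}}(P)$ and hence lets the local runs of \Cref{lem:doit-parent} be glued into $\NF{\P}$. Your write-up merely spells out in more detail the steps the paper compresses (the root case, the absence of intermediate polygons in the subfamily, and the non-emptiness of $\Ih{P}$ for the choice $X=\{\min(\Ih{P})\mid P\in\P\}$).
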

\begin{proof}
  By \cref{obs:I-P-sub}, $\Int(P)\subseteq \Int(P')$ implies
  $\Ih{P}\subseteq \Ih{P'}$.  Thus, if $\xi\in \Ih{P}$, then $\xi\in
  \Ih{P'}$ for all ancestors of $P$ in $\parent_{\NF{\P}}(P)$, and thus
  $\parent_{\NF{\P}}(P)=\parent_{\NF{\P_{|\xi}}}(P)$ for all $\xi\in
  \Ih{P}$. Since, by assumption, a sweep-line $\xi$ is employed such that
  $\xi\in \Ih{P}$ for every $P\in\P$, $\parent(P)$ is correctly determined
  for all $P\in\P$. In particular, therefore, it suffices for each polygon
  $P$ to consider only the ``first appearance'' of $P$, i.e.,
  $\min(\Ih{P})$.
\end{proof}

\Cref{thm:main1} suggests to ``sweep'' along the $x$-axis. The consistence
of the $\lessdot_{\xi}$-order for different values of $\xi$,
\Cref{lem:bel-iff-lessdot}, furthermore, indicates that it is not necessary
to determine the $\lessdot_{\xi}$-order again for each $\xi$.  Instead, the
order $\lessdot_{\xi}$ can be reused for the next position $\xi'$ of the
sweep-line. However, since $\SPx{\P}{\xi}$ and $\SPx{\P}{\xi'}$ differ, it
will be necessary to update $\SPx{\P}{\xi}$ by adding the the maximally
outstretched segments $(S,P)$ with $\xi<\min(S)_x\le\xi'$ and by removing
those with $\max(S)_x\le\xi'$.  Note that the removal is necessary since
otherwise the $\lessdot_{\xi'}$-predecessor cannot be computed
correctly. Taken together, these observations imply that it suffices to
consider as sweep-line positions exactly the $x$-coordinates of the
terminal vertices of the maximal outstretched segments, i.e., the set
$\{\xi \mid \xi = \min(S)_x \text{ or } \xi=\max(S)_x \textnormal{ with
}(S,P)\in\SP{\P}\}$ and to update $\SPx{\P}{\xi}$ and $\lessdot_{\xi}$
exactly at these positions by inserting $(S,P)$ at $\xi=\min(S)_x$ and
removing $(S,P)$ at $\xi=\max(S)_x$. Recalling, for fixed $\xi$, insertion
must follow the $\lessdot_{\xi}$ order to ensure that the $\lessdot_{\xi}$
predecessor function is evaluated correctly. This defines an order $\ble$
in which maximal outstretched segments have to be inserted:
\begin{definition}
  Let $\P$ be a set of polygons. Then, we define the \emph{insertion order}
  $\ble$ on $\SP{\P}$ by setting $(S,P)\ble(S',P')$ whenever
  \begin{enumerate}[nolistsep,noitemsep]
  \item $\min(S)_x<\min(S')_x$, or
  \item $\min(S)_x=\min(S')_x$ and $(S,P)\lessdot_{\min(S)_x}(S',P')$.
  \end{enumerate}
\end{definition}
Since $\min(S)_x=\min(S')_x$ implies $(S,P),(S',P')\in\SPx{\P}{\min(S)_x}$
and $\lessdot_{\xi}$ is a strict total order in $\SPx{\P}{\xi}$, it is
clear that~$\ble$ is indeed a strict total order on $\SP{\P}$.
\begin{algorithm}[tbp]
  \caption{\texttt{NestingForest($\P$)}}
  \label{alg:nesting}
  \begin{algorithmic}[1]
    \Require Set $\P$ of overlap-free polygons
    \Ensure  Nesting forest $\NF{P}$
    \For {\textbf{each} $P\in\P$}
       \State compute maximal outstretched segments $S$ of $P$ and 
              determine $\varpi(S)$ \label{lin:mos}
       \State compute $\area{P}$ \label{lin:area}
    \EndFor
    \State $\mathcal{L} \leftarrow$ \textbf{sort} 
           $\SP{\P}$ w.r.t.\ insertion order $\ble$
           \Comment{insertion list} \label{lin:io}
    \State $\first(S,P)\gets true$ whenever
           $(S,P)\in \mc{L}$ is $\ble$-minimal w.r.t.\ $P$ \label{lin:fp}
    \State $\mathcal{R} \leftarrow$ \textbf{sort} 
           $\SP{\P}$ w.r.t.\ increasing values of $\max(S)_x$
           \Comment{removal list} \label{lin:do}
    \State $\mathcal{R}\cup \mathcal{L} \leftarrow$ \textbf{merge}
           $\mc{L}$ and $\mc{R}$
           w.r.t.\ $\max(S)_x$ for $(S,P)\in\mathcal{R}$ and 
           $\min(S')_x$ for $(S',P')\in\mathcal{L}$ \label{lin:merge}
    \State \BST$\leftarrow\emptyset$ \label{lin:BST-init}
           \Comment{initialize balanced search tree for $\lessdot_{\xi}$}
    \For{ \textbf{each} $(S,P)$ in the order of $\mathcal{R}\cup\mathcal{L}$}
      \label{lin:for}
      \If {$(S,P)$ taken from $\mathcal{R}$} 
      	delete $(S,P)$ from \BST \label{lin:BST-del}
      \Else{} 
        insert $(S,P)$ into \BST \textbf{and} \label{lin:BST-ins}
        \Comment{$(S,P)$ taken from $\mathcal{L}$}
        \If {$\first(S,P)$} \label{lin:if-first}
          \If {$(S,P)$ is $\lessdot_{\xi}$-minimum in \BST}
            $\parent(P)\leftarrow\emptyset$
            \Comment{insert $P$ into $F$ as root of a subforest}
            \label{lin:if-sec}
          \Else{} 
            find $\lessdot_{\xi}$-predecessor $(S',P')$
            of $(S,P)$ in \BST \textbf{and} \label{lin:BST-search}
            \Comment{$(S,P)$ has a $\lessdot_{\xi}$ predecessor}
            \If {$\pari{S'}{P'}=1$}
              $\parent(P)\leftarrow P'$ \label{lin:if-third}
            \Else  
              { }$\parent(P)\leftarrow \parent(P')$
              \label{lin:endfor}  
              \Comment{$P$ and $P'$ are siblings}
              \label{lin:if-third2}
            \EndIf
          \EndIf               
        \EndIf 
      \EndIf 
    \EndFor  
    \State \Return $\mathscr{F}$ \Comment{as defined by $\parent(\,\cdot\,)$}
  \end{algorithmic}
\end{algorithm}
Now, combining the considerations at the beginning of this section and the
results above, we arrive at the algorithm summarized in \Cref{alg:nesting}.
\begin{theorem}\label{lem:alg-correct}
  \Cref{alg:nesting} correctly computes the nesting forest of a set $\P$ of
  overlap-free polygons. 
\end{theorem}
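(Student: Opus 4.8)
The plan is to read \Cref{alg:nesting} as a line-sweep implementation of the procedure of \Cref{lem:doit-parent}, invoked at precisely the sweep positions $\min(\Ih{P})$ singled out in \Cref{thm:main1}. Accordingly I would split correctness into three claims: (i)~a loop invariant asserting that immediately after each insertion event at a coordinate $\xi$, the balanced search tree \BST stores exactly the set $\SPx{\P}{\xi}$ in $\lessdot_{\xi}$-order; (ii)~that the predecessor query and the minimality test performed for a first-appearance segment return its genuine $\lessdot_{\xi}$-predecessor in $\SPx{\P}{\xi}$; and (iii)~that the assignments these trigger reproduce $\parent_{\NF{\P}}$. Claims (i) and (ii) reduce each $\parent$-assignment to a case of \Cref{cor:doit}, while (iii) closes the argument through \Cref{thm:main1}.

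For (i) I would induct along the merged event sequence $\mathcal{R}\cup\mathcal{L}$. Since $\Ih{S}=[\min(S)_x,\max(S)_x)$ is right-half-open, the members of $\SPx{\P}{\xi}$ are exactly the segments already started ($\min(S)_x\le\xi$) but not yet ended ($\xi<\max(S)_x$); hence the merge processes removals (keyed by $\max(S)_x$) before insertions (keyed by $\min(S)_x$) at an equal coordinate, and with this convention the inductive step deletes and inserts precisely the segments leaving and entering $\SPx{\P}{\xi}$. The decisive ingredient is that \BST may keep a \emph{single} comparator for all values of $\xi$: by \Cref{lem:bel-iff-lessdot}, for any two segments whose $x$-intervals overlap, the truth of $(S',P')\lessdot_{\xi}(S,P)$ does not depend on the chosen $\xi\in\Ih{S}\cap\Ih{S'}$. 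Thus a node placed at its insertion time (comparing against the current $\xi=\min(S)_x\in\Ih{S}$) stays correctly positioned as the sweep advances, so no re-sorting is ever required and insertion in $\ble$-order is consistent with $\lessdot_{\xi}$.

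For (ii) the subtle point, which I expect to be the main obstacle, is that when a first-appearance $(S,P)$ is inserted, those members of $\SPx{\P}{\xi}$ that start at the \emph{same} coordinate $\xi$ but are $\ble$-\emph{larger} than $(S,P)$ have not yet been inserted, so the tree is not literally $\SPx{\P}{\xi}$ at that instant. This is harmless precisely because, by the tie-breaking clause of $\ble$, every such not-yet-inserted segment is $\lessdot_{\xi}$-\emph{above} $(S,P)$ and can therefore never be its $\lessdot_{\xi}$-predecessor, whereas every segment lying $\lessdot_{\xi}$-below $(S,P)$ has already been inserted. Consequently both the minimality test and the retrieved predecessor agree with $\SPx{\P}{\xi}$. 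Since $\first(S,P)$ marks exactly the $\ble$-minimal, hence $\lessdot_{\xi}$-minimal-w.r.t.-$P$, segment of $P$ and $\xi=\min(S)_x=\min(\Ih{P})\in\Ih{P}$, the three inner branches realize verbatim the cases of \Cref{cor:doit}: no predecessor makes $P$ a root, a $\lessdot_{\xi}$-predecessor $(S',P')$ with $\varpi(S',P')=1$ sets $\parent(P)=P'$, and one with $\varpi(S',P')=0$ sets $\parent(P)=\parent(P')$.

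Finally, for (iii) I would verify that the sibling branch $\parent(P)\gets\parent(P')$ never reads an undetermined value, arguing by induction over the $\ble$-order of first-appearance events. As $(S',P')$ lies in \BST we have $\min(\Ih{P'})\le\xi$; if this inequality is strict then $P'$ first appeared at an earlier event and $\parent(P')$ is already set, whereas if $\min(\Ih{P'})=\xi$ then the $\ble$-minimal segment of $P'$ is $\ble$-smaller than $(S,P)$ and was processed earlier, so $\parent(P')$ is again available---mirroring the termination argument in \Cref{lem:doit-parent}. By \Cref{thm:main1} we have $\parent_{\NF{\P_{|\xi}}}(Q)=\parent_{\NF{\P}}(Q)$ for every $\xi\in\Ih{Q}$, so the value written at $P$'s first appearance is its parent in the full forest; and because $\parent(P)$ is written only when $\first(S,P)$ holds and each polygon owns a unique $\ble$-minimal segment, it is set once and never overwritten. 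Taken together, the returned $\parent$-function equals $\parent_{\NF{\P}}$, i.e.\ \Cref{alg:nesting} outputs $\NF{\P}$.
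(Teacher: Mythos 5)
Your proposal is correct and follows essentially the same route as the paper's proof: maintain the \BST as $\SPx{\P}{\xi}$ in $\lessdot_{\xi}$-order (using the $\xi$-independence of the comparison from \Cref{lem:bel-iff-lessdot}), reduce each $\parent$-assignment at a first-appearance event to the cases of \Cref{cor:doit}/\Cref{lem:doit-parent}, and lift to $\NF{\P}$ via \Cref{thm:main1}. You are in fact somewhat more explicit than the paper on two subtleties it glosses over --- that the tree is not literally $\SPx{\P}{\xi}$ when a same-coordinate batch is only partially inserted, and that $\parent(P')$ is already defined when the sibling branch reads it --- which is a welcome refinement rather than a divergence.
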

\begin{proof}
  By definition of the insertion order $\ble$ and the ordered insertion
  lists $\mathcal{L}$ and $\mathcal{R}$, all $(S,P)$ with the same values
  $\min(S)_x=\xi$ are inserted consecutively into $\BST$ and all $(S,P)$
  with the same values $\max(S)_x=\xi$ are removed consecutively from
  $\BST$.  Moreover, for given $\xi$, removal happens before insertions.
  Therefore, $\BST$ correctly holds the order $\lessdot_{\xi}$ of
  $\SPx{\P}{\xi}$ before and after processing $\xi$. The insertion order
  $\ble$ prescribes that, for given $\xi$, the $\lessdot_{\xi}$-smaller
  elements $(S,P)\in\SPx{\P}{\xi}$ are inserted first. Thus, in particular,
  every $(S,P)$ is inserted after its $\lessdot_{\xi}$-predecessor
  $(S',P')$, and thus the $\lessdot_{\xi}$-predecessor is correctly
  determined in \BST. Moreover, for every polygon $P$, the $(S,P)$ the
  first segment to be processed is the one that has minimal value of
  $\min(S)_x$, and is $\lessdot_{\min(S)_x}$-minimal.  Since it is either
  $\lessdot_{\xi}$-minimal (and thus the polygon $P$ is the root of
  subforest of $F$) or the $\lessdot_{\xi}$-predecessor $(S',P')$ exists
  and belongs to a different polygon $P'\ne P$. In the latter case,
  \Cref{lem:doit-parent} implies that $P$ is either a child or a sibling
  $P'$, depending whether $\pari{S'}{P'}=1$ and $\pari{S'}{P'}=0$, which in
  turn implies $\parent(P)=P'$ or $\parent(P)=\parent(P')$, respectively.
  Thus, \Cref{alg:nesting} correctly identifies the parent of every
  $P\in\P$. Since $\NF{\P}$ is uniquely determined by its vertex set and
  parent function, \Cref{alg:nesting}, correctly computes the nesting
  forest $\NF{\P}$.
\end{proof}

Note that the insertion order $\ble$ and the lists $\mathcal{L}$ and
$\mathcal{R}$ of insertions and removal of $(S,P)$ do not need to be
represented explicitly. In practice, it suffices to determine the set
$X\coloneqq \bigcup_{(S,P)\in\SP{\P}}\{\min(S)_x,\max(S)_x\}$ of sweep-line
positions. For each $\xi\in X$, one first removes all $(S,P)$ with
$\max(S)_x=\xi$ from \BST, then sorts the $(S,P)$ with $\min(S)_x=\xi$
w.r.t.\ $\lessdot_{\xi}$ and proceeds to insert them in this order. Marking
a polygon $P$ as ``seen'' when it appears for the first time in
$\ble$-order can be used instead of precomputing $\first(S,P)$. Therefore,
it is possible to implement \Cref{alg:nesting} in a single pass of the
sweep-line position $\xi$ instead of precomputing the order
$\ble$. However, this does not affect the asymptotic running time.

Now, let us turn to analyzing the running time of \Cref{alg:nesting}.  In
the following, we write $m\coloneqq|\P|$ for the number of polygons,
$n\coloneqq \sum_{P\in\P}|V(P)|$ for the total number of vertices, and
$N\coloneqq |\SP{\P}|$ for the total number of maximal outstretched
segments.

\begin{lemma} \label{lem:area-effort}
  The total effort for computing $\area{P}$ for all $P\in\P$ is
  $\mc{O}(n)$.
\end{lemma}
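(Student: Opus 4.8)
The plan is to compute each area by the classical shoelace formula (Gauss's area formula) for simple polygons. Writing the corners of a polygon $P$, as encountered in a single traversal of the cycle $C_P$, as $p_1=(x_1,y_1),\ldots,p_k=(x_k,y_k)$ with $k=|V(P)|$ and indices read modulo $k$, one has
\begin{equation*}
  \area{P}=\frac{1}{2}\left|\sum_{i=1}^{k}\bigl(x_i\,y_{i+1}-x_{i+1}\,y_i\bigr)\right|.
\end{equation*}
This identity holds for every simple polygon, hence for each $P\in\P$; its correctness is standard and I would not reprove it.

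First I would note that, by our input convention, each $P$ is supplied explicitly as the cycle $C_P$ (in which every vertex has degree two) together with the coordinates $\varphi(v)$ of its corners. Starting from an arbitrary corner and following the incident edges consistently, a single walk around $C_P$ lists the corners $p_1,\ldots,p_k$ in cyclic order in $\mc{O}(|V(P)|)$ time. Evaluating the sum above then costs one multiplication, one subtraction, and one accumulation per summand, i.e.\ $\mc{O}(|V(P)|)$ arithmetic operations, followed by a single halving and absolute value. Thus the area of a single polygon $P$ is computed in $\mc{O}(|V(P)|)$ time.

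Finally, summing the per-polygon costs gives
\begin{equation*}
  \sum_{P\in\P}\mc{O}(|V(P)|)=\mc{O}\Bigl(\sum_{P\in\P}|V(P)|\Bigr)=\mc{O}(n),
\end{equation*}
using the definition $n=\sum_{P\in\P}|V(P)|$. I expect no genuine obstacle here: the only point requiring a word of care is that the shoelace formula presumes the corners in cyclic order, which is exactly what the linear traversal of $C_P$ provides, so the linear-time bound per polygon—and hence the overall $\mc{O}(n)$ bound—is immediate.
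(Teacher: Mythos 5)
Your proof is correct and follows essentially the same route as the paper, which likewise invokes the shoelace (Gauss/Surveyor's) formula and the $\mc{O}(|V(P)|)$ per-polygon cost before summing to $\mc{O}(n)$. The only difference is that you spell out the formula and the cyclic traversal explicitly where the paper cites references, which is harmless.
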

\begin{proof}
  The area of a simple polygon can be computed efficiently using the
  ``shoelace formula'', also known as ``Gau{\ss}'s area formula''
  \cite[p.\ 53]{GaussBd12} and ``Surveyor's formula'' \cite{Meister1769}.
  As shown in \cite{Braden1986}, the effort for a single polygon is
  $\mc{O}(|V(P)|)$, and thus the total effort is $\mc{O}(n)$. 
\end{proof}

\begin{lemma} \label{lem:mos-effort}
  For a set of polygons $\P$, the total effort for computing the set of
  maximal outstretched segments $\SP{\P}$ and the parity function
  $\pari{S}{P}$ for all $(S,P)\in\SP{\P}$ is $\mc{O}(n)$.
\end{lemma}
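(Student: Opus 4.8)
The plan is to establish the bound polygon by polygon: I will show that for a single polygon $P$ both its maximal outstretched segments and their parities can be obtained in $\mathcal{O}(|V(P)|)$ time, so that summing over all $P\in\P$ yields a total of $\mathcal{O}(\sum_{P\in\P}|V(P)|)=\mathcal{O}(n)$. Throughout I assume that $P$ is given as the cyclic sequence of its vertices and edges together with their coordinates, as stipulated in the input model.

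For the segments, the key observation is that, by \Cref{lem:O}, a segment is outstretched exactly when it is $x$-monotone (Condition~(O'.b)) with non-vertical terminal edges, and that by \Cref{cor:edge-uniqueS} every non-vertical edge of $P$ lies in a unique maximal outstretched segment. I would therefore perform a single walk around the cycle $C_P$, maintaining the current direction (increasing or decreasing $x$) in which the non-vertical edges advance and skipping over vertical edges, which neither advance $x$ nor fix a direction. A new maximal outstretched segment is started precisely at a local $x$-extremum, i.e.\ whenever the next non-vertical edge reverses the current direction; any vertical edges encountered between two consecutive non-vertical edges of opposite direction are split off as the separating vertical segment, so that the terminal edges of each emitted segment are non-vertical as required. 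That the arcs produced in this way are exactly the maximal outstretched segments follows because each such arc is $x$-monotone, hence outstretched, and cannot be extended without either making a terminal edge vertical or adjoining a non-vertical edge of the opposite direction, which would violate~(O); conversely, distinct arcs are separated by a direction reversal and thus cannot be merged. Since each vertex and edge is touched a constant number of times, the walk runs in $\mathcal{O}(|V(P)|)$.

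For the parities, I would exploit the fact, made precise by \Cref{prop:ibp} and \Cref{cor:ibp-}, that $\varpi(S)$ is completely determined by which vertical side of $S$ the interior $\Int(P)$ lies on locally: a point immediately above $S$ lies in $\Int(P)$ iff $\varpi(S)=0$, and a point immediately below iff $\varpi(S)=1$. First I compute the orientation of the boundary of $P$ (clockwise versus counterclockwise) from the sign of the signed area, which is available at no extra asymptotic cost from the shoelace computation of \Cref{lem:area-effort} and takes $\mathcal{O}(|V(P)|)$. For a consistently oriented boundary the interior lies on a fixed side of the direction of travel; combined with the direction (increasing or decreasing $x$) in which the traversal crosses a given maximal outstretched segment $S$ --- which is constant along $S$ because $S$ is $x$-monotone and, by \Cref{lem:minmaxS}, is traversed monotonically between its two distinct terminal vertices --- this fixes whether $\Int(P)$ is locally above or below $S$, and hence $\varpi(S)$, in $\mathcal{O}(1)$ per segment. (Alternatively, one may anchor the parity of one segment via \Cref{cor:maxpar=1} and propagate along the cyclic chain of consecutive segments, whose parities alternate; this is again linear.)

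The only genuinely delicate point is the bookkeeping for vertical edges in the decomposition: interior vertical edges must remain inside a segment while terminal ones must be split off, and one must ensure that the degenerate single-vertex segments arising at isolated $x$-extrema are emitted consistently. Since all remaining steps are plain linear scans and constant-time side determinations, and $N=|\SP{\P}|\le n$, the total effort for computing $\SP{\P}$ together with $\pari{S}{P}$ for all $(S,P)\in\SP{\P}$ is $\mathcal{O}(n)$.
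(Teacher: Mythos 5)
Your proposal is correct. The decomposition into maximal outstretched segments by a single walk around $C_P$ that starts a new segment at each direction reversal and splits off separating vertical runs is essentially the paper's own procedure (the paper additionally spells out the wrap-around step: when the starting vertex is not an $x$-extremum, the first and last partial arcs must be concatenated, and leading/trailing vertical edges stripped — a detail you should make explicit but which costs nothing asymptotically). For the parities, however, you take a genuinely different route. You determine, for each segment independently, whether $\Int(P)$ lies locally above or below it from the global orientation of the boundary (the sign of the shoelace area) combined with the $x$-direction in which the traversal crosses that segment; this is sound because each segment is $x$-monotone, so the crossing direction is constant along it, and because \Cref{prop:ibp} and \Cref{cor:ibp-} identify ``interior locally below/above $S$'' with $\varpi(S)=1$ and $\varpi(S)=0$, respectively. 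The paper instead anchors the parity at a vertex $\hat p$ of maximal $y$-coordinate, whose incident segment(s) have parity forced by \Cref{cor:first-pi=1} together with a slope comparison when $\hat p$ is shared, and then propagates alternating parities to consecutive segments via \Cref{cor:maxpar=1} — exactly the alternative you mention parenthetically. Your orientation argument assigns every parity locally and avoids the propagation step, but it imports the standard fact that the interior of a positively oriented simple polygon lies to the left of the direction of travel, which the paper never establishes; the anchor-and-propagate argument stays entirely within the paper's own lemmas. Both variants run in $\mathcal{O}(|V(P)|)$ per polygon and hence give the claimed $\mathcal{O}(n)$ total bound.
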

\begin{proof}
  Consider a polygon $P\in\P$. Starting at an arbitrary vertex $u\in V(P)$,
  set $S=\emptyset$ and traverse $P$ in the order of an arbitrarily chosen
  edge $e$ incident with $u$, and proceed as follows: If the $x$-coordinate
  decreases along $u$ keep adding edges with non-increasing $x$-coordinates
  to $S$ until the first edge with increasing $x$-coordinate is
  found. Otherwise, add edges with non-decreasing $x$-coordinates until the
  first edge with decreasing $x$-coordinate is encountered. Every time an
  edge with opposing directions along the $x$-coordinate is found, a new
  segment $S'$ is started. The procedure stops after $|V(P)|$ edges and
  vertices, when the starting point is $u$ encountered again.  The first
  segment $S$ and last segment $S^*$ are possibly incomplete, in which case
  they are part of the same maximal outstretched segment. If $S$ and $S^*$
  contain non-vertical edges with the same directions, we concatenate them
  at $u$.  Clearly, this can be done with $\mc{O}(|V(P)|)$
  effort. Otherwise, $S$ and $S^*$ are correspond to separate maximal
  outstretched segments. In our construction, $S$ may contain leading
  vertical edges, which can be removed in $\mc{O}(|V(P)|)$ time. Finally,
  all trailing vertical edges are removed from all segments, which
  obviously also can be done in $\mc{O}(|V(P)|$ effort. Thus, the maximal
  outstretched segments of $P$ can be computed in $\mc{O}(|V(P)|)$
  time. Hence, the total effort for computing $\SP{\P}$ is $\mc{O}(n)$.

  While traversing $P$, we can also keep the information which segment
  contains a vertex with maximal $y$-coordinate. Note that such a vertex
  $\hat p$ always exists and is contained in maximal outstretched segments
  $S$.  If the segment $(S,P)\in \SP{\P}$ with $\hat{p}\in V(S)$ is unique,
  then $(S,P)\in \SPx{\P}{\hat{p}_x}$ is minimal w.r.t.\ $P$, and thus, we
  conclude by \cref{cor:first-pi=1} that $\blw{S}{P}$.  Otherwise, $\hat p$
  is a terminal vertex of two maximal outstretched segments $S$ and $S'$
  with $\hat{p}\in V(S)\cap V(S')$.  If $\hat p=\min(S)=\min(S')$ with
  incident edges $e\in S$ and $e'\in S$' such that $\Delta(e)>\Delta(e')$,
  then $\varpi(S,P)=1$ and $\varpi(S,P')=0$.  If $\hat p=\max(S)=\max(S')$
  with incident edges $e\in S$ and $e'\in S$' such that
  $\Delta(e)<\Delta(e')$, then $\varpi(S,P)=1$ and $\varpi(S,P')=0$. The
  effort for determining $\hat p$, storing a pointer to its incident
  segment(s), determining whether $\hat p$ is a terminal vertex, and
  computing the slopes at each vertex is a constant-time overhead for each
  vertex during the traversal of $P$. The total effort for determining
  $\hat p$ and the parity of the incident segments is therefore
  $\mc{O}(|V(P)|)$.  Consecutive maximal outstretched segments are then
  given alternating parity (cf.\ \cref{cor:maxpar=1}), which clearly also
  requires only linear effort.  Thus, the total effort for assigning the
  parity is $\mc{O}(n)$.
\end{proof}

We store the maximal outstretched segments $(S,P)$ explicitly as
separate lists of their edges ordered with increasing $x$-coordinates
$\min(e)_x$. These lists will be required to identify the appropriate
edge $e\in E(S)$ for given $\xi$, i.e., the edge satisfying $\min(e)_x\le
\xi<\max(e)_x$ for $\xi\in \Ih{S}$.
Moreover, assuming that $\area{P}$ is precomputed and can be looked up in constant
time and $\slope{S}{\xi}$ can obviously be evaluated in constant time if
the appropriate edge $e \in E(S)$ with $\xi \in \Ih{e}$ is known, we observe
that each of the conditions for $\lessdot_{\xi}$ in \cref{def:ONFP} can be
checked in constant time:
\begin{obs} \label{obs:lessdotconst}
  After $\mc{O}(n)$ preprocessing effort, the effort to evaluate for
  $(S,P),(S',P')\in\SP{\P}$ with $\Ih{S}\cap \Ih{S'}\ne\emptyset$, whether
  $(S,P)\lessdot_{\xi} (S,P')$, $(S',P')\lessdot_{\xi} (S,P)$, or
  $(S,P)=(S',P')$ is $\mc{O}(1)$ for any $\xi\in \Ih{S}\cap \Ih{S'}$
  if the appropriate edges $e\in E(S)$ with $\xi \in \Ih{e}$ and 
  $e'\in E(S')$ with $\xi\in \Ih{e'}$ are known.
\end{obs}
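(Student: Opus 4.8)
The plan is to observe that, by \cref{def:ONFP}, deciding which of the three relations holds between $(S,P)$ and $(S',P')$ amounts to comparing a bounded number of scalar quantities, namely $\yVal{S}{\xi}$ versus $\yVal{S'}{\xi}$, $\slope{S}{\xi}$ versus $\slope{S'}{\xi}$, $\pari{S}{P}$ versus $\pari{S'}{P'}$, and $\area{P}$ versus $\area{P'}$. Hence it suffices to show that each of these four scalars can be obtained in $\mc{O}(1)$ time once a one-time $\mc{O}(n)$ preprocessing has been performed and the relevant edges $e\in E(S)$ and $e'\in E(S')$ are known.

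For the preprocessing I would invoke \cref{lem:mos-effort} and \cref{lem:area-effort}. The former produces all maximal outstretched segments together with the parity $\pari{S}{P}$ for every $(S,P)\in\SP{\P}$ in $\mc{O}(n)$ total time, and the latter computes $\area{P}$ for every $P\in\P$ in $\mc{O}(n)$ total time. Storing the parities and areas in arrays indexed by segment and polygon guarantees $\mc{O}(1)$-time lookup of any single parity or area value, which exhausts the claimed $\mc{O}(n)$ preprocessing budget. The remaining two scalars are computed on demand: given the edge $e=\{u,v\}\in E(S)$ with $\xi\in\Ih{e}$, the value $\yVal{S}{\xi}$ is the linear interpolation of $u_y$ and $v_y$ at $\xi$, and $\slope{S}{\xi}=(u_y-v_y)/(u_x-v_x)$; both are fixed arithmetic expressions evaluable in $\mc{O}(1)$ time, and likewise for the primed segment via $e'$.

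Finally I would walk through the decision cascade of \cref{def:ONFP}: compare $\yVal{S}{\xi}$ with $\yVal{S'}{\xi}$; on equality, compare $\slope{S}{\xi}$ with $\slope{S'}{\xi}$; on a further equality, branch on the parities and, when they agree, compare the areas. This is a fixed-depth sequence of a constant number of comparisons of scalars that are either precomputed or $\mc{O}(1)$-computable, so the whole evaluation runs in $\mc{O}(1)$ time; the three mutually exclusive outcomes $(S,P)\lessdot_{\xi}(S',P')$, $(S',P')\lessdot_{\xi}(S,P)$, and $(S,P)=(S',P')$ are distinguished exactly by the trichotomy underlying the strict total order $\lessdot_{\xi}$, with the equality case characterized by \cref{obs:lessdot-equal}. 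There is no serious obstacle here; the only point requiring care is that the preprocessing of \cref{lem:mos-effort,lem:area-effort} must store its output in a form supporting constant-time random access, which is achieved simply by indexing segments and polygons during their linear-time construction.
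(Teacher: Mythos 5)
Your proposal is correct and follows essentially the same reasoning the paper gives (in the remarks immediately preceding the observation): precompute parities via \cref{lem:mos-effort} and areas via \cref{lem:area-effort} in $\mc{O}(n)$ time, evaluate $\yVal{S}{\xi}$ and $\slope{S}{\xi}$ in constant time from the known edges, and then resolve the constant-depth case cascade of \cref{def:ONFP}, with equality handled by \cref{obs:lessdot-equal}. Nothing is missing.
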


\begin{lemma}
  For a set $\P$ of polygons, the set $\SP{\P}$ can be sorted in
  ``insertion order'' $\ble$ in $\mc{O}(N \log N)$ time.
\end{lemma}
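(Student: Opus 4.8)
The plan is to reduce the problem to sorting $N=|\SP{\P}|$ elements with a comparison-based routine, so that it suffices to show that each pairwise comparison under $\ble$ costs $\mc{O}(1)$ time after $\mc{O}(n)$ preprocessing. By definition, deciding $(S,P)\ble(S',P')$ first compares the scalars $\min(S)_x$ and $\min(S')_x$; this is $\mc{O}(1)$ once these values are stored alongside each segment. Only in the tie case $\min(S)_x=\min(S')_x=:\xi$ does the relation fall back on $\lessdot_{\xi}$, and I would evaluate this relation exactly at $\xi=\min(S)_x$.

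The key step is to argue that this tie-breaking evaluation is itself $\mc{O}(1)$. \Cref{obs:lessdotconst} already guarantees an $\mc{O}(1)$ evaluation of $\lessdot_{\xi}$ \emph{provided} the edges $e\in E(S)$ and $e'\in E(S')$ with $\xi\in\Ih{e}$ and $\xi\in\Ih{e'}$ are known. The crucial observation is that at $\xi=\min(S)_x=\min(S')_x$ these are precisely the terminal edges of $S$ and $S'$ incident to $\min(S)$ and $\min(S')$: for such a terminal edge $e$ of $S$ we have $\min(e)_x=\min(S)_x=\xi$, so $\xi\in\Ih{e}=[\xi,\max(e)_x)$ since terminal edges of outstretched segments are non-vertical, and uniqueness follows from Property~(O). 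Because the segments are stored as edge-lists ordered by increasing $\min(e)_x$, this terminal edge is simply the first list entry and is thus retrieved in $\mc{O}(1)$. With these edges in hand, \Cref{obs:lessdotconst} applies directly: $\yVal{S}{\xi}=\min(S)_y$ and $\yVal{S'}{\xi}=\min(S')_y$ are read off immediately, $\slope{S}{\xi}$ and $\slope{S'}{\xi}$ are the slopes of the retrieved terminal edges, and $\pari{S}{P}$, $\pari{S'}{P'}$, $\area{P}$, $\area{P'}$ are available from the preprocessing of \Cref{lem:mos-effort,lem:area-effort}. Hence every $\ble$-comparison is answered in $\mc{O}(1)$ time.

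Finally, I would apply any optimal comparison sort (e.g.\ merge sort) to the $N$ elements of $\SP{\P}$; it performs $\mc{O}(N\log N)$ comparisons, each of cost $\mc{O}(1)$ by the above, for a total of $\mc{O}(N\log N)$ sorting time in addition to the $\mc{O}(n)$ preprocessing. I do not anticipate a substantive obstacle here; the only point that genuinely requires care is the identification of the tie-breaking sweep position $\xi=\min(S)_x$ with the readily-available terminal edges, so that the ``edges known'' hypothesis of \Cref{obs:lessdotconst} is met in constant time rather than by searching within the segments.
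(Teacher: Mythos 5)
Your proposal is correct and follows essentially the same route as the paper: compare $\min(S)_x$ values in constant time, break ties via $\lessdot_{\min(S)_x}$ using \Cref{obs:lessdotconst}, and run a standard comparison sort on the $N$ elements. Your extra observation that at $\xi=\min(S)_x$ the ``appropriate edge'' required by \Cref{obs:lessdotconst} is just the first entry of the stored edge list is a helpful explicit justification of a detail the paper leaves to the reader (via its remark on storing segments as edge lists ordered by $\min(e)_x$), but it does not change the argument.
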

\begin{proof}
  The minima $\min(S)$ and their $x$-coordinates can be determined in
  constant time for each maximal outstretched segment $(S,P)$.  For all
  $(S,P)$ and $(S',P')$ a comparison w.r.t.\ $\ble$ entails a comparison
  the $\min(S)_x$ and $\min(S')_x$, requires only constant time, and, in
  the case of equality, a comparison w.r.t.\ $\lessdot_{\min(S)_x}$, which
  can be performed in constant time by \Cref{obs:lessdotconst}. Since
  $|\SP{\P}|=N$, the set $\SP{\P}$ can be sorted w.r.t.\ to $\ble$ in $\mc{O}(N
  \log N)$ time using a standard sorting algorithm, e.g.\ heap sort.
\end{proof}

A \emph{self-balancing binary search trees (\BST)} can be used to maintain
the order on $\SPx{\P}{\xi}$. Since $|\SPx{\P}{\xi}|\le N$, the \BST
guarantees search, insertion and deletion of entries in $\mc{O}(\log N)$
time, provided comparisons w.r.t.\ to the relevant total order can be
performed in constant time, see e.g.\ \cite[Sec.\ 6.2.3]{Knuth1998}.
\begin{theorem}\label{thm:alg1-anal}
  \Cref{alg:nesting} can be implemented to run in $\mc{O}(n+N \log N)$
  time and $\mc{O}(n)$ space for any set $\P$ of overlap-free polygons
  with $m=|\P|$, $n=\sum_{P\in\P}|V(P)|$ and $N=|\SP{\P}|$.
\end{theorem}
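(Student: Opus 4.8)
The plan is to tally the cost of \Cref{alg:nesting} stage by stage and add them up. The preprocessing in Lines~\ref{lin:mos}--\ref{lin:area} is handled directly by the two preceding lemmas: \Cref{lem:mos-effort} bounds the computation of $\SP{\P}$ and all parities $\varpi(S,P)$ by $\mc{O}(n)$, and \Cref{lem:area-effort} bounds all area computations by $\mc{O}(n)$. Sorting $\SP{\P}$ into the insertion list $\mc{L}$ (Line~\ref{lin:io}), which uses the order $\ble$, costs $\mc{O}(N\log N)$ by the sorting lemma above; sorting it into the removal list $\mc{R}$ (Line~\ref{lin:do}) is an ordinary sort by the single real key $\max(S)_x$ and likewise costs $\mc{O}(N\log N)$. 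The merge in Line~\ref{lin:merge} and the marking of $\first(S,P)$ in Line~\ref{lin:fp} are single linear passes in $\mc{O}(N)$.

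The crux is the main loop (Lines~\ref{lin:for}--\ref{lin:if-third2}). Each pair $(S,P)$ enters the $\BST$ exactly once (at $\xi=\min(S)_x$) and is deleted exactly once (at $\xi=\max(S)_x$), so there are $2N$ $\BST$-operations; each traverses a root-to-leaf path of length $\mc{O}(\log N)$ and the predecessor query of Line~\ref{lin:BST-search} adds another such path, so each event triggers $\mc{O}(\log N)$ comparisons w.r.t.\ $\lessdot_{\xi}$, for a total of $\mc{O}(N\log N)$ comparisons; the accompanying parent assignments are $\mc{O}(1)$ each. By \Cref{obs:lessdotconst}, an individual comparison costs $\mc{O}(1)$ \emph{once} the edges $e\in E(S)$ and $e'\in E(S')$ relevant at the current $\xi$ (i.e.\ with $\xi\in\Ih{e}$ and $\xi\in\Ih{e'}$) are known. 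The main obstacle is supplying these edges without rescanning: a fresh scan of an edge list per comparison would inflate the bound by a factor of up to $n$.

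I would resolve this by keeping, for every segment residing in the $\BST$, a pointer into its edge list (stored in increasing order of $\min(e)_x$) indicating the edge relevant at the current sweep position. Since the sweep positions are processed in increasing order, and since by \Cref{lem:bel-iff-lessdot} the $\lessdot_{\xi}$-order of any two coexisting segments is independent of $\xi$, each pointer need only ever advance forward. Whenever a comparison touches a segment at position $\xi$, its pointer is advanced until it reaches the edge $e$ with $\xi\in\Ih{e}$, and the steps taken are charged to the edges passed over; once the pointer is at $\xi$ it remains correct for every further comparison at the same $\xi$, so no edge is charged twice. Because the maximal outstretched segments of a single polygon are edge-disjoint by \Cref{lem:bSPp}(1), the edge lists over all $(S,P)$ contain at most $\sum_{P\in\P}|E(P)|=n$ edges in total, so the pointers advance $\mc{O}(n)$ steps in aggregate. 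The edge-lookup overhead is therefore $\mc{O}(n)$ amortized over the whole loop, and the loop runs in $\mc{O}(n+N\log N)$.

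Summing the stages yields total time $\mc{O}(n+N\log N)$. For the space bound, the edge lists occupy $\mc{O}(n)$, the lists $\mc{L}$ and $\mc{R}$ together with the $\BST$ hold at most $N$ entries each, and the parent function uses $\mc{O}(m)$ cells; since $N<n$ and $m\le n$, the whole computation fits in $\mc{O}(n)$ space, which completes the proof.
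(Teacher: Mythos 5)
Your proposal is correct and follows essentially the same route as the paper: the same stage-by-stage accounting of preprocessing, sorting, and $\mc{O}(N)$ \BST-operations at $\mc{O}(\log N)$ each, and crucially the same amortized ``current edge'' pointer that advances monotonically through each segment's ordered edge list, charging at most one step per edge for a total of $\mc{O}(n)$. The minor differences (explicitly citing \Cref{lem:bSPp}~(1) for the edge-disjointness bound, and counting a predecessor search per event rather than per polygon) do not change the argument or the bound.
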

\begin{proof}
  The identification of all maximal outstretched segments $\SP{P}$, the
  precomputation of the parity functions $\pari{S}{P}$ for all $(S,P)\in
  \SP{\P}$, in \cref{lin:mos}, can be performed in $\mc{O}(n)$ time by
  \Cref{lem:mos-effort}. Moreover, computing the $\area{P}$ for all $P\in
  \P$, in \cref{lin:area}, can be performed in $\mc{O}(n)$ time by
  \Cref{lem:area-effort}.  The insertion order $\ble$, in \cref{lin:io},
  requires $\mc{O}(N \log N)$ time.  In particular, setting the pointer
  ``first'' in \cref{lin:fp}, can be done in $\mc{O}(N)\subseteq\mc{O}(n)$.
  The $x$-coordinates of the maxima $\max(S)$, in \cref{lin:do}, can also
  be sorted on $\mc{O}(N \log N)$ time.  Interleaving the $x$-coordinates
  for insertions and deletions, in \cref{lin:merge}, requires $\mc{O}(N)$
  steps, corresponding to a merging step of the insertion and deletion
  orders.
  
  Next, we consider the total running-time of the \BST-operations,
  i.e., insertions, deletions, and predecessor search, in
  \cref{lin:BST-init,lin:BST-del,lin:BST-ins,lin:BST-search}. First, we
  note that each $(S,P)\in \SP{\P}$ is inserted and removed exactly once
  from the \BST. In addition, the $\lessdot_{\xi}$-predecessor needs to be
  found once for each polygon. This amounts to $2N+m\in\mc{O}(N)$
  \BST-operations.  The \BST data structure guarantees that each operation
  can be performed in $\mc{O}(\log N)$ time \emph{provided} the comparison
  function can be evaluated with constant effort.
  The evaluation of $\lessdot_{\xi}$ in constant time is contingent on the
  availability of the ``appropriate edge'' $e$ of $(S,P)$, i.e., the one
  that satisfies $\xi\in \Ih{e}$, cf.\ \cref{obs:lessdotconst}.  To this
  end, we associate a \emph{current edge} $e_{(S,P)}$ with each $(S,P)$
  that is in \BST.  Recall that the maximal outstretched segments are
  traversed in order during preprocessing. Hence, an ordered list of their
  edges with increasing $\min(e)_x$ is obtain as part of the preprocessing
  in $\mc{O}(n)$ total time. We initialize $e_{(S,P)}$ with the edge $e \in
  E(S)$ that contains $\min(S)$.  When accessing a vertex $(S,P)$ in the
  \BST, we check whether $e_{(S,P)}$ intersects $\xi$. This test can be
  performed in constant time. Since the number of \BST-operations is in
  $\mc{O}(N)$ and, for each \BST-operation, the number of accessed vertices
  is in $\mc{O}(\log N)$, the total number of accessed vertices in \BST is
  in $\mc{O}(N\log N)$.  Therefore, $\mc{O}(N\log N)$ tests for the
  intersection of $\xi$ and $e_{(S,P)}$ are performed. If a test fails, we
  advance along the ordered edges of $(S,P)$ until we reach the edge $e'\in
  E(S)$ with $\xi\in \Ih{e'}$, and update $e_{(S,P)}\gets e'$.  Since each
  edge of $(S,P)$ is traversed at most once during the updates of
  $e_{(S,P)}$, the total effort is $\sum_{(S,P)\in\SP{\P}} |E(S)| \le
  n$. Therefore, the effort for maintaining the current edges is in
  $\mc{O}(n+ N\log N)$. Hence, performing the \BST-operations in
  \cref{lin:BST-init,lin:BST-del,lin:BST-ins,lin:BST-search}, together with
  maintaining the current edges, requires a total effort of $\mc{O}(n+N
  \log N)$. Each of the
  \Cref{lin:if-first,lin:if-sec,lin:if-third,lin:if-third2} require only
  constant time for each $(S,P)$, and thus, the total effort for these
  operations is in $\mc{O}(N)$.
  
  Taken together \Cref{alg:nesting} requires $\mc{O}(n)$ operations for
  preprocessing, $\mc{O}(N\log N)$ effort to construct the insertion order
  $\ble$, and $\mc{O}(n+N\log N)$ effort to maintain the current edges
  required to perform the $\mc{O}(N\log N)$ \BST-operations with
  constant-time $\lessdot_{\xi}$ comparisons.  Therefore, the total running
  time is in $\mc{O}(n+N\log N)$.

  Moreover, saving the necessary information of the polygons (i.e., their
  vertices, edges, area and ``parent'') is in $\mc{O}(n)$ space.  Likewise,
  saving the set $\SP{\P}$ of maximal outstretched segments, together with
  their vertices and edges, is in $\mc{O}(n)$ space.  The \BST and the
  sorted lists $\mc{L}$, $\mc{R}$ and $\mc{R}\cup \mc{L}$ of maximal
  outstretched segments are each in $\mc{O}(N)\subseteq \mc{O}(n)$ space.
  Hence, the total space required is in $\mc{O}(n)$.
\end{proof}

\begin{lemma} \label{lem:lowerbound}
  The polygon nesting problem is in $\Omega(n+m\log m)$.
\end{lemma}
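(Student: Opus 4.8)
The plan is to prove the two summands separately by standard lower-bound arguments and then combine them.

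\emph{The $\Omega(n)$ term.} This is the elementary ``must read the input'' bound. An instance is specified by the coordinates of all $n$ vertices, and there are instances in which the nesting forest $\NF{\P}$ depends on essentially every one of them: perturbing a single vertex can move one polygon across the boundary of another and thereby create or destroy an edge of $\NF{\P}$. Hence an adversary can force any correct algorithm to inspect $\Omega(n)$ of the input values in the worst case.

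\emph{The $\Omega(m\log m)$ term.} Here I would reduce comparison-based sorting to the nesting problem. Given $m$ distinct positive reals $a_1,\dots,a_m$, I build for each $a_i$ a constant-size polygon $P_i$, for concreteness the axis-aligned square centred at the origin with corners $(\pm a_i,\pm a_i)$. These squares are concentric and pairwise disjoint (hence overlap-free), and $\Int(P_i)\subsetneq\Int(P_j)$ holds precisely when $a_i<a_j$. Consequently $\NF{\P}$ is a single chain, and traversing it from its root to its unique leaf lists the $a_i$ in decreasing order. The reduction uses $O(m)$ operations and produces an instance with $n=4m=\Theta(m)$ vertices. If the nesting problem could be solved in $o(m\log m)$ time, composing this reduction with the subsequent $O(m)$-time traversal would sort $m$ reals in $o(m\log m)$ time, contradicting the classical $\Omega(m\log m)$ sorting lower bound. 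The step that requires the most care, and the only real obstacle, is fixing the computational model so that the transfer is valid: because the polygon coordinates are affine in the input keys and all bookkeeping uses only comparisons and constantly many arithmetic operations, the reduction stays within the algebraic decision tree model, in which the $\Omega(m\log m)$ bound for sorting holds. Thus computing $\NF{\P}$ requires $\Omega(m\log m)$ operations.

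\emph{Combining.} To force both terms simultaneously for a given pair $(n,m)$, I pad the sorting instance above by subdividing the edges of its squares with additional collinear vertices until the total vertex count reaches $n$. Such subdivisions change neither the polygons as point sets nor their nesting order, so the $\Omega(m\log m)$ argument still applies, while the $\Omega(n)$ argument now applies to the enlarged instance as well. Since $\max\{a,b\}\ge\tfrac12(a+b)$, the worst-case running time over this family of instances is $\Omega(n+m\log m)$, which establishes the claim.
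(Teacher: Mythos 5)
Your proposal is correct and follows the same route as the paper: the paper also argues the $\Omega(m\log m)$ term by observing that when the nesting forest is a path the problem reduces to sorting the polygons by size, and combines this with the trivial $\Omega(n)$ input-reading bound. Your version merely makes the reduction explicit (concentric squares, the decision-tree model, and the padding to realize both terms simultaneously), which fills in details the paper leaves implicit.
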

\begin{proof}
  Consider the special case that the nesting forest is a path, i.e, there
  are no siblings. Then, the nesting problem reduced to sorting the
  polygons by size. The sorting-problem of $m$ elements w.r.t.\ to an order
  for which comparisons can be evaluated in constant time is in $\Omega(m
  \log m)$, see e.g.\ \cite[Sec.\ 3.3]{Aho1974TheDA} and
  \cite[Sec.\ 2.1.6]{Mehlhorn1984}. This, together with the fact that the
  nesting-problem is in $\Omega(n)$, implies the lower bound $\Omega(n+m
  \log m)$.
\end{proof}

\Cref{thm:alg1-anal} and \Cref{lem:lowerbound} together imply
\begin{corollary}\label{cor:compl-anal}
  For a set $\P$ of overlap-free polygons with $m=|\P|$,
  $n=\sum_{P\in\P}|V(P)|$ and $N=|\SP{\P}|$, the worst-case time complexity
  of the nesting-problem is in $\Omega(n+m \log m)$ and $\mc{O}(n+N \log
  N)$.
\end{corollary}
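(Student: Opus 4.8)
The plan is to assemble the corollary directly from the two results cited immediately before it, since it merely collects a matching upper and lower bound on one and the same quantity. First I would invoke \Cref{lem:alg-correct} together with \Cref{thm:alg1-anal}: the former guarantees that \Cref{alg:nesting} outputs the nesting forest $\NF{\P}$ correctly for every overlap-free set $\P$, while the latter certifies that this computation runs in $\mathcal{O}(n+N\log N)$ time and $\mathcal{O}(n)$ space. Hence the nesting problem is solvable within this running time, which establishes the upper bound $\mathcal{O}(n+N\log N)$.

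For the lower bound I would simply cite \Cref{lem:lowerbound}, which shows that any algorithm solving the nesting problem must perform $\Omega(n+m\log m)$ operations in the worst case: reading the explicitly given input already forces $\Omega(n)$ work, and the reduction from sorting in the proof of \Cref{lem:lowerbound} contributes the additional $\Omega(m\log m)$ term. Taking the conjunction of the two statements yields exactly the asserted bracketing, namely that the worst-case time complexity lies in $\Omega(n+m\log m)$ and in $\mathcal{O}(n+N\log N)$.

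The only point worth flagging — and it is a matter of bookkeeping rather than a genuine obstacle — is that the two bounds are expressed in different size parameters, $m=|\P|$ in the lower bound and $N=|\SP{\P}|$ in the upper bound. Since every polygon contributes at least one maximal outstretched segment, we have $m\le N$, so $m\log m\le N\log N$; thus the lower bound indeed sits below the upper bound and there is no inconsistency to reconcile. I do not expect any difficult step here: the entire content of the corollary has already been discharged by \Cref{thm:alg1-anal} and \Cref{lem:lowerbound}, and what remains is only to record their combination, together with the observation that the two bounds coincide up to constants precisely when $N=\mathcal{O}(m)$ (e.g.\ when the number of maximal outstretched segments per polygon is bounded), yielding asymptotic optimality in that regime.
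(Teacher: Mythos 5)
Your proposal is correct and matches the paper exactly: the paper gives no separate proof, simply stating that \Cref{thm:alg1-anal} (the upper bound via \Cref{alg:nesting}) and \Cref{lem:lowerbound} (the sorting reduction) together imply the corollary, which is precisely your argument. Your added remark that $m\le N$ ensures the two bounds are consistent is a harmless bonus not present in the paper.
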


A polygon $P$ is called \emph{convex} whenever $\Int(P)$ is a convex set.
It is easy to verify that a convex polygon harbors exactly $2$ maximal
outstretched segments. Hence, \Cref{cor:compl-anal} implies that our
approach is asymptotically optimal in this case.
\begin{corollary}
  Let $\P$ be a set of $m$ overlap-free polygons. Suppose every $P\in\P$ is
  convex or $|V(P)|\le K$ for some fixed constant $K$. Then,
  \Cref{alg:nesting} runs in $\Theta(n+m\log m)$. In particular, this is
  optimal.
\end{corollary}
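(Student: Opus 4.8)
The plan is to reduce everything to the running-time bound already established in \Cref{thm:alg1-anal}, by showing that the hypothesis forces $N=\mc{O}(m)$. Once this is in hand, the corollary follows by substitution together with the lower bound of \Cref{cor:compl-anal}.

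First I would bound the number of maximal outstretched segments contributed by a single polygon $P$. If $P$ is convex, this number is exactly $2$, as recorded immediately before the statement. If instead $|V(P)|\le K$, I would appeal to \Cref{lem:bSPp}~(1): distinct maximal outstretched segments of $P$ are edge-disjoint, and each of them contains at least one edge. Since $P$ is a cycle, $|E(P)|=|V(P)|\le K$, so $P$ harbors at most $K$ maximal outstretched segments. Setting $C\coloneqq\max\{2,K\}$, every $P\in\P$ therefore contributes at most $C$ pairs to $\SP{\P}$, regardless of which of the two cases it falls into, whence $N=|\SP{\P}|\le Cm=\mc{O}(m)$.

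Next I would feed $N=\mc{O}(m)$ into the upper bound of \Cref{thm:alg1-anal}. Because $N\le Cm$ with $C$ constant, we have $\log N=\mc{O}(\log m)$ and hence $N\log N=\mc{O}(m\log m)$; thus the running time $\mc{O}(n+N\log N)$ simplifies to $\mc{O}(n+m\log m)$. (Noting in addition that every polygon contributes at least one maximal outstretched segment gives $m\le N\le Cm$, i.e.\ $N=\Theta(m)$, although only the upper estimate $N=\mc{O}(m)$ is needed here.)

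Finally, the lower bound $\Omega(n+m\log m)$ of \Cref{cor:compl-anal} holds for arbitrary sets of overlap-free polygons and therefore applies, in particular, to the restricted families considered here. Matching this with the upper bound just derived yields a running time of $\Theta(n+m\log m)$, which establishes optimality. I do not expect a genuine obstacle in carrying this out; the only step demanding a little care is the per-polygon count in the bounded-vertex case, where one must use edge-disjointness from \Cref{lem:bSPp}~(1)---not merely the vertex bound---to conclude that the number of maximal outstretched segments does not exceed $|V(P)|$.
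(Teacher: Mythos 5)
Your proposal is correct and follows essentially the same route as the paper: bound the number of maximal outstretched segments per polygon by a constant (exactly $2$ for convex polygons, at most $K$ via edge-disjointness from \Cref{lem:bSPp}~(1) in the bounded-vertex case), conclude $N=\mc{O}(m)$, substitute into \Cref{thm:alg1-anal}, and match with the lower bound of \Cref{lem:lowerbound}. The only difference is that you spell out the edge-disjointness argument for the per-polygon count, which the paper leaves implicit.
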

\begin{proof}
  If $P$ is convex, it contains exactly $2$ maximally outstretched
  segments. Similarly, if $|V(P)|\le K$ then there are $\Theta(1)$
  maximally outstretched segments in each polygon, and thus
  $\mc{O}(N)=\mc{O}(m)$.  Hence, the restricted polygon nesting problem can
  be solved in $\mc{O}(n+m\log m)$ time by \Cref{alg:nesting}. The lower
  bound of \Cref{lem:lowerbound} shows that the running time of
  \Cref{alg:nesting} is asymptotically optimal for this special case.
\end{proof}

%%%%%%%%%%%%%%%%%%%%%%%%%%%%%%%%%
\section{Summary and Outlook}

We have described here a variant of \emph{sweep line algorithm} that
  determines the nesting of polygons with non-intersecting interior,
  generalizing a similar algorithm by \citet{Bajaj:90} for non-touching
  polygons. The main innovation of our approach is the definition of
  \emph{maximal outstretches segments} and a corresponding ordering of these
  segments along the sweep line that can be computed and maintained
  efficiently and consistently handles overlapping points and edges along
  these segments. This construction makes it possible to achieve a running
  time of $\mathcal{O}(n+N\log N)$, where $N<n$ is the number of maximal
  outstretched segments. The resulting algorithm is optimal e.g.\ for
  convex polygons. The algorithm of \citet{Bajaj:90} uses ``subchains''
  that are parts of convex chains. These are subsets of the maximal
  outstretched segments introduced here. While this does not, in general,
  yield an asymptotic improvement, it reduced the number of segments that
  have to be considered.

In summary, we computed the nesting of touching polygons. 
  However, we did not determine whether, and in the affirmative case, 
  where exactly two polygons $P,P'$ of $\mc{P}$ are touching. 
  Such points can of course be determined by the classical
  sweep line approach, and then can be added to data on the corners of $\mathcal{P}$.
  It is necessary, however, to order such touching points with different
  incident polygons along each of the edges of the polygon $\mathcal{P}$. 
  We shall consider this problem in more detail in a forthcoming 
  contribution.

A problem closely related to the nesting of polygons is to consider
  the nesting of their connected components. This amounts to considering
  the (connected components of) the the graph $G$ obtained as the union of
  the vertices of edges of the polygons, is insufficient to completely
  specify the set of polygons. The 2-basis comprising of the facets of the
  planar embedding of $G$ \cite{MacLane:37}, in particular, results in
  decomposition of $G$ into non-overlapping polygons such that all polygons
  are siblings. It is worth noting that a 2-basis of minimum total length
  can be computed in linear time \cite{Liebchen:07}.  Nested sets of
  polygons are obtained from a two-basis as hierarchy (w.r.t.\ to
  inclusion) restricted to sets of facets such that the sum of the facets
  in each set forms a simple cycle, i.e., a polygon. The connected
  components of $G$ in the given embedding have (not necessarily simple)
  cycles as their inner and outer limits, which in contrast to the polygons
  considered here, may also contain degenerate points, i.e., the interior
  of these polygons is no longer connected.  Nesting of connected
  components then can be understood in terms of these possibly generate
  polygons. We suspect that it is sufficient for connected component
  nesting to consider the 2-connected components of the outline-cycles,
  which would reduce the problem to the touching simple polygons considered
  here.

%%%%%%%%%%%%%%%%%%%%%%%%%%%%%%%%%
\section*{Acknowledgments}
We thank David Schaller for stimulating discussions on this topic.
This work was funded by the German Research Foundation (DFG) (Proj.\ No.\ MI 439/14-2).

\bibliographystyle{elsarticle-harv}
\bibliography{sweep-line}

\end{document}